\documentclass[11pt,oneside]{book}

\usepackage[margin=1.1in]{geometry}
\usepackage[T1]{fontenc}
\usepackage[utf8]{inputenc}
\usepackage{lmodern}
\usepackage{microtype}
\usepackage{amsmath,amssymb,amsthm,mathtools,bm}
\usepackage{booktabs}
\usepackage{enumitem}
\usepackage{xcolor}
\usepackage{graphicx}
\usepackage{tikz}
\usetikzlibrary{calc,decorations.pathreplacing,arrows.meta,positioning}

\usepackage{hyperref}
\usepackage[nameinlink,noabbrev]{cleveref}
\usepackage[backend=biber,style=numeric,sorting=nyt,maxnames=8]{biblatex}
\hypersetup{
  pdftitle={A Quantum Path for Partial Differential Equations},
  pdfauthor={Xiantao Li},
  pdfsubject={Quantum algorithms for partial differential equations},
  pdfkeywords={quantum algorithms, partial differential equations, block encoding, QSVT, Hamiltonian simulation, finite differences, finite elements}
}

\hypersetup{
  colorlinks=true,
  linkcolor=blue!50!black,
  citecolor=blue!50!black,
  urlcolor=blue!50!black
}

\newcommand{\DiagonalPipelineGraphic}{%
\resizebox{16.0cm}{!}{%
\begin{tikzpicture}[
    line cap=round, line join=round, >=Latex,
    every node/.style={font=\small},
    flow/.style={gray!62, line width=0.8pt, ->, shorten >= 4pt, shorten <= 4pt},
    box/.style={
        draw=gray!62, fill=white, rounded corners=5pt, line width=0.75pt,
        align=center, inner sep=15pt, text=gray!56
    },
    mathtext/.style={gray!56},
    faint/.style={gray!32,line width=0.52pt}
]

    \node (mesh) at (0,0) {
        \begin{tikzpicture}[scale=1.1] % Scaled up the mesh slightly to match bigger boxes
            \coordinate (a1) at (0.00,0.15);  \coordinate (a2) at (0.55,0.88);
            \coordinate (a3) at (0.18,1.85);  \coordinate (a4) at (1.18,0.22);
            \coordinate (a5) at (1.05,1.08);  \coordinate (a6) at (0.98,2.00);
            \coordinate (a7) at (2.05,0.12);  \coordinate (a8) at (1.98,0.98);
            \coordinate (a9) at (1.88,1.92);  \coordinate (a10) at (2.95,0.34);
            \coordinate (a11) at (2.92,1.30); \coordinate (a12) at (2.70,2.08);

            \draw[faint] (a1)--(a4)--(a7)--(a10)--(a11)--(a12)--(a9)--(a6)--(a3)--cycle;
            \draw[faint] (a1)--(a2)--(a3);    \draw[faint] (a1)--(a4)--(a2);
            \draw[faint] (a2)--(a4)--(a5);    \draw[faint] (a3)--(a2)--(a5);
            \draw[faint] (a3)--(a5)--(a6);    \draw[faint] (a4)--(a7)--(a8);
            \draw[faint] (a4)--(a8)--(a5);    \draw[faint] (a5)--(a8)--(a9);
            \draw[faint] (a5)--(a9)--(a6);    \draw[faint] (a7)--(a10)--(a8);
            \draw[faint] (a8)--(a10)--(a11);  \draw[faint] (a8)--(a11)--(a9);
            \draw[faint] (a9)--(a11)--(a12);

            \foreach \p in {a1,a2,a3,a4,a5,a6,a7,a8,a9,a10,a11,a12}
                \fill[gray!38] (\p) circle (1.05pt);
        \end{tikzpicture}
    };

    \node[box, below right=0.45cm and 0.25cm of mesh] (UA) {
        {\normalsize $U_{A_h}$}\\[8pt]
        $\displaystyle \left(\langle 0^a|\otimes I\right)\, U_{A_h}\, \left(|0^a\rangle\otimes I\right) = A_h/\alpha$
    };

    \node[box, below right=0.45cm and 0.25cm of UA, minimum width=3.6cm, minimum height=2.0cm] (MID) {
        \fontsize{26}{26}\selectfont $\cdots$
    };

    \node[box, below right=0.45cm and 0.25cm of MID] (UP) {
        {\normalsize $U_{p}$}\\[8pt]
        $\displaystyle \left(\langle 0^a|\otimes I\right)\, U_{p}\, \left(|0^a\rangle\otimes I\right) = p(A_h/\alpha)$
    };

    \node[below right=0.45cm and 0.25cm of UP] (MEAS) {
        \begin{tikzpicture}[scale=1.15] % Scaled up to match
            \draw[gray!62,line width=0.62pt] (0,0) rectangle (1.35,0.98);
            \draw[gray!62,line width=0.62pt] (0.18,0.18) -- (0.62,0.48) -- (1.02,0.76);
            \draw[gray!62,line width=0.62pt] (0.62,0.48) -- (0.62,0.14);
            \draw[gray!62,line width=0.62pt] (0.18,0.18) -- (0.18,0.80);
            \draw[gray!62,line width=0.62pt] (0.18,0.18) -- (1.10,0.18);
        \end{tikzpicture}
    };

    \node[mathtext, right=0.25cm of MEAS] (RESULT) {
        $\displaystyle \frac{p(A_h/\alpha)\,|\psi\rangle} {\|p(A_h/\alpha)\,|\psi\rangle\|}$
    };

    \draw[flow] (mesh) -- (UA);
    \draw[flow] (UA)   -- (MID);
    \draw[flow] (MID)  -- (UP);
    \draw[flow] (UP)   -- (MEAS);

\end{tikzpicture}%
}%
}

\newcommand{\BookCover}{%
\begin{titlepage}
    \thispagestyle{empty}
    \begin{tikzpicture}[remember picture,overlay]
        \fill[white] (current page.south west) rectangle (current page.north east);

        \begin{scope}[shift={($(current page.north east)+(-4.8cm,-2.8cm)$)},opacity=0.28]
            \foreach \i in {-3,-2.5,...,3}{
                \draw[gray!24,line width=0.35pt] plot[smooth,domain=-3.0:3.0,samples=70] (\x,{0.45*sin(55*\x)+0.10*\i*\i-0.12*\x*\i});
            }
            \foreach \j in {-3,-2.5,...,3}{
                \draw[gray!18,line width=0.30pt] plot[smooth,domain=-3.0:3.0,samples=70] ({\x},{0.45*sin(55*\j)+0.10*\x*\x-0.12*\x*\j});
            }
        \end{scope}

        \foreach \i in {0,...,13}{
            \foreach \j in {0,...,16}{
                \fill[gray!35,opacity=0.10] ($(current page.north east)+(-0.50cm-0.16cm*\i,-0.55cm-0.16cm*\j)$) circle (0.45pt);
            }
        }

        \begin{scope}[shift={(current page.south west)},opacity=0.50]
            \foreach \i in {0,...,24}{
                \draw[gray!14,line width=0.25pt] plot[smooth,domain=0:17.0,samples=95] ({\x-0.8},{0.65+0.055*\i+0.23*sin(25*\x+9*\i)});
            }
            \foreach \i in {0,...,38}{
                \foreach \j in {0,...,8}{
                    \fill[gray!28,opacity=0.13] ({0.2+0.35*\i},{0.25+0.21*\j+0.10*sin(20*\i)}) circle (0.32pt);
                }
            }
        \end{scope}

        \node[anchor=north west] at ($(current page.north west)+(1.25cm,-1.50cm)$) {\DiagonalPipelineGraphic};

    \end{tikzpicture}

    \vspace*{0.33\paperheight} 
    \begin{center}
        {\bfseries\fontsize{31}{37}\selectfont A Quantum Path to\\[0.16em] Partial Differential Equations\par}

        \vspace{0.72cm}

        {\color{gray!60} \rule{3.2cm}{0.35pt}\, \raisebox{-0.12em}{$\diamond$}\, \rule{3.2cm}{0.35pt}\par}

        \vspace{0.72cm}

        {\Large Xiantao Li\par}

        \vspace{0.42cm}

        {\large July 9, 2026\par}
    \end{center}
\end{titlepage}%
}
\numberwithin{equation}{chapter}

\newtheorem{definition}{Definition}[chapter]

\newtheorem{lemma}[definition]{Lemma}
\newtheorem{proposition}[definition]{Proposition}
\newtheorem{example}[definition]{Example}
\newtheorem{remark}[definition]{Remark}

\newcommand{\C}{\mathbb{C}}
\newcommand{\R}{\mathbb{R}}

\newcommand{\ket}[1]{\left|#1\right\rangle}
\newcommand{\bra}[1]{\left\langle#1\right|}
\newcommand{\braket}[2]{\left\langle #1\middle|#2\right\rangle}
\newcommand{\ip}[2]{\left\langle #1,#2\right\rangle}
\newcommand{\norm}[1]{\left\|#1\right\|}
\newcommand{\abs}[1]{\left|#1\right|}

\newcommand{\polylog}{\operatorname{polylog}}

\newcommand{\Real}{\operatorname{Re}}
\newcommand{\Imag}{\operatorname{Im}}

\newcommand{\eps}{\varepsilon}
\newcommand{\ot}{\otimes}

\newtheorem{exercise}{Exercise}[chapter]

\title{A Quantum Path to Partial Differential Equations}
\author{Xiantao Li}
\date{\today}

\begin{document}

\BookCover

\frontmatter
\hypersetup{pageanchor=false}
\hypersetup{pageanchor=true}

\tableofcontents
% Body-only LaTeX file.  It is intended to be incorporated by \include{...}
% into a larger book manuscript.  No documentclass, packages, theorem
% declarations, or macro preamble are included here.

\chapter*{Preface}
\addcontentsline{toc}{chapter}{Preface}

Quantum computing has long been motivated by problems in quantum physics. Over the past two decades, this promise has become increasingly concrete in quantum chemistry and materials science, where Hamiltonian simulation and related algorithms provide a natural connection between physical models and quantum computation. More recently, general scientific computing problems---especially differential equations, including partial differential equations---have emerged as a promising application area outside quantum physics for fault-tolerant quantum algorithms
\cite{ChildsLiuOstrander2021,BabbushBerryKothariSommaWiebe2023}.

This book grew out of a series of talks and lectures I have given on this subject. It is written primarily for readers who are familiar with PDEs and their numerical solution and who would like to begin working on the quantum side. It should be equally useful to readers who already have experience with quantum algorithms and would like a numerically grounded introduction to PDE applications,  including, specifically,  the discretizations behind the matrices, the conditioning, normalization, and measurement questions that determine whether a quantum PDE algorithm is useful in an end-to-end manner.

There are already several excellent introductions to quantum computing and quantum algorithms. Nielsen and Chuang provide the standard comprehensive reference
\cite{NielsenChuang2010}. The lecture notes of Childs and de Wolf give broad and complementary accounts of quantum algorithms and quantum computation
\cite{ChildsQuantumAlgorithmsNotes,deWolf2019LectureNotes}. The lecture notes of Lin and Wiebe on quantum algorithms for scientific computation are particularly close in spirit to the present book: they develop block encoding, qubitization, quantum signal processing, quantum singular value transformation, Hamiltonian simulation, and quantum algorithms for linear systems and differential equations from the viewpoint of scientific computation
\cite{LinWiebe2026QASC}. 

For readers who learn best by building circuits and running code, Qiskit and IBM Quantum Learning provide an intuitive route from gates and circuits to small algorithmic examples
\cite{JavadiAbhariEtAl2024Qiskit,IBMQuantumLearning2026}. PennyLane offers a complementary software-oriented entry point, with tutorials illustrating block encodings from matrix access oracles, linear-combination-of-unitaries constructions, and QSVT in practice
\cite{PennyLaneBlockEncodingDemo,PennyLaneLCUBlockEncodingDemo,PennyLaneQSVTPracticeDemo}. More recently, toolboxes have begun to target the block-encoding path directly. For example, Unitaria provides a Python interface for quantum linear algebra through block encodings, allowing users to compose operations such as addition, multiplication, tensor products, and QSVT at the level of matrix-like objects
\cite{DeimlHuettenhoferMoscoKottmannPeterseim2026Unitaria}. Qrisp similarly treats block encodings as high-level programming abstractions for quantum linear algebra
\cite{PetricZander2026QrispBlockEncoding}. These tools are still early, and their interfaces will certainly evolve, but they are an important sign that block encoding is becoming not only a theorem language, but also a programming interface.

The purpose of this book is narrower than these general references and software platforms. Its particular emphasis is on PDEs and on the interaction between quantum algorithms and the classical numerical structures that arise from their discretization. The hope is not to replace the broader quantum-computing literature, but to give readers a concrete path from finite differences, finite elements, and operator semigroups to quantum primitives, including block encodings, quantum singular value transformation, Hamiltonian simulation, and measurement.

This emphasis is timely. The search for useful quantum applications is now a major theme not only in academic research, but also in industry-facing quantum algorithm development. The recent perspective paper \emph{The Grand Challenge of Quantum Applications}, written by researchers at Google Quantum AI, stresses the need to identify concrete problem instances and to connect abstract algorithmic advantages to real use cases
\cite{BabbushKingEtAl2025GrandChallenge}. Differential equations and PDE-like simulation tasks are increasingly visible in this landscape. For example, IBM's Qiskit ecosystem now includes application-oriented PDE solver functions, such as QUICK-PDE developed by ColibriTD
\cite{IBMQuickPDE2025}; PsiQuantum-affiliated work revisited Carleman-based algorithms for nonlinear dynamics
\cite{JenningsKorzekwaLostaglioSornborgerSubasiWang2025Carleman}; and Quantinuum-affiliated work develops QFT-based quantum circuits for PDEs in Fourier space
\cite{LubaschKikuchiWrightMcKeever2025FourierPDE}. Outside the block-encoding route, PDE-oriented software platforms are also beginning to appear; for instance, SJTU's UnitaryLab emphasizes quantum scientific computing for differential equations through Schr\"odingerization-based workflows
\cite{SJTUUnitaryLab2025}. These examples do not settle the question of practical advantage, but they show that differential equations are becoming a serious meeting point between quantum algorithms, scientific computing, and emerging software stacks.

In seminars and conferences, the question I hear most often from numerical analysts and scientific computing researchers is a practical one:
\emph{What are the basic building blocks?}
Classical scientific computing has a well-developed answer. Libraries such as BLAS and LAPACK provide a foundation of linear algebraic operations---matrix products, factorizations, linear solves, and eigenvalue routines---on which discretizations, time integrators, and application codes are built. Quantum scientific computing does not yet have a direct counterpart of comparable maturity, but block encoding offers a useful organizing principle. 

A block encoding represents a matrix as a distinguished block of a larger unitary operator. Once a discretized differential operator has been placed in this form, a relatively small collection of primitives---quantum singular value transformation, Hamiltonian simulation, linear combinations of unitaries, amplitude amplification, postselection, and measurement---can be composed into algorithms for elliptic, parabolic, and hyperbolic equations. This is not the only possible route to quantum PDE algorithms. It is, however, a flexible and mathematically transparent path, and it allows several different classes of equations to be treated within a common language. This is also the origin of the book's title: it presents \emph{a} quantum path to PDEs, not \emph{the} quantum path. Variational methods, continuous-variable approaches, analog simulation, tensor networks, and problem-specific quantum constructions all deserve attention. The path emphasized here is the block-encoding path because it is one of the clearest ways to see how classical discretizations, matrix functions, and quantum circuits fit together.

At the same time, a PDE is not simply a large linear system waiting to be solved. PDE models come with function spaces, weak formulations, boundary conditions, stability principles, conservation laws, regularity assumptions, discretization errors, and mesh-dependent conditioning. These features are routine in numerical analysis, but they are not always visible when PDE problems are formulated in the quantum computing literature. Conversely, amplitude encoding, block encoding, postselection, quantum signal processing, and quantum singular value transformation are not normally part of the numerical PDE curriculum.

The aim of this book is therefore to build a bridge between the two subjects. Whenever possible, the classical and quantum formulations are placed side by side. The reader can follow the full computational pipeline:
\[
\begin{aligned}
 &\text{continuous PDE}
\longrightarrow
\text{discretization}
\longrightarrow
\text{matrix or semidiscrete evolution} \\
& \quad \longrightarrow
\text{quantum encoding}
\longrightarrow
\text{quantum transformation}
\longrightarrow
\text{quantity of interest}.
\end{aligned}
\]
The discretization error, quantum algorithmic error, state-preparation cost, success probability, and measurement cost are all parts of this pipeline.

A recurring theme is that a quantum algorithm for a PDE is not merely a faster linear algebra routine. A quantum computer generally prepares a normalized state rather than returning every entry of a solution vector. Consequently, the cost of loading the input, the normalization of the solution, the probability of successful postselection, and the number of measurements needed to extract an observable can be as important as the complexity of the central matrix transformation. An apparent speedup in one stage need not translate automatically into an end-to-end advantage. Keeping these issues visible is essential for a meaningful comparison with classical algorithms.

The organization of the book reflects these goals. Chapter~\ref{chap:basic-elements} introduces the basic quantum primitives in the language of numerical linear algebra. Readers already familiar with quantum algorithms may skip much of this chapter or use it as a reference; readers coming from numerical PDEs should find its examples---Laplacians, grid functions, and stencils---familiar objects presented in a new vocabulary. Chapters~\ref{chap:elliptic-quantum}, \ref{chap:hyperbolic-quantum}, and \ref{chap:parabolic-quantum} treat canonical elliptic, hyperbolic, and parabolic model problems. Each chapter begins with familiar finite difference or finite element discretizations and then asks how the resulting matrices or semidiscrete evolutions can be encoded and processed on a quantum computer. Readers from numerical analysis may skim these classical openings and concentrate on the encoding, transformation, and measurement stages. Meanwhile, readers from quantum computing can read the same sections as a compact review of the discrete structures---stiffness and mass matrices, CFL conditions, mesh-dependent conditioning---that any quantum algorithm for PDEs must respect. Chapter~\ref{chap:nonlinear-quantum} gives a brief first look at nonlinear problems through two representative linearization frameworks. The PDE chapters close with outlook sections describing open research directions, and every chapter ends with exercises, so the book can be used for self-study or as the basis of a one-semester topics course.

On the classical PDE side, Evans's book remains a standard and accessible reference for the analytical theory, including weak formulations, energy estimates, regularity, and the function-space viewpoint for elliptic, parabolic, and hyperbolic equations
\cite{Evans2010PDE}. For readers who already know quantum algorithms and wish to understand the numerical side of PDEs, the books of Larsson and Thom\'ee, Tveito and Winther, and Morton and Mayers provide useful entry points into finite differences, finite elements, stability, and error estimates
\cite{LarssonThomee2009,TveitoWinther2005,MortonMayers2005}. Readers familiar with the finite difference and finite element treatments in these books should find the discretizations used here recognizable. On the quantum side, no prior training in quantum mechanics is assumed; elementary linear algebra is the only essential prerequisite. The necessary notation and circuit concepts are introduced as they are needed.

This book is deliberately not a comprehensive survey, and it does not claim
that quantum computers will solve PDEs faster than classical computers across
the board.
 Such a claim would be premature and, in many settings, misleading. PDE computation includes nonlinear conservation laws, shocks and singularities, multiscale problems, fluid--structure interaction, stochastic dynamics, turbulence, moving interfaces, and many other challenges for which the quantum algorithmic picture remains incomplete.

Several important quantum approaches are also treated only briefly or omitted, including variational quantum algorithms, continuous-variable methods, tensor-network approaches, stochastic differential equations, and detailed fault-tolerant resource estimates. Hardware-level compilation, which depends strongly on the native gate set, qubit connectivity, and error-correction architecture of a particular platform, is likewise beyond the scope of this introduction. These omissions are intentional. The objective is to make one coherent route sufficiently explicit that readers can understand its mechanisms, limitations, and possible extensions.

Quantum algorithms for PDEs are still at an early stage. The subject lies at the intersection of quantum algorithms, numerical analysis, applied mathematics, and scientific computing, and substantial progress will require ideas from all of these areas. My hope is that this book provides a useful entry point: a first map for students, a common vocabulary for researchers from different communities, and an invitation to develop sharper algorithms, stronger analysis, and more realistic applications.

If the book helps a few more researchers cross the boundary between numerical PDEs and quantum computation, it will have served its purpose. The author gratefully acknowledges the support of the National Science
Foundation under Grants No.~DMS-2411120 and DMS-2552687. Suggestions, corrections, and questions are welcome and may be sent to
\href{mailto:Xiantao.Li@psu.edu}{Xiantao.Li@psu.edu}.

\vspace{1.5em}
\noindent
Xiantao Li\\
Department of Mathematics, Pennsylvania State University \\
University Park, Pennsylvania

\mainmatter
\chapter{Basic Quantum Elements for PDE Algorithms}
\label{chap:basic-elements}

% The figures in this chapter are drawn directly with TikZ.
% The main file should load: \usepackage{tikz} and \usetikzlibrary{decorations.pathreplacing}.
\newcommand{\SwapTestCircuit}{%
\begin{tikzpicture}[x=0.9cm,y=0.65cm,line cap=round,line join=round,
    qwire/.style={gray!70,line width=0.45pt},
    gate/.style={draw=gray!70,fill=white,minimum width=0.46cm,minimum height=0.36cm,inner sep=1pt},
    ctrl/.style={circle,fill=gray!70,inner sep=1.25pt},
    meter/.style={draw=gray!70,fill=white,minimum width=0.58cm,minimum height=0.38cm,inner sep=1pt}]
\foreach \y/\lab in {0/{\ket{0}},-1/{\ket{\bm u}},-2/{\ket{\bm v}}}{
  \draw[qwire] (0,\y)--(6.8,\y);
  \node[left=3pt,gray!70] at (0,\y) {$\lab$};
}
\node[gate] at (0.9,0) {$H$};
\node[ctrl] at (2.6,0) {};
\draw[qwire] (2.6,0)--(2.6,-2);
\draw[gray!70,line width=0.5pt] (2.48,-1.12)--(2.72,-0.88);
\draw[gray!70,line width=0.5pt] (2.48,-0.88)--(2.72,-1.12);
\draw[gray!70,line width=0.5pt] (2.48,-2.12)--(2.72,-1.88);
\draw[gray!70,line width=0.5pt] (2.48,-1.88)--(2.72,-2.12);
\node[gate] at (4.1,0) {$H$};
\node[meter] at (5.35,0) {$M$};
\draw[qwire,->] (5.65,0)--(6.55,0);
\end{tikzpicture}%
}

\newcommand{\HadamardTestCircuit}{%
\begin{tikzpicture}[x=0.9cm,y=0.7cm,line cap=round,line join=round,
    qwire/.style={gray!70,line width=0.45pt},
    gate/.style={draw=gray!70,fill=white,minimum width=0.50cm,minimum height=0.38cm,inner sep=1pt},
    biggate/.style={draw=gray!70,fill=white,minimum width=0.75cm,minimum height=0.48cm,inner sep=2pt},
    ctrl/.style={circle,fill=gray!70,inner sep=1.25pt},
    meter/.style={draw=gray!70,fill=white,minimum width=0.58cm,minimum height=0.38cm,inner sep=1pt}]
\foreach \y/\lab in {0/{\ket{0}},-1/{\ket{\psi}}}{
  \draw[qwire] (0,\y)--(6.8,\y);
  \node[left=3pt,gray!70] at (0,\y) {$\lab$};
}
\node[gate] at (0.9,0) {$H$};
\node[ctrl] at (2.4,0) {};
\draw[qwire] (2.4,0)--(2.4,-1);
\node[biggate] at (2.4,-1) {$W$};
\node[gate] at (4.0,0) {$H$};
\node[meter] at (5.2,0) {$M$};
\draw[qwire,->] (5.55,0)--(6.55,0);
\end{tikzpicture}%
}

\newcommand{\JointUnitaryRegisters}{%
\begin{tikzpicture}[x=0.95cm,y=0.78cm,line cap=round,line join=round,
    qwire/.style={gray!70,line width=0.48pt},
    biggate/.style={draw=gray!68,fill=white,rounded corners=1pt,
      minimum width=1.55cm,minimum height=1.24cm,inner sep=2pt},
    lab/.style={gray!62,font=\scriptsize}]
\draw[qwire] (0,0)--(6.7,0);
\draw[qwire] (0,-1)--(6.7,-1);
\node[left=4pt,gray!70] at (0,0) {$\ket{0^a}$};
\node[left=4pt,gray!70] at (0,-1) {$\ket{\psi}$};
\node[lab,left=25pt] at (0,0) {ancilla register};
\node[lab,left=25pt] at (0,-1) {system register};
\node[biggate] at (3.1,-0.5) {$U_{\rm joint}$};
\node[lab,above=5pt] at (3.1,0.16)
  {$U_{\rm joint}:\mathcal H_{\rm a}\otimes\mathcal H_{\rm s}
  \rightarrow\mathcal H_{\rm a}\otimes\mathcal H_{\rm s}$};
\draw[qwire,->] (5.6,0)--(6.55,0);
\draw[qwire,->] (5.6,-1)--(6.55,-1);
\end{tikzpicture}%
}

\newcommand{\LaplacianBlockEncodingCircuit}{%
\begin{tikzpicture}[x=1.0cm,y=0.78cm,line cap=round,line join=round,
    qwire/.style={gray!65,line width=0.52pt},
    gate/.style={draw=gray!60,fill=white,rounded corners=1pt,
      minimum width=0.78cm,minimum height=0.46cm,inner sep=2pt},
    biggate/.style={draw=gray!60,fill=white,rounded corners=1pt,
      minimum width=1.65cm,minimum height=1.20cm,inner sep=3pt},
    lab/.style={gray!55,font=\scriptsize}]
\draw[qwire] (0,0)--(8.4,0);
\draw[qwire] (0,-1)--(8.4,-1);
\node[left=4pt,gray!65] at (0,0) {$\ket{0^2}$};
\node[left=4pt,gray!65] at (0,-1) {$\ket{\psi}$};

\node[gate] at (1.35,0) {$P_L$};
\node[biggate] at (3.95,-0.5) {$\operatorname{SELECT}_L$};
\node[gate] at (6.55,0) {$P_L^\dag$};
\draw[qwire,->] (7.25,0)--(8.2,0);
\draw[qwire,->] (7.25,-1)--(8.2,-1);
\node[lab,anchor=west] at (8.25,-0.5) {$U_L$};

\node[lab] at (1.35,-1.55) {prepare};
\node[lab,align=center] at (3.95,-1.55)
  {controlled stencil\\$I$, $-S$, or $-S^\dag$};
\node[lab] at (6.55,-1.55) {unprepare};
\end{tikzpicture}%
}

\newcommand{\GroverRudolphTree}{%
\begin{tikzpicture}[x=1.0cm,y=0.72cm,line cap=round,line join=round,
    edge/.style={gray!65,line width=0.45pt},
    nodebox/.style={draw=gray!65,fill=white,rounded corners=1pt,inner sep=2pt,font=\scriptsize},
    leaf/.style={gray!60,font=\scriptsize}]
\node[nodebox] (r) at (0,0) {$P_{\varnothing}=1$};
\node[nodebox] (a) at (-2,-1.35) {$P_0$};
\node[nodebox] (b) at (2,-1.35) {$P_1$};
\node[nodebox] (aa) at (-3,-2.7) {$P_{00}$};
\node[nodebox] (ab) at (-1,-2.7) {$P_{01}$};
\node[nodebox] (ba) at (1,-2.7) {$P_{10}$};
\node[nodebox] (bb) at (3,-2.7) {$P_{11}$};
\foreach \u/\v/\lab in {r/a/{0},r/b/{1},a/aa/{0},a/ab/{1},b/ba/{0},b/bb/{1}}{
  \draw[edge] (\u)--node[leaf,fill=white,inner sep=1pt] {$\lab$} (\v);
}
\node[leaf] at (-3,-3.45) {$I_{00}$};
\node[leaf] at (-1,-3.45) {$I_{01}$};
\node[leaf] at (1,-3.45) {$I_{10}$};
\node[leaf] at (3,-3.45) {$I_{11}$};
\draw[edge] (aa)--(-3,-3.18);
\draw[edge] (ab)--(-1,-3.18);
\draw[edge] (ba)--(1,-3.18);
\draw[edge] (bb)--(3,-3.18);
\node[leaf,align=center] at (0,0.82) {conditional rotations use\\$P_{b0}/P_b$ and $P_{b1}/P_b$};
\end{tikzpicture}%
}

\newcommand{\QFTFourCircuit}{%
\begin{tikzpicture}[x=0.72cm,y=0.62cm,line cap=round,line join=round,
    qwire/.style={gray!70,line width=0.45pt},
    gate/.style={draw=gray!70,fill=white,minimum width=0.45cm,minimum height=0.36cm,inner sep=1pt,font=\scriptsize},
    ctrl/.style={circle,fill=gray!70,inner sep=1.15pt},
    swapx/.style={gray!70,line width=0.50pt},
    lab/.style={gray!65,font=\scriptsize}]
\foreach \y/\labtxt in {0/{\ket{x_3}},-1/{\ket{x_2}},-2/{\ket{x_1}},-3/{\ket{x_0}}}{
  \draw[qwire] (0,\y)--(13.0,\y);
  \node[left=3pt,gray!70] at (0,\y) {$\labtxt$};
}
\node[gate] at (0.85,0) {$H$};
\foreach \x/\yy/\r in {1.65/-1/2,2.45/-2/3,3.25/-3/4}{
  \node[gate] at (\x,0) {$R_{\r}$};
  \node[ctrl] at (\x,\yy) {};
  \draw[qwire] (\x,\yy)--(\x,0);
}
\node[gate] at (4.30,-1) {$H$};
\foreach \x/\yy/\r in {5.10/-2/2,5.90/-3/3}{
  \node[gate] at (\x,-1) {$R_{\r}$};
  \node[ctrl] at (\x,\yy) {};
  \draw[qwire] (\x,\yy)--(\x,-1);
}
\node[gate] at (6.95,-2) {$H$};
\node[gate] at (7.75,-2) {$R_2$};
\node[ctrl] at (7.75,-3) {};
\draw[qwire] (7.75,-3)--(7.75,-2);
\node[gate] at (8.80,-3) {$H$};
% final bit reversal swaps
\foreach \x/\ya/\yb in {10.35/0/-3,11.25/-1/-2}{
  \draw[qwire] (\x,\ya)--(\x,\yb);
  \foreach \yy in {\ya,\yb}{
    \draw[swapx] (\x-0.13,\yy+0.13)--(\x+0.13,\yy-0.13);
    \draw[swapx] (\x-0.13,\yy-0.13)--(\x+0.13,\yy+0.13);
  }
}
\node[lab] at (6.6,-4.0) {$\operatorname{QFT}_{2^4}$};
\end{tikzpicture}%
}

\newcommand{\StencilDiagram}{%
\begin{tikzpicture}[x=1.0cm,y=0.62cm,line cap=round,line join=round,
    ptnode/.style={circle,draw=gray!65,fill=white,minimum size=0.42cm,inner sep=0pt,font=\scriptsize},
    edge/.style={gray!55,line width=0.45pt},
    coeff/.style={gray!58,font=\scriptsize}]
\node[ptnode] (jm2) at (-2,0) {$j-2$};
\node[ptnode] (jm1) at (-1,0) {$j-1$};
\node[ptnode] (j0)  at (0,0) {$j$};
\node[ptnode] (jp1) at (1,0) {$j+1$};
\node[ptnode] (jp2) at (2,0) {$j+2$};
\draw[edge] (jm2)--(jm1)--(j0)--(jp1)--(jp2);
\node[coeff] at (-1,-0.65) {$-1$};
\node[coeff] at (0,-0.65) {$2$};
\node[coeff] at (1,-0.65) {$-1$};
\draw[edge,->] (0,0.72) -- (0,0.24);
\node[coeff] at (0,1.00) {center};
\node[coeff] at (0,-1.12) {$\displaystyle (L_h\bm u)_j=(2U_j-U_{j-1}-U_{j+1})/h^2$};
\end{tikzpicture}%
}

\newcommand{\FivePointStencilDiagram}{%
\begin{tikzpicture}[x=0.72cm,y=0.72cm,line cap=round,line join=round,
    ptnode/.style={circle,draw=gray!65,fill=white,minimum size=0.48cm,inner sep=0pt,font=\scriptsize},
    edge/.style={gray!55,line width=0.45pt},
    coeff/.style={gray!58,font=\scriptsize}]
\node[ptnode] (c) at (0,0) {$4$};
\node[ptnode] (l) at (-1,0) {$-1$};
\node[ptnode] (r) at (1,0) {$-1$};
\node[ptnode] (u) at (0,1) {$-1$};
\node[ptnode] (d) at (0,-1) {$-1$};
\draw[edge] (l)--(c)--(r);
\draw[edge] (d)--(c)--(u);
\node[coeff] at (0,-1.75) {five-point positive Laplacian stencil};
\end{tikzpicture}%
}

\newcommand{\OneDLaplacianStencil}{%
\begin{tikzpicture}[x=1.0cm,y=0.8cm,line cap=round,line join=round,
    nodept/.style={circle,fill=gray!45,inner sep=1.35pt},
    wire/.style={gray!45,line width=0.5pt},
    coeff/.style={gray!60,font=\footnotesize}]
\foreach \tx in {-2,-1,0,1,2}{
  \node[nodept] at (\tx,0) {};
}
\foreach \tx in {-2,-1,0,1}{
  \draw[wire] (\tx,0)--(\tx+1,0);
}
\node[coeff] at (-1,0.55) {$-1$};
\node[coeff] at (0,0.75) {$2$};
\node[coeff] at (1,0.55) {$-1$};
\node[coeff] at (-1,-0.45) {$j-1$};
\node[coeff] at (0,-0.45) {$j$};
\node[coeff] at (1,-0.45) {$j+1$};
\draw[gray!60,decorate,decoration={brace,mirror,amplitude=4pt}] (-1.15,-0.8)--(1.15,-0.8)
  node[midway,below=5pt,gray!60] {$(-u_{j-1}+2u_j-u_{j+1})/h^2$};
\end{tikzpicture}%
}

\newcommand{\RepeatedPostselectionCircuit}{%
\begin{tikzpicture}[x=0.92cm,y=0.78cm,line cap=round,line join=round,
    qwire/.style={gray!70,line width=0.48pt},
    biggate/.style={draw=gray!68,fill=white,rounded corners=1pt,
      minimum width=0.76cm,minimum height=1.18cm,inner sep=2pt},
    meter/.style={draw=gray!68,fill=white,rounded corners=1pt,
      minimum width=0.82cm,minimum height=0.40cm,inner sep=1pt,font=\scriptsize},
    reset/.style={draw=gray!68,fill=white,rounded corners=1pt,
      minimum width=0.68cm,minimum height=0.38cm,inner sep=1pt,font=\scriptsize},
    lab/.style={gray!62,font=\scriptsize}]
\draw[qwire] (0,0)--(11.8,0);
\draw[qwire] (0,-1)--(11.8,-1);
\node[left=4pt,gray!70] at (0,0) {$\ket{0}$};
\node[left=4pt,gray!70] at (0,-1) {$\ket{\psi}$};
\node[lab,left=18pt] at (0,0) {flag};
\node[lab,left=18pt] at (0,-1) {system};

\node[biggate] at (1.25,-0.5) {$\mathcal A_1$};
\node[meter] at (2.35,0) {$\mathsf{meas}$};
\node[lab] at (2.35,0.55) {accept $0$};
\node[reset] at (3.35,0) {$\mathsf{reset}$};

\node[biggate] at (4.45,-0.5) {$\mathcal A_2$};
\node[meter] at (5.55,0) {$\mathsf{meas}$};
\node[lab] at (5.55,0.55) {accept $0$};
\node[reset] at (6.55,0) {$\mathsf{reset}$};

\node[lab] at (7.35,-0.48) {$\cdots$};

\node[biggate] at (8.55,-0.5) {$\mathcal A_m$};
\node[meter] at (9.65,0) {$\mathsf{meas}$};
\node[lab] at (9.65,0.55) {accept $0$};
\draw[qwire,->] (10.15,-1)--(11.55,-1);
\node[lab,align=center] at (5.65,-1.72)
  {continue only after outcomes $0,0,\ldots,0$; after each success the flag is refreshed to $\ket{0}$};
\end{tikzpicture}%
}

\section{An overview}

This chapter introduces a small collection of quantum primitives that will be used throughout this lecture note. The intended reader is familiar with finite-dimensional linear algebra, matrix factorizations, finite differences, finite elements, and operator discretization, but is not assumed to know quantum mechanics.

The central message is simple. A quantum computer stores a vector in an exponentially large Hilbert space, but only through unitary transformations and measurements. Thus the relevant question is not merely whether a vector or a matrix can be represented, but whether it can be represented in a form that can be manipulated, transformed, and measured efficiently. The primitive that will play the role of a matrix access model is the \emph{block encoding}. Once an operator has been block encoded, the quantum singular value transformation (QSVT) gives a systematic way to apply polynomial functions to that operator. This is the quantum analogue of a familiar numerical-analysis idea: approximate a function of a matrix by a polynomial, then apply the resulting polynomial through repeated matrix-vector-like operations.  Standard background on the circuit model and the postulates of quantum mechanics can be found in Nielsen and Chuang \cite{NielsenChuang2010}; Lin and Wiebe's lecture notes provide a complementary introduction oriented toward scientific computing and matrix algorithms \cite{Lin2022QASC,LinWiebe2026QASC}.

The chapter is organized as follows.  We first introduce quantum states, computational-basis vectors, Pauli strings, tensor-product registers, and a short norm dictionary, followed by measurement, grid-function amplitude encoding, and state preparation.  The transform-and-estimate primitives come next: the quantum Fourier transform, quantum phase estimation, inner-product estimation, postselection, amplitude amplification, and amplitude estimation.  We then develop the matrix access model: block encodings and their calculus, diagonal and sparse-matrix access, QSVT, and explicit block encodings of elementary finite-difference operators.  A short section connects these primitives to PDE algorithms, illustrated by two worked examples---a Poisson solver and a Schr\"odinger evolution built on the same discrete Laplacian---then assembles the pieces, and a final section explains how to read query and gate-count estimates.

The notation introduced here is used consistently in the PDE chapters.  The amplitude encoding of grid vectors in \eqref{eq:amplitude-encoding-grid-vector} and the $L^2$--$\ell_2$ scaling in \eqref{eq:L2-l2-scaling} are the bridge between functions and quantum states.  The QFT definition in \eqref{eq:qft-definition} is the structured-grid spectral transform used later when periodic or tensor-product boundary conditions allow diagonalization. The block-encoding definition \eqref{eq:block-encoding-definition} is the matrix access model used in the elliptic, hyperbolic, and parabolic chapters.  The finite-difference block encodings in Section~\ref{sec:finite-difference-block-encodings}, especially \eqref{eq:1d-positive-laplacian-shift-form} and \eqref{eq:2d-laplacian-shift-form}, should be viewed as the first concrete examples of the operator encodings used later in Chapters~\ref{chap:elliptic-quantum}--\ref{chap:parabolic-quantum}. Throughout the lecture note, $\widetilde O(\cdot)$ suppresses logarithmic factors in dimension, precision, condition-number, and related parameters.

\begin{remark}
The presentation deliberately avoids physical language when possible. The word ``state'' should be read as ``unit-norm vector in a complex vector space.'' The word ``gate'' should be read as ``unitary matrix with a compact circuit representation.'' The word ``measurement'' should be read as ``randomized access to the squared moduli of coordinates in a specified basis.''
\end{remark}

\section{Quantum states as normalized vectors}

Let $n$ be a positive integer and set $N=2^n$. An $n$-qubit pure quantum state is a vector \cite{NielsenChuang2010,LinWiebe2026QASC}
\begin{equation}
  \ket{\psi}=\sum_{j=0}^{N-1}\psi_j\ket{j}\in \C^N,
  \qquad
  \sum_{j=0}^{N-1}|\psi_j|^2=1.
\end{equation}
The vectors $\ket{0},\ldots,\ket{N-1}$ naturally form the \emph{computational basis}. They are ordinary Euclidean basis vectors, but indexed by binary strings. If the integer $j$ has the binary representation
\begin{equation}
  j=(j_{n-1},\ldots,j_1,j_0)_2
  :=j_{n-1}2^{n-1}+\cdots+j_1 2+j_0,
  \qquad j_k\in\{0,1\},
\end{equation}
where the subscript $2$ denotes base two, then
\begin{equation}
  \ket{j}=\ket{j_{n-1}}\ot\cdots\ot\ket{j_1}\ot\ket{j_0}.
\end{equation}
The two basis states of one qubit are
\begin{equation}
  \ket{0}=\begin{bmatrix}1\\0\end{bmatrix},
  \qquad
  \ket{1}=\begin{bmatrix}0\\1\end{bmatrix}.
\end{equation}
Thus a quantum state is not mysterious: it is a unit vector. The difference from classical numerical linear algebra is operational. One does not read the vector entries directly. A measurement of $\ket{\psi}$ in the computational basis returns the integer $j$ with probability $|\psi_j|^2$.

\subsection{Gates as unitary matrices}

A quantum circuit typically applies a product of unitary matrices to a state. Common one-qubit gates include \cite{NielsenChuang2010,LinWiebe2026QASC}
\begin{equation}
  X=\begin{bmatrix}0&1\\1&0\end{bmatrix},
  \qquad
  Z=\begin{bmatrix}1&0\\0&-1\end{bmatrix},
  \qquad
  H=\frac{1}{\sqrt2}\begin{bmatrix}1&1\\1&-1\end{bmatrix}.
\end{equation}

It is often helpful to look at what a gate operation does to some specific states. For example,
the Hadamard gate satisfies
\begin{equation}
  H\ket{0}=\frac{\ket{0}+\ket{1}}{\sqrt2},
  \qquad
  H\ket{1}=\frac{\ket{0}-\ket{1}}{\sqrt2}.
\end{equation}

This can be generalized to multiple qubits.
Applying $H$ to every qubit gives the uniform superposition
\begin{equation}
  H^{\ot n}\ket{0^n}=\frac{1}{\sqrt{N}}\sum_{j=0}^{N-1}\ket{j}.
\end{equation}
This is the quantum analogue of initializing all grid points at once.

\subsection{Pauli matrices and Pauli strings}
\label{subsec:pauli-strings}

For quantum algorithms based on Hamiltonian simulation, it is useful to have a standard basis for matrices on qubits.  The one-qubit Pauli matrices are \cite{NielsenChuang2010}
\begin{equation}
  I=\begin{bmatrix}1&0\\0&1\end{bmatrix},\qquad
  X=\begin{bmatrix}0&1\\1&0\end{bmatrix},\qquad
  Y=\begin{bmatrix}0&-i\\i&0\end{bmatrix},\qquad
  Z=\begin{bmatrix}1&0\\0&-1\end{bmatrix}.
  \label{eq:pauli-matrices}
\end{equation}
They are Hermitian and unitary, and satisfy
\begin{equation}
  X^2=Y^2=Z^2=I,
  \qquad
  XY=iZ,
  \qquad
  YZ=iX,
  \qquad
  ZX=iY.
  \label{eq:pauli-products}
\end{equation}
Consequently,
\begin{equation}
  [X,Y]=2iZ,
  \qquad
  [Y,Z]=2iX,
  \qquad
  [Z,X]=2iY,
  \label{eq:pauli-commutators}
\end{equation}
and the nonidentity Pauli matrices $X,Y,Z$ pairwise anticommute, equivalently
\begin{equation}
  \{X,Y\}=\{Y,Z\}=\{Z,X\}=0.
\end{equation}  A Pauli string on $n$ qubits is a tensor product
\begin{equation}
  P_{\mu}=\sigma_{\mu_1}\ot\cdots\ot\sigma_{\mu_n},
  \qquad
  \sigma_{\mu_j}\in\{I,X,Y,Z\}.
  \label{eq:pauli-string-definition}
\end{equation}
Every Pauli string is Hermitian and unitary.  Two Pauli strings either commute or anticommute.  More precisely, their product differs from the reverse product by a sign determined by the number of tensor positions at which the nonidentity factors anticommute.

The $4^n$ Pauli strings form an orthogonal basis for all $2^n\times2^n$ complex matrices with respect to the Hilbert--Schmidt inner product:
\begin{equation}
  \operatorname{Tr}(P_\mu^\dag P_\nu)=2^n\delta_{\mu\nu}.
  \label{eq:pauli-hilbert-schmidt-orthogonality}
\end{equation}
Thus any matrix $H\in\C^{2^n\times2^n}$ can be expanded as
\begin{equation}
  H=\sum_{\mu\in\{I,X,Y,Z\}^n} h_\mu P_\mu,
  \qquad
  h_\mu=2^{-n}\operatorname{Tr}(P_\mu H).
  \label{eq:pauli-expansion}
\end{equation}
If $H$ is Hermitian, then all coefficients $h_\mu$ are real.  This expansion is conceptually simple, but it is not automatically efficient: a sparse matrix, such as the tridiagonal Laplacian, need not be sparse in the Pauli basis.

\begin{remark}[Commutators and product formulas]
Pauli commutation relations are the algebraic source of many Hamiltonian-simulation estimates.  A unitary evolution of a Pauli string can be implemented directly on the quantum circuit \cite{NielsenChuang2010}.  If $H=A+B$ has two non-commuting Pauli strings, a first-order product formula replaces $e^{-itH}$ by $e^{-itA}e^{-itB}$, and the leading error is controlled by commutators such as $[A,B]$.  When $A$ and $B$ are sums of Pauli strings, their commutators can be evaluated from the pairwise Pauli commutation rules above.
\end{remark}

\subsection{Tensor products and register notation}

If $\ket{\phi}\in\C^{N_1}$ and $\ket{\psi}\in\C^{N_2}$, then their tensor product is a vector in $\C^{N_1N_2}$:
\begin{equation}
  \ket{\phi}\ket{\psi}
  =\ket{\phi}\ot\ket{\psi}.
\end{equation}
For readability we will usually omit the tensor symbol. For example,
\begin{equation}
  \ket{0}\ket{j}=\ket{0}\ot\ket{j}.
\end{equation}
A group of qubits is called a \emph{register}. In numerical terms, a register is simply a tensor factor in the ambient vector space \cite{NielsenChuang2010,LinWiebe2026QASC}.

A recurring quantum-algorithmic construction is to enlarge the space containing a system state $\ket{\psi}\in\mathcal H_{\rm s}$ by appending an ancilla register initialized in a known state, usually $\ket{0^a}\in\mathcal H_{\rm a}$.  One then applies a joint unitary
\begin{equation}
  U_{\rm joint}:\mathcal H_{\rm a}\otimes\mathcal H_{\rm s}
  \longrightarrow
  \mathcal H_{\rm a}\otimes\mathcal H_{\rm s}.
  \label{eq:joint-register-unitary}
\end{equation}
The additional register may store a branch label, a success flag, an eigenvalue estimate, or temporary arithmetic.  Measuring or projecting the ancilla at the end extracts information about the system block.  Block encodings, phase estimation, the Hadamard test, and amplitude amplification are all instances of this general ``enlarge--evolve--read out'' pattern.

\begin{figure}[htbp]
\centering
\JointUnitaryRegisters
\caption{A system register is elevated to the larger space $\mathcal H_{\rm a}\otimes\mathcal H_{\rm s}$ by appending an ancilla register, followed by a unitary that acts jointly on both registers.  Later circuits specialize the joint unitary and the final ancilla readout.}
\label{fig:joint-unitary-registers}
\end{figure}

\section{A short norm dictionary}
\label{sec:norm-dictionary}

Before introducing block encodings, we record several matrix norms that appear repeatedly in complexity estimates.  The operator norm, or spectral norm, is
\begin{equation}
  \norm{A}_2=\max_{\bm x\neq0}\frac{\norm{A\bm x}_2}{\norm{\bm x}_2}.
  \label{eq:spectral-norm}
\end{equation}
For Hermitian $A$, this is the largest absolute eigenvalue.  For a diagonal matrix $D$ with entries $d_0,\ldots,d_{N-1}$,
\begin{equation}
  \norm{D}_2=\max_j |d_j|.
  \label{eq:diagonal-spectral-norm}
\end{equation}
The entrywise maximum norm is also commonly used,
\begin{equation}
  \norm{A}_{\max}=\max_{i,j}|A_{ij}|.
  \label{eq:entrywise-max-norm}
\end{equation}
It is not equivalent to the spectral norm with dimension-independent constants.  For an $N\times N$ matrix,
\begin{equation}
  \norm{A}_{\max}\leq \norm{A}_2\leq N\norm{A}_{\max}.
  \label{eq:max-spectral-equivalence}
\end{equation}
The first inequality follows by testing on coordinate vectors; the second follows from the Frobenius norm.  A sharper and often more useful estimate is
\begin{equation}
  \norm{A}_2\leq \sqrt{\norm{A}_1\norm{A}_\infty},
  \label{eq:norm-1-infty-bound}
\end{equation}
where $\norm{A}_1$ is the maximum column sum and $\norm{A}_\infty$ is the maximum row sum.  In particular, if $A$ has at most $s$ nonzero entries in each row and column, then
\begin{equation}
  \norm{A}_2\leq s\norm{A}_{\max}.
  \label{eq:sparse-max-norm-bound}
\end{equation}

These elementary inequalities matter because a block encoding represents $A/\alpha$, not $A$ itself.  The normalization $\alpha$ must be at least of the order of a norm bound for $A$.  A crude bound, such as $N\norm{A}_{\max}$ for a sparse operator, may be much worse than the natural sparse bound $s\norm{A}_{\max}$ or the true spectral norm.  In PDE discretizations, tracking this normalization is as important as tracking the condition number.

\begin{remark}[Soft-O notation]
We write $\widetilde O(\cdot)$ to suppress factors that are logarithmic in the dimension, precision, condition number, or other explicitly named parameters.  This convention is common in quantum algorithms, but in PDE applications the hidden logarithms and state-preparation costs may still be important.
\end{remark}

\section{Measurement and observables}
\label{sec:measurement-observables}

A measurement is the main point at which quantum notation differs from standard matrix computation. A circuit may prepare or transform a vector in a high-dimensional space, but the final readout is random.  The projective-measurement rule used below is one of the basic postulates of quantum mechanics \cite[Sec.~2.2]{NielsenChuang2010}. The simplest case is measurement in the computational basis: for
\begin{equation}
  \ket{\psi}=\sum_{j=0}^{N-1}\psi_j\ket{j},
\end{equation}
the observed index is $j$ with probability $|\psi_j|^2$.

More generally, a directly measurable scalar quantity is represented by a Hermitian matrix, called an \emph{observable}. Let
\begin{equation}
  O=\sum_{r}\lambda_r\Pi_r
  \label{eq:observable-spectral-decomposition}
\end{equation}
be its spectral decomposition, where the eigenvalues $\lambda_r$ are distinct and $\Pi_r$ is the orthogonal projector onto the corresponding eigenspace. Measuring $O$ in the state $\ket{\psi}$ returns the eigenvalue $\lambda_r$ with probability
\begin{equation}
  p_r = \norm{\Pi_r\ket{\psi}}_2^2
      = \bra{\psi}\Pi_r\ket{\psi}.
  \label{eq:observable-measurement-probability}
\end{equation}
If $p_r>0$, the post-measurement state conditioned on observing $\lambda_r$ is
\begin{equation}
  \frac{\Pi_r\ket{\psi}}{\sqrt{p_r}}.
  \label{eq:post-measurement-state}
\end{equation}
Thus direct measurement samples from the spectral measure of $O$ induced by $\ket{\psi}$; it does not return the expectation value in one shot. The expectation is recovered only statistically:
\begin{equation}
  \mathbb E[\lambda]=\sum_r \lambda_r p_r
  = \bra{\psi}O\ket{\psi}.
  \label{eq:observable-expectation}
\end{equation}
This is often the most useful numerical interpretation: a quantum measurement gives samples of eigenvalues of the observable, with weights determined by the squared projected components of the state.

\begin{example}[Computational-basis measurement as an observable]
Computational-basis measurement is the special case in which
\begin{equation}
  O=\sum_{j=0}^{N-1} j\ket{j}\bra{j}.
\end{equation}
Then $\Pi_j=\ket{j}\bra{j}$ and \eqref{eq:observable-measurement-probability} gives $p_j=|\psi_j|^2$.
\end{example}

\begin{example}[Diagonal observables for grid quantities]
Let $q(x)$ be a real-valued grid observable and define
\begin{equation}
  O_q=\sum_{j=0}^{N-1}q(x_j)\ket{j}\bra{j}.
\end{equation}
Let $\ket{\bm f_h} \propto (f(x_0),\ldots,f(x_{N-1}))^T $ be a function defined on some grid points.
Measuring $O_q$ on $\ket{\bm f_h}$ samples the values $q(x_j)$ with probabilities proportional to $|f(x_j)|^2$. Its expectation is
\begin{equation}
  \bra{\bm f_h}O_q\ket{\bm f_h}
  =\frac{\sum_j q(x_j)|f(x_j)|^2}{\sum_j |f(x_j)|^2},
\end{equation}
which is a normalized weighted average. To recover the corresponding unnormalized continuum integral, one must multiply by the appropriate norm and quadrature factors.
\end{example}

\begin{remark}[Expectation estimation is a sampling problem]
If the eigenvalues of $O$ lie in $[a,b]$, then estimating $\bra{\psi}O\ket{\psi}$ by repeated direct measurements has Monte Carlo error proportional to $(b-a)/\sqrt{M}$ after $M$ samples, up to constants depending on the variance. More elaborate procedures, such as amplitude estimation or Hadamard-test-based methods, may improve the scaling under additional circuit assumptions.
\end{remark}

\section{Grid functions, amplitude encoding, and the $L^2$--$\ell_2$ scaling}

Quantum algorithms for PDEs usually begin after spatial discretization. Suppose $\Omega\subset\R^d$ is discretized by $N$ grid points $x_j$, $j=0,\ldots,N-1$. For simplicity assume a uniform mesh with volume element $h^d$. A function $f$ is represented by the grid vector
\begin{equation}
  \bm f_h=(f(x_0),\ldots,f(x_{N-1}))^T\in\C^N.
\end{equation}
A natural quantum representation of this grid vector is the amplitude encoding
\begin{equation}
  \ket{\bm f_h} = \frac{1}{\norm{\bm f_h}_2}\sum_{j=0}^{N-1} f(x_j)\ket{j},
  \qquad
  \norm{\bm f_h}_2=\left(\sum_{j=0}^{N-1}|f(x_j)|^2\right)^{1/2}.
  \label{eq:amplitude-encoding-grid-vector}
\end{equation}
This encoding is homogeneous: $\bm f_h$ and $c\bm f_h$ define the same quantum state up to a global phase when $c\neq0$. Therefore the scale $\norm{\bm f_h}_2$ must be stored or estimated separately if physical units or absolute magnitudes are needed.

\subsection{The norm conversion}

The continuum norm is approximated by the quadrature formula
\begin{equation}
  \norm{f}_{L^2(\Omega)}^2
  \approx
  h^d\sum_{j=0}^{N-1}|f(x_j)|^2
  = h^d\norm{\bm f_h}_2^2.
\end{equation}
Hence
\begin{equation}
  \norm{f}_{L^2(\Omega)}\approx h^{d/2}\norm{\bm f_h}_2.
  \label{eq:L2-l2-scaling}
\end{equation}
Equivalently, if $\ket{\bm f_h}$ is the normalized amplitude encoding in \eqref{eq:amplitude-encoding-grid-vector}, then its $j$th amplitude is approximately
\begin{equation}
  \frac{f(x_j)}{\norm{\bm f_h}_2}
  \approx
  \frac{h^{d/2}f(x_j)}{\norm{f}_{L^2(\Omega)}}.
\end{equation}
Thus amplitude encoding stores the $L^2$-normalized function values multiplied by the square root of the cell volume. This is the same scaling that appears when finite element coefficient vectors are related to $L^2$ inner products through a mass matrix; see, for example, standard finite-difference and finite-element treatments in \cite{LarssonThomee2009,TveitoWinther2005}.

\begin{remark}[Mass matrices]
For a finite element basis $\{\varphi_j\}_{j=0}^{N-1}$, a coefficient vector $\bm u$ for a function 
$u(x)= \sum_j u_j \varphi_j(x)$ 
satisfies
\begin{equation}
  \norm{u_h}_{L^2}^2 = \bm u^\dag M \bm u,
  \qquad
  M_{ij}=\int_\Omega \overline{\varphi_i(x)}\varphi_j(x)\,dx.
\end{equation}
For real-valued finite element bases the conjugation is invisible, but the Hermitian form above is the convention used in the quantum chapters. Amplitude encoding directly normalizes $\bm u$ in the Euclidean norm. If $M\neq cI$, then the quantum state $\ket{\bm u}=\bm u/\norm{\bm u}_2$ is not automatically an $L^2$-normalized finite element function. One must either use a mass-lumping, apply a transformation involving $M^{1/2}$, or account for $M$ as part of the observable.
\end{remark}

\subsection{Inner products of functions}

If $f$ and $g$ are represented by amplitude states $\ket{\bm f_h}$ and $\ket{\bm g_h}$, then
\begin{equation}
  \braket{\bm f_h}{\bm g_h}
  =\frac{\bm f_h^\dag\bm g_h}{\norm{\bm f_h}_2\norm{\bm g_h}_2}.
\end{equation}
The continuum inner product is approximated by
\begin{equation}
  \ip{f}{g}_{L^2}
  \approx
  h^d \bm f_h^\dag\bm g_h
  =
  \left(h^{d/2}\norm{\bm f_h}_2\right)
  \left(h^{d/2}\norm{\bm g_h}_2\right)
  \braket{\bm f_h}{\bm g_h}.
  \label{eq:inner-product-scaling}
\end{equation}
Equation \eqref{eq:inner-product-scaling} is often the first piece of bookkeeping needed in a PDE application. The quantum circuit estimates the normalized overlap; the numerical analyst must multiply back the discrete norms and quadrature weights.  This is the same bookkeeping that underlies finite-difference and finite-element error estimates in standard numerical PDE texts such as Larsson--Thom\'ee and Tveito--Winther \cite{LarssonThomee2009,TveitoWinther2005}.

\section{State preparation}
\label{sec:state-preparation}

A state preparation circuit for a vector $\bm v\in\C^N$ is a unitary $U_v$ such that
\begin{equation}
  U_{\bm v}\ket{0^n}=\ket{\bm v}:=\frac{1}{\norm{\bm v}_2}\sum_{j=0}^{N-1}v_j\ket{j}.
\end{equation}
State preparation can be regarded as a data-input problem. Quantum speedups can disappear if preparing $\ket{\bm v}$ costs $\Theta(N)$. For PDEs, we therefore distinguish between at least three cases:
\begin{enumerate}[label=(\roman*)]
  \item $\bm v$ has a compact formula, such as samples of a smooth function;
  \item $\bm v$ is generated by a previous quantum subroutine;
  \item $\bm v$ is arbitrary classical data and must be loaded from memory.
\end{enumerate}
Only the first two cases usually support efficient state preparation without a strong memory-access assumption.

\subsection{Grover--Rudolph preparation for smooth densities}

Grover and Rudolph introduced an efficient recursive procedure for preparing states associated with efficiently integrable probability distributions \cite{GroverRudolph2002}. We describe it in a form useful for grid functions.

Let $\rho:[0,1]\to\R_+$ be a probability density and assume that integrals of $\rho$ over dyadic intervals can be computed efficiently. Specifically,  let $N=2^n$ and define cells
\begin{equation}
  I_j=[j/N,(j+1)/N),\qquad
  p_j=\int_{I_j}\rho(x)\,dx.
\end{equation}
The target state is
\begin{equation}
  \ket{p}=\sum_{j=0}^{N-1}\sqrt{p_j}\ket{j}.
\end{equation}
which is automatically normalized. 
The construction is a binary tree (Figure~\ref{fig:grover-rudolph-tree}).  At each node, one rotates the next qubit according to the relative mass in the two children of the current interval.

\begin{figure}[htbp]
\centering
\GroverRudolphTree
\caption{The binary tree behind Grover--Rudolph state preparation.  The node weight $P_b$ is the integral of the density over the dyadic interval indexed by the bit string $b$.  The rotation at node $b$ places the next qubit in a superposition with probabilities $P_{b0}/P_b$ and $P_{b1}/P_b$.}
\label{fig:grover-rudolph-tree}
\end{figure}

Due to the dyadic partitioning, the interval $I_b$ indexed by a bit string $b=b_1\cdots b_m$, the strings $b0$ and $b1$ are obtained by appending one additional binary digit.  The corresponding intervals $I_{b0}$ and $I_{b1}$ are respectively the left and right halves of $I_b$.  Define
\begin{equation}
  P_b=\int_{I_b}\rho(x)\,dx,
  \qquad
  P_{b0}=\int_{I_{b0}}\rho(x)\,dx,
  \qquad
  P_{b1}=\int_{I_{b1}}\rho(x)\,dx.
  \label{eq:grover-rudolph-node-weights}
\end{equation}
The additivity identity $P_b=P_{b0}+P_{b1}$ is the local mass conservation property of the binary tree. The controlled rotation at node $b$ uses the angle $\theta_b$ satisfying
\begin{equation}
  \cos^2\theta_b=\frac{P_{b0}}{P_b},
  \qquad
  \sin^2\theta_b=\frac{P_{b1}}{P_b}.
  \label{eq:grover-rudolph-angle}
\end{equation}
After all $n$ levels, the amplitude of the leaf indexed by $j=j_1\cdots j_n$ is\footnote{Here bits are generated from the most significant downward, so $j_1$ is the most significant bit; in the register convention $\ket{j}=\ket{j_{n-1}}\ot\cdots\ot\ket{j_0}$ introduced earlier this is the bit $j_{n-1}$.}
\begin{equation}
  \prod_{m=0}^{n-1}\sqrt{\frac{P_{j_1\cdots j_{m+1}}}{P_{j_1\cdots j_m}}}
  =\sqrt{P_j}=\sqrt{p_j},
\end{equation}
where the empty string is used at $m=0$.  Thus the telescoping product of conditional probabilities gives the target amplitudes.

\begin{remark}[Preparing a function, not a density]
If the desired amplitudes are proportional to a real-valued function $f$, set
\begin{equation}
  \rho(x)=\frac{|f(x)|^2}{\int |f(y)|^2\,dy}.
\end{equation}
Grover--Rudolph preparation produces amplitudes proportional to $|f|$. The sign or complex phase of $f$ must then be inserted by a phase oracle:
\begin{equation}
  \ket{j}\mapsto e^{i\arg f(x_j)}\ket{j}.
\end{equation}
For sign-changing real functions this is a $\pm1$ phase. For smooth complex functions it requires an efficient approximation of the phase.
\end{remark}

\begin{remark}[Higher dimensions]
For $\Omega=[0,1]^d$, one may recursively bisect boxes or use a tensor-product construction when $\rho(x_1,\ldots,x_d)$ is separable or approximately separable. The computational bottleneck is the ability to evaluate integrals over dyadic boxes. This is a useful point of contact with adaptive quadrature and hierarchical bases in numerical analysis.
\end{remark}

An alternative, more general state-preparation network was given by Kaye and Mosca \cite{KayeMosca2004}.  Their construction also uses conditional norms and phases associated with a binary decomposition. It is efficient for structured families for which these quantities can be computed efficiently, but it does not make arbitrary classical data inexpensive to load.

\section{The quantum Fourier transform}
\label{sec:qft-iqft}

The quantum Fourier transform (QFT) is the unitary version of the discrete Fourier transform acting on amplitudes.  Let $N=2^n$ and let $\omega_N=e^{2\pi i/N}$.  The QFT is defined by the following unitary; see Nielsen and Chuang \cite[Sec.~5.1]{NielsenChuang2010} for detailed derivations:
\begin{equation}
  F_N\ket{j}
  =\frac{1}{\sqrt N}\sum_{k=0}^{N-1}\omega_N^{jk}\ket{k},
  \qquad j=0,\ldots,N-1.
  \label{eq:qft-definition}
\end{equation}
The inverse QFT is $F_N^\dag$, obtained by replacing $\omega_N$ by $\omega_N^{-1}$:
\begin{equation}
  F_N^\dag\ket{k}
  =\frac{1}{\sqrt N}\sum_{j=0}^{N-1}\omega_N^{-jk}\ket{j}.
\end{equation}
If a grid vector is amplitude encoded as $\ket{\bm u}=\sum_j u_j\ket{j}/\norm{u}$, then applying $F_N$ produces the normalized vector of discrete Fourier coefficients.  This is why QFT diagonalizes periodic finite-difference operators in later chapters.

The circuit follows from the binary factorization of a Fourier basis state.  With the binary fraction
\begin{equation}
  0.j_\ell j_{\ell-1}\cdots j_0
  :=\frac{j_\ell}{2}+\frac{j_{\ell-1}}{2^2}+\cdots+\frac{j_0}{2^{\ell+1}},
\end{equation}
one obtains, up to reversal of the output-qubit order,
\begin{equation}
  F_N\ket{j_{n-1}\cdots j_0}
  =\frac{1}{2^{n/2}}
  \bigotimes_{\ell=0}^{n-1}
  \left(\ket{0}+e^{2\pi i(0.j_\ell\cdots j_0)}\ket{1}\right).
  \label{eq:qft-binary-factorization}
\end{equation}
Each factor is generated by one Hadamard gate followed by controlled phase rotations; the final SWAP gates restore the conventional bit order; Figure~\ref{fig:qft-circuit} shows the four-qubit circuit.  

\begin{figure}[htbp]
\centering
\QFTFourCircuit
\caption{A QFT circuit on four qubits.  Here $R_m=\operatorname{diag}(1,e^{2\pi i/2^m})$.  The final swaps implement bit reversal.  The inverse QFT reverses the gate order and replaces each $R_m$ by $R_m^\dag$.}
\label{fig:qft-circuit}
\end{figure}

The classical fast Fourier transform maps an explicitly stored vector in $\C^N$ to another explicitly stored vector in $O(N\log N)$ arithmetic operations.  The QFT instead maps an amplitude-encoded vector on $n=\log_2 N$ qubits to another amplitude-encoded vector using $O(n^2)$ one- and two-qubit gates for the exact circuit, and fewer gates for standard approximate versions \cite{Coppersmith2002,NielsenChuang2010}.  The apparent exponential gap should be interpreted carefully: the QFT does not print all Fourier coefficients.  It prepares a state whose amplitudes are those coefficients, and subsequent measurements or controlled operations can access only selected information.

The circuit structure is nevertheless important.  Rather than updating $N$ Fourier coefficients one at a time, the gates act coherently on all amplitudes of the quantum state.  The controlled phase gates are diagonal and many commute, while gates on disjoint qubit pairs can be scheduled in parallel.  Hence the exact circuit uses $O(n^2)$ gates, and its depth can be reduced further subject to connectivity and approximation tolerance.  The saving is therefore a combination of coherent amplitude processing and circuit parallelism, not the classical production of $N$ output numbers.  This is one of the reasons QFT-based methods are attractive for structured grids, where the PDE operator is diagonal or nearly diagonal in the Fourier basis.

\begin{remark}[QFT versus FFT]
For numerical analysts, the safest analogy is the following.  The FFT is an efficient explicit change of basis for a stored vector.  The QFT is an efficient coherent change of basis for an amplitude-encoded vector.  It is extremely powerful when a quantum algorithm needs to apply phases, filters, or measurements in the frequency basis; it is not a replacement for the FFT when the task is to output the full list of Fourier coefficients as classical data.
\end{remark}

\section{Quantum phase estimation}
\label{sec:quantum-phase-estimation}

Quantum phase estimation (QPE) is the standard primitive for converting an
eigenphase into a binary register.  It is the main reason the inverse QFT was
introduced in the previous section.  Suppose that a unitary $U$ satisfies
\begin{equation}
  U\ket{q_j}=e^{2\pi i\theta_j}\ket{q_j},
  \qquad 0\leq \theta_j<1.
\end{equation}
QPE uses an $r$-qubit clock register, controlled powers
$U,U^2,U^4,\ldots,U^{2^{r-1}}$, and an inverse QFT on the clock register.  On
an exact eigenstate, the idealized map is
\begin{equation}
  \ket{0^r}\ket{q_j}
  \longmapsto
  \ket{\widetilde \theta_j}\ket{q_j},
  \label{eq:qpe-eigenstate-map}
\end{equation}
where $\widetilde \theta_j$ is an $r$-bit approximation of $\theta_j$.  More
precisely, $r$ clock qubits resolve phases to accuracy $O(2^{-r})$, with the
usual constant-success-probability qualification; additional qubits or median
amplification improve the success probability \cite[Sec.~5.2]{NielsenChuang2010}.

The real power of QPE is that it acts coherently on superpositions.  If
\begin{equation}
  \ket{\psi}=\sum_j c_j\ket{q_j},
\end{equation}
then QPE produces
\begin{equation}
  \ket{0^r}\ket{\psi}
  \longmapsto
  \sum_j c_j\ket{\widetilde \theta_j}\ket{q_j},
  \label{eq:qpe-superposition-map0}
\end{equation}
up to the phase-estimation error.  Thus QPE does not merely return one
eigenvalue; it attaches approximate spectral information to every eigenmode
present in the input state.  This viewpoint is central to eigenvalue problems,
linear-system algorithms, and amplitude estimation.  The latter applies QPE
to the two-dimensional Grover rotation introduced in
Subsection~\ref{subsec:amplitude-estimation}.  In Chapter~\ref{chap:elliptic-quantum},
we will apply the same idea to elliptic eigenvalue problems.

For Hamiltonian simulation, one solves the Schr\"odinger equation,
\begin{equation}
  U=e^{-iHt_0}.
\end{equation}
If $H\ket{q_j}=\lambda_j\ket{q_j}$, then the measured phase is
$\theta_j=-t_0\lambda_j/(2\pi)$ modulo one.  The modulo-one ambiguity is
important: either $t_0$ must be small enough to avoid aliasing on the spectral
window of interest, or additional information must be used to unwrap the
phase.  This mesh-dependent aliasing issue will appear again for discretized
PDE operators, whose largest eigenvalues often grow like a power of $h^{-1}$.

More advanced variants, including iterative, adaptive, and fixed-point phase
estimation, reduce ancilla counts or improve robustness.  We will not need
those refinements in this introductory chapter, but they are important in
resource-sensitive implementations.

\section{Estimating inner products}
\label{sec:inner-product-tests}

Many quantities of interest in numerical PDEs are inner products: masses, energies, weak-form residuals, correlations, and linear functionals. Quantum circuits provide several ways to estimate such quantities. We record the two most common primitives.

\subsection{The swap test}

Given two states $\ket{\bm u}$ and $\ket{\bm v}$, the swap test estimates $|\braket{\bm u}{\bm v}|^2$. It uses one ancilla qubit and a controlled-SWAP operation (Figure~\ref{fig:swap-test-circuit}).  It can also be viewed as a Hadamard test whose controlled unitary is the register-swap operator.  The test appears naturally in quantum fingerprinting and is a standard primitive for estimating fidelities and overlaps \cite{BuhrmanCleveWatrousDeWolf2001,NielsenChuang2010,LinWiebe2026QASC}.

\begin{figure}[htbp]
\centering
\SwapTestCircuit
\caption{Swap test for estimating $|\braket{\bm u}{\bm v}|^2$.  The controlled operation swaps the two input registers only when the ancilla is $\ket{1}$, and the box $M$ denotes computational-basis measurement of the ancilla.}
\label{fig:swap-test-circuit}
\end{figure}

We will show a sequence of states to make the construction transparent.  We attach an ancilla register and starting from 
\begin{equation}
  \ket{\Psi_0}=\ket{0}\ket{\bm u}\ket{\bm v},
\end{equation}
the first Hadamard gate gives
\begin{equation}
  \ket{\Psi_1}
  =\frac{1}{\sqrt2}
  \left(\ket{0}\ket{\bm u}\ket{\bm v}
  +\ket{1}\ket{\bm u}\ket{\bm v}\right).
\end{equation}
The controlled-SWAP acts only on the second branch,
\begin{equation}
  \ket{\Psi_2}
  =\frac{1}{\sqrt2}
  \left(\ket{0}\ket{\bm u}\ket{\bm v}
  +\ket{1}\ket{\bm v}\ket{\bm u}\right).
\end{equation}
After the second Hadamard gate on the ancilla,
\begin{align}
  \ket{\Psi_3}
  ={}&\frac12\ket{0}
  \left(\ket{\bm u}\ket{\bm v}+\ket{\bm v}\ket{\bm u}\right) \notag\\
  &+\frac12\ket{1}
  \left(\ket{\bm u}\ket{\bm v}-\ket{\bm v}\ket{\bm u}\right).
  \label{eq:swap-test-final-state}
\end{align}
Therefore
\begin{align}
  \Pr(\text{ancilla}=0)
  &=\frac14
  \left\|\ket{\bm u}\ket{\bm v}+\ket{\bm v}\ket{\bm u}\right\|^2 \notag\\
  &=\frac14\left(
  2+\braket{\bm u}{\bm v}\braket{\bm v}{\bm u}
   +\braket{\bm v}{\bm u}\braket{\bm u}{\bm v}\right) \notag\\
  &=\frac{1+|\braket{\bm u}{\bm v}|^2}{2}.
  \label{eq:swap-test-probability}
\end{align}
This calculation follows the step-by-step presentation in \cite{Lin2022QASC} and is also consistent with the standard circuit treatment in Nielsen and Chuang \cite{NielsenChuang2010,LinWiebe2026QASC}.  Repeating the test estimates $|\braket{\bm u}{\bm v}|^2$ by sampling.  The swap test is most useful when the magnitude of an overlap is enough, for example in fidelity estimation.

\begin{remark}[What the swap test does and does not give]
The swap test gives the squared magnitude of the inner product. It does not give the sign or phase of $\braket{\bm u}{\bm v}$. If the real or imaginary part is needed and one has access to state-preparation unitaries $U_u$ and $U_v$, one typically uses a Hadamard-test variant.
\end{remark}

\subsection{The Hadamard test}

Suppose $W$ is a unitary and $\ket{\psi}$ is a state. The Hadamard test estimates $\Real\bra{\psi}W\ket{\psi}$ or $\Imag\bra{\psi}W\ket{\psi}$ by controlling $W$ with one ancilla \cite{Lin2022QASC}.  It is a concrete instance of the enlarged-register construction in Figure~\ref{fig:joint-unitary-registers}: the ancilla and system registers undergo the joint controlled unitary
\begin{equation}
  \ket{0}\bra{0}\otimes I+\ket{1}\bra{1}\otimes W.
\end{equation}

\begin{figure}[htbp]
\centering
\HadamardTestCircuit
\caption{Hadamard test for estimating $\Real\bra{\psi}W\ket{\psi}$.  The box $M$ denotes computational-basis measurement of the ancilla.  To estimate the imaginary part, insert $S^\dag$ before the final Hadamard, where $S=\operatorname{diag}(1,i)$ is the phase gate.}
\label{fig:hadamard-test-circuit}
\end{figure}

For the real part, the state evolves as
\begin{align}
  \ket{0}\ket{\psi}
  &\xmapsto{H\otimes I}
  \frac{1}{\sqrt2}(\ket{0}+\ket{1})\ket{\psi} \notag\\
  &\xmapsto{\text{controlled-}W}
  \frac{1}{\sqrt2}
  \left(\ket{0}\ket{\psi}+\ket{1}W\ket{\psi}\right) \notag\\
  &\xmapsto{H\otimes I}
  \frac12\ket{0}(\ket{\psi}+W\ket{\psi})
  +\frac12\ket{1}(\ket{\psi}-W\ket{\psi}).
  \label{eq:hadamard-test-state-sequence}
\end{align}
Consequently, after measuring the ancilla, the outcome follows 
\begin{align}
  \Pr(0)
  &=\frac14\left\|\ket{\psi}+W\ket{\psi}\right\|^2
   =\frac12\left(1+\Real\bra{\psi}W\ket{\psi}\right),\\
  \Pr(1)
  &=\frac12\left(1-\Real\bra{\psi}W\ket{\psi}\right),
\end{align}
and hence
\begin{equation}
  \Pr(0)-\Pr(1)=\Real\bra{\psi}W\ket{\psi}.
  \label{eq:hadamard-test-real}
\end{equation}

For the imaginary part, apply $S^\dag$ to the ancilla before the final Hadamard, where
\begin{equation}
  S=\begin{bmatrix}1&0\\0&i\end{bmatrix},
  \qquad
  S^\dag=\begin{bmatrix}1&0\\0&-i\end{bmatrix}
  \label{eq:S-phase-gate}
\end{equation}
is the phase gate.  With this convention, the final state becomes
\begin{equation}
  \frac12\ket{0}(\ket{\psi}-iW\ket{\psi})
  +\frac12\ket{1}(\ket{\psi}+iW\ket{\psi}),
\end{equation}
so that
\begin{equation}
  \Pr(0)=\frac12\left(1+\Imag\bra{\psi}W\ket{\psi}\right).
  \label{eq:hadamard-test-imaginary}
\end{equation}

If $U_u\ket{0^n}=\ket{\bm u}$ and $U_v\ket{0^n}=\ket{\bm v}$, then taking $W=U_u^\dag U_v$ gives
\begin{equation}
  \bra{0^n}U_u^\dag U_v\ket{0^n}=\braket{\bm u}{\bm v}.
\end{equation}
Thus the real and imaginary Hadamard tests recover the signed complex inner product, provided controlled versions of the state-preparation circuits are available.

\section{Postselection and amplitude amplification}
\label{sec:postselection-amplification}

Many quantum algorithms produce the desired state only after a favorable measurement outcome. This mechanism is called \emph{postselection}. It is simple but important enough to isolate.

\subsection{Postselection}

Suppose a circuit prepares
\begin{equation}
  \mathcal A\ket{0}
  =
  \sqrt{p}\ket{0}\ket{\phi}
  +
  \sqrt{1-p}\ket{1}\ket{\phi_\perp},
  \label{eq:postselection-decomposition}
\end{equation}
where the first register is an ancilla flag. If we measure the flag and keep only the runs in which the outcome is $0$, then the remaining register is in the state $\ket{\phi}$. The price is that this happens with probability $p$. Repeating the experiment until success requires $O(1/p)$ trials on average.

A subtlety that has no exact classical analogue is that postselection can be costly when it is repeated many times.  Suppose a computation applies $m$ nonunitary steps and postselects after each one, with success probabilities $p_1,\ldots,p_m$.  The probability that the whole run survives is
\begin{equation}
  p_{\rm tot}=\prod_{r=1}^m p_r.
  \label{eq:repeated-postselection-product}
\end{equation}
If $p_r\leq p<1$, then $p_{\rm tot}\leq p^m$, which is exponentially small in the number of postselections.  Even when the measurements are weak, say $p_r=1-\delta_r$, one has the approximation
\begin{equation}
  \prod_{r=1}^m(1-\delta_r)\approx \exp\left(-\sum_{r=1}^m\delta_r\right).
  \label{eq:weak-measurement-product}
\end{equation}
Thus repeatedly applying a small nonunitary map and postselecting at every step, as pictured in Figure~\ref{fig:repeated-postselection-circuit}, can destroy the success probability.  This is why many quantum algorithms try to keep the full process coherent.

\begin{figure}[t]
\centering
\RepeatedPostselectionCircuit
\caption{Repeated postselection with a reusable flag ancilla.  Each box labeled $\mathsf{meas}$ is a mid-circuit computational-basis measurement; ``accept $0$'' means that the run continues only when the outcome is $0$.  After a successful measurement, the flag is refreshed to $\ket{0}$ before the next stage.  The total success probability is the product in \eqref{eq:repeated-postselection-product}.}
\label{fig:repeated-postselection-circuit}
\end{figure}

This operation is analogous to conditioning a random variable on an event, but it is not merely a matter of data analysis after the fact. The unsuccessful quantum runs are discarded, and the state available for subsequent computation is the normalized conditional state. Therefore a postselected subroutine with small success probability can dominate the cost of an algorithm.

\begin{remark}[Postselection changes normalization]
Before measuring the flag, the desired component in \eqref{eq:postselection-decomposition} has norm $\sqrt p$. After successful postselection, it is renormalized to unit norm. This renormalization is useful because it produces a valid quantum state, but it also hides the magnitude $p$, which must be tracked or estimated separately when it carries physical information.
\end{remark}

\subsection{Amplitude amplification}

A typical situation can be written more abstractly as
\begin{equation}
  \mathcal A\ket{0}
  =
  \sqrt{p}\ket{\mathrm{good}}\ket{\phi}
  +
  \sqrt{1-p}\ket{\mathrm{bad}}\ket{\phi_\perp},
  \label{eq:good-bad-decomposition}
\end{equation}
where $p$ is the success probability. Classical repetition needs $O(1/p)$ attempts. Amplitude amplification boosts the success probability using $O(1/\sqrt p)$ applications of $\mathcal A$ and $\mathcal A^\dag$ \cite{BrassardHoyerMoscaTapp2000}.

Let $\sin^2\theta=p$, $0\leq\theta\leq\pi/2$. Define two reflections: one about the initial state and one that changes the sign of the good subspace. Their product acts as a rotation by angle $2\theta$ in the two-dimensional subspace generated by the good and bad components in \eqref{eq:good-bad-decomposition}. After $k$ amplification steps, the success probability becomes
\begin{equation}
  \sin^2((2k+1)\theta).
\end{equation}
Choosing $k\approx \pi/(4\theta)$ makes the success probability close to one when $p$ is known. When $p$ is unknown, variants based on randomized iteration counts or amplitude estimation are used.

\subsection{Quantities of interest and recovering the missing norm}
\label{subsec:qoi-and-output-norm}

Let $P_{\rm qoi}$ be a projector representing a quantity of interest, such as membership in a subdomain or an event in a sampling problem. If
\begin{equation}
  p_{\rm qoi}=\bra{\psi}P_{\rm qoi}\ket{\psi},
\end{equation}
then estimating $p_{\rm qoi}$ is equivalent to estimating a measurement probability. Standard sampling gives error $O(M^{-1/2})$ after $M$ shots. Amplitude estimation can reduce the error to $O(R^{-1})$ after $R$ coherent uses of the underlying circuit, under stronger circuit assumptions. In PDE applications this matters when $p_{\rm qoi}$ is a physically meaningful scalar output, such as probability mass in a region, a normalized energy fraction, or a success probability associated with a linear-system solver.

The original amplitude-amplification and amplitude-estimation framework of Brassard, H\o yer, Mosca, and Tapp gives the basic quadratic improvement from sampling error $O(M^{-1/2})$ after $M$ independent shots to coherent error scaling $O(R^{-1})$ after $R$ circuit uses when a scalar quantity has been encoded as an amplitude \cite{BrassardHoyerMoscaTapp2000}.  Rall's block-encoding framework for estimating physical quantities gives a useful modern interpretation of the same idea: if an observable, correlation function, density-of-states quantity, or matrix element can be written as a block-encoded matrix element, then a Hadamard test or related overlap circuit turns it into an amplitude, and amplitude estimation gives the quadratic improvement in the target scalar accuracy \cite{Rall2020PhysicalQuantities}.  This is the viewpoint used later in the PDE chapters when we estimate observables directly instead of first reconstructing a full solution vector.

There is a second issue that is especially important for linear-system and ODE algorithms.  These algorithms often prepare a normalized vector
\begin{equation}
  \ket{\bm v}=\frac{\bm v}{\norm{\bm v}_2},
  \label{eq:normalized-output-state}
\end{equation}
rather than the unnormalized vector $\bm v$ itself.  If the desired scalar is a physical amplitude, norm, or energy depending on $\bm v$, then the missing factor $\norm{\bm v}_2$ must be recovered separately.

A common situation is that a circuit prepares a flagged state
\begin{equation}
  U\ket{0^a}\ket{\bm b}
  =
  \ket{0^a}\frac{\bm v}{\alpha}
  +\ket{\Phi_\perp},
  \qquad
  (\bra{0^a}\otimes I)\ket{\Phi_\perp}=0.
  \label{eq:flagged-unnormalized-vector}
\end{equation}
Measuring the ancilla and observing $0^a$ succeeds with probability
\begin{equation}
  p_{\rm succ}=\frac{\norm{\bm v}_2^2}{\alpha^2}.
  \label{eq:succ-prob-output-norm}
\end{equation}
Therefore
\begin{equation}
  \norm{\bm v}_2=\alpha\sqrt{p_{\rm succ}}.
  \label{eq:recover-output-norm}
\end{equation}
The success probability can be estimated by sampling or amplitude estimation.  Once $\norm{\bm v}_2$ is known, normalized measurements can be converted back to unnormalized physical quantities.  For example, if $O$ is an observable on the system register, then
\begin{equation}
  \bm v^\dag O \bm v
  =
  \norm{\bm v}_2^2\,
  \bra{\bm v}O\ket{\bm v}.
  \label{eq:unnormalized-observable-recovery}
\end{equation}
This separation between preparing the direction $\bm v/\norm{\bm v}$ and estimating the amplitude $\norm{\bm v}$ is often hidden in high-level descriptions of quantum linear solvers.  For PDE applications, it should be made explicit because physical quantities such as $L^2$ norms, energies, fluxes, and residuals are not invariant under normalization.

\subsection{Amplitude estimation}
\label{subsec:amplitude-estimation}

Postselection and amplitude amplification treat the success probability $p$ in \eqref{eq:good-bad-decomposition} as an obstacle. In many PDE applications, however, $p$ \emph{is} the answer: by \eqref{eq:recover-output-norm}, success probabilities encode output norms, and projector expectations $p_\Pi=\bra{\psi}\Pi\ket{\psi}$ encode quantities of interest such as probability mass in a subdomain. Amplitude estimation is the primitive that estimates such a $p$ with quadratically fewer circuit uses than direct sampling \cite{BrassardHoyerMoscaTapp2000}.

Write $p=\sin^2\theta$ as in amplitude amplification. The product of the two reflections used there is a rotation by $2\theta$ in the two-dimensional good/bad subspace, hence a unitary with eigenvalues $e^{\pm 2i\theta}$. Applying quantum phase estimation, as described in Section~\ref{sec:quantum-phase-estimation}, to this rotation, with the state $\mathcal A\ket{0}$ as input, returns an estimate of $\theta$. Using $R$ controlled applications of $\mathcal A$ and $\mathcal A^\dag$, the estimate $\tilde p=\sin^2\tilde\theta$ satisfies
\begin{equation}
  |\tilde p-p|
  \;=\;
  O\!\left(\frac{\sqrt{p(1-p)}}{R}+\frac{1}{R^2}\right)
  \label{eq:amplitude-estimation-error}
\end{equation}
with constant success probability. Direct sampling with $M$ shots gives error $O(\sqrt{p(1-p)/M})$; matching a target additive error $\epsilon$ therefore costs $M=O(\epsilon^{-2})$ samples classically but only $R=O(\epsilon^{-1})$ coherent uses of $\mathcal A$. This is the quadratic saving invoked when we say that a scalar output can be estimated to accuracy $\epsilon$ with $\widetilde O(\epsilon^{-1})$ queries.

\begin{remark}[What amplitude estimation costs]
The improvement is not free. The $R$ applications of $\mathcal A$ occur \emph{coherently} inside one long circuit, so the circuit depth grows by the factor $R$, and controlled versions of $\mathcal A$ are required. When $\mathcal A$ itself contains a full PDE solve, this multiplies the solver depth by $\epsilon^{-1}$. Whether amplitude estimation or simple sampling is preferable is therefore a balance between total gate count and circuit depth, the same kind of trade-off as between a single long time integration and an ensemble of short ones.
\end{remark}

\section{Block encodings}

Classical numerical linear algebra is built around matrix-vector multiplication. Quantum algorithms cannot directly apply a nonunitary matrix $A$ to a quantum state, because quantum circuits are unitary. A block encoding solves this by embedding a scaled copy of $A$ into a larger unitary matrix; a detailed scientific-computing treatment is given in \cite{GilyenSuLowWiebe2019,LinWiebe2026QASC}.

\begin{definition}[Block encoding]
Let $A\in\C^{N\times N}$. A unitary $U$ acting on $a+n$ qubits is an $(\alpha,a,\eps)$ block encoding of $A$ if
\begin{equation}
  \left\|A-
  \alpha(\bra{0^a}\otimes I_N)U(\ket{0^a}\otimes I_N)
  \right\|\leq \eps.
  \label{eq:block-encoding-definition}
\end{equation}
When $\eps=0$, we say the block encoding is exact. In block matrix form, exactness means
\begin{equation}
  U=\begin{bmatrix}
  A/\alpha & *\\
  * & *
  \end{bmatrix},
\end{equation}
with respect to the decomposition in which the ancilla register is $\ket{0^a}$ or orthogonal to $\ket{0^a}$.
\end{definition}

The scalar $\alpha$ is called the \emph{subnormalization factor}. A smaller $\alpha$ is better. If $U$ is applied to $\ket{0^a}\ket{\bm x}$, then the component with the ancilla still equal to $\ket{0^a}$ is
\begin{equation}
  \ket{0^a}\frac{A\bm x}{\alpha}.
\end{equation}
Measuring the ancilla and postselecting on $0^a$ succeeds with probability
\begin{equation}
  p=\frac{\norm{A\bm x}_2^2}{\alpha^2\norm{\bm x}_2^2}.
  \label{eq:block-postselection-probability}
\end{equation}
Thus the normalization $\alpha$ directly affects the success probability and the complexity of amplitude amplification.
In this sense, a block encoding should be viewed as a unitary implementation of a scaled matrix action together with a postselection flag. Without postselection, the full output is still a valid quantum state in the enlarged space; after successful postselection, the system register contains the normalized vector $A\bm x/\norm{A\bm x}_2$.

\begin{remark}[Block encoding as a matrix access model]
A block encoding is not merely a storage format. It is a promise that $A/\alpha$ appears as a submatrix of a unitary that can be implemented efficiently. Thus the real input model is the circuit for $U$, not the abstract existence of a unitary dilation.
\end{remark}

\subsection{A small calculus of block encodings}
\label{subsec:block-encoding-calculus}

Block encodings can be combined in much the same way that matrices are combined in numerical linear algebra.  We record two rules that will be used repeatedly below; for more details, 
see \cite{GilyenSuLowWiebe2019,LinWiebe2026QASC}.

First, if $U_A$ is an $(\alpha,a,\epsilon_A)$ block encoding of $A$ and $U_B$ is a $(\beta,b,\epsilon_B)$ block encoding of $B$, then, after placing the ancillas in separate registers, one obtains a block encoding of $AB$ with normalization $\alpha\beta$.  More explicitly, on the register order
\[
  \mathcal H_{a}\otimes\mathcal H_{b}\otimes\mathcal H_{\rm sys},
\]
let $\widetilde U_A$ denote $U_A$ acting on the $a$-ancilla register and the system register, with identity on the $b$-ancilla register, and let $\widetilde U_B$ denote $U_B$ acting on the $b$-ancilla register and the same system register, with identity on the $a$-ancilla register.  In the exact case,
\begin{equation}
  (\bra{0^{a+b}}\otimes I)\widetilde U_A\widetilde U_B(\ket{0^{a+b}}\otimes I)
  =\frac{AB}{\alpha\beta}.
  \label{eq:block-encoding-product-rule}
\end{equation}
With errors, the block-encoding error is at most
\begin{equation}
  \epsilon_{AB}\leq \alpha\epsilon_B+\beta\epsilon_A+\epsilon_A\epsilon_B.
\end{equation}
This product rule is the quantum analogue of composing two matrix-vector operations, but the normalization factors multiply.

Second, if one has block encodings of $A_0,\ldots,A_{s-1}$, then the (linear combination of unitaries) LCU construction below block encodes a weighted sum
\begin{equation}
  A=\sum_{\ell=0}^{s-1}a_\ell A_\ell
\end{equation}
with normalization proportional to $\sum_\ell |a_\ell|\alpha_\ell$, where $\alpha_\ell$ is the normalization of the $\ell$th block encoding.  Thus the block-encoding calculus looks familiar algebraically, but the normalizations play the role of hidden condition and success-probability parameters.

\subsection{Linear combination of unitaries}

A common way to build block encodings is the linear-combination-of-unitaries (LCU) method. Suppose
\begin{equation}
  A=\sum_{\ell=0}^{s-1} a_\ell U_\ell,
  \qquad
  a_\ell\geq0,
\end{equation}
where each $U_\ell$ is unitary. Let
\begin{equation}
  \alpha=\sum_{\ell=0}^{s-1}a_\ell.
\end{equation}
Let $m=\lceil\log_2 s\rceil$, and pad the selector space to dimension $2^m$ if necessary. Define a preparation unitary $P$ by
\begin{equation}
  P\ket{0^m}=\sum_{\ell=0}^{s-1}\sqrt{a_\ell/\alpha}\ket{\ell},
\end{equation}
with zero amplitude on the padded selector labels. Define the unitary
\begin{equation}
  \operatorname{SELECT}(U)=
  \sum_{\ell=0}^{s-1}\ket{\ell}\bra{\ell}\otimes U_\ell
  +
  \sum_{\ell=s}^{2^m-1}\ket{\ell}\bra{\ell}\otimes I.
  \label{eq:lcu-select-padded}
\end{equation}
The second sum acts on the padded selector labels and is needed for unitarity; those labels have zero amplitude after PREPARE. In the selector basis, this is simply a block diagonal matrix,
\begin{equation}
  \operatorname{SELECT}(U)
  =
  \begin{bmatrix}
  U_0 &        &        &        &        \\
      & U_1    &        &        &        \\
      &        & \ddots &        &        \\
      &        &        & U_{s-1} &        \\
      &        &        &        & I_{\rm pad}
  \end{bmatrix}.
  \label{eq:lcu-select-block-diagonal}
\end{equation}
If $w_\ell=\sqrt{a_\ell/\alpha}$, then the top-left block after preparing and unpreparing the selector register is the weighted compression
\begin{equation}
  \begin{bmatrix}w_0I & w_1I & \cdots & w_{s-1}I\end{bmatrix}
  \begin{bmatrix}
  U_0 &        &        &   \\
      & U_1    &        &   \\
      &        & \ddots &   \\
      &        &        & U_{s-1}
  \end{bmatrix}
  \begin{bmatrix}w_0I\\w_1I\\\vdots\\w_{s-1}I\end{bmatrix}
  =\sum_{\ell=0}^{s-1}\frac{a_\ell}{\alpha}U_\ell.
  \label{eq:lcu-matrix-compression}
\end{equation}
Thus LCU is not conceptually far from a classical block matrix construction: the selector register chooses one diagonal block, and the preparation vector supplies the coefficients.
Then
\begin{equation}
  \left(\bra{0^m}P^\dag\otimes I\right)
  \operatorname{SELECT}(U)
  \left(P\ket{0^m}\otimes I\right)
  =\frac{A}{\alpha}.
\end{equation}
Therefore
\begin{equation}
  U_A=(P^\dag\otimes I)\operatorname{SELECT}(U)(P\otimes I)
\end{equation}
block encodes $A$ with subnormalization $\alpha$.

Negative or complex coefficients are absorbed into the unitaries. For example, $-S$ is unitary if $S$ is unitary.

The coefficient sum
\begin{equation}
  \alpha=\sum_{\ell=0}^{s-1}|a_\ell|=\norm{a}_1
  \label{eq:lcu-coefficient-one-norm}
\end{equation}
is therefore not merely a normalization convention.  Applied to a normalized state $\ket{\psi}$, the LCU block succeeds under postselection with probability
\begin{equation}
  p_{\rm LCU}=\frac{\norm{A\ket{\psi}}^2}{\alpha^2},
  \label{eq:lcu-success-probability}
\end{equation}
and amplitude amplification introduces a factor of order $\alpha/\norm{A\ket{\psi}}$.  Thus a decomposition with a large coefficient $1$-norm may be expensive even when cancellations make $\norm{A}$ small.  The quantity $\alpha$ plays a role analogous to a stability or conditioning constant for the chosen decomposition.

Why can LCU nevertheless be useful?  Classically, explicitly forming or applying a sum of $s$ unrelated matrices generally requires visiting all $s$ terms.  Quantumly, a single SELECT circuit applies the appropriate $U_\ell$ coherently to every selector branch, and PREPARE supplies all weights in superposition.  If PREPARE and SELECT have compact structured circuits, one block-encoding query can therefore represent a sum with many terms while using only $O(\log s)$ selector qubits.  This is not an automatic speedup: for unstructured data, preparing the coefficients or implementing the multi-controlled SELECT operation may itself cost $O(s)$.  The advantage comes from structure in the coefficient state and in the family $\{U_\ell\}$, as emphasized in \cite{LinWiebe2026QASC}.

\subsection{Sparse-matrix block encodings}
\label{sec:sparse-block-encoding-access}

The phrase ``sparse access'' can sound mysterious, but it is close in spirit to how sparse matrices are handled in numerical linear algebra.  In a classical compressed sparse row format, one stores for each row the list of nonzero column indices and the corresponding values; this is the same kind of data structure used throughout sparse iterative methods \cite{Saad2003Iterative}.  A quantum sparse-access model asks for coherent versions of these two operations.

Assume $A\in\C^{N\times N}$ has at most $s$ nonzeros in each row and each column.  For simplicity suppose the row lists are padded to length $s$.  A row-index oracle returns the column location of the $\ell$th entry in row $i$,
\begin{equation}
  O_{\mathrm{loc}}:\ket{i}\ket{\ell}\ket{0}
  \mapsto
  \ket{i}\ket{\ell}\ket{f(i,\ell)},
  \qquad \ell=0,\ldots,s-1,
  \label{eq:sparse-location-oracle}
\end{equation}
where $f(i,\ell)$ is the column index.  A value oracle returns the corresponding matrix entry,
\begin{equation}
  O_{\mathrm{val}}:\ket{i}\ket{\ell}\ket{0}
  \mapsto
  \ket{i}\ket{\ell}\ket{A_{i,f(i,\ell)}}.
  \label{eq:sparse-value-oracle}
\end{equation}
In one-dimensional finite differences, $f(i,\ell)$ is often just $i-1$, $i$, or $i+1$.  In finite elements, $f(i,\ell)$ is obtained by enumerating the elements touching the basis function $\varphi_i$ and then the neighboring local basis functions.  Thus the oracle is not necessarily a black-box database; it may be a reversible version of the same local indexing logic used in an assembly code.

We now describe one explicit block-encoding construction.  Let
\begin{equation}
  A_{ij}=|A_{ij}|e^{i\theta_{ij}},
  \qquad
  \alpha\geq \max_i\sum_j |A_{ij}|,
  \qquad
  \alpha\geq \max_j\sum_i |A_{ij}|.
\end{equation}
For an $s$-sparse matrix it is always safe to take $\alpha=s\max_{ij}|A_{ij}|$, consistent with the norm bound \eqref{eq:norm-1-infty-bound}.  Introduce an ``edge'' basis $\ket{i,j}$ for nonzero positions and assume we have two isometries $P$ and $Q$ that prepare the row- and column-weighted square-root states, as follows,  
\begin{align}
  P\ket{i}
  &=
  \frac{1}{\sqrt\alpha}\sum_{j:A_{ij}\neq0}\sqrt{|A_{ij}|}\ket{i,j}
  +\sqrt{1-\frac{1}{\alpha}\sum_{j}|A_{ij}|}\ket{i,\bot},
  \label{eq:sparse-left-isometry}\\
  Q\ket{j}
  &=
  \frac{1}{\sqrt\alpha}\sum_{i:A_{ij}\neq0}e^{i\theta_{ij}}\sqrt{|A_{ij}|}\ket{i,j}
  +\sqrt{1-\frac{1}{\alpha}\sum_{i}|A_{ij}|}\ket{\bot,j}.
  \label{eq:sparse-right-isometry}
\end{align}
The garbage states $\ket{i,\bot}$ and $\ket{\bot,j}$ are chosen orthogonal to all edge states and to each other.  Then
\begin{equation}
  \bra{i}P^\dag Q\ket{j}
  =\frac{A_{ij}}{\alpha},
  \qquad\text{hence}\qquad
  P^\dag Q=\frac{A}{\alpha}.
  \label{eq:sparse-isometry-overlap}
\end{equation}
If the isometries $P$ and $Q$ are implemented by unitaries that prepare the corresponding edge states from an all-zero work register, then their product gives a block encoding of $A/\alpha$.

How are the square-root amplitudes in \eqref{eq:sparse-left-isometry} and \eqref{eq:sparse-right-isometry} implemented?  One first creates a uniform superposition over the padded list $\ell=0,\ldots,s-1$, computes $f(i,\ell)$ and $A_{i,f(i,\ell)}$, and then performs a controlled rotation whose sine or cosine is proportional to $\sqrt{|A_{i,f(i,\ell)}|/\alpha}$.  The phase $e^{i\theta_{ij}}$ is added by a controlled phase rotation.  Uncomputing the value register leaves the desired amplitude in the edge register.  This is the sparse analogue of preparing quadrature-weighted local element data in a finite element code.

\begin{remark}[Relation to standard sparse formats]
Classically, a matrix-vector product with an $s$-sparse matrix costs $O(sN)$ arithmetic operations because all rows must be visited.  Quantumly, the goal is different: construct a unitary whose action on amplitudes has the matrix embedded in one block.  The oracle calls in \eqref{eq:sparse-location-oracle}--\eqref{eq:sparse-value-oracle} are made coherently on superpositions of indices.  This is why the cost is counted in oracle queries and reversible arithmetic gates, not in floating-point operations over all rows.
\end{remark}

\begin{remark}[Structured sparse matrices]
For structured sparse matrices these oracles should not remain abstract.  The row locations may be computed by shifts and modular arithmetic, and the values may be constants or simple functions of the grid point.  Camps et al.  give explicit circuits for block encodings of several structured sparse matrices, including banded circulant and tridiagonal matrices \cite{CampsLinVanBeeumenYang2024}.  The finite-difference examples in Section~\ref{sec:finite-difference-block-encodings} are the simplest instances of this principle.
\end{remark}

\section{Quantum singular value transformation}
\label{sec:qsvt-basic}

Once a normalized block $A/\alpha$ is block encoded, QSVT can transform its singular values by a polynomial.  In this section, to lighten notation, $A$ denotes the normalized matrix appearing in the block encoding: its singular values lie in $[0,1]$, and in the Hermitian eigenvalue-transformation setting its eigenvalues lie in $[-1,1]$. This is the closest quantum analogue of applying a polynomial filter to a matrix in numerical linear algebra; see \cite{GilyenSuLowWiebe2019,LinWiebe2026QASC} for the formal construction and a computational-science-oriented exposition.

Let $A\in\C^{N\times N}$ have singular value decomposition
\begin{equation}
  A=\sum_k \sigma_k \ket{u_k}\bra{v_k},
  \qquad
  0\leq \sigma_k\leq 1.
\end{equation}
Suppose $U$ is a block encoding of $A$. QSVT constructs, from alternating uses of $U$, $U^\dag$, and simple phase rotations on an ancilla, a new unitary whose top block is approximately
\begin{equation}
  p^{(\mathrm{SV})}(A)
  :=\sum_k p(\sigma_k)\ket{u_k}\bra{v_k},
\end{equation}
for a polynomial $p$ satisfying certain parity and boundedness constraints. The foundational theory is due to Gily\'en, Su, Low, and Wiebe \cite{GilyenSuLowWiebe2019}. For odd polynomials, the transformed block maps right singular vectors to left singular vectors, as displayed. For even polynomials, the natural output is instead $\sum_k p(\sigma_k)\ket{v_k}\bra{v_k}$, acting within the right singular space. For Hermitian $A$ this distinction disappears in the eigenvalue form \eqref{eq:quantum-eigenvalue-transformation}, which is the case used most often in this lecture note.

When $A$ is Hermitian, singular-value transformation can be arranged as an eigenvalue transformation.  If
\begin{equation}
  A=\sum_k \lambda_k\ket{q_k}\bra{q_k},
  \qquad -1\leq \lambda_k\leq 1,
\end{equation}
then the transformed block acts as
\begin{equation}
  p(A)=\sum_k p(\lambda_k)\ket{q_k}\bra{q_k}.
  \label{eq:quantum-eigenvalue-transformation}
\end{equation}
This Hermitian specialization is often called \emph{quantum eigenvalue transformation}.  It is the form most familiar to numerical analysts because it is literally a polynomial functional calculus for a Hermitian matrix.

\begin{remark}[Analogy with Krylov and Chebyshev methods]
In Krylov methods, one approximates $f(A)\bm b$ by $p_m(A)\bm b$ for a polynomial $p_m$ of degree $m$ \cite{Saad2003Iterative}. QSVT also implements polynomial approximations, but it does so inside a unitary circuit using a block encoding of $A$. The degree of $p$ is reflected in the number of calls to the block encoding.
\end{remark}

\subsection{Example: inverse filters}

Let $A$ be Hermitian positive definite and suppose its scaled spectrum lies in
\begin{equation}
  \operatorname{spec}(A)\subseteq [1/\kappa,1].
\end{equation}
A QSVT-based linear-system algorithm chooses a polynomial $p_m$ such that
\begin{equation}
  \max_{x\in[1/\kappa,1]}\abs{p_m(x)-\frac{c}{x}}\leq \eps,
  \label{eq:qsvt-inverse-approximation}
\end{equation}
where one typically takes $c=\Theta(1/\kappa)$.  QSVT requires the implemented polynomial to obey the boundedness condition $|p_m(x)|\leq1$ on all of $[-1,1]$, not only on the spectral interval $[1/\kappa,1]$.  Applying the polynomial to an input state $\ket{\bm b}=\sum_k b_k\ket{q_k}$ produces a successful branch proportional to
\begin{equation}
  p_m(A)\ket{\bm b}\approx c\sum_k\frac{b_k}{\lambda_k}\ket{q_k}
  =cA^{-1}\ket{\bm b}.
\end{equation}
This is the QSVT version of an inverse polynomial filter.  The condition number enters because the polynomial must approximate $1/x$ near the endpoint $x=1/\kappa$ while remaining bounded on $[-1,1]$.  The scaling constant $c=\Theta(1/\kappa)$ also affects the postselection probability: in a worst-case instance the successful branch can have norm smaller by a factor comparable to $1/\kappa$, so output normalization or amplitude amplification can reintroduce condition-number dependence.

\begin{lemma}[Block encoding of an inverse]
\label{lem:block-encoding-inverse}
Let $A=A^\dag$ be positive definite and suppose that $U_A$ is an exact
$(\alpha,a,0)$ block encoding of $A$, with
\begin{equation}
  \operatorname{spec}(A/\alpha)\subseteq[1/\kappa,1].
\end{equation}
For every $0<\epsilon<1/2$, QSVT constructs, using
\begin{equation}
  O\!\left(\kappa\log\frac{\kappa}{\epsilon}\right)
\end{equation}
queries to $U_A$ and $U_A^\dag$, a block encoding whose top-left block $B$
satisfies
\begin{equation}
  \left\|B-\frac{\alpha}{\kappa}A^{-1}\right\|\leq \epsilon .
  \label{eq:block-encoded-inverse-error}
\end{equation}
Equivalently, this gives a block encoding of $A^{-1}$ with subnormalization $\kappa/\alpha$ and scaled error $(\kappa/\alpha)\epsilon$, up to the additional QSVT ancillas.  The factor $\alpha/\kappa$ appears in the successful block because the inverse must be rescaled to remain a contraction.  Thus the lemma gives a bounded block encoding of a scaled inverse, not a direct deterministic application of $A^{-1}$.
\end{lemma}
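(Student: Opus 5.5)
The plan is to reduce the statement to a scalar polynomial approximation problem and then invoke the QSVT eigenvalue transformation \eqref{eq:quantum-eigenvalue-transformation} on the normalized block $\widetilde A:=A/\alpha$. Its spectrum lies in $[1/\kappa,1]$. The target operator is
\[
  \frac{\alpha}{\kappa}A^{-1}=\frac{1}{\kappa}\widetilde A^{-1},
\]
so we need an odd real polynomial $p$ that satisfies three conditions. It should have $|p(x)-1/(\kappa x)|\le\epsilon$ for $x\in[1/\kappa,1]$, as in \eqref{eq:qsvt-inverse-approximation} with $c=1/\kappa$. It should be bounded by $1$ on all of $[-1,1]$. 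Its degree should be $O(\kappa\log(\kappa/\epsilon))$.

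Once such a $p$ is in hand, QSVT uses $\deg p$ alternating calls to $U_A$ and $U_A^\dag$ together with one extra ancilla for the phase rotations. By Hermiticity, the top-left block equals $B=p(\widetilde A)=\sum_k p(\lambda_k/\alpha)\ket{q_k}\bra{q_k}$. The spectral theorem then gives
\[
  \bigl\|B-\tfrac{\alpha}{\kappa}A^{-1}\bigr\|=\max_k\bigl|p(\lambda_k/\alpha)-\tfrac{\alpha}{\kappa\lambda_k}\bigr|\le\epsilon .
\]
The equivalent reformulation follows by multiplying by $\kappa/\alpha$. (If the QSVT convention yields a complex polynomial with real part $p$, average it with its conjugate via one LCU step, at the cost of a factor $2$ and one more ancilla.)

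The main work is constructing $p$; this is where I expect the obstacle. The function $1/(\kappa x)$ is bounded by $1$ on $[1/\kappa,1]$, but it blows up near $0$, which lies inside $[-1,1]$. My first attempt is to regularize. Set $f(x)=\bigl(1-(1-x^2)^b\bigr)/(\kappa x)$, which is odd and entire (a polynomial of degree $2b-1$). On $[1/\kappa,1]$ it differs from $1/(\kappa x)$ by at most $(1-1/\kappa^2)^b\le e^{-b/\kappa^2}$. Because that choice of $b$ gives degree $\kappa^2$ rather than $\kappa$, I would next re-expand it as a Chebyshev series. Alternatively, I would follow the standard Childs--Kothari--Somma construction: write $1/x=\sum_{j}(\dots)T_{2j+1}(x)$, truncate at degree $O(\kappa\log(\kappa/\epsilon))$, and use the fact that the coefficients decay geometrically at rate $1-O(1/\kappa)$. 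That decay rate gives the degree $O(\kappa\log(\kappa/\epsilon))$ with error $\epsilon/2$.

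For boundedness on all of $[-1,1]$, I would multiply the approximant by an odd or even polynomial approximation of a rectangle (sign-type) function. This function is close to $1$ on $[1/\kappa,1]$ and at most $1$ elsewhere, and it has degree $O(\kappa\log(1/\epsilon))$. Afterwards, rescale by a constant factor $1-\epsilon/2$ so that $\sup_{[-1,1]}|p|\le1$.

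The delicate bookkeeping is to check that, near $x=0$, the product of the two polynomials stays at most $1$ in magnitude even though the truncated inverse series is large there of order $1$. I would handle this by choosing the rectangle cutoff at $1/(2\kappa)$. The total error is then $\epsilon/2$ from truncation plus $\epsilon/2$ from the cutoff and rescaling, and the degrees add to $O(\kappa\log(\kappa/\epsilon))$. This completes the query count.
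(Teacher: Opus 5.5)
Your reduction follows the paper's own route. The paper's proof is a one-line sketch: apply quantum eigenvalue transformation to a bounded odd approximant of $1/(\kappa x)$ and cite the literature for its degree. Your first two paragraphs (spectral theorem for the error, multiplication by $\kappa/\alpha$ for the equivalent form) fill this in correctly.

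The gap is in the polynomial step, at exactly the point you flagged as delicate. The target $1/(\kappa x)$ equals $1$ at $x=1/\kappa$ and exceeds $1$ on $(0,1/\kappa)$. Your rectangle, however, must already be within $\epsilon$ of $1$ at $x=1/\kappa$. Take an erf-type rectangle of degree $O(\kappa\log(1/\epsilon))$ with midpoint at $1/(2\kappa)$. Inside $(1/(2\kappa),1/\kappa)$ there is a region where the rectangle is $1-o(1)$ while the truncated series is close to $1/(\kappa x)$; near $x=3/(4\kappa)$, for example, the product is about $4/3$. So the product overshoots $1$ by a constant, and rescaling by $1-\epsilon/2$ cannot restore $\sup_{[-1,1]}|p|\le 1$. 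Only a constant factor $c<1$ can do that, and such a factor destroys $\|B-\frac{\alpha}{\kappa}A^{-1}\|\le\epsilon$. (A smaller inaccuracy: with $b=\kappa^2\log(1/\epsilon)$, the function $(1-(1-x^2)^b)/(\kappa x)$ peaks at size $\Theta(\sqrt{\log(1/\epsilon)})$, not $O(1)$, near $x\sim 1/(\kappa\sqrt{\log(1/\epsilon)})$. The rectangle suppresses that region, so this part is harmless.)

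This is not a defect of your particular construction: your three requirements on $p$ are incompatible. Suppose $|p|\le 1$ on $[-1,1]$ and $|p(x)-1/(\kappa x)|\le\epsilon$ on $[1/\kappa,1]$. Let $n=\deg p$, $x_0=1/\kappa$ and $q=p'(x_0)$.
\begin{itemize}
\item Bernstein's inequality, applied to $p$ and then to $p'$ on a subinterval, gives $|p''|\le Cn^2$ on $[-\tfrac12,\tfrac12]$.
\item Expand about $x_0$ using $p(x_0)\ge 1-\epsilon$. The condition $p(x_0-s)\le 1$ gives $-q\le \epsilon/s+Cn^2 s/2$.
\item The condition $p(x_0+s)\le (1+\kappa s)^{-1}+\epsilon\le 1-\kappa s+\kappa^2 s^2+\epsilon$ gives $\kappa+q\le 2\epsilon/s+(Cn^2/2+\kappa^2)s$.
\item Adding the two and optimizing over $s\in(0,1/\kappa]$ (for, say, $\kappa\ge 4$) yields $\kappa\le 2\sqrt{3\epsilon(Cn^2+\kappa^2)}$, hence $n=\Omega(\kappa/\sqrt{\epsilon})$ for $\epsilon<1/12$.
\end{itemize}
So with the exact normalization $\alpha/\kappa$, the lemma cannot be reached by QSVT at degree $O(\kappa\log(\kappa/\epsilon))$. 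The paper's sketch glosses over the same point. The bounded approximant in \cite{GilyenSuLowWiebe2019} targets $\tfrac34\cdot\tfrac{1}{\kappa x}$, which keeps a constant margin below $1$.

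The repair is a constant factor: prove $\|B-\frac{\alpha}{2\kappa}A^{-1}\|\le\epsilon$, i.e.\ use subnormalization $2\kappa/\alpha$. Your construction then goes through essentially verbatim. The truncated series $f$ for $1/(2\kappa x)$ obeys $|f(x)|\le \frac{1}{2\kappa|x|}+\epsilon'$. Take a rectangle $R$ that is $\epsilon'$-small on $[0,\frac{1}{2\kappa}]$ and $\epsilon'$-close to $1$ on $[\frac1\kappa,1]$. Then $|fR|\le 1+\epsilon'$ on $[\frac{1}{2\kappa},1]$, the product is $O(\epsilon'\sqrt{\log(1/\epsilon')})$ below $\frac{1}{2\kappa}$, and a rescaling by $1-O(\epsilon)$ is now legitimate.
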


\begin{proof}[Idea]
Apply quantum eigenvalue transformation to a polynomial $p$ that approximates
$1/(\kappa x)$ on $[1/\kappa,1]$ while obeying $|p(x)|\leq1$ for all
$x\in[-1,1]$.  The required degree is linear in $\kappa$ up to logarithmic
factors \cite{GilyenSuLowWiebe2019,LinWiebe2026QASC}.  The resulting block
acts as $p(A/\alpha)\approx (\alpha/\kappa)A^{-1}$.
\end{proof}

\subsection{Diagonal matrices and the cost of inversion}
\label{subsec:diagonal-block-encoding}

A diagonal matrix is the easiest matrix to apply classically.  If
\begin{equation}
  D=\operatorname{diag}(d_0,d_1,\ldots,d_{N-1}),
  \label{eq:diagonal-matrix}
\end{equation}
then a classical matrix-vector product merely multiplies each component by $d_j$.  In a quantum block encoding, one must still account for normalization.  Suppose that the diagonal entries can be computed reversibly in fixed-point form and that
\begin{equation}
  \alpha\geq \max_j |d_j|=\norm{D}_2.
  \label{eq:diagonal-alpha}
\end{equation}
A block encoding can be constructed by first computing the magnitude and phase $d_j=|d_j|e^{i\phi_j}$, then applying
\begin{equation}
  \ket{j}\ket{0}_{\rm val}\ket{0}_{\rm anc}
  \longmapsto
  \ket{j}\ket{d_j}_{\rm val}
  \left(
  \frac{|d_j|}{\alpha}\ket{0}_{\rm anc}
  +\sqrt{1-\frac{|d_j|^2}{\alpha^2}}\ket{1}_{\rm anc}
  \right),
  \label{eq:diagonal-controlled-rotation}
\end{equation}
and finally applying a controlled phase $e^{i\phi_j}$ on the successful branch before uncomputing the value register.  The top-left ancilla block is $D/\alpha$.  Thus diagonal matrices are friendly to block encoding when their entries are computable, but the factor $\alpha$ is still part of the algorithmic cost. This magnitude-rotation-plus-phase structure is exactly the diagonal analogue of the sparse construction in Section~\ref{sec:sparse-block-encoding-access}.

The inverse is more subtle.  Assume, after scaling and restricting to the positive-definite real case, that
\begin{equation}
  d_j\in[1/\kappa,1].
  \label{eq:diagonal-spectrum-kappa}
\end{equation}
A generic QSVT inverse filter applied to a block encoding of $D$ must approximate $x\mapsto 1/x$ on $[1/\kappa,1]$, and therefore has query complexity proportional to the condition number, up to logarithmic factors.  In other words, the fact that $D$ is diagonal does not by itself make a block encoding of $D^{-1}$ cheap if the only primitive is a block encoding of $D$.  This point is important for quantum preconditioning.  A classical Jacobi preconditioner is almost trivial to apply, but a quantum implementation must still either implement the reciprocal entries coherently or pay the usual inverse-filter cost.

Tong, An, Wiebe, and Lin introduced a different primitive called fast inversion, which directly block encodes inverse eigenvalues by reversible arithmetic when sufficient structure is available \cite{TongAnWiebeLin2021FastInversion}.  This can be powerful, but it is an additional access assumption.  The lesson for numerical analysts is that ``diagonal'' and ``cheap to invert'' are not automatically synonymous in the block-encoding model.

\subsection{Example: heat semigroups and exponential filters}

For a parabolic equation after spatial discretization,
\begin{equation}
  \frac{d u}{dt}=-Lu,
  \qquad L\succeq0,
\end{equation}
we need
\begin{equation}
  u(t)=e^{-tL}u(0).
\end{equation}
If $L/\alpha$ is block encoded and $\operatorname{spec}(L/\alpha)\subseteq[0,1]$, then one approximates
\begin{equation}
  x\mapsto e^{-\alpha t x},
  \qquad x\in[0,1].
  \label{eq:qsvt-heat-filter}
\end{equation}
The required polynomial degree grows with the effective time scale $\alpha t$ and with the desired precision.  More precisely, the QSVT framework and the associated approximation theory show that $e^{-\beta x}$ on the physical spectral interval $[0,1]$ can be approximated by an admissible QSVT polynomial of degree $O\!\big(\sqrt{\beta\log(1/\eps)}+\log(1/\eps)\big)$, so the query count grows like the square root of the effective time scale $\alpha t$ rather than linearly \cite{GilyenSuLowWiebe2019,AggarwalAlman2022Exponentials}; this is revisited in Chapter~\ref{chap:parabolic-quantum}.  The word admissible is important: the implemented polynomial must remain bounded by one on the full signal interval $[-1,1]$, even though it approximates the exponential only on the positive spectral interval.  This direct exponential-filter viewpoint is useful conceptually, even though later parabolic chapters also discuss dilation and Gaussian representations with different access-model interpretations.

\subsection{Example: oscillatory propagators}

For a Hermitian Hamiltonian $H$, Schr\"odinger evolution is
\begin{equation}
  \ket{\psi(t)}=e^{-itH}\ket{\psi(0)}.
\end{equation}
Hamiltonian simulation can be regarded as implementing polynomial approximations to the real and imaginary parts,
\begin{equation}
  \cos(tx),\qquad \sin(tx),
\end{equation}
through the same block-encoding and polynomial-transformation toolkit.  More precisely, suppose that $H/\alpha$ has a block encoding.  Qubitization, equivalently the Hamiltonian-simulation specialization of QSVT, implements $e^{-itH}$ using
\begin{equation}
  O\!\left(\alpha t+\frac{\log(1/\eps)}{\log\log(1/\eps)}\right)
  \label{eq:optimal-hamiltonian-simulation-basic}
\end{equation}
queries to the block encoding, up to constant-factor conventions in the access model \cite{LowChuang2019,GilyenSuLowWiebe2019}.  We usually write this as $\widetilde O(\alpha t+\log(1/\eps))$.  The leading linear dependence on the scaled time $\alpha t$ is not merely an artifact of the method: in the black-box sparse-Hamiltonian model, no-fast-forwarding lower bounds rule out sublinear dependence for general Hamiltonians \cite{BerryAhokasCleveSanders2007}.

The distinction from the heat example is important.  The function $e^{-\beta x}$ is bounded and decaying on $[0,1]$, which allows a square-root dependence on the effective time scale in the polynomial degree.  The oscillatory function $e^{-i\tau x}$ is also bounded, but it oscillates on the scale $\tau$; a polynomial must resolve those oscillations, and the generic query complexity is linear in $\tau$.  Special structure, such as Fourier diagonalization, may still permit fast-forwarding, but this is an additional property of the operator rather than a generic consequence of block encoding.

For wave equations, this is the route by which a second-order real PDE is converted into a first-order Hermitian dynamics in the hyperbolic chapter.

\begin{remark}[What QSVT gives]
QSVT gives a new block encoding.  If the desired output is a state proportional to $p(A)\ket{\bm b}$, one still has to account for postselection, output normalization, and measurement.  Thus QSVT solves the spectral transformation problem; it does not by itself solve the data-loading or readout problems.
\end{remark}

\section{Finite-difference operators as block encodings}
\label{sec:finite-difference-block-encodings}

We now specialize the preceding constructions to finite-difference matrices. These examples are deliberately elementary. They show how the matrices that appear in standard numerical PDE discretizations can be expressed through a small number of shift unitaries. The stencils below are the standard model finite-difference operators used in introductory numerical PDE texts \cite{LarssonThomee2009,TveitoWinther2005}; Figure~\ref{fig:three-point-laplacian-stencil} shows the basic one-dimensional example.

Throughout this section let $N=2^n$ and define the cyclic shift $S$ by
\begin{equation}
  S\ket{j}=\ket{j+1\pmod N},
  \qquad
  S^\dag\ket{j}=\ket{j-1\pmod N}.
\end{equation}
The quantum circuit for this shift matrix $S$ is modular addition by one; the circuit for $S^\dag$ is modular subtraction by one. It should not be confused with the one-qubit phase gate $S$ in \eqref{eq:S-phase-gate}. These are precisely the left- and right-shift components used in explicit block encodings of banded circulant matrices by Camps--Lin--Van Beeumen--Yang \cite{CampsLinVanBeeumenYang2024}.

\begin{figure}[htbp]
\centering
\OneDLaplacianStencil
\caption{The three-point finite-difference stencil for the positive one-dimensional Laplacian $-\partial_{xx}$.  The same local stencil is represented quantumly by a short LCU of the identity, the cyclic shift, and the inverse cyclic shift.}
\label{fig:three-point-laplacian-stencil}
\end{figure}

\subsection{Forward difference operator}

Consider the periodic forward difference
\begin{equation}
  (D_+u)_j=\frac{u_{j+1}-u_j}{h}.
\end{equation}
With our shift convention,
\begin{equation}
  D_+=\frac{S^\dag-I}{h}.
\end{equation}
Equivalently, $-D_+=(I-S^\dag)/h$. Since both $I$ and $-S^\dag$ are unitary, the LCU construction gives an exact block encoding.

\begin{proposition}[Block encoding of the forward difference]
Let
\begin{equation}
  \operatorname{SELECT}_{D}
  =\ket{0}\bra{0}\otimes I
  +\ket{1}\bra{1}\otimes (-S^\dag),
\end{equation}
and let $P_D=H$ on the one-qubit selector register. Then
\begin{equation}
  U_D=(H\otimes I)\operatorname{SELECT}_{D}(H\otimes I)
\end{equation}
is an exact $(2/h,1,0)$ block encoding of $-D_+$:
\begin{equation}
  (\bra{0}\otimes I)U_D(\ket{0}\otimes I)
  =\frac{I-S^\dag}{2}
  =\frac{h(-D_+)}{2}.
\end{equation}
\end{proposition}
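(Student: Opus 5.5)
The plan is to reduce the statement to the LCU construction already established, and then confirm the result with a direct block-matrix computation. First I would check unitarity. $S^\dag$ is a permutation matrix, so $-S^\dag$ is unitary. Hence $\operatorname{SELECT}_D$ is block diagonal with unitary blocks $I$ and $-S^\dag$, and is therefore unitary, as in \eqref{eq:lcu-select-block-diagonal}. The Hadamard gate $H$ is unitary, so $U_D$ is a unitary acting on $1+n$ qubits, which gives $a=1$.

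Next I would identify the data of the LCU. Write $I-S^\dag=a_0U_0+a_1U_1$ with $a_0=a_1=1$, $U_0=I$ and $U_1=-S^\dag$. Then $\alpha=\norm{a}_1=2$, and the preparation state is $P\ket{0}=\tfrac{1}{\sqrt2}(\ket{0}+\ket{1})=H\ket{0}$. Because $H$ is real symmetric, $P^\dag=H$ as well, so $U_D$ is exactly $(P^\dag\otimes I)\operatorname{SELECT}(P\otimes I)$.

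By the compression identity \eqref{eq:lcu-matrix-compression}, the top-left block is then $\tfrac12 I+\tfrac12(-S^\dag)=(I-S^\dag)/2$. As a check, one can also compute it directly:
\[
(\bra{0}H\otimes I)\operatorname{SELECT}_D(H\ket{0}\otimes I)=\tfrac12\bigl(\bra{0}+\bra{1}\bigr)\bigl(\ket{0}\otimes I+\ket{1}\otimes(-S^\dag)\bigr),
\]
which is again $\tfrac12(I-S^\dag)$ on the system register.

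Finally I would convert the block into the scaling claimed in the statement. From $D_+=(S^\dag-I)/h$ we get $I-S^\dag=h(-D_+)$, so the block equals $h(-D_+)/2=(-D_+)/(2/h)$. Comparing with Definition \eqref{eq:block-encoding-definition}, this means $-D_+=(2/h)(\bra{0}\otimes I)U_D(\ket{0}\otimes I)$ with zero error, which is an exact $(2/h,1,0)$ block encoding.

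The only step that needs real care is the identity $D_+=(S^\dag-I)/h$ under the stated shift convention. From $S^\dag\ket{j}=\ket{j-1}$ we get $(S^\dag\bm u)_j=u_{j+1}$, with indices taken modulo $N$ for periodicity. I would verify this componentwise before relying on it. Everything else is routine bookkeeping.
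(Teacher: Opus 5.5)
Your proposal is correct and follows the paper's own argument: the paper writes $-D_+=(I-S^\dag)/h$ and invokes the LCU construction with unit coefficients, $U_0=I$, $U_1=-S^\dag$, and PREPARE $=H$, exactly as you do. Your componentwise check $(S^\dag\bm u)_j=u_{j+1}$ is the right way to confirm the sign under the paper's shift convention.
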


\begin{remark}[Boundary conditions]
For nonperiodic forward differences, the wrap-around entry must be removed or modified. In the structured sparse approach, this is done by adding boundary-controlled rotations or corrections, exactly as in the tridiagonal nonperiodic modification of the Camps--Lin--Van Beeumen--Yang circulant circuit. Conceptually, one starts from the periodic shift circuit and then zeros out the entries that cross the boundary.
\end{remark}

\subsection{Three-point Laplacian in one dimension}

Let $L_h$ denote the positive periodic one-dimensional discrete Laplacian,
\begin{equation}
  (L_hu)_j=\frac{2u_j-u_{j-1}-u_{j+1}}{h^2}.
\end{equation}
Then
\begin{equation}
  L_h=\frac{2I-S-S^\dag}{h^2}.
  \label{eq:1d-positive-laplacian-shift-form}
\end{equation}

\begin{figure}[htbp]
\centering
\StencilDiagram\qquad\FivePointStencilDiagram
\caption{The one-dimensional three-point and two-dimensional five-point positive Laplacian stencils.  These are the finite-difference matrices later written as LCUs of shift unitaries.}
\label{fig:laplacian-stencils}
\end{figure}

This is the standard three-point stencil for $-\partial_{xx}$ with periodic boundary conditions, shown with its two-dimensional five-point counterpart in Figure~\ref{fig:laplacian-stencils}; it is the finite-difference model operator used throughout classical numerical PDE texts \cite{LarssonThomee2009,TveitoWinther2005}.

\begin{proposition}[LCU block encoding of the one-dimensional Laplacian]
Let the selector register have two qubits and define
\begin{equation}
  \operatorname{SELECT}_{L}
  =
  \ket{0}\bra{0}\otimes I
  +\ket{1}\bra{1}\otimes (-S)
  +\ket{2}\bra{2}\otimes (-S^\dag)
  +\ket{3}\bra{3}\otimes I.
\end{equation}
Let $P_L$ satisfy
\begin{equation}
  P_L\ket{0}
  =\frac{\sqrt2}{2}\ket{0}
  +\frac{1}{2}\ket{1}
  +\frac{1}{2}\ket{2}.
\end{equation}
The amplitude of $\ket{3}$ is zero. Then
\begin{equation}
  U_L=(P_L^\dag\otimes I)\operatorname{SELECT}_{L}(P_L\otimes I)
\end{equation}
obeys
\begin{equation}
  (\bra{0}\otimes I)U_L(\ket{0}\otimes I)
  =\frac{2I-S-S^\dag}{4}
  =\frac{h^2L_h}{4}.
\end{equation}
Thus $U_L$ is an exact $(4/h^2,2,0)$ block encoding of $L_h$.
\end{proposition}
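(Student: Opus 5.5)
The plan is to recognize $U_L$ as an instance of the LCU construction and to check the top-left block by the weighted-compression identity \eqref{eq:lcu-matrix-compression}. First, $P_L$ must be a legitimate preparation unitary. Its first column has squared norm $\tfrac12+\tfrac14+\tfrac14=1$, so it is a unit vector. It can therefore be completed to a unitary on the two-qubit selector space, for example by Gram--Schmidt or by an explicit pair of rotations. Only the column $P_L\ket{0}$ enters the top-left block, so the completion is irrelevant. Next, $\operatorname{SELECT}_L$ is block diagonal in the selector basis with unitary blocks $I,-S,-S^\dag,I$. The shift $S$ is a permutation matrix, hence unitary, and $-S$ and $-S^\dag$ are unitary as well. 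So $\operatorname{SELECT}_L$ is unitary, and so is the product $U_L$.

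The main computation is then
\[
(\bra{0}\otimes I)U_L(\ket{0}\otimes I)=\sum_{\ell=0}^{3}\overline{w_\ell}\,w_\ell\,V_\ell ,
\]
where $w_\ell=\bra{\ell}P_L\ket{0}$ and $V_\ell$ is the $\ell$th block of $\operatorname{SELECT}_L$. This follows because $(\bra{0}P_L^\dag\otimes I)$ contributes $\overline{w_\ell}$, and $\operatorname{SELECT}_L$ acts diagonally on the selector. The weights are $|w_0|^2=\tfrac12$, $|w_1|^2=|w_2|^2=\tfrac14$ and $w_3=0$. This gives
\[
\tfrac12 I-\tfrac14 S-\tfrac14 S^\dag=\frac{2I-S-S^\dag}{4}.
\]
Comparing with \eqref{eq:1d-positive-laplacian-shift-form} identifies this with $h^2L_h/4$.

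Finally, I would read off the parameters from the definition \eqref{eq:block-encoding-definition}. The block equals $L_h/\alpha$ with $\alpha=4/h^2$, the error is $\eps=0$, and there are $a=2$ ancilla qubits. As a consistency check, $\alpha$ agrees with the LCU one-norm \eqref{eq:lcu-coefficient-one-norm} of the decomposition $L_h=h^{-2}\bigl(2I+1\cdot(-S)+1\cdot(-S^\dag)\bigr)$, namely $(2+1+1)/h^2$.

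There is no real obstacle here. The only point needing care is the padded label $\ket{3}$. Its SELECT block is $I$, but it contributes nothing because its amplitude vanishes, exactly as described around \eqref{eq:lcu-select-padded}.
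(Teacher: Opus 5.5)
Your proposal is correct and follows essentially the paper's route: the proposition is a direct instance of the LCU construction, with the top-left block read off from the weighted compression \eqref{eq:lcu-matrix-compression} using the weights $\tfrac12,\tfrac14,\tfrac14,0$. Your unitarity checks on $P_L$ and $\operatorname{SELECT}_L$, and your remark that the padded label $\ket{3}$ drops out because its amplitude is zero, are exactly the points the paper's LCU discussion relies on.
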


\begin{remark}[Connection with the explicit tridiagonal circuit]
The matrix in \eqref{eq:1d-positive-laplacian-shift-form} is a banded circulant matrix with stencil coefficients $a_0=2/h^2$ and $a_-=a_+=-1/h^2$ (the symbols $\alpha$, $\gamma$ are reserved here for the block-encoding normalization and, later, the Schr\"odinger coefficient). It is therefore a direct specialization of the banded-circulant construction. In the sparse-oracle language, $f(j,\ell)$ selects $j-1$, $j$, or $j+1$ modulo $N$, and the value circuit supplies the three stencil weights. The LCU form above is the same structure written in the shorter language of shift unitaries.
\end{remark}

\subsection{Two-dimensional five-point Laplacian}

Let the grid have $N_x=2^{n_x}$ points in the $x$ direction and $N_y=2^{n_y}$ points in the $y$ direction. The full state space is
\begin{equation}
  \C^{N_x}\otimes\C^{N_y},
\end{equation}
with basis states $\ket{i}\ket{j}$. Define
\begin{equation}
  S_x=S\otimes I,
  \qquad
  S_y=I\otimes S,
\end{equation}
where each $S$ is the corresponding cyclic shift in one coordinate. For equal mesh spacing $h$, the positive periodic five-point Laplacian is
\begin{equation}
  L_{h}^{(2D)}
  =\frac{4I-S_x-S_x^\dagger-S_y-S_y^\dagger}{h^2}.
  \label{eq:2d-laplacian-shift-form}
\end{equation}

\begin{proposition}[LCU block encoding of the two-dimensional Laplacian]
Let the selector register have three qubits and define SELECT by
\begin{align}
  \operatorname{SELECT}_{2D}
  ={}&\ket{0}\bra{0}\otimes I
  +\ket{1}\bra{1}\otimes (-S_x)
  +\ket{2}\bra{2}\otimes (-S_x^\dagger)\notag\\
  &+\ket{3}\bra{3}\otimes (-S_y)
  +\ket{4}\bra{4}\otimes (-S_y^\dagger)
  +\sum_{\ell=5}^{7}\ket{\ell}\bra{\ell}\otimes I.
\end{align}
Let $P_{2D}$ satisfy
\begin{equation}
  P_{2D}\ket{0}
  =\frac{1}{\sqrt2}\ket{0}
  +\frac{1}{\sqrt8}\left(\ket{1}+\ket{2}+\ket{3}+\ket{4}\right).
\end{equation}
Then
\begin{equation}
  U_{2D}=(P_{2D}^\dag\otimes I)\operatorname{SELECT}_{2D}(P_{2D}\otimes I)
\end{equation}
satisfies
\begin{equation}
  (\bra{0}\otimes I)U_{2D}(\ket{0}\otimes I)
  =\frac{4I-S_x-S_x^\dagger-S_y-S_y^\dagger}{8}
  =\frac{h^2L_h^{(2D)}}{8}.
\end{equation}
Thus $U_{2D}$ is an exact $(8/h^2,3,0)$ block encoding of $L_h^{(2D)}$.
\end{proposition}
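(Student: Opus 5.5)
The plan is to treat this as a direct instance of the LCU construction from the previous subsection, applied to the shift-form identity \eqref{eq:2d-laplacian-shift-form}. First I write
\[
h^2L_h^{(2D)} = 4I + (-S_x) + (-S_x^\dagger) + (-S_y) + (-S_y^\dagger),
\]
a nonnegative combination of five unitaries with coefficients $(4,1,1,1,1)$. The minus signs are absorbed into the unitaries, exactly as in the LCU discussion. The coefficient one-norm is $\alpha_{\rm LCU}=8$, so the normalized weights are $w_0=\sqrt{4/8}=1/\sqrt2$ and $w_\ell=\sqrt{1/8}$ for $\ell=1,\ldots,4$. These match the stated $P_{2D}\ket{0}$, which is a unit vector since $1/2+4/8=1$. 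Any unitary $P_{2D}$ with this first column works; for instance it can be completed by Gram--Schmidt.

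Next I verify that $\operatorname{SELECT}_{2D}$ is unitary. It is block diagonal in the selector basis, with unitary blocks $I$, $-S_x$, $-S_x^\dagger$, $-S_y$, $-S_y^\dagger$, and padded identities on labels $5,6,7$. This is the padded form \eqref{eq:lcu-select-padded}. Here $S_x=S\otimes I$ and $S_y=I\otimes S$ are unitary because $S$ is a permutation matrix.

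Then I compute the top-left block directly:
\[
(\bra{0}\otimes I)U_{2D}(\ket{0}\otimes I)=\sum_{\ell=0}^{7}|w_\ell|^2\,V_\ell,
\]
where $V_\ell$ is the $\ell$th SELECT block. This is the compression \eqref{eq:lcu-matrix-compression}, and it holds because $\operatorname{SELECT}_{2D}$ is diagonal in the selector index, so only the diagonal terms $\bra{0}P_{2D}^\dagger\ket{\ell}\bra{\ell}P_{2D}\ket{0}=|w_\ell|^2$ survive. The padded labels carry zero weight. The sum evaluates to $\tfrac12 I-\tfrac18(S_x+S_x^\dagger+S_y+S_y^\dagger)$, which equals $(4I-S_x-S_x^\dagger-S_y-S_y^\dagger)/8 = h^2L_h^{(2D)}/8$.

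Finally, comparing with \eqref{eq:block-encoding-definition}, the block is $L_h^{(2D)}/\alpha$ with $\alpha=8/h^2$ and zero error, using $a=3$ selector qubits. No step is a real obstacle. The only points needing care are the bookkeeping that the padded labels $5,6,7$ carry zero amplitude (so their identity blocks do not contaminate the sum) and that the $P_{2D}$ used in the unprepare step is the same unitary as in the prepare step.
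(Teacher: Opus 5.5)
Your proof is correct and follows the paper's own route: the proposition is a direct instance of the LCU compression \eqref{eq:lcu-matrix-compression} applied to the shift form \eqref{eq:2d-laplacian-shift-form}, with weights $|w_0|^2=1/2$ and $|w_\ell|^2=1/8$ for $\ell=1,\ldots,4$, and zero weight on the padded labels. Your explicit checks that $P_{2D}\ket{0}$ is normalized and that $\operatorname{SELECT}_{2D}$ is unitary fill in details the paper leaves implicit.
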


\begin{remark}[Anisotropic grid]
For spacings $h_x$ and $h_y$,
\begin{equation}
  L_{h_x,h_y}^{(2D)}
  =
  \frac{2I-S_x-S_x^\dagger}{h_x^2}
  +
  \frac{2I-S_y-S_y^\dagger}{h_y^2}.
\end{equation}
An LCU block encoding has subnormalization
\begin{equation}
  \alpha=\frac{4}{h_x^2}+\frac{4}{h_y^2}.
\end{equation}
The selector preparation amplitudes are proportional to the square roots of the absolute stencil coefficients.
\end{remark}

\subsection{General \texorpdfstring{$d$}{d}-dimensional periodic Laplacian}

The same construction extends to $d$ dimensions. Let $S_r$ shift the $r$th coordinate. For a uniform mesh,
\begin{equation}
  L_h^{(d)}=\frac{1}{h^2}\sum_{r=1}^d(2I-S_r-S_r^\dagger).
\end{equation}
The LCU coefficient sum is
\begin{equation}
  \alpha=\frac{4d}{h^2}.
\end{equation}
The block encoding uses a selector register large enough to choose the diagonal term and the $2d$ nearest-neighbor shifts. This is the finite-difference analogue of assembling a stencil matrix from shift matrices.

\begin{remark}[Cost model]
The cost is dominated by controlled modular additions and subtractions. These operations have circuits of size polynomial in the number of grid qubits. Thus for tensor-product periodic grids, the block encoding has complexity polynomial in $\log N$, not polynomial in $N$. Boundary conditions and variable coefficients introduce additional arithmetic and control logic.
\end{remark}

\section{From finite differences to PDE algorithms}

The previous section gave block encodings of discrete differential operators.  A PDE algorithm typically adds four more ingredients.

\subsection{Initial and forcing data}

One must prepare amplitude encodings of initial data, forcing terms, or boundary data. Smooth data may be handled by Grover--Rudolph-type preparation. Data produced by a preceding quantum routine may already be in amplitude form. Arbitrary classical data remain expensive unless one assumes a suitable memory model.

\subsection{Operator functions}

Time-dependent PDEs often require functions of matrices. Examples include
\begin{align}
  \text{heat equation:}\qquad &u(t)=e^{-tL}u_0,\\
  \text{wave equation:}\qquad &u(t)=\cos(t\sqrt L)u_0 + L^{-1/2}\sin(t\sqrt L)v_0,\\
  \text{elliptic equation:}\qquad &u=L^{-1}f.
\end{align}
Once $L$ is block encoded and spectrally scaled, QSVT gives a route to approximating these matrix functions by polynomials.

\subsection{Normalization and success probabilities}

If a block encoding of $L$ has subnormalization $\alpha$, then applying a polynomial $p(L/\alpha)$ produces a normalized quantum output only after postselection or embedded unitary evolution. The norm of the desired vector controls the success probability. This is not a technicality: in PDE problems, smoothing, dissipation, stiffness, and ill-conditioning can strongly affect output norms.

\subsection{Reading out useful information}

A quantum algorithm rarely returns all entries of $u$. Instead it returns a state proportional to $u$ or allows estimation of selected quantities:
\begin{equation}
  \ip{g}{u},\qquad
  \norm{u}_{L^2(\omega)}^2,
  \qquad
  \bm u^\dag B\bm u,
  \qquad
  \text{or}\qquad
  \Pr(x\in\omega).
\end{equation}
This is natural for many PDE tasks but inappropriate if the goal is to print the full solution field.

These four ingredients form the recurring checklist of the lecture note, and the PDE chapters can be read as case studies in how each equation type distributes its cost among them.  For elliptic problems, the operator function is an ill-conditioned inverse, and the condition number is not an input parameter but a consequence of the discretization (Chapter~\ref{chap:elliptic-quantum}).  For hyperbolic problems, the evolution becomes exactly unitary in the right variables, and the burden shifts to state preparation, sources, and readout (Chapter~\ref{chap:hyperbolic-quantum}).  For parabolic problems, the same smoothing that makes the operator cheap makes the normalized output state expensive (Chapter~\ref{chap:parabolic-quantum}).  For nonlinear problems, even the choice of linear representation is a choice of output model (Chapter~\ref{chap:nonlinear-quantum}).

\section{A worked example: Poisson equation on a periodic grid}

\begin{figure}[htbp]
\centering
\LaplacianBlockEncodingCircuit
\caption{Three-stage LCU workflow for the periodic three-point Laplacian.  The first box prepares the selector amplitudes, $\operatorname{SELECT}_L$ applies $I$, $-S$, or $-S^\dag$ conditioned on the selector, and $P_L^\dag$ unprepares the selector.  Projecting the selector onto $\ket{0^2}$ extracts the block $h^2L_h/4$.}
\label{fig:laplacian-block-encoding-circuit}
\end{figure}

Consider the one-dimensional periodic Poisson problem
\begin{equation}
  -u''(x)=f(x),
  \qquad x\in[0,1],
\end{equation}
with mean-zero right-hand side. The finite-difference discretization is
\begin{equation}
  L_h \bm u_h=\bm f_h,
  \qquad
  L_h=\frac{2I-S-S^\dag}{h^2}.
\end{equation}
The constant vector lies in the nullspace, so the inverse is defined on the orthogonal complement of constants.

A quantum linear-system route consists of the following steps.
\begin{enumerate}[label=\textbf{Step \arabic*.},leftmargin=*]
\item Prepare $\ket{\bm f_h}$, for example using smooth-function state preparation if $f$ is given analytically.
\item Use the block encoding of $L_h$ from Section~\ref{sec:finite-difference-block-encodings}; the three-stage circuit is shown in Figure~\ref{fig:laplacian-block-encoding-circuit}. Since the block encoding represents $L_h/\alpha$ with $\alpha=4/h^2$, the nonzero eigenvalues of $L_h/\alpha$ lie in $[\Theta(N^{-2}),1]$.
\item Use QSVT to approximate $x\mapsto 1/x$ on the nonzero part of the spectrum. The degree depends on the condition number and desired accuracy.
\item Apply the approximate inverse to obtain a state proportional to $\bm u_h=L_h^+\bm f_h$, where $L_h^+$ denotes the Moore--Penrose pseudo-inverse on the mean-zero subspace.
\item Estimate quantities of interest using overlap estimation, amplitude estimation, or observable measurements. Convert back to continuum units using the $h^{d/2}$ scaling.
\end{enumerate}
The last step should be read in the sense of Sections~\ref{sec:measurement-observables} and~\ref{subsec:amplitude-estimation}.  If a normalized solution state is prepared, direct sampling estimates an observable expectation with the usual $O(\epsilon^{-2})$ shot dependence, while amplitude estimation can reduce this to $O(\epsilon^{-1})$ coherent uses of the solution-preparation circuit.  The normalization factor hidden by the linear-system postselection must still be recovered if the desired PDE output is an unnormalized quantity such as $\bm g_h^\dag\bm u_h$ or an $L^2$ norm.

\begin{remark}[Conditioning is still present]
Quantum representation does not remove the condition number of the discrete elliptic operator. For the one-dimensional periodic Laplacian, the smallest nonzero eigenvalue is $\Theta(1)$ while the largest eigenvalue is $\Theta(h^{-2})$ for $L_h$; after scaling by $\alpha=\Theta(h^{-2})$, the smallest nonzero scaled eigenvalue is $\Theta(h^2)$. Thus inverse approximation still sees a condition number of order $h^{-2}$.
\end{remark}

\section{A worked example: Schr\"odinger dynamics with the three-point Laplacian}
\label{sec:schrodinger-three-point-laplacian}

The Poisson example used the block encoding of $L_h$ to apply an approximate inverse.  We now use the same block encoding for a time-dependent problem.  Consider the one-dimensional periodic Schr\"odinger equation with kinetic energy only,
\begin{equation}
  i\partial_t u(t,x)=-\gamma \partial_{xx}u(t,x),
  \qquad x\in[0,1],
  \label{eq:continuous-schrodinger-kinetic}
\end{equation}
where $\gamma>0$ is a physical or nondimensional constant.  For example, in nondimensional units one may take $\gamma=1/2$ or $\gamma=1$.  Using the positive periodic discrete Laplacian from \eqref{eq:1d-positive-laplacian-shift-form}, the semidiscrete equation is
\begin{equation}
  i\frac{d}{dt}\bm u_h(t)=H_h\bm u_h(t),
  \qquad
  H_h=\gamma L_h
  =\frac{\gamma}{h^2}\left(2I-S-S^\dag\right).
  \label{eq:discrete-schrodinger-kinetic}
\end{equation}
Thus
\begin{equation}
  \bm u_h(t)=e^{-itH_h}\bm u_h(0).
  \label{eq:discrete-schrodinger-propagator}
\end{equation}
Unlike the Poisson problem, no inverse is involved.  The map in \eqref{eq:discrete-schrodinger-propagator} is unitary because $H_h$ is Hermitian.  Therefore the Euclidean norm of the grid vector is preserved:
\begin{equation}
  \norm{\bm u_h(t)}_2=\norm{\bm u_h(0)}_2.
\end{equation}
Consequently, if the initial condition is amplitude encoded as
\begin{equation}
  \ket{\bm u_h(0)}
  =
  \frac{1}{\norm{\bm u_h(0)}_2}
  \sum_{j=0}^{N-1}(\bm u_h(0))_j\ket{j},
\end{equation}
then the desired quantum output is simply
\begin{equation}
  \ket{\bm u_h(t)}
  =
  e^{-itH_h}\ket{\bm u_h(0)}.
\end{equation}

The block encoding of $L_h$ from Section~\ref{sec:finite-difference-block-encodings} satisfies
\begin{equation}
  (\bra{0^2}\otimes I)U_L(\ket{0^2}\otimes I)
  =
  \frac{h^2L_h}{4}.
\end{equation}
Hence the same unitary $U_L$ is also a block encoding of the Hamiltonian $H_h=\gamma L_h$, with subnormalization
\begin{equation}
  \alpha_H=\frac{4\gamma}{h^2},
  \qquad
  (\bra{0^2}\otimes I)U_L(\ket{0^2}\otimes I)
  =
  \frac{H_h}{\alpha_H}.
  \label{eq:Hh-block-encoding-from-Lh}
\end{equation}
It is useful to define the scaled Hermitian matrix
\begin{equation}
  A_h:=\frac{H_h}{\alpha_H}
  =
  \frac{h^2L_h}{4}.
\end{equation}
The eigenvalues of $L_h$ are
\begin{equation}
  \lambda_k(L_h)
  =
  \frac{4}{h^2}\sin^2\left(\frac{\pi k}{N}\right),
  \qquad k=0,\ldots,N-1,
  \label{eq:periodic-laplacian-eigenvalues}
\end{equation}
and therefore
\begin{equation}
  \lambda_k(A_h)
  =
  \sin^2\left(\frac{\pi k}{N}\right)
  \in[0,1].
\end{equation}
Thus the Schr\"odinger propagator can be written as
\begin{equation}
  e^{-itH_h}
  =
  e^{-i\tau A_h},
  \qquad
  \tau=\alpha_H t.
  \label{eq:scaled-schrodinger-propagator}
\end{equation}

The QSVT viewpoint is now exactly the same as the polynomial functional calculus used in numerical analysis.  We want to approximate the scalar function
\begin{equation}
  f_\tau(x)=e^{-i\tau x},
  \qquad x\in[0,1],
  \label{eq:schrodinger-qsvt-target-function}
\end{equation}
and then apply the resulting polynomial to $A_h=H_h/\alpha_H$.  If $p_m$ is a degree-$m$ polynomial satisfying
\begin{equation}
  \max_{x\in[0,1]}
  \abs{p_m(x)-e^{-i\tau x}}
  \leq \eps,
  \label{eq:schrodinger-polynomial-approximation}
\end{equation}
then the corresponding matrix approximation obeys
\begin{equation}
  \norm{p_m(A_h)-e^{-i\tau A_h}}
  \leq \eps.
\end{equation}
QSVT, or more specifically its Hamiltonian-simulation specialization often called quantum signal processing or qubitization, compiles such a bounded polynomial transformation into a sequence of phase rotations and alternating calls to the block encoding $U_L$ and its adjoint.  Schematically, it constructs a unitary $V_\Phi$ such that
\begin{equation}
  \left\|
  (\bra{0^{a'}}\otimes I)V_\Phi(\ket{0^{a'}}\otimes I)
  -
  p_m(A_h)
  \right\|
  \leq
  \eps,
  \label{eq:qsvt-schrodinger-block}
\end{equation}
where $a'$ denotes the selector qubits together with any additional signal or work qubits used by the QSVT construction.  Combining \eqref{eq:schrodinger-polynomial-approximation} and \eqref{eq:qsvt-schrodinger-block} gives
\begin{equation}
  \left\|
  (\bra{0^{a'}}\otimes I)V_\Phi(\ket{0^{a'}}\otimes I)
  -
  e^{-itH_h}
  \right\|
  \lesssim
  \eps.
\end{equation}

For numerical analysts, a useful way to view the polynomial approximation
underlying Hamiltonian simulation is through the Jacobi--Anger expansion.
Although
\[
    \operatorname{spec}(A_h)\subset[0,1],
\]
the signal variable in a QSVT transformation ranges over the full interval
$[-1,1]$.  It is therefore convenient to approximate the propagator on this
full signal interval rather than to approximate only on the physical spectral
interval.

For $x\in[-1,1]$, the Jacobi--Anger identities give
\begin{align}
  \cos(\tau x)
  &=
  J_0(\tau)
  +
  2\sum_{k=1}^{\infty}
  (-1)^k J_{2k}(\tau)T_{2k}(x),
  \label{eq:jacobi-anger-cosine}
  \\
  \sin(\tau x)
  &=
  2\sum_{k=0}^{\infty}
  (-1)^k J_{2k+1}(\tau)T_{2k+1}(x),
  \label{eq:jacobi-anger-sine}
\end{align}
where $T_k$ is the Chebyshev polynomial of the first kind and $J_k$ is
the Bessel function of the first kind.  Define the truncated polynomials
\begin{align}
  C_R(x)
  &:=
  J_0(\tau)
  +
  2\sum_{k=1}^{R}
  (-1)^k J_{2k}(\tau)T_{2k}(x),
  \label{eq:truncated-jacobi-anger-cosine}
  \\
  S_R(x)
  &:=
  2\sum_{k=0}^{R}
  (-1)^k J_{2k+1}(\tau)T_{2k+1}(x).
  \label{eq:truncated-jacobi-anger-sine}
\end{align}
The polynomial $C_R$ is even and has degree $2R$, whereas $S_R$ is odd
and has degree $2R+1$.  These are precisely the parity conditions required
by the corresponding QSVT transformations.

For every $\delta>0$, the truncation order can be chosen so that
\begin{align}
  \max_{x\in[-1,1]}
  \left|C_R(x)-\cos(\tau x)\right|
  &\leq \delta,
  \\
  \max_{x\in[-1,1]}
  \left|S_R(x)-\sin(\tau x)\right|
  &\leq \delta.
\end{align}
Consequently,
\begin{equation}
  P_R(x):=C_R(x)-iS_R(x)
\end{equation}
approximates $e^{-i\tau x}$ uniformly on $[-1,1]$ with error at most
$2\delta$.

The raw truncations satisfy
\[
  \|C_R\|_{\infty,[-1,1]},
  \|S_R\|_{\infty,[-1,1]}
  \leq 1+\delta.
\]
A harmless rescaling, for example
\[
  \widetilde C_R=\frac{C_R}{1+\delta},
  \qquad
  \widetilde S_R=\frac{S_R}{1+\delta},
\]
therefore produces QSVT-admissible real polynomials bounded by one on the
full signal interval.  QSVT implements the even polynomial
$\widetilde C_R(A_h)$ and the odd polynomial $\widetilde S_R(A_h)$ in
separate phase sequences.  A one-qubit linear combination of these blocks,
followed by robust oblivious amplitude amplification, gives a block encoding
of
\[
  e^{-i\tau A_h}=e^{-itH_h}
\]
to the desired precision.  This is the block-Hamiltonian simulation
construction of Gilyén, Su, Low, and Wiebe~\cite{GilyenSuLowWiebe2019}.

The Chebyshev series above specify the target approximation polynomials;
they do not directly give the QSVT phase angles.  The latter are obtained
from a polynomial-to-phase synthesis procedure.

\begin{remark}[Parity, boundedness, and complex propagators]
A real QSVT polynomial must have definite parity and must be bounded by one
on the full signal interval $[-1,1]$.  The Jacobi--Anger truncation
$C_R$ is even, while $S_R$ is odd, so they can be implemented by separate
QSVT phase sequences after a small rescaling that enforces the boundedness
condition.  The complex propagator
\[
  e^{-i\tau x}=\cos(\tau x)-i\sin(\tau x)
\]
is then obtained by combining the cosine and sine blocks with one additional
ancilla and applying robust oblivious amplitude amplification.  Thus the
full complex exponential is not being treated as one real polynomial of
definite parity.
\end{remark}

A quantum algorithm for the kinetic Schr\"odinger equation can therefore be summarized as follows.
\begin{enumerate}[label=\textbf{Step \arabic*.},leftmargin=*]
\item Prepare the amplitude-encoded initial condition $\ket{\bm u_h(0)}$.
\item Use the LCU block encoding $U_L$ of the three-point Laplacian.  For $H_h=\gamma L_h$, the subnormalization is $\alpha_H=4\gamma/h^2$.
\item Choose a polynomial $p_m$ approximating $x\mapsto e^{-i\alpha_Htx}$ on $[0,1]$.
\item Use QSVT/QSP to implement the corresponding transformation of the block-encoded matrix $H_h/\alpha_H$.
\item The resulting system state approximates $\ket{\bm u_h(t)}=e^{-itH_h}\ket{\bm u_h(0)}$.
\item Estimate observables such as probability mass in a subregion, Fourier-mode occupation, overlaps with test functions, or expected position/momentum-type quantities.
\end{enumerate}
Because Schr\"odinger evolution is unitary, the output norm does not decay or grow.  The remaining readout cost is therefore exactly the measurement cost discussed earlier in the chapter: direct sampling repeats the whole simulation for each shot, whereas amplitude estimation uses controlled simulations and their inverses to obtain quadratic improvement in the target scalar accuracy.  This distinction will reappear in the wave and heat chapters, where the simulation cost and the measurement cost must be multiplied to obtain the end-to-end complexity.

\begin{remark}[No fast-forwarding in the general black-box model]
For a generic sparse Hamiltonian specified through black-box access, simulation cannot in general have sublinear dependence on the scaled evolution time: there are sparse Hamiltonians for which constant-accuracy simulation requires $\Omega(\norm{H}t)$ queries, up to the normalization conventions of the access model \cite{BerryAhokasCleveSanders2007}.  This no-fast-forwarding theorem explains why the generic QSVT/qubitization cost is essentially linear in
\begin{equation}
  \alpha_Ht=\frac{4\gamma t}{h^2}.
\end{equation}
Fast-forwarding is therefore a structural exception, not a property of arbitrary sparse matrices.
\end{remark}

\begin{remark}[Why not simply diagonalize by the QFT?]
For this constant-coefficient periodic example, there is also a direct spectral implementation.  Since $L_h$ is diagonalized by the discrete Fourier transform,
\begin{equation}
  e^{-itH_h}
  =
  F_N^\dag
  \operatorname{diag}\left(
  \exp\left[
  -it\gamma\frac{4}{h^2}\sin^2\left(\frac{\pi k}{N}\right)
  \right]
  \right)_{k=0}^{N-1}
  F_N .
\end{equation}
A quantum circuit can apply the QFT, compute the eigenvalue and its phase by reversible arithmetic, apply the diagonal phase, and uncompute.  Because the matrix is explicitly diagonalizable, this is a structured fast-forwarding mechanism and is not ruled out by the generic lower bound above.

\end{remark}

\section{How to read query and gate counts}
\label{sec:query-gate-counts}

Quantum algorithmic complexity is usually reported in several layers.  For PDE algorithms, it is helpful to keep the following quantities separate:
\begin{itemize}
  \item $N$: number of spatial degrees of freedom, usually with $n=\lceil\log_2N\rceil$ qubits for the solution register;
  \item $C_{\mathrm{prep}}$: gate cost of preparing the input state;
  \item $C_U$: gate cost of one block-encoding circuit $U_A$ or its inverse;
  \item $Q$: number of calls, or queries, to the block encoding;
  \item $\alpha$: block-encoding subnormalization factor;
  \item $p_{\mathrm{succ}}$: postselection success probability;
  \item $M$: number of measurement shots used to estimate the desired output quantity.
\end{itemize}

A typical QSVT-based subroutine has the form
\begin{equation}
  \ket{\bm b}
  \xrightarrow{\;Q\ \text{queries to }U_A\;}
  \text{block encoding of }p(A/\alpha)
  \xrightarrow{\;\text{postselect}\;}
  \frac{p(A/\alpha)\ket{\bm b}}{\norm{p(A/\alpha)\ket{\bm b}}}.
\end{equation}
The leading gate count has the schematic form
\begin{equation}
  C_{\mathrm{total}}
  \approx \left(
  C_{\mathrm{prep}}
  + Q\,C_U
  + C_{\mathrm{poly}}(Q,n,\log(1/\eps))
  + C_{\mathrm{post}} \right)
   C_{\mathrm{meas}}.
  \label{eq:total-cost-schematic}
\end{equation}
Here $C_{\mathrm{poly}}$ denotes the gates needed for phase rotations, arithmetic, and controls; $C_{\mathrm{post}}$ accounts for postselection or amplitude amplification; and $C_{\mathrm{meas}}$ accounts for repeated sampling or amplitude estimation.

The query count $Q$ is usually comparable to the degree of the polynomial transformation.  For example, inverse filters have degree depending on the condition number, while Hamiltonian simulation depends essentially on the scaled time $\alpha t$.  The gate count then multiplies this query count by the cost of implementing the block encoding itself.  This is why explicit finite-difference and finite-element block encodings matter: they determine $C_U$ and often also the normalization $\alpha$.

\begin{remark}[Comparison with classical complexity]
A classical sparse matrix-vector product is often counted as $O(sN)$ operations.  A quantum block encoding of the same sparse matrix may have gate cost polynomial in $\log N$ for structured matrices, but it acts on amplitude-encoded vectors and returns limited measurement information.  Therefore one should not compare $O(sN)$ directly with $\operatorname{polylog}(N)$ without also comparing input preparation, output readout, conditioning, and the quantity of interest.
\end{remark}

\begin{remark}[Query complexity versus wall-clock runtime]
Query complexity isolates the number of calls to an idealized primitive such as a block encoding or sparse oracle.  Gate complexity expands those primitives into elementary gates.  Physical runtime further depends on architecture, connectivity, error correction, and parallelism.  In this lecture note we mostly discuss query and gate complexity, because these are the architecture-independent quantities most closely connected to numerical linear algebra.
\end{remark}

\section{Summary}

\begin{center}
\begin{tabular}{p{0.25\textwidth}p{0.34\textwidth}p{0.31\textwidth}}
\toprule
Concept & Numerical linear algebra analogue & Main caution \\
\midrule
Quantum state & Unit vector in $\C^N$ & Entries are not directly readable \\
Amplitude encoding & Normalized coefficient vector & Norm is stored separately \\
Measurement & Sampling an observable's spectral measure & One shot gives an eigenvalue, not an expectation \\
Postselection & Conditioning on a successful flag & Small success probability is costly \\
Swap test & Overlap estimator & Gives $|\ip{u}{v}|^2$, not phase \\
Hadamard test & Quadratic-form estimator & Needs controlled unitaries \\
Block encoding & Matrix access model & Subnormalization matters \\
Diagonal matrix & Entrywise multiplication & Inverse still costs unless reciprocals are encoded \\
LCU/SELECT & Block diagonal compression & Coefficients enter through normalization \\
Repeated postselection & Sequential conditioning & Products of probabilities can be exponentially small \\
Pauli strings & Orthogonal matrix basis & Efficient only when expansion is compact \\
Matrix norms & Stability and scaling estimates & Max norm and spectral norm differ by dimension \\
QSVT/QET & Polynomial/eigenvalue matrix function & Degree controls queries \\
Amplitude amplification & Boosting/postselection & Depends on success probability \\
Amplitude estimation & Quadratically faster scalar estimation & Long coherent circuits with controlled access \\
Grover--Rudolph/Kaye--Mosca & Tree-structured data loading & Requires efficient conditional norms/phases \\
QFT & Coherent Fourier transform & Does not output all Fourier coefficients \\
Phase estimation & Coherent eigenvalue readout & Aliasing; resolution costs controlled powers \\
Sparse access & CSR/CSC-like matrix access & Must be implemented reversibly \\
Query/gate counts & Cost model for primitives & Must include input and measurement \\
\bottomrule
\end{tabular}
\end{center}

The guiding principle for PDE applications is that discretization, normalization, block encoding, polynomial transformation, postselection, and readout must be analyzed together. A quantum algorithm is not just a faster matrix-vector multiply. It is a full pipeline for preparing data, transforming amplitudes, and extracting limited but useful scalar information.

\section{Exercises}

\begin{exercise}[Grover--Rudolph tree]
For a four-cell distribution with probabilities $p_{00}$, $p_{01}$, $p_{10}$, $p_{11}$, write the two levels of conditional rotations explicitly and verify that the final amplitudes are $\sqrt{p_{00}},\sqrt{p_{01}},\sqrt{p_{10}},\sqrt{p_{11}}$.
\end{exercise}

\begin{exercise}[QFT versus FFT]
Let $N=2^n$.  Compare the arithmetic cost of a classical FFT on an explicitly stored vector of length $N$ with the gate count of an exact QFT circuit on $n$ qubits.  Explain why this comparison does not mean that the QFT outputs all Fourier coefficients in polylogarithmic time.
\end{exercise}

\begin{exercise}[Phase estimation without aliasing]
Let $H=H^\dag$ with $\operatorname{spec}(H)\subseteq[0,\lambda_{\max}]$, and set $U=e^{-iHt_0}$.  Give a condition on $t_0$ under which distinct eigenvalues in this window correspond to distinct phases $\theta\in[0,1)$, and determine how many clock qubits are needed to resolve two eigenvalues separated by $\Delta\lambda$.  For the discrete Laplacian with $\lambda_{\max}=\Theta(h^{-2})$, how must $t_0$ scale with the mesh size?
\end{exercise}

\begin{exercise}[Sparse isometry construction]
Verify \eqref{eq:sparse-isometry-overlap} for a real nonnegative tridiagonal matrix.  What changes if the off-diagonal entries are negative?
\end{exercise}

\begin{exercise}[Normalization]
Let $f(x)=\sin(2\pi x)$ on $[0,1]$ and let $x_j=j/N$. Compute $\norm{\bm f_h}_2$ asymptotically and verify the scaling $\norm{f}_{L^2}\approx h^{1/2}\norm{\bm f_h}_2$.
\end{exercise}

\begin{exercise}[Swap test]
Derive \eqref{eq:swap-test-probability} directly by expanding the state after the final Hadamard gate.
\end{exercise}

\begin{exercise}[Hadamard-test sign convention]
Repeat the derivation of the imaginary-part Hadamard test.  Show that inserting $S^\dag$ before the final Hadamard gives
\[
  \Pr(0)=\frac12\left(1+\Imag\bra{\psi}W\ket{\psi}\right),
\]
whereas inserting $S$ gives the opposite sign.  This is a useful reminder that phase conventions in circuit identities should be checked algebraically.
\end{exercise}

\begin{exercise}[Observable measurement]
Let $O=\sum_r\lambda_r\Pi_r$ be a Hermitian observable. Starting from the Born probabilities $p_r=\bra{\psi}\Pi_r\ket{\psi}$, verify that the expected measurement outcome is $\bra{\psi}O\ket{\psi}$.
\end{exercise}

\begin{exercise}[Postselection cost]
For the state in \eqref{eq:postselection-decomposition}, compute the expected number of repetitions required to obtain one successful flag measurement. Then compare this with the $O(1/\sqrt p)$ query scaling from amplitude amplification.
\end{exercise}

\begin{exercise}[Forward difference]
Using $S\ket{j}=\ket{j+1\pmod N}$, verify which of $S-I$ or $S^\dag-I$ corresponds to $(D_+u)_j=(u_{j+1}-u_j)/h$ when vectors are represented by their coefficients in the computational basis.
\end{exercise}

\begin{exercise}[Dirichlet correction]
Write the nonperiodic tridiagonal matrix for the one-dimensional Dirichlet Laplacian. Express it as the periodic circulant Laplacian plus a low-rank boundary correction. Discuss how the correction could be incorporated by LCU.
\end{exercise}

\begin{exercise}[Two-dimensional stencil]
For $N_x=N_y=4$, explicitly write the five-point periodic Laplacian as a Kronecker sum. Identify the shift matrices $S_x$ and $S_y$.
\end{exercise}

\begin{exercise}[Condition number]
Compute the eigenvalues of $L_h=(2I-S-S^\dag)/h^2$ on a periodic grid and estimate the condition number on the mean-zero subspace.
\end{exercise}

\begin{exercise}[Diagonal inverse versus diagonal application]
Let $D=\operatorname{diag}(d_0,\ldots,d_{N-1})$ with $d_j\in[1/\kappa,1]$.  Explain why applying $D$ and applying $D^{-1}$ are both $O(N)$ operations classically if all diagonal entries are stored.  Then explain why a block encoding of $D$ does not automatically give a constant-query block encoding of $D^{-1}$ in the QSVT model.  Relate your answer to the polynomial approximation of $1/x$ on $[1/\kappa,1]$.
\end{exercise}

\begin{exercise}[Repeated postselection]
Suppose a quantum subroutine consists of $m$ stages and each stage succeeds with probability $1-\delta$.  Show that the total success probability is $(1-\delta)^m\approx e^{-m\delta}$ for small $\delta$.  If $\delta=c/m$, what constant success probability is obtained in the limit $m\to\infty$?  What happens if $\delta$ is independent of $m$?
\end{exercise}

\begin{exercise}[Pauli expansion]
For a one-qubit Hermitian matrix
\begin{equation}
  H=\begin{bmatrix}a&c-id\\c+id&b\end{bmatrix},
\end{equation}
with $a,b,c,d\in\R$, write $H$ as a linear combination of $I,X,Y,Z$.  Then verify the coefficient formula \eqref{eq:pauli-expansion} in this case.
\end{exercise}

\begin{exercise}[Recovering an unnormalized norm]
A block-encoding circuit with normalization $\alpha$ prepares the desired vector $\bm v/\alpha$ in the successful ancilla branch.  If the success probability is estimated to be $p$, show that $\norm{\bm v}=\alpha\sqrt p$.  If a normalized-state measurement estimates $\bra{\bm v}O\ket{\bm v}$ to accuracy $\eta$, how does the uncertainty in $p$ enter the estimate of the unnormalized quadratic form $\bm v^\dag O \bm v$?
\end{exercise}

\begin{exercise}[Exponentiating a Pauli string]
Let
\[
P=P_{n-1}\otimes\cdots\otimes P_0,
\qquad
P_j\in\{I,X,Y,Z\}.
\]
Show that \(P^2=I\), and hence
\[
e^{-i\theta P}
=
\cos(\theta)I-i\sin(\theta)P.
\]
Describe a circuit for this unitary using single-qubit basis changes, a CNOT
parity ladder, and one \(R_z(2\theta)\) rotation. If \(P\) has Pauli weight
\(w\), estimate the number of one- and two-qubit gates required.
\end{exercise}

\begin{exercise}[Concentration of measurement outcomes]
Let \(O=O^\dagger\) have eigenvalues in \([a,b]\), and let
\(X_1,\ldots,X_m\) be independent outcomes obtained by measuring \(O\) in
the state \(\ket{\psi}\). Define
\[
\widehat\mu_m=\frac1m\sum_{r=1}^m X_r,
\qquad
\mu=\langle\psi|O|\psi\rangle.
\]
Use the bounded-difference inequality to show that
\[
\Pr\!\left(|\widehat\mu_m-\mu|\geq\epsilon\right)
\leq
2\exp\left(-\frac{2m\epsilon^2}{(b-a)^2}\right).
\]
Specialize this result to a Pauli observable, whose outcomes are
\(\pm1\), and determine how many measurements suffice to achieve additive
error \(\epsilon\) with failure probability at most \(\delta\).
\end{exercise}

\begin{exercise}[Pauli expansion of the one-dimensional Laplacian]
Let \(N=2^n\) and define the noncyclic shift
\[
S_N=\sum_{j=0}^{N-2}\ket{j+1}\bra{j}.
\]
Using
\[
\sigma_+=\ket{1}\bra{0}=\frac{X-iY}{2},
\qquad
\sigma_-=\ket{0}\bra{1}=\frac{X+iY}{2},
\]
show that
\[
S_N
=
\sum_{k=0}^{n-1}
I^{\otimes(n-k-1)}
\otimes\sigma_+
\otimes\sigma_-^{\otimes k}.
\]
Expand \(S_N+S_N^\dag\) into Pauli strings and show that it contains at
most
\[
\sum_{k=0}^{n-1}2^k=N-1
\]
terms. Conclude that the tridiagonal finite-difference Laplacian
\[
L_h=\frac{1}{h^2}\left(2I-S_N-S_N^\dag\right)
\]
has a Pauli expansion with \(O(N)\) terms.
\end{exercise}
% Body-only LaTeX file. It is intended to be incorporated by \include{...}
% into a larger book manuscript. No documentclass or package preamble is included.

\chapter{Quantum Algorithms for Elliptic PDEs}
\label{chap:elliptic-quantum}

\section{Classical discretization and the direct QLSA route}
\label{sec:elliptic-classical-discretization}

Elliptic boundary value problems appear whenever a system has reached a static equilibrium.  In electrostatics, the potential satisfies a Poisson equation with charge density as the source.  In steady heat conduction, the temperature satisfies a diffusion equation with heat sources.  In linear elasticity, displacement fields are determined by the balance between elastic stresses and applied loads.  These examples have different physical meanings, but they share a common mathematical structure: a coercive second-order operator, a boundary condition, and a variational energy principle.  Classical background on elliptic equations may be found in \cite{Evans2010PDE} and finite-difference and finite-element discretizations are treated extensively in \cite{LarssonThomee2009,MortonMayers2005,BrennerScott2008}.

The simplest and most familiar model is the Poisson equation.  We state the model problem in the variable-coefficient Dirichlet form
\begin{equation}
  -\nabla\cdot\bigl(a(\bm{x})\nabla u(\bm{x})\bigr)=f(\bm{x}),
  \qquad \bm{x}\in\Omega,
  \qquad u|_{\partial\Omega}=0,
  \label{eq:elliptic-model}
\end{equation}
where $\Omega\subset\mathbb{R}^d$ is bounded and $a(\bm{x})\in\mathbb{R}^{d\times d}$ is symmetric and uniformly elliptic: there are constants $0<a_{\min}\le a_{\max}<\infty$ such that
\begin{equation}
  a_{\min}|\bm{\xi}|^2
  \leq
  \bm{\xi}^{T}a(\bm{x})\bm{\xi}
  \leq
  a_{\max}|\bm{\xi}|^2,
  \qquad \bm{\xi}\in\mathbb{R}^d,
  \quad  \bm{x}\in\Omega.
  \label{eq:uniform-ellipticity}
\end{equation}
We may express \eqref{eq:elliptic-model} simply as $Lu=f$, where $L=-\nabla\cdot(a\nabla)$ is the elliptic operator.  This notation will be used again in the hyperbolic and parabolic chapters: the wave equation evolves according to $u_{tt}+Lu=0$, while the heat equation evolves according to $u_t+Lu=0$.  Thus the same spatial operator gives rise to three different classes of PDEs.

The Poisson equation is the special case $a(\bm{x})=I$:
\begin{equation}
  -\Delta u=f,
  \qquad u|_{\partial\Omega}=0,
  \qquad
  \Delta u=\sum_{j=1}^{d}\frac{\partial^2u}{\partial x_j^2}.
\end{equation}
The sign convention makes $L$ a positive definite operator under appropriate boundary conditions, e.g., the homogeneous Dirichlet condition.  This positivity is the source of stability in the continuous problem. After discretization, stability is inherited by methods that preserve a discrete analogue of it, in the form of a discrete maximum principle or a coercivity estimate.

Whatever discretization is chosen, the computational problem usually takes the algebraic form
\begin{equation}
  A_h\bm{x}_h=\bm{b}_h.
  \label{eq:discrete-linear-system}
\end{equation}
Quantum algorithms targeting \eqref{eq:discrete-linear-system} are known as quantum linear-system algorithms (QLSA). 
A quantum linear-system algorithm does not normally return every component of $\bm{x}_h$.  Its natural output is instead a normalized state
\begin{equation}
  \ket{\bm{x}_h}
  :=\frac{\bm{x}_h}{\|\bm{x}_h\|_2},
  \label{eq:elliptic-solution-state}
\end{equation}
from which selected overlaps or observables are estimated.

For our discussion of QLSA here, the notation and access model are those of Chapter~\ref{chap:basic-elements}: $\bm{b}_h$ is amplitude encoded as in \eqref{eq:amplitude-encoding-grid-vector}, continuum and Euclidean norms are compared using \eqref{eq:L2-l2-scaling}, and $A_h$ is supplied through a block encoding in the sense of \eqref{eq:block-encoding-definition}.  Throughout this chapter, $\bm{x}_h$ denotes the algebraic vector in the coordinates used for amplitude encoding.  In finite differences this is simply the vector of grid values, often denoted $\bm{u}_h$ locally.  In finite elements, we reserve $\bm{y}$ for the coefficient vector in a nodal basis and use $\bm{x}=M_h^{1/2}\bm{y}$ for the mass-normalized coordinate vector.  Likewise, $\bm{b}_h$ denotes the right-hand side after the corresponding coordinate normalization.  We use $N_h$ generically for the total number of spatial degrees of freedom when the precise finite-difference or finite-element convention is not important.

Thus one must distinguish the discretization error, the quantum algorithmic error, and the measurement error:
\begin{equation}
  u
  \longrightarrow
  u_h
  \longrightarrow
  \ket{\bm{x}_h}
  \longrightarrow
  \text{quantity of interest}.
  \label{eq:elliptic-end-to-end-pipeline}
\end{equation}

Figure~\ref{fig:elliptic-pipeline} previews the most direct route through this pipeline.  Its third box announces the theme of the chapter: for elliptic PDEs, the condition number is not an independent input to the linear-algebra problem but a consequence of the discretization, growing as $\Theta(h^{-2})$ exactly when the mesh is refined to meet an accuracy target.   This section follows the direct route and pays the full condition number.  Section~\ref{sec:factorized-dilation} exploits the gradient factorization $A_h=G_h^\dag G_h$ to work at first order, where only the square root of the condition number appears.  Section~\ref{sec:qft-spectral-filtering} bypasses the generic inverse filter altogether when geometry, coefficients, and boundary conditions permit explicit diagonalization; Section~\ref{sec:elliptic-eigenvalue-qpe} then treats the associated eigenvalue problem.  Across all three routes, one pattern is worth watching for from the start: a saving in the spectral condition number never simply disappears.  It moves elsewhere in the pipeline---into an output-block weight, a postselection probability, or the preparation of a transformed input---and an honest end-to-end accounting must follow it there.

\begin{figure}[htbp]
\centering
\resizebox{\textwidth}{!}{%
\begin{tikzpicture}[x=1cm,y=1cm,line cap=round,line join=round,>=Latex,
    every node/.style={font=\small},
    flow/.style={gray!62,line width=0.8pt,->,shorten >=3pt,shorten <=3pt},
    box/.style={draw=gray!62,fill=white,rounded corners=4pt,
                line width=0.7pt,align=center,inner sep=3pt,text=gray!56}]
  \node[box] (prep) at (0,0)    {prepare\\$\ket{\bm{b}_h}$};
  \node[box] (enc)  at (3.1,0)  {block encode\\$\widetilde A_h=A_h/\alpha_A$};
  \node[box] (inv)  at (6.8,0)  {QSVT inverse filter\\$x\mapsto x^{-1}$ on $[c h^2,1]$};
  \node[box] (out)  at (10.4,0) {prepare\\$\ket{\bm{x}_h}$};
  \node[box] (meas) at (13.3,0) {estimate\\an observable};
  \draw[flow] (prep)--(enc);
  \draw[flow] (enc)--(inv);
  \draw[flow] (inv)--(out);
  \draw[flow] (out)--(meas);
\end{tikzpicture}%
}
\caption{The direct elliptic QLSA pipeline.  Even if one application of the block encoding costs only $\operatorname{polylog}(N)$ gates, the inverse filter must resolve the normalized spectral interval down to $\Theta(h^2)$, producing a query count $\widetilde O(h^{-2})$ before state-preparation and readout costs are included.}
\label{fig:elliptic-pipeline}
\end{figure}

\paragraph{Notation for sizes.}
The following table fixes the size notation used in this chapter.

\begin{center}
\begin{tabular}{ll}
\toprule
symbol & meaning in this chapter \\
\midrule
$h$ & mesh width or grid size \\
$L_{\rm box}$ & side length of a model tensor-product box \\
$N_{\rm cell}=L_{\rm box}/h$ & number of grid cells per coordinate direction \\
$N_h=N_{\rm cell}-1$ & number of one-dimensional interior grid unknowns \\
 $N_{\rm dof}$ & total number of spatial degrees of freedom \\
$N=2^n$ & dimension of a quantum register after power-of-two padding \\
\bottomrule
\end{tabular}
\end{center}

This small dictionary is included because classical PDE texts often use $N$ for the number of grid points, whereas quantum algorithms often reserve $N=2^n$ for the dimension of an $n$-qubit register.  Padding to the next power of two is a bookkeeping issue, not a change in the PDE discretization.

\subsection{Finite differences in one space dimension}
\label{subsec:fd-1d}

Let $\Omega=(0,L_{\rm box})$ and consider
\begin{equation}
  -u''(x)=f(x),
  \qquad u(0)=u(L_{\rm box})=0.
\end{equation}
Choose a mesh width $h$ with $N_{\rm cell}=L_{\rm box}/h\in\mathbb{N}$ and grid points 
\begin{equation}
    \label{1dgrids}
    x_j=jh, \quad j=0,1,\ldots,N_{\rm cell}.
\end{equation}
Accordingly, the unknowns are the $N_h=N_{\rm cell}-1$ interior values
\[
  U_j\approx u(x_j),\qquad j=1,\ldots,N_h.
\]
This convention is the one used later in the time-dependent chapters: $h$ is the grid/cell size and $L_{\rm box}/h$ is the number of cells per coordinate direction.  For $L_{\rm box}=1$, this is equivalent to the familiar notation $h=1/(N_h+1)$ if $N_h$ denotes the number of interior unknowns.

The centered second difference gives
\begin{equation}
  -\frac{U_{j+1}-2U_j+U_{j-1}}{h^2}=f(x_j),
  \qquad j=1,\ldots,N_h.
  \label{eq:fd-1d-equation}
\end{equation}
With
\begin{equation}
  \bm{u}_h=(U_1,\ldots,U_{N_h})^{\dag},
  \qquad
  (\bm{b}_h)_j=f(x_j),
\end{equation}
we obtain
\begin{equation}
  A_h\bm{u}_h=\bm{b}_h,
  \qquad
  A_h=\frac1{h^2}
  \begin{bmatrix}
    2 & -1 \\
    -1 & 2 & -1 \\
       & \ddots & \ddots & \ddots \\
       &        & -1 & 2 & -1 \\
       &        &    & -1 & 2
  \end{bmatrix}_{N_h\times N_h}.
  \label{eq:one-d-laplacian}
\end{equation}
This tridiagonal system is the simplest nontrivial elliptic discretization, and it serves as the running example for every quantum construction in this chapter.  Three properties of the matrix $A_h$ organize everything that follows: a stability estimate, an explicit spectrum, and a gradient factorization.

First, Taylor expansion gives a local truncation error $O(h^2)$ for smooth solutions.  The discrete maximum principle then converts this consistency estimate into a global maximum-norm error estimate; see, for example, the finite-difference treatments in \cite[Chapter~4]{LarssonThomee2009} and \cite{MortonMayers2005}:
\begin{equation}
  \max_{1\leq j\leq N_h}|U_j-u(x_j)|
  \leq
  C h^2\|u^{(4)}\|_{L^\infty(0,L_{\rm box})}.
  \label{eq:fd-second-order-error}
\end{equation}
The maximum principle is important conceptually: it is one finite-dimensional analogue of elliptic stability.  A quantum algorithm can change how the linear system is solved, but it does not remove the underlying discretization and stability analysis.

Second, the matrix is diagonalized by a discrete sine transform (DST), so its eigenvalues are explicit:
\begin{equation}
  \lambda_k(A_h)
  =\frac4{h^2}\sin^2\!\left(\frac{k\pi}{2(N_h+1)}\right),
  \qquad k=1,\ldots,N_h.
  \label{eq:fd-spectrum}
\end{equation}
Consequently, the spectral range is explicit:
\begin{equation}
  \lambda_{\min}(A_h)=\Theta(L_{\rm box}^{-2}),
  \qquad
  \lambda_{\max}(A_h)=\Theta(h^{-2}),
  \qquad
  \kappa(A_h)=\Theta\!\left(\frac{L_{\rm box}^2}{h^2}\right).
  \label{eq:elliptic-kappa}
\end{equation}
For fixed physical domain size $L_{\rm box}=O(1)$ this reduces to the usual $\Theta(h^{-2})$ condition-number growth.  Written in terms of the number of cells per coordinate, $N_{\rm cell}=L_{\rm box}/h$, the condition number is $\Theta(N_{\rm cell}^2)$.

Third, the same matrix is the product of a discrete gradient and a discrete divergence.  Let $G_h\in\mathbb{R}^{(N_h+1)\times N_h}$ denote the forward difference with homogeneous boundary values:
\begin{equation}\label{G1-grad}
  (G_h\bm{u}_h)_j=\frac{U_j-U_{j-1}}{h},
  \qquad j=1,\ldots,N_h+1,
\end{equation}
where $U_0=U_{N_h+1}=0$.  Then
\begin{equation}
  A_h=G_h^{\dag}G_h,
  \label{eq:fd-gradient-factorization}
\end{equation}
and
\begin{equation}
  \|G_h\|=\Theta(h^{-1}).
  \label{eq:factor-scale}
\end{equation}
This first-order scale will be exploited in Section~\ref{sec:factorized-dilation}.

\subsection{The five-point Laplacian on a square}
\label{subsec:fd-2d}

For a square box with the same mesh width in both directions, let $N_h$ denote the number of interior grid points per coordinate.  With the unknowns $U_{i,j}$ approximating $u(x_i,x_j)$, the five-point scheme is
\begin{equation}
  -\Delta_h U_{i,j}
  :=\frac{4U_{i,j}-U_{i+1,j}-U_{i-1,j}-U_{i,j+1}-U_{i,j-1}}{h^2}
  =f(x_{i,j}).
  \label{eq:five-point}
\end{equation}
Clearly this is a generalization of the central difference scheme \eqref{eq:fd-1d-equation} to two dimensions. 

Equivalently, with appropriate ordering, we can express the matrix via tensor products,
\begin{equation}
  A_h=I_{N_h}\otimes T_h+T_h\otimes I_{N_h},
  \qquad
  T_h=\frac1{h^2}\operatorname{tridiag}(-1,2,-1).
  \label{eq:kronecker-laplacian}
\end{equation}
The method is again second order for smooth solutions, and its eigenvectors are tensor products of the sine modes from \eqref{eq:fd-spectrum}.  The eigenvalues are
\begin{equation}
  \lambda_{k,\ell}(A_h)
  =\frac4{h^2}\left[
  \sin^2\!\left(\frac{k\pi}{2(N_h+1)}\right)
  +
  \sin^2\!\left(\frac{\ell\pi}{2(N_h+1)}\right)
  \right].
  \label{eq:fd-spectrum-2d}
\end{equation}
Thus $\kappa(A_h)=\Theta((L_{\rm box}/h)^2)$ in any fixed dimension.  In $d$ dimensions,
\begin{equation}
  A_h=\sum_{m=1}^{d}
  I_{N_h}^{\otimes(m-1)}\otimes T_h\otimes I_{N_h}^{\otimes(d-m)}.
  \label{eq:d-dimensional-kronecker-sum}
\end{equation}

Finally, the corresponding discrete gradient can be written explicitly.  Let $G_1$ be the one-dimensional forward-difference matrix from \eqref{G1-grad}; then in two dimensions
\begin{equation}
  G_h=
  \begin{bmatrix}
    I_{N_h}\otimes G_1\\
    G_1\otimes I_{N_h}
  \end{bmatrix},
  \label{eq:2d-discrete-gradient}
\end{equation}
up to the convention used for lexicographic ordering.  Again one has the scaling,
\begin{equation}
  A_h=G_h^{\dag}G_h,
  \qquad
  \|G_h\|=\Theta(h^{-1}).
  \label{eq:2d-factorization}
\end{equation}

\subsection{Finite elements, mass normalization, and the stiffness matrix}
\label{subsec:fem}

Finite elements make the variational structure explicit and extend naturally to nonrectangular domains.  The weak problem is: find $u\in H_0^1(\Omega)$ such that
\begin{equation}
  a(u,v)=\ell(v),
  \qquad v\in H_0^1(\Omega),
  \label{eq:weak-form}
\end{equation}
where
\begin{equation}
  a(u,v)=\int_\Omega \nabla v(x)^\dag a(x)\nabla u(x)\,dx,
  \qquad
  \ell(v)=\int_\Omega f(x)v(x)\,dx.
  \label{eq:elliptic-bilinear-form}
\end{equation}
Let $V_h\subset H_0^1(\Omega)$ have basis $\{\phi_j\}_{j=1}^{N_h}$.  Writing
\begin{equation}
  u_h(x)=\sum_{j=1}^{N_h}(\bm{y})_j\phi_j(x),
\end{equation}
the weak form \eqref{eq:weak-form} implies that
\begin{equation}
  K_h\bm{y}=\bm{f}_h,
  \label{eq:fem-linear-system}
\end{equation}
where
\begin{equation}
  (K_h)_{ij}=a(\phi_j,\phi_i),
  \qquad
  (\bm{f}_h)_i=\ell(\phi_i).
  \label{eq:stiffness-load}
\end{equation}
The stiffness matrix $K_h$ is sparse, symmetric, and positive definite after essential boundary conditions are imposed.  For conforming $P_1$ elements on a quasi-uniform mesh, the following estimates are standard in finite element theory; see \cite[Chapter~5]{LarssonThomee2009} and \cite{BrennerScott2008}:
\begin{equation}
  \|u-u_h\|_{H^1(\Omega)}\leq Ch\|u\|_{H^2(\Omega)},
  \qquad
  \|u-u_h\|_{L^2(\Omega)}\leq Ch^2\|u\|_{H^2(\Omega)}.
  \label{eq:p1-fem-error}
\end{equation}

The coefficient vector $\bm{y}$ is not an $L^2$-orthonormal coordinate representation.  With the mass matrix
\begin{equation}
  (M_h)_{ij}=\int_\Omega \overline{\phi_i(x)}\,\phi_j(x)\,dx,
  \label{eq:fem-mass-matrix}
\end{equation}
we have
\begin{equation}
  \|u_h\|_{L^2(\Omega)}^2=\bm{y}^\dag M_h\bm{y}.
\end{equation}
The mass-normalized variables are therefore
\begin{equation}
  \bm{x}=M_h^{1/2}\bm{y},
  \qquad
  A_h=M_h^{-1/2}K_hM_h^{-1/2},
  \qquad
  \bm{b}_h=M_h^{-1/2}\bm{f}_h.
  \label{eq:mass-normalized-system}
\end{equation}
Then
\begin{equation}
  A_h\bm{x}=\bm{b}_h,
  \qquad
  \|\bm{x}\|_2=\|u_h\|_{L^2(\Omega)}.
\end{equation}
These are the coordinates naturally matched to amplitude encoding.  They are mathematically convenient, but the exact matrix $M_h^{-1/2}$ is generally not sparse.  The situation is nevertheless benign, and the reason is worth recording: on a shape-regular quasi-uniform mesh, the suitably scaled mass matrix is spectrally equivalent to the identity with mesh-independent constants, so $\kappa(M_h)=O(1)$.  A QSVT approximation of $x^{-1/2}$ on such a spectrum needs only $O(\log(1/\epsilon))$ degree.  Mass normalization is thus a well-conditioned problem hiding inside an ill-conditioned one, and it never dominates the cost.

Similar to the finite difference approach in the previous section, the stiffness matrix also has a Gram factorization.  If $B_h$ collects elementwise discrete gradients and $W_h$ is the positive quadrature-weight matrix, then
\begin{equation}
  K_h=B_h^\dag W_h B_h.
\end{equation}
Consequently,
\begin{equation}
  A_h=G_h^\dag G_h,
  \qquad
  G_h=W_h^{1/2}B_hM_h^{-1/2}.
  \label{eq:fem-gradient-factorization}
\end{equation}
For fixed polynomial degree on quasi-uniform meshes,
\begin{equation}
  \|G_h\|=\Theta(h^{-1}),
  \qquad
  \|A_h\|=\Theta(h^{-2}),
  \qquad
  \kappa(A_h)=\Theta(h^{-2})\quad \text{on a fixed domain}.
  \label{eq:fem-scales}
\end{equation}
Thus $A_h$ is not merely a sparse positive matrix: it is the discrete energy operator associated with the weighted gradient $G_h$.

\subsection{Direct block encoding of $A_h$ and its mesh-dependent cost}
\label{subsec:elliptic-block-encoding}

The original quantum linear-system literature focused on optimizing the dependence on the matrix condition number, beginning with HHL and continuing through LCU- and QSVT-based solvers \cite{HarrowHassidimLloyd2009,GilyenSuLowWiebe2019}.  Elliptic PDEs add an important feature: the condition number is not an independent input parameter.  It grows with the number of grid points required by the PDE discretization.

For fixed-order finite differences, and for mass-lumped finite elements in fixed dimension, the normalized operator has $O(1)$ nonzeros per row.  The sparse-access construction of Section~\ref{sec:sparse-block-encoding-access}, or the explicit structured constructions of Section~\ref{sec:finite-difference-block-encodings}, then gives a block encoding of
\begin{equation}
  \widetilde A_h:=\frac{A_h}{\alpha_A},
  \qquad
  \alpha_A=\Theta(h^{-2}).
  \label{eq:elliptic-direct-normalization}
\end{equation}
 Because $\lambda_{\min}(A_h)=\Theta(L_{\rm box}^{-2})$, the normalized spectrum satisfies
\begin{equation}
  \operatorname{spec}(\widetilde A_h)
  \subseteq[c h^2/L_{\rm box}^2,1]
  \label{eq:elliptic-normalized-spectrum}
\end{equation}
for a mesh-independent constant $c>0$.

The QSVT inverse construction in \eqref{eq:qsvt-inverse-approximation} must approximate $1/x$ down to $x=\Theta(h^2/L_{\rm box}^2)$.  Its polynomial degree, and hence the number of calls to the block encoding of $A_h$, is
\begin{equation}
  Q_A
  =O\!\left(\kappa(A_h)\log\frac{\kappa(A_h)}{\epsilon}\right)
  =\widetilde O\!\left(\frac{L_{\rm box}^2}{h^2}\log\frac1{\epsilon}\right).
  \label{eq:elliptic-direct-qsvt-cost}
\end{equation}
This is a query count.  The gate count multiplies it by the cost of one block-encoding circuit, as explained in Section~\ref{sec:query-gate-counts}. Meanwhile, state preparation, postselection, and observable estimation remain additional costs.

It is useful to express the same scaling in terms of both the mesh width and the number of unknowns.  If $N_{\rm cell}=L_{\rm box}/h$ is the number of cells per coordinate and $N_{\rm dof}=\Theta(N_{\rm cell}^d)$, then
\begin{equation}
  Q_A=\widetilde O(N_{\rm cell}^{2})
  =\widetilde O(N_{\rm dof}^{2/d})
  \qquad (L_{\rm box}\ \text{fixed}).
  \label{eq:elliptic-direct-size-cost}
\end{equation}
Thus an efficient, even polylogarithmic, circuit for one sparse-matrix query does not by itself give a polylogarithmic elliptic solver.  The inverse filter must still resolve a mesh-dependent spectral interval.  If the second-order discretization error is balanced with a target accuracy by taking $h^2=\Theta(\epsilon)$, then the direct inverse-filter cost is at least $\widetilde O(\epsilon^{-1})$ on a fixed box.  This mesh dependence is a central difficulty for quantum PDE algorithms and motivates the alternatives in the next two sections.

Finally, quantum registers have dimensions that are powers of two, while a Dirichlet grid usually produces $N_h=N_{\rm cell}-1$ unknowns.  In practice one pads the matrix and vector to the next power of two, for example by adding an identity block or by adding dummy eigenvalues outside the spectral window of interest.  Such padding changes constants but not the $h$-dependent estimates above.

\section{First-order factorization and Hermitian dilation}
\label{sec:factorized-dilation}

The direct QLSA route treats the second-order stiffness matrix \(A_h\) as
the primitive operator.  Since
\[
  \kappa(A_h)=\Theta\!\left((L_{\rm box}/h)^2\right),
\]
a direct inverse filter for \(A_h\) has the mesh-dependent scale discussed in
the previous section.  Elliptic structure suggests a more first-order
viewpoint.  With an appropriate finite difference or finite element
discretization, the stiffness matrix has a gradient--divergence
factorization,
\begin{equation}
  A_h=G_h^\dag G_h.
  \label{eq:elliptic-general-factorization}
\end{equation}
This is the discrete analogue of
\[
  -\nabla\cdot(a\nabla u)
  =
  (\sqrt a\,\nabla)^\dag(\sqrt a\,\nabla)u .
\]
For instance, the finite-difference factorization in
\eqref{eq:fd-gradient-factorization} and the finite-element factorization in
\eqref{eq:fem-gradient-factorization} are both of this form.  The matrix
\(G_h\) should be thought of as a weighted discrete gradient, while
\(G_h^\dag\) is the corresponding negative discrete divergence.

\subsection{The first-order spectral scale}
\label{subsec:elliptic-first-order-scale}

Recall that \(L_{\rm box}\) denotes the characteristic length of the
computational domain.  For conforming Dirichlet discretizations, the nonzero singular
values of \(G_h\) satisfy
\begin{equation}
  \sigma_{\min}(G_h)=\Theta(L_{\rm box}^{-1}),
  \qquad
  \sigma_{\max}(G_h)=\Theta(h^{-1}),
  \qquad
  \kappa_{\mathrm{eff}}(G_h)=\Theta(L_{\rm box}/h).
  \label{eq:elliptic-factor-spectrum}
\end{equation}
Here \(\kappa_{\mathrm{eff}}\) means the ratio of the largest singular value
to the smallest nonzero singular value.

It is useful to see this in one dimension.  On \((0,L_{\rm box})\), let \(G_h\)
be the forward-difference gradient with homogeneous Dirichlet boundary
conditions.  The sine transform diagonalizes both \(G_h^\dag G_h\) and
\(G_hG_h^\dag\), and the singular values are
\begin{equation}
  \sigma_k(G_h)
  =
  \frac{2}{h}
  \sin\!\left(\frac{k\pi}{2(N_h+1)}\right),
  \qquad
  k=1,\ldots,N_h .
  \label{eq:elliptic-gradient-singular-values-1d}
\end{equation}
Thus
\[
  \sigma_1(G_h)\sim \frac{\pi}{L_{\rm box}},
  \qquad
  \sigma_{N_h}(G_h)\sim \frac{2}{h}.
\]
This gives \eqref{eq:elliptic-factor-spectrum}.  In several dimensions and
on quasi-uniform meshes, the lower bound follows from a discrete Poincar\'e
inequality,
\[
  \|\bm x\|_2
  \le
  C L_{\rm box}\|G_h\bm x\|_2,
\]
while the upper bound follows from the inverse estimate
\[
  \|G_h\bm x\|_2
  \le
  C h^{-1}\|\bm x\|_2 .
\]

Since \(A_h=G_h^\dag G_h\), the first-order relation
\begin{equation}
  \kappa_{\mathrm{eff}}(G_h)
  =
  \sqrt{\kappa(A_h)}
  \label{eq:elliptic-square-root-condition}
\end{equation}
is immediate.  This square-root relation is the source of the possible
improvement, but it is not by itself a complete quantum algorithm.  The
algorithm must also specify which right-hand side is prepared, which output
block is measured, and in which norm the output state is normalized.

\subsection{Hermitian dilation of the discrete gradient}
\label{subsec:hermitian-dilation-factor}

For a rectangular matrix \(G\in\mathbb C^{m\times n}\), define its Hermitian
dilation by
\begin{equation}
  \mathcal D(G)
  =
  \begin{bmatrix}
    0&G^\dag\\
    G&0
  \end{bmatrix}.
  \label{eq:hermitian-dilation}
\end{equation}
If
\[
  G=\sum_j\sigma_j\ket{\bm u_j}\bra{\bm v_j}
\]
is a singular-value decomposition, then \(\mathcal D(G)\) has nonzero
eigenvalues \(\pm\sigma_j\), with eigenvectors
\[
  \frac{1}{\sqrt2}
  \begin{bmatrix}
    \bm v_j\\
    \pm\bm u_j
  \end{bmatrix}.
\]
Moreover,
\begin{equation}
  \mathcal D(G)^+
  =
  \begin{bmatrix}
    0&G^+\\
    (G^\dag)^+&0
  \end{bmatrix},
  \label{eq:hermitian-dilation-pseudoinverse}
\end{equation}
where \(^{+}\) denotes the Moore--Penrose pseudo-inverse.

Suppose that \(G_h/\alpha_G\) has a block encoding with
\[
  \alpha_G=\Theta(\|G_h\|)=\Theta(h^{-1}).
\]
Then the same access gives a block encoding of
\(\mathcal D(G_h)/\alpha_G\).  Its nonzero spectrum lies in
\begin{equation}
  \operatorname{spec}
  \left(
    \frac{\mathcal D(G_h)}{\alpha_G}
  \right)
  \setminus\{0\}
  \subseteq
  [-1,-c\,h/L_{\rm box}]
  \cup
  [c\,h/L_{\rm box},1].
  \label{eq:elliptic-dilation-spectrum}
\end{equation}
Thus a QSVT pseudo-inverse filter for the dilation has polynomial degree
\[
  \widetilde O(L_{\rm box}/h),
\]
rather than the \(\widetilde O((L_{\rm box}/h)^2)\) scale of the direct
second-order inverse filter.

The identity
\[
  \mathcal D(G_h)^2
  =
  \begin{bmatrix}
    G_h^\dag G_h&0\\
    0&G_hG_h^\dag
  \end{bmatrix}
\]
also shows explicitly how the original elliptic operator is embedded in the
dilation.  However, simply applying \(\mathcal D(G_h)^{-2}\) would not
improve the condition-number dependence.  The reason is transparent: squaring
the dilation squares its spectrum, so
\(\kappa(\mathcal D(G_h)^2)=\kappa_{\mathrm{eff}}(G_h)^2=\kappa(A_h)\),
and the second-order scale is back.  The square-root advantage is available
only to algorithms that operate on the dilation itself, at first order.  The
useful first-order approaches below solve a different, but equivalent,
formulation.

\subsection{Mixed first-order formulation}
\label{subsec:elliptic-mixed-first-order}

The most direct PDE formulation is to introduce the discrete flux
\[
  \bm p=G_h\bm x .
\]
Then
\[
  G_h^\dag G_h\bm x=\bm b
\]
is equivalent to
\begin{equation}
  \bm p-G_h\bm x=0,
  \qquad
  G_h^\dag\bm p=\bm b .
  \label{eq:elliptic-first-order-flux-system}
\end{equation}
In physical terms, \(\bm p\) is the flux (in Darcy flow, the velocity): the
first equation is the constitutive law and the second is conservation.  This
pair leads to the Hermitian indefinite system
\begin{equation}
  \mathsf A_h^{\rm mix}
  \begin{bmatrix}
    \bm p\\
    \bm x
  \end{bmatrix}
  =
  \begin{bmatrix}
    0\\
    -\bm b
  \end{bmatrix},
  \qquad
  \mathsf A_h^{\rm mix}
  =
  \begin{bmatrix}
    I&-G_h\\
    -G_h^\dag&0
  \end{bmatrix}.
  \label{eq:elliptic-mixed-saddle-system}
\end{equation}
This system is indefinite, but modern QLSA and QSVT methods can invert
Hermitian matrices whose spectra are separated from zero.

There is a small but important scaling issue in
\eqref{eq:elliptic-mixed-saddle-system}.  The identity block is dimensionless,
while \(G_h\) is a discrete derivative.  If \(L_{\rm box}\) is varied and the raw
matrix is used without scaling, the resulting Euclidean condition number is
not a dimensionally invariant quantity.  The better formulation is to
nondimensionalize the gradient.  This observation suggests the following choice of scaling parameter $s$, 
\[
  s=L_{\rm box},
  \qquad
  \widetilde{\bm p}=sG_h\bm x ,
\]
and solve
\begin{equation}
  \mathsf A_{h,s}^{\rm mix}
  \begin{bmatrix}
    \widetilde{\bm p}\\
    \bm x
  \end{bmatrix}
  =
  \begin{bmatrix}
    0\\
    -s^2\bm b
  \end{bmatrix},
  \qquad
  \mathsf A_{h,s}^{\rm mix}
  =
  \begin{bmatrix}
    I&-sG_h\\
    -sG_h^\dag&0
  \end{bmatrix}.
  \label{eq:elliptic-scaled-mixed-system}
\end{equation}
The singular values of \(sG_h\) lie in
\[
  [\,\Theta(1),\Theta(L_{\rm box}/h)\,].
\]
On the two-dimensional subspace associated with a singular value
\(\rho\) of \(sG_h\), the matrix reduces to
\[
  \begin{bmatrix}
    1&-\rho\\
    -\rho&0
  \end{bmatrix},
\]
whose eigenvalues are
\begin{equation}
  \lambda_\pm(\rho)
  =
  \frac{1\pm\sqrt{1+4\rho^2}}{2}.
  \label{eq:elliptic-mixed-eigenvalues}
\end{equation}
Note that \(\lambda_+\lambda_-=-\rho^2<0\): every singular value contributes
one positive and one negative eigenvalue, so the system is genuinely
indefinite, with spectrum separated from zero on both sides---exactly the
setting that QSVT inverse filters handle.  Since \(\rho_{\min}=\Theta(1)\) and
\(\rho_{\max}=\Theta(L_{\rm box}/h)\), we obtain the meaningful estimate
\begin{equation}
  \kappa(\mathsf A_{h,L_{\rm box}}^{\rm mix})
  =
  \Theta(L_{\rm box}/h).
  \label{eq:elliptic-mixed-condition-number}
\end{equation}

If one ignores this scaling and uses the raw dimensional matrix in
\eqref{eq:elliptic-mixed-saddle-system}, then the smallest eigenvalue may
behave like \(\sigma_{\min}(G_h)^2=\Theta(L_{\rm box}^{-2})\), while the largest
behaves like \(\sigma_{\max}(G_h)=\Theta(h^{-1})\).  This produces the
artifact
\begin{equation}
  \kappa(\mathsf A_h^{\rm mix})
  =
  \Theta(L_{\rm box}^2/h).
  \label{eq:elliptic-mixed-condition-number-large-box}
\end{equation}
We will regard \eqref{eq:elliptic-mixed-condition-number-large-box} as a
warning about units, not as the fundamental PDE scale.

The price of the mixed formulation is that the quantum output lives in an
energy-like norm.  With the scaled flux variable,
\[
  \left\|
  \begin{bmatrix}
    \widetilde{\bm p}\\
    \bm x
  \end{bmatrix}
  \right\|_2^2
  =
  \|\bm x\|_2^2
  +
  L_{\rm box}^2\|G_h\bm x\|_2^2 .
\]
This is the discrete analogue of
\[
  \|u\|_{L^2(\Omega)}^2
  +
  L_{\rm box}^2\|\nabla u\|_{L^2(\Omega)}^2 .
\]
Thus the mixed formulation is natural when fluxes, gradients, stresses, or
energy quantities are part of the desired output.  If only the potential
\(\bm x\) is wanted, the probability of projecting onto the \(\bm x\)-block is
\begin{equation}
  p_x
  =
  \frac{\|\bm x\|_2^2}
       {\|\bm x\|_2^2+L_{\rm box}^2\|G_h\bm x\|_2^2}.
  \label{eq:elliptic-mixed-x-block-prob-comparison}
\end{equation}
For smooth, low-frequency solutions, \(p_x\) may be \(O(1)\).  For mesh-scale
components, it can be as small as \(O((h/L_{\rm box})^2)\), so extracting only
\(\bm x\) can reintroduce the same first-order scale that the mixed system
removes from the spectral condition number.

\subsection{Transformed right-hand sides, Sum-QLS, and normalization}
\label{subsec:orsucci-dunjko-viewpoint}

The mixed formulation solves a coupled first-order PDE system.  The
factorized QLS framework of Orsucci and Dunjko gives a different access-model
route: instead of solving the mixed saddle-point system, it applies the
pseudo-inverse of the rectangular factor \(G_h\) to a suitably transformed
right-hand side \cite{OrsucciDunjko2021}.  Their work also explains why
positive definiteness alone is not enough to guarantee a
\(\sqrt{\kappa}\)-type quantum speedup in the standard black-box QLSA model.

In the notation of this chapter, choose a generalized right inverse
\[
  R_h\in\mathbb C^{m_h\times N_h}
\]
of \(G_h^\dag\), satisfying
\begin{equation}
  G_h^\dag R_h=I,
  \qquad
  \bm b'_h:=R_h\bm b_h .
  \label{eq:elliptic-transformed-rhs}
\end{equation}
Then the original solution is recovered from the least-squares problem
\begin{equation}
  \bm x_*
  =
  \underset{\bm x}{\operatorname{argmin}}\,
  \|G_h\bm x-\bm b'_h\|_2 .
  \label{eq:elliptic-factorized-regression}
\end{equation}
Indeed,
\begin{align}
  \bm x_*
  &=
  G_h^+\bm b'_h
  =
  G_h^+R_h\bm b_h
  \nonumber\\
  &=
  (G_h^\dag G_h)^{-1}G_h^\dag R_h\bm b_h
  =
  A_h^{-1}\bm b_h .
  \label{eq:elliptic-factorized-solution}
\end{align}
Equivalently, using the Hermitian dilation,
\begin{equation}
  \mathcal D(G_h)^+
  \begin{bmatrix}
    0\\
    \bm b'_h
  \end{bmatrix}
  =
  \begin{bmatrix}
    G_h^+\bm b'_h\\
    0
  \end{bmatrix}
  =
  \begin{bmatrix}
    \bm x_*\\
    0
  \end{bmatrix}.
  \label{eq:elliptic-dilation-solve}
\end{equation}

Because \(G_h\) is rectangular, only the component of \(\bm b'_h\) in
\(\operatorname{Range}(G_h)\) contributes.  Let
\[
  \Pi_{G_h}:=G_hG_h^+
\]
and, for the normalized state \(\ket{\bm b'_h}\), define
\begin{equation}
  \gamma_h
  :=
  \|\Pi_{G_h}\ket{\bm b'_h}\|^2 .
  \label{eq:elliptic-factor-overlap}
\end{equation}
The component orthogonal to \(\operatorname{Range}(G_h)\) is annihilated by
\(G_h^+\).  Hence a pseudo-inversion algorithm must project onto or amplify
the useful component, which introduces an overhead proportional to
\(1/\sqrt{\gamma_h}\).

The crucial additional cost is the preparation of
\(\bm b'_h=R_h\bm b_h\).  Suppose, for example, that \(R_h/\alpha_R\) has a
block encoding and that \(\ket{\bm b_h}\) can be prepared.  Applying the
block encoding of \(R_h/\alpha_R\) to \(\ket{\bm b_h}\) produces the
successful branch
\[
  \frac{R_h\ket{\bm b_h}}{\alpha_R},
\]
with probability
\begin{equation}
  p_R
  =
  \frac{\|R_h\bm b_h\|_2^2}
       {\alpha_R^2\|\bm b_h\|_2^2}.
  \label{eq:elliptic-R-preparation-prob}
\end{equation}
Thus preparing the normalized transformed state costs, schematically,
\begin{equation}
  C_{R b}
  =
  \widetilde O\left(
  (C_b+C_R)
  \frac{\alpha_R\|\bm b_h\|_2}{\|R_h\bm b_h\|_2}
  \right),
  \label{eq:elliptic-R-preparation-cost}
\end{equation}
if amplitude amplification is used.  Here \(C_b\) is the cost of preparing
\(\ket{\bm b_h}\) and \(C_R\) is the cost of applying the block encoding of
\(R_h/\alpha_R\).  If \(R_h\) itself is implemented by a QSVT inverse filter
for a global elliptic operator, then this cost may already contain the
condition number of the original problem and the factorized advantage is lost.
Thus \(R_h\) must be cheap because of additional structure, not merely because
it exists algebraically.

This point is visible in the Moore--Penrose choice.  The Moore--Penrose right
inverse of \(G_h^\dag\) is
\begin{equation}
  R_h=(G_h^\dag)^+
  =
  G_h(G_h^\dag G_h)^{-1}
  =
  G_hA_h^{-1}.
  \label{eq:elliptic-canonical-right-inverse}
\end{equation}
With this choice,
\begin{equation}
  R_h\bm b_h
  =
  G_h\bm x_*,
  \label{eq:elliptic-flux-right-hand-side}
\end{equation}
which is precisely the unknown flux.  It has perfect range overlap
\(\gamma_h=1\), but preparing it directly would require solving the elliptic
problem.  The two extremes bracket the design space: the Moore--Penrose choice
is perfectly aligned with \(\operatorname{Range}(G_h)\) but as hard as the
original problem, while a generic cheap \(R_h\) may waste most of its
amplitude outside the range.  A useful \(R_h\) must sit in between,
structured enough to be applied cheaply and aligned enough that
\(\gamma_h\) is not small.

The Sum-QLS construction of Orsucci and Dunjko identifies settings where a
nontrivial \(R_h\) is cheap.  In their model, the matrix is supplied as a sum
of local strictly positive-definite terms,
\begin{equation}
  A=\sum_{j=1}^{J}H^{(j)},
  \qquad
  H^{(j)}=L^{(j)}L^{(j)\dag}.
  \label{eq:orsucci-sum-qls}
\end{equation}
They form
\begin{equation}
  L=
  \begin{bmatrix}
    L^{(1)}&\cdots&L^{(J)}
  \end{bmatrix},
  \qquad
  R=
  \frac1J
  \begin{bmatrix}
    (L^{(1)})^{-1}\\
    \vdots\\
    (L^{(J)})^{-1}
  \end{bmatrix},
  \label{eq:orsucci-local-right-inverse}
\end{equation}
so that \(LL^\dag=A\) and \(LR=I\).  Under their access assumptions, the
local inverses \((L^{(j)})^{-1}\) are small or structured enough that
\(R\bm b\) can be prepared efficiently.  In that situation, the pseudo-inverse
step has the first-order scale
\begin{equation}
  \widetilde O\left(
  \frac{\kappa_{\mathrm{eff}}(G_h)}{\sqrt{\gamma_h}}
  \right)
  =
  \widetilde O\left(
  \frac{L_{\rm box}/h}{\sqrt{\gamma_h}}
  \right),
  \label{eq:elliptic-factor-query-cost}
\end{equation}
in addition to the transformed-state preparation cost.

Finite element assembly has a superficially similar local decomposition into
element contributions.  The analogy is useful, but the hypotheses are
different: standard element stiffness matrices are typically positive
semidefinite and rank deficient, and the number of elements grows with the
number of degrees of freedom.  Therefore the specific Sum-QLS theorem does
not apply automatically to a general finite element matrix.  The mixed
formulation is often the more direct PDE construction; the Sum-QLS viewpoint
becomes attractive when the generalized right inverse can be applied cheaply.

\subsection{Comparison with classical elliptic solvers}
\label{subsec:elliptic-classical-quantum-comparison}

Recall from the size table in \cref{sec:elliptic-classical-discretization} that
\begin{equation}
  N_{\rm cell}=\frac{L_{\rm box}}{h},
  \qquad
  N_{\rm dof}\asymp N_{\rm cell}^d .
  \label{eq:elliptic-nside-ndof-comparison}
\end{equation}
Then
\begin{equation}
  \kappa(A_h)=\Theta(N_{\rm cell}^2),
  \qquad
  \kappa_{\mathrm{eff}}(G_h)=\Theta(N_{\rm cell}).
  \label{eq:elliptic-comparison-kappa-ndof}
\end{equation}
The following table gives a schematic, fixed-access comparison.  For quantum
algorithms, the middle columns show query scales to normalized block
encodings, before gate costs, input preparation, postselection, and
measurement.  For classical algorithms, they show arithmetic work.  The output factors
\(\chi_A\) and \(\chi_{\rm fac}\) appearing in the last column are defined at
the end of the subsection.

For a dimension-aware comparison, let
\begin{equation}
  N_h:=\frac{L_{\rm box}}{h}-1,
  \qquad
  N_{\rm dof}\asymp N_h^d ,
  \label{eq:elliptic-Nh-Ndof-comparison}
\end{equation}
where \(N_h\) is the number of mesh cells, or mesh intervals, per coordinate
direction, while \(N_{\rm dof}\) is the total number of spatial degrees of
freedom.  Thus
\[
  \kappa(A_h)=\Theta(N_h^2),
  \qquad
  \kappa_{\rm eff}(G_h)=\Theta(N_h).
\]
The table below gives a schematic comparison.  For the quantum rows, the
scale is the number of queries to a normalized block encoding, before gate
costs, state preparation, postselection, and measurement are included.  For
the classical rows, the scale is arithmetic work for producing the full
discrete vector.

\begin{center}
\small
\setlength{\tabcolsep}{4pt}
\begin{tabular}{llll}
\toprule
method & scale in \(N_h\) & scale in \(N_{\rm dof}\) & additional factor to track \\
\midrule
direct QSVT on \(A_h\)
&
\(\widetilde O(N_h^{2})\)
&
\(\widetilde O(N_{\rm dof}^{2/d})\)
&
normalization \(\chi_A\)
\\[0.3em]

mixed first-order system
&
\(\widetilde O(N_h)\)
&
\(\widetilde O(N_{\rm dof}^{1/d})\)
&
energy norm and \(x\)-block probability
\\[0.3em]

classical CG, no preconditioner
&
\(\widetilde O(N_h^{d+1})\)
&
\(\widetilde O(N_{\rm dof}^{1+1/d})\)
&
sparse matrix-vector products
\\[0.3em]

classical multigrid/BPX/PCG
&
\(\widetilde O(N_h^{d})\)
&
\(\widetilde O(N_{\rm dof})\)
&
problem-dependent hierarchy
\\
\bottomrule
\end{tabular}
\end{center}

The direct QSVT row has the usual output-normalization factor
\begin{equation}
  \chi_A
  :=
  \frac{\|\bm b_h\|_2}
       {\|A_h^{-1}\bm b_h\|_2}.
  \label{eq:elliptic-direct-output-factor}
\end{equation}
For sufficiently accurate \(L^2\)-normalized discretizations, this ratio
approaches the continuum quantity
\[
  \frac{\|f\|_{L^2(\Omega)}}{\|u\|_{L^2(\Omega)}} .
\]

For the mixed formulation, the natural output is not just \(\bm x_*\), but
the scaled energy variable
\[
  \begin{bmatrix}
    \widetilde{\bm p}_*\\
    \bm x_*
  \end{bmatrix},
  \qquad
  \widetilde{\bm p}_*=L_{\rm box}G_h\bm x_* .
\]
Its norm is
\begin{equation}
  \left\|
  \begin{bmatrix}
    \widetilde{\bm p}_*\\
    \bm x_*
  \end{bmatrix}
  \right\|_2^2
  =
  \|\bm x_*\|_2^2
  +
  L_{\rm box}^2\|G_h\bm x_*\|_2^2.
  \label{eq:elliptic-mixed-energy-norm-table}
\end{equation}
This is the discrete analogue of
\[
  \|u\|_{L^2(\Omega)}^2+
  L_{\rm box}^2\|\nabla u\|_{L^2(\Omega)}^2.
\]
Thus the mixed system is naturally normalized in an \(H^1\)-type energy norm.
If the desired output is only the potential block \(\bm x_*\), the probability
of obtaining that block is given by \cref{eq:elliptic-mixed-x-block-prob-comparison}.

The table should not be read as a claim of quantum advantage.  The quantum
entries are idealized block-encoding query counts, while the classical entries
are full-vector arithmetic costs.  Classical multigrid, domain decomposition,
and BPX-type preconditioners already achieve nearly linear complexity for many
standard elliptic problems
\cite{Saad2003Iterative,BramblePasciakXu1990}.  The meaningful quantum
question is more specific: after input preparation, block-encoding
construction, output normalization, and measurement are included, can a
quantum method improve the scaling for a particular quantity of interest?

\section{QFT-based spectral filtering for special elliptic problems}
\label{sec:qft-spectral-filtering}

Fourier methods are one of the oldest tools in the analysis of Poisson's equation: separation of variables turns the PDE into scalar algebraic equations for Fourier coefficients.  The quantum version preserves this same idea, but performs the change of basis coherently on amplitudes.  This gives a third route, different in kind from the previous two.  It applies only when the geometry, coefficients, and boundary conditions make the elliptic operator explicitly diagonalizable.  In that setting, one need not query a generic block encoding of $A_h$ and approximate its inverse by a QSVT polynomial.  Instead, one transforms to a known eigenbasis and evaluates the reciprocal eigenvalues by reversible arithmetic.

\subsection{Fourier diagonalization}
\label{subsec:classical-fourier-picture}

Consider first the constant-coefficient periodic Poisson equation on the flat torus $\mathbb T^d$ that came from periodic boundary conditions,
\begin{equation}
  -\Delta u=f,
  \qquad
  \int_{\mathbb T^d}f(x)\,dx=0.
  \label{eq:periodic-poisson}
\end{equation}
The zero-mean condition removes the nullspace with constant functions.  Using the Fourier convention $e^{2\pi i\bm k\cdot x}$, one obtains
\begin{equation}
  \widehat u_{\bm{k}}
  =g(\bm{k})\widehat f_{\bm{k}},
  \qquad
  g(\bm{k})=\frac1{4\pi^2 |\bm{k}|^2},
  \qquad \bm{k}\neq\bm{0}.
  \label{eq:poisson-spectral-filter}
\end{equation}
If the convention $e^{i\bm k\cdot x}$ is used instead, the factor $4\pi^2$ is absent.  We write it explicitly here to avoid ambiguity.

For a periodic finite-difference grid with $N=2^n$ points per coordinate, the QFT diagonalizes the circulant Laplacian:
\begin{equation}
  A_h=F_h^{\dag}\Lambda_hF_h,
  \label{eq:qft-diagonalization-elliptic}
\end{equation}
where, in one dimension, we have the familiar eigenvalues,
\begin{equation}
  (\Lambda_h)_{kk}
  =\frac{4 }{h^2}\sin^2\!\left(\frac{\pi k}{N}\right).
  \label{eq:periodic-discrete-symbol}
\end{equation}
The multidimensional symbol is the sum over coordinate directions.  Homogeneous Dirichlet and Neumann problems on rectangular boxes lead similarly to sine and cosine transforms.  Thus the QFT route is best understood as a structured-grid spectral method, not as a general elliptic solver.

Because the periodic Laplacian has a zero mode, the inverse below is always the pseudo-inverse on the mean-zero subspace.  For data satisfying the compatibility condition in \eqref{eq:periodic-poisson}, the constant Fourier mode is absent and this distinction is harmless.  A QFT-based solver consists of three operations:
\begin{equation}
  \ket{\bm{b}_h}
  \xrightarrow{\ F_h\ }
  \ket{\widehat{\bm{b}}_h}
  \xrightarrow{\ \Lambda_h^{+}\ }
  \Lambda_h^{+}\ket{\widehat{\bm{b}}_h}
  \xrightarrow{\ F_h^{\dag}\ }
  A_h^{-1}\ket{\bm{b}_h}.
  \label{eq:qft-elliptic-pipeline}
\end{equation}
Here $\Lambda_h^+$ denotes the pseudo-inverse: the zero Fourier mode is mapped to zero.  Because the right-hand side in \eqref{eq:periodic-poisson} has zero mean, that mode has zero amplitude and the pseudo-inverse agrees with the physical solution operator on the relevant subspace.  The QFT on each $n$-qubit coordinate register uses $O(n^2)$ elementary gates, or $O(dn^2)$ gates in $d$ dimensions.  The nonunitary reciprocal filter can be implemented as a diagonal block encoding, using the diagonal-matrix construction discussed in Subsection~\ref{subsec:diagonal-block-encoding}.

\subsection{Direct arithmetic block encoding of the reciprocal filter}
\label{subsec:diagonal-filter-block-encoding}

The middle arrow in \eqref{eq:qft-elliptic-pipeline} is a diagonal-matrix block encoding of the kind introduced in Subsection~\ref{subsec:diagonal-block-encoding}.  We spell it out here because this is the only nontrivial step and where the condition number reemerges.

Let \(\lambda_{\min}\) and \(\lambda_{\max}\) denote the smallest and largest
nonzero diagonal entries of \(\Lambda_h\), so that
\[
  \kappa=\frac{\lambda_{\max}}{\lambda_{\min}}
  =\Theta(h^{-2})
\]
on a fixed domain.  The zero Fourier mode is assigned reciprocal value zero,
which is the pseudo-inverse convention for the periodic Poisson problem.
Define
\begin{equation}
  r_{\bm k}
  =
  \begin{cases}
  1/(\Lambda_h)_{\bm k\bm k}, & \bm k\neq\bm 0,\\[2mm]
  0, & \bm k=\bm 0,
  \end{cases}
  \qquad
  \rho_{\bm k}:=\frac{r_{\bm k}}{\alpha_{\rm inv}},
  \qquad
  \alpha_{\rm inv}\ge \|\Lambda_h^+\|=\lambda_{\min}^{-1}.
  \label{eq:qft-reciprocal-filter-values}
\end{equation}
Thus \(0\le \rho_{\bm k}\le 1\).  The scale
\(\alpha_{\rm inv}=\Theta(\lambda_{\min}^{-1})\) is the diagonal analogue of
the normalization factor in a block encoding of \(A_h^{-1}\).  It is also the
place where the condition number enters the postselection probability.

The arithmetic implementation uses a reversible eigenvalue-inversion oracle.
More precisely, let \(U_\rho\) be a unitary acting on the Fourier-mode
register and a work register such that, up to fixed-point arithmetic error,
\begin{equation*}
  U_\rho:
  \ket{\bm k}\ket{0}_{\rm val}
  \longmapsto
  \ket{\bm k}\ket{\widetilde\rho_{\bm k}}_{\rm val}.
\end{equation*}
The first step is therefore
\begin{equation}
  \ket{\bm k}\ket{0}_{\rm val}\ket{0}_{a}
  \longmapsto
  \ket{\bm k}\ket{\widetilde\rho_{\bm k}}_{\rm val}\ket{0}_{a}.
  \label{eq:qft-reciprocal-arithmetic}
\end{equation}
This is the structured-grid version of the ``fast inversion'' primitive:
the eigenvalue is not learned by phase estimation, but computed directly from
the explicit Fourier formula for \((\Lambda_h)_{\bm k\bm k}\).  This is why
the construction is special to diagonalizable tensor-product problems.

Next, a controlled rotation acts on the ancilla qubit.  If
\[
  R_y(\theta)\ket{0}
  =
  \cos(\theta/2)\ket{0}
  +
  \sin(\theta/2)\ket{1},
\]
then choosing
\[
  \theta_{\bm k}=2\arccos(\widetilde\rho_{\bm k})
\]
gives
\begin{equation}
  \ket{\widetilde\rho_{\bm k}}_{\rm val}\ket{0}_{a}
  \longmapsto
  \ket{\widetilde\rho_{\bm k}}_{\rm val}
  \left(
     \widetilde\rho_{\bm k}\ket{0}_{a}
     +\sqrt{1-\widetilde\rho_{\bm k}^{\,2}}\ket{1}_{a}
  \right).
  \label{eq:qft-reciprocal-controlled-rotation}
\end{equation}
This is the same reciprocal-controlled-rotation idea that appears in HHL and
in the post-QPE view of linear-system algorithms: the reciprocal is encoded
as the amplitude of a successful ancilla branch.  The difference here is that
the reciprocal is obtained by reversible arithmetic from the Fourier index
\(\bm k\), rather than from a phase-estimation register.

Now apply \(U_\rho^\dag\) to uncompute the value register.  On a Fourier-space
input
\[
  \ket{\widehat{\bm b}_h}
  =
  \sum_{\bm k}\widehat b_{\bm k}\ket{\bm k},
\]
the state before the final ancilla measurement has the form
\begin{equation*}
  \sum_{\bm k}
  \widehat b_{\bm k}
  \ket{\bm k}\ket{0}_{\rm val}
  \left(
    \rho_{\bm k}\ket{0}_{a}
    +
    \sqrt{1-\rho_{\bm k}^{2}}\ket{1}_{a}
  \right),
\end{equation*}
where arithmetic errors have been suppressed.  Therefore, conditioned on
measuring the ancilla in \(\ket{0}_a\), the successful branch is
proportional to
\begin{equation}
  \sum_{\bm k}\rho_{\bm k}\widehat b_{\bm k}\ket{\bm k}
  =
  \frac{\Lambda_h^+}{\alpha_{\rm inv}}
  \ket{\widehat{\bm b}_h}.
  \label{eq:qft-reciprocal-success-branch}
\end{equation}
After applying the inverse QFT, this gives a block-encoding implementation of
\(A_h^+/\alpha_{\rm inv}\) on the mean-zero subspace.

The success probability is
\begin{equation*}
  p_{\rm spec}
  =
  \sum_{\bm k}
  |\rho_{\bm k}|^2|\widehat b_{\bm k}|^2
  =
  \frac{
    \|A_h^+\bm b_h\|_2^2
  }{
    \alpha_{\rm inv}^{\,2}\|\bm b_h\|_2^2
  } .
\end{equation*}
This formula is the useful warning.  Even though the diagonal reciprocal is
computed directly by arithmetic, the normalization
\(\alpha_{\rm inv}\simeq\lambda_{\min}^{-1}\) can make the successful branch
small.  For a right-hand side concentrated in high-frequency modes, the
amplitude can be as small as \(O(1/\kappa)\), so amplitude amplification can
reintroduce a condition-number cost.  For smooth or low-frequency data, the
Fourier coefficients \(\widehat b_{\bm k}\) are concentrated where
\(r_{\bm k}\) is large, and the effective success probability can be much
better.  This is the spectral-filtering advantage exploited by the structured
QFT approach.

This arithmetic construction is closely related to the diagonal inverse
block encodings discussed in Chapter~\ref{chap:basic-elements} and to the
fast-inversion primitive of Tong, An, Wiebe, and Lin
\cite{TongAnWiebeLin2021FastInversion}.  It bypasses a generic QSVT
polynomial approximation to \(1/x\) by using an explicit formula for the
eigenvalues.  The tradeoff is that it depends on separability, efficient
reversible evaluation of the diagonal entries, and the same normalization and
postselection issues that appear in other block-encoding constructions.

Finally, the factorization viewpoint is also visible in Fourier space.  For
the periodic Laplacian,
\[
  (\Lambda_h)_{\bm k\bm k}
  =
  \sum_{r=1}^d |g_r(\bm k)|^2,
\]
where \(g_r(\bm k)\) is the Fourier symbol of the discrete gradient in the
\(r\)-th coordinate.  Thus the eigenvalues are squares of first-order symbols.
One could therefore implement first-order filters in terms of
\(\sqrt{(\Lambda_h)_{\bm k\bm k}}\), paralleling the \(G_h\)-based
factorization in Section~\ref{sec:factorized-dilation}.  For solving Poisson
directly, however, the reciprocal \(1/\lambda_{\bm k}\) must still be applied
or encoded through a suitable mixed or transformed formulation.  The QFT
arithmetic method makes this distinction transparent: it can compute either
\(\lambda_{\bm k}^{-1}\) or \(\lambda_{\bm k}^{-1/2}\), but the final
normalization and the desired output determine which construction is useful.

The framework of Huang, Antonioli, and Barbaresco develops this construction systematically for constant-coefficient elliptic, Helmholtz, and diffusion problems \cite{HuangAntonioliBarbaresco2026QuantumSpectralPDE}.  Its principal quantum contribution is not the Fourier diagonalization---that part is classical---but a modular reversible-arithmetic circuit for block encoding frequency-dependent filters and reciprocals, building on fast inversion for explicitly computable diagonal entries \cite{TongAnWiebeLin2021FastInversion}.  Under efficient arithmetic oracles, the schematic coherent-filter cost is of the form
\begin{equation}
  C_{\rm spectral}
  =O\!\left(dn^2+\operatorname{polylog}(dN)+\log\!\left(\frac{\kappa}{\epsilon}\right)\right),
  \qquad N=2^n,
  \label{eq:huang-spectral-depth}
\end{equation}
where the QFTs contribute the $dn^2$ term and the reversible diagonal block encoding contributes the remaining terms.  This bound concerns the coherent filter circuit; it assumes that $\ket{\bm b_h}$ is already available and does not include the cost of postselection, amplitude amplification, or final readout.

The postselection probability is
\begin{equation}
  p_{\rm spec}
  =
  \sum_{\bm k}|\rho_{\bm k}|^2|\widehat b_{\bm k}|^2
  =
  \frac{\|A_h^{-1}\bm b_h\|_2^2}{\alpha_{\rm inv}^2\|\bm b_h\|_2^2}
  \label{eq:qft-spectral-success}
\end{equation}
for normalized input.  This formula is the simplest way to see how the condition number comes back.  Since $\alpha_{\rm inv}=\lambda_{\min}^{-1}$, one has $\rho_{\bm k}=\lambda_{\min}/\lambda_{\bm k}$.  If $\bm b_h$ is concentrated in the highest-frequency eigenspace, then $p_{\rm spec}=\Theta(\kappa^{-2})$, and amplitude amplification costs $\Theta(\kappa)=\Theta(h^{-2})$.  If $\bm b_h$ is concentrated in the low modes, then $p_{\rm spec}$ can be $O(1)$.

This explains when filtering is useful.  Smooth right-hand sides tend to have rapidly decaying high-frequency Fourier coefficients, and elliptic smoothing further suppresses high-frequency components in the solution.  A spectral filter can exploit this favorable spectral measure, especially when the desired output is a low-frequency observable rather than the full normalized solution vector.  But smoothness is not a substitute for an access model: one must still prepare the right-hand side state, compute the reciprocal accurately, and account for the postselection probability in \eqref{eq:qft-spectral-success}.

The QFT route therefore does not generically produce the $\Theta(h^{-1})$ first-order scale of Section~\ref{sec:factorized-dilation}.  It exploits a known eigenbasis rather than a mixed or factorized PDE system.  Its great advantage is that, on separable constant-coefficient problems, the entire spectral action can be described by QFTs and reversible arithmetic.  Its limitation is equally clear: variable coefficients, irregular domains, unstructured meshes, and general mixed boundary conditions destroy exact Fourier diagonalization.

\section{Elliptic eigenvalue problems, mass lumping, and phase estimation}
\label{sec:elliptic-eigenvalue-qpe}

The same elliptic operator defines an eigenvalue problem.  Because the weak form, stiffness matrix, mass matrix, and mass-normalized coordinates have already been introduced in Section~\ref{subsec:fem}, we only record the additional ingredients needed for eigenvalue approximation and quantum phase estimation.

\subsection{Galerkin eigenvalues and their approximation}
\label{subsec:fem-eigenvalue-problem}

The continuous weak eigenproblem is: find $\lambda\in\mathbb{R}$ and $0\neq\varphi\in H_0^1(\Omega)$ such that
\begin{equation}
  a(\varphi,v)=\lambda(\varphi,v)_{L^2(\Omega)},
  \qquad v\in H_0^1(\Omega).
  \label{eq:weak-eigenproblem}
\end{equation}
The conforming finite element approximation uses the same space $V_h$, stiffness matrix $K_h$, and mass matrix $M_h$ as in \eqref{eq:fem-linear-system} and \eqref{eq:fem-mass-matrix}:
\begin{equation}
  K_h\bm{y}_{j,h}=\lambda_{j,h}M_h\bm{y}_{j,h}.
  \label{eq:generalized-fem-eigenproblem}
\end{equation}
The min--max principle immediately gives the one-sided Rayleigh--Ritz bound
\begin{equation}
  \lambda_j\leq\lambda_{j,h},
  \qquad 1\leq j\leq\dim(V_h).
  \label{eq:galerkin-eigenvalue-upper-bound}
\end{equation}
Thus conforming Galerkin projection overestimates the exact eigenvalues.  For a fixed simple eigenpair with $\varphi_j\in H^2(\Omega)$, normalized in $L^2$, conforming $P_1$ elements satisfy \cite{Boffi2010Eigenvalue}
\begin{equation}
  0\leq\lambda_{j,h}-\lambda_j\leq Ch^2,
  \qquad
  \|\varphi_j-\varphi_{j,h}\|_{H^1(\Omega)}\leq Ch,
  \qquad
  \|\varphi_j-\varphi_{j,h}\|_{L^2(\Omega)}\leq Ch^2.
  \label{eq:fem-eigenvalue-errors}
\end{equation}
More generally, eigenvalue errors are quadratic in the corresponding energy-norm approximation error.

For the one-dimensional finite-difference matrix, the discrete sine eigenpairs are already given by \eqref{eq:fd-spectrum}.  For fixed mode index $j$ and $L_{\rm box}=1$,
\begin{equation}
  \lambda_{j,h}
  =j^2\pi^2-\frac{j^4\pi^4}{12}h^2+O(j^6h^4).
  \label{eq:fd-eigenvalue-error}
\end{equation}
Thus the low finite-difference eigenvalues are second-order accurate but, in contrast to conforming Galerkin eigenvalues, they underestimate the continuum eigenvalues.  This sign difference is a useful reminder that finite differences and finite elements may share the same convergence order while preserving different variational inequalities.

\subsection{Mass normalization and mass lumping}
\label{subsec:mass-lumping-eigenvalues}

With the exact mass matrix, the generalized problem \eqref{eq:generalized-fem-eigenproblem} becomes the standard Hermitian problem
\begin{equation}
  A_h\bm{z}_{j,h}=\lambda_{j,h}\bm{z}_{j,h},
  \qquad
  \bm{z}_{j,h}=M_h^{1/2}\bm{y}_{j,h},
  \label{eq:standard-fem-eigenproblem}
\end{equation}
where $A_h$ is the mass-normalized stiffness matrix from \eqref{eq:mass-normalized-system}.  The mathematical reduction is simple; implementing $M_h^{-1/2}$ can be nontrivial.

Mass lumping replaces $M_h$ by a diagonal positive matrix $M_h $, often defined by row sums:
\begin{equation}
  (M_h )_{ii}=\sum_j(M_h)_{ij},
  \qquad
  (M_h )_{ij}=0\quad(i\neq j).
  \label{eq:mass-lumped-definition}
\end{equation}
The resulting Hermitian matrix is
\begin{equation}
  A_h 
  =(M_h )^{-1/2}K_h(M_h )^{-1/2}.
  \label{eq:lumped-hamiltonian}
\end{equation}
Its diagonal scaling is easy to compute and preserves sparsity.  Under standard assumptions, the low eigenvalues retain second-order accuracy for $P_1$ elements.  However, because quadrature has changed the inner product, the strict Galerkin upper bound \eqref{eq:galerkin-eigenvalue-upper-bound} need not hold for the lumped eigenvalues.

\subsection{Phase estimation as spectral sampling}
\label{subsec:qpe-elliptic}

Quantum phase estimation is the circuit-level counterpart of expanding a vector in an eigenbasis.  It is more than a subroutine for returning one eigenvalue.  Its elegant feature is that it acts coherently on a superposition of eigenmodes.  Suppose a block encoding or Hamiltonian simulation of $A_h $ is available and prepare
\begin{equation}
  \ket{\bm{z}_0}=\sum_j c_j\ket{\bm{z}_{j,h} },
  \qquad
  A_h \ket{\bm{z}_{j,h} }
  =\lambda_{j,h} \ket{\bm{z}_{j,h} }.
  \label{eq:qpe-input-expansion}
\end{equation}
Choose a simulation time $t_0$ and write
\[
  e^{-it_0A_h }\ket{\bm{z}_{j,h} }
  =e^{-2\pi i\theta_j}\ket{\bm{z}_{j,h} },
  \qquad
  \theta_j=\frac{t_0\lambda_{j,h} }{2\pi}\pmod 1.
\]
To recover the eigenvalue without aliasing, one chooses $t_0$ so that the relevant spectral interval fits inside one phase period; for example, $t_0\lambda_{\max}(A_h )<2\pi$ for the full spectrum.  Since $\lambda_{\max}(A_h )=\Theta(h^{-2})$, this choice makes a single simulation step have length $t_0=O(h^2)$ unless additional spectral information or rescaling is used. The no-aliasing condition is not the only cost.  If adjacent eigenvalues are
separated by \(\Delta\lambda\), their QPE phases are separated by
\(t_0\Delta\lambda/(2\pi)\).  Since \(t_0=O(h^2)\) and low-frequency elliptic
eigenvalue spacings are often \(O(L_{\rm box}^{-2})\), resolving individual
low modes may require controlled powers of size
\(O(L_{\rm box}^2/h^2)\).  Thus high-resolution spectral sampling can
reintroduce the elliptic gap scale, even though a single unaliased simulation
step has length \(O(h^2)\).

With $r$ ancilla qubits, standard QPE applies controlled powers of this unitary and an inverse QFT on the phase register.  Ideally, it performs the map
\begin{equation}
  \sum_j c_j\ket{\bm{z}_{j,h} }\ket{0^r}
  \longmapsto
  \sum_j c_j\ket{\bm{z}_{j,h} }\ket{\widetilde\theta_j},
  \label{eq:qpe-superposition-map}
\end{equation}
where $\widetilde\theta_j$ is an $r$-bit approximation to $\theta_j$.  Measuring the phase register returns an approximate eigenvalue with probability close to $|c_j|^2$; leaving the register unmeasured produces an entangled state between eigenvectors and eigenvalue labels.  This is the quantum analogue of spectral sampling.  For the standard circuit, the number of phase qubits determines the resolution, and the controlled powers determine the total simulation time; see Nielsen and Chuang for the usual accuracy and success-probability estimates \cite[Chapter~5]{NielsenChuang2010}.

Postselecting an eigenvalue window projects the spatial register onto the corresponding discrete eigenspace.  This makes QPE useful for spectral projectors, low-mode filtering, and eigenstate preparation when the initial state has good overlap.  It also reveals a limitation: QPE does not create the ground state from nothing.  In simple Dirichlet problems, smooth positive trial functions may have useful overlap with the first eigenfunction.

There are many variants of QPE.  Iterative and adaptive schemes reduce the number of coherent ancillas or change how measurements are processed; robust and low-depth variants are designed for early fault-tolerant regimes; and quantum signal processing gives another way to build spectral filters without explicitly storing an eigenvalue register.  We do not need those refinements here, but they are important in modern resource estimates; see the survey-style discussion in \cite{LinWiebe2026QASC} and the low-depth energy-estimation method of Ding and Lin \cite{DingLin2023QCELS}.

The mesh scale remains visible.  Since $\|A_h \|=\Theta(h^{-2})$, generic Hamiltonian simulation or phase estimation based directly on this operator inherits the normalization of the elliptic matrix unless QFT diagonalization, first-order factorization, or preconditioning is used.  Generalized-eigenvalue phase-estimation methods offer another route when direct access to the matrix pair $(K_h,M_h)$ is available \cite{parker2020quantum}.

\section{What to remember}
\label{subsec:elliptic-remember}

\begin{enumerate}
  \item Standard second-order discretizations have $L^2$ error $O(h^2)$ under sufficient regularity, but their mass-normalized stiffness matrices have $\kappa(A_h)=\Theta(h^{-2})$.
  \item A generic QSVT inverse filter therefore needs $\widetilde O(h^{-2})$ queries to a block encoding of $A_h$, even when each query has a compact sparse or structured circuit.
  \item The factorization $A_h=G_h^{\dag}G_h$ exposes the first-order scale $\kappa(G_h)=\Theta(h^{-1})$ on a fixed domain and can yield a square-root improvement under suitable access and range assumptions.
  \item QFT-based arithmetic filters avoid the generic inverse-polynomial construction for separable constant-coefficient problems, but worst-case postselection can still retain $\Theta(h^{-2})$ dependence.
  \item Conforming finite elements overestimate elliptic eigenvalues and give $O(h^2)$ eigenvalue errors for regular $P_1$ eigenfunctions.  Mass lumping simplifies phase estimation but changes the variational problem.
  \item Preconditioning, dynamics, stopping rules, and quantum-compatible finite element access are central to extending these introductory algorithms to realistic elliptic systems.
\end{enumerate}

\section{Outlook}
\label{sec:elliptic-outlook}

The preceding sections illustrate three basic routes: direct inversion of $A_h$, first-order factorization, and explicit spectral diagonalization.  Several broader issues remain important, but we discuss them only briefly.

\paragraph{Preconditioning.}
Classical elliptic solvers do not accept $\kappa(A_h)=\Theta(h^{-2})$ as unavoidable.  Multigrid and BPX preconditioners use nested spaces to produce operators whose condition numbers are independent of the finest mesh size.  A quantum preconditioner must itself admit an efficient block encoding and must not make state preparation or readout more expensive.  The BPX-based construction of Deiml and Peterseim is an important example of treating multilevel structure as part of the quantum algorithm rather than as hidden classical preprocessing \cite{BramblePasciakXu1990,DeimlPeterseim2025}.

\paragraph{Dynamics-based solvers.}
An elliptic solution can be realized as the steady state of a stable evolution, for example
\begin{equation}
  \dot{\bm{x}}(t)=-B_hA_h\bm{x}(t)+B_h\bm{b}_h,
  \label{eq:elliptic-relaxation-outlook}
\end{equation}
where $B_h$ may be a preconditioner.  Quantum ODE solvers, dilation methods, and Schr\"odingerization can then replace a direct inverse filter by simulation of the relaxation \cite{JinLiuYu2024Schrodingerization}.  The useful complexity parameter is the stability and decay of the preconditioned evolution, not merely $\kappa(A_h)$.

\paragraph{Residual-based stopping.}
Classical iterative solvers stop when a residual is sufficiently small.  A quantum dynamical solver can carry a residual register and estimate its weight without reconstructing the full solution \cite{LiDynamicStopping2026}.  This makes the runtime instance dependent and can exploit smooth right-hand sides whose high-frequency components decay rapidly.

\paragraph{Ground-state preparation.}
For elliptic eigenproblems, quantum imaginary-time evolution applies the filter
\begin{equation}
  e^{-\tau A_h}\ket{\bm{z}_0}
  =\sum_j c_j e^{-\tau\lambda_{j,h}}\ket{\bm{z}_{j,h}},
\end{equation}
which suppresses excited modes after normalization.  QITE and related dissipative methods address the overlap bottleneck left by phase estimation \cite{MottaSunTanORourkeYeMinnichBrandaoChan2020}.

\paragraph{Systems of elliptic equations.}
Linear elasticity, coupled diffusion, and mixed formulations lead to vector-valued unknowns and block matrices.  Strongly elliptic systems retain coercive energy estimates, whereas saddle-point formulations require inf--sup stability and suitable block preconditioners.  Quantum algorithms must preserve these structural distinctions rather than treating every block matrix as a generic sparse system.

\paragraph{General finite element workflows.}
On unstructured meshes, efficient access should be formulated in terms of local elements, basis-function supports, quadrature data, coefficients, and boundary constraints.  The remaining challenges include mass normalization, local-to-global indexing, variable coefficients, adaptive meshes, and observable extraction.  These issues determine whether a mathematically good finite element discretization also gives an efficient quantum access model.

\paragraph{Nonsymmetric elliptic problems.}
The discussion in this chapter has focused mainly on self-adjoint elliptic
operators, for which the finite difference or finite element matrix is
Hermitian positive definite.  This structure is lost when the differential
equation contains an advective term,
\begin{equation}
  -\nabla\cdot\bigl(a(\bm{x})\nabla u\bigr)
  +\bm{\beta}(\bm{x})\cdot\nabla u
  +c(\bm{x})u
  =f,
  \label{eq:elliptic-advection-diffusion-outlook}
\end{equation}
or when non-self-adjoint boundary conditions are imposed.  Standard
Dirichlet, Neumann, and real Robin conditions preserve symmetry for the
diffusion operator, whereas oblique, impedance-type, or nonsymmetrically
enforced boundary conditions may produce a nonsymmetric discretization.
Upwind or streamline-stabilized treatments of
\eqref{eq:elliptic-advection-diffusion-outlook} generally lead to
\begin{equation}
  A_h=K_h+C_h,
  \qquad
  C_h\neq C_h^\dag,
  \label{eq:nonsymmetric-elliptic-matrix}
\end{equation}
where \(K_h\) is the diffusion stiffness matrix and \(C_h\) represents
advection, boundary, or stabilization terms.

For such matrices, the eigenvalue-transformation viewpoint used for
Hermitian positive-definite systems no longer applies directly.  One
possible route is to introduce the Hermitian dilation
\begin{equation}
  \mathcal D(A_h)
  =
  \begin{bmatrix}
    0&A_h^\dag\\
    A_h&0
  \end{bmatrix},
  \label{eq:nonsymmetric-elliptic-dilation}
\end{equation}
whose nonzero eigenvalues are the signed singular values of \(A_h\).
QSVT may then implement a singular-value inverse filter, but it does not alleviate the $O(1/h^2)$ scaling of the condition number.

\section{Exercises and further directions}
\label{sec:elliptic-exercises}

\begin{exercise}[One-dimensional factorization]
Verify directly that $G_h^{\dag}G_h$ is the matrix in \eqref{eq:one-d-laplacian}.  Compare the singular values of $G_h$ with the square roots of \eqref{eq:fd-spectrum}.
\end{exercise}

\begin{exercise}[Two-dimensional Kronecker structure]
Starting from \eqref{eq:five-point}, derive \eqref{eq:kronecker-laplacian} and show that its eigenvectors are tensor products of one-dimensional sine modes.  Then verify the stacked-gradient representation in \eqref{eq:2d-discrete-gradient}.
\end{exercise}

\begin{exercise}[Mass normalization]
For $P_1$ finite elements on a uniform one-dimensional mesh, write down $K_h$ and $M_h$.  Compare the spectra of $K_h$, $M_h^{-1}K_h$, and $M_h^{-1/2}K_hM_h^{-1/2}$.
\end{exercise}

\begin{exercise}[Direct QSVT scale]
Use \eqref{eq:elliptic-kappa} to derive the normalized spectral interval \eqref{eq:elliptic-normalized-spectrum} and the query estimate \eqref{eq:elliptic-direct-qsvt-cost}.
\end{exercise}

\begin{exercise}[Mixed first-order conditioning]
Derive the two eigenvalues \eqref{eq:elliptic-mixed-eigenvalues} of the mixed matrix \(\mathsf A_h^{\rm mix}\) associated with a singular value \(\sigma\) of \(G_h\).  Show that if \(\sigma_{\min}=\Theta(1)\), then \(\kappa(\mathsf A_h^{\rm mix})=\Theta(h^{-1})\), whereas if \(\sigma_{\min}=\Theta(L_{\rm box}^{-1})\ll1\), the unscaled system has the larger condition number in \eqref{eq:elliptic-mixed-condition-number-large-box}.
\end{exercise}

\begin{exercise}[Galerkin eigenvalue bound]
Use the min--max principle and $V_h\subset H_0^1(\Omega)$ to prove \eqref{eq:galerkin-eigenvalue-upper-bound}.
\end{exercise}

\begin{exercise}[Mass lumping]
For $P_1$ elements in one dimension, compute the consistent and row-sum lumped mass matrices.  Compare the resulting generalized eigenvalues with \eqref{eq:fd-spectrum}.
\end{exercise}

\begin{exercise}[Spectral-filter success probability]
Evaluate \eqref{eq:qft-spectral-success} when $\bm{b}_h$ is a single low-frequency mode and when it is the highest-frequency mode.  Explain the difference between arithmetic circuit depth and end-to-end state-preparation cost.
\end{exercise}

\begin{exercise}[Residual stopping]
Assume $\|\bm{x}_*-\bm{x}(t)\|\leq C\|\bm{r}(t)\|$ from \cref{eq:elliptic-relaxation-outlook}.  If a normalized joint state contains solution and residual blocks, derive a bound on the relative error from the probability of measuring the residual block.
\end{exercise}
% Body-only LaTeX file.  It is intended to be incorporated by \include{...}
% into a larger book manuscript.  No documentclass, packages, theorem
% declarations, or macro preamble are included here.

\chapter{Quantum Algorithms for Hyperbolic PDEs}
\label{chap:hyperbolic-quantum}

\section[Classical wave discretizations]{Classical wave discretizations: finite differences and finite elements}
\label{sec:hyperbolic-classical}

Hyperbolic equations model wave propagation rather than relaxation to static equilibrium.  This is the main conceptual difference from the elliptic chapter.  In an elliptic problem, the operator determines an equilibrium configuration. In contrast, in a hyperbolic problem, the same spatial operator generates a time evolution via Newton's second law.

Mathematically, the relation with Chapter~\ref{chap:elliptic-quantum} is direct.  The stiffness operator in the elliptic model \eqref{eq:elliptic-model} becomes, after mass normalization, the spatial operator whose square root determines wave frequencies.  The divergence--gradient factorization \eqref{eq:fem-gradient-factorization} therefore reappears below as the wave factorization \eqref{eq:wave-factorization}.  This is the first place where the distinction between an operator inverse problem and a unitary propagation problem becomes visible at the level of quantum circuits.

The time-dependent Schr\"odinger equation is itself commonly regarded as a wave equation because it propagates a complex wavefunction.  Mathematically, however, it is a first-order dispersive evolution rather than a second-order hyperbolic equation.  A central point of this chapter is that the classical second-order wave equation can nevertheless be rewritten, after an appropriate spatial discretization and a change to energy variables, as an ordinary Schr\"odinger equation with a Hermitian Hamiltonian. This paves the way to native quantum simulation of wave dynamics.

This observation also fixes the point of view used in the chapter.  We do not try to reproduce a classical time-marching code gate by gate.  Instead, we first build a semidiscrete Hamiltonian system from a standard finite difference or finite element approximation, and then use quantum Hamiltonian simulation to evolve that system.  Thus the role of numerical analysis is to identify a stable, consistent, and physically meaningful finite-dimensional model, while the role of quantum algorithms is to implement the resulting unitary or nearly unitary evolution and to extract a limited number of quantities of interest.  The distinction is important: a quantum computer is not an efficient printer of the full grid function, but it can be an efficient device for preparing and measuring structured states associated with the discrete wave energy.

The organizing principle of the chapter is energy conservation, and it is worth stating as a dial rather than a dichotomy.  When the discrete dynamics conserves a Euclidean energy exactly, the quantum computer operates in its native mode: the evolution is unitary, there is no postselection penalty, and the only costs are simulation depth and measurement.  Every departure from exact conservation---body forces, nonhomogeneous boundary data, absorbing layers, upwind viscosity---shows up as a specific, quantifiable quantum overhead: an LCU normalization, a success amplitude, or a dilation.  The chapter is arranged along this dial.  This section records the classical discretizations and their energy identities.  Section~\ref{sec:wave-to-schrodinger} performs the main translation from the wave equation to a Schr\"odinger equation.  Section~\ref{sec:wave-forcing-boundary-lcu} treats sources and boundary data by Duhamel's principle and LCU.  Section~\ref{sec:qft-wave-fast-forward} shows how explicit Fourier structure can fast-forward special cases, and Section~\ref{sec:wave-qsvt-general-mesh} assembles the end-to-end pipeline for general meshes.  Finally, Section~\ref{sec:first-order-hyperbolic-dilation} turns to first-order systems, where classical stabilization deliberately breaks conservation, and Schr\"odingerization, LCHS, and moment-matching dilation restore a unitary description on a larger space.

We begin with the scalar second-order wave equation
\begin{equation}
  u_{tt}(\bm{x},t)
  -\nabla\cdot\bigl(c(\bm{x})^2\nabla u(\bm{x},t)\bigr)
  =f(\bm{x},t),
  \qquad \bm{x}\in\Omega,
  \label{eq:wave-model}
\end{equation}
with initial conditions
\begin{equation}
  u(\bm{x},0)=u_0(\bm{x}),
  \qquad
  u_t(\bm{x},0)=v_0(\bm{x}),
  \label{eq:wave-initial-data}
\end{equation}
and, for the moment, homogeneous Dirichlet boundary condition
$u|_{\partial\Omega}=0$.  For $f=0$, the natural continuous energy is
\begin{equation}
  E(t)=\frac12\|u_t(t)\|_{L^2(\Omega)}^2
  +\frac12\int_\Omega c(\bm{x})^2
      |\nabla u(\bm{x},t)|^2\,d\bm{x}.
  \label{eq:wave-continuous-energy}
\end{equation}
Under the homogeneous boundary condition, integration by parts gives
$dE/dt=0$.  This identity is the main guide for both the classical discretization and the quantum construction.  A good discretization should not merely approximate $u_{tt}$ and $\nabla u$. Instead, it should preserve, or at least respect, the underlying energy structure.

There are two useful ways to rewrite the second-order equation as a first-order system.  The first uses the displacement--velocity pair.  With
\begin{equation}
  v=u_t,
  \qquad
  \bm{z}=\begin{bmatrix}u\\v\end{bmatrix},
\end{equation}
the homogeneous equation has the formal representation
\begin{equation}
  \frac{d}{dt}\bm{z}
  =
  \begin{bmatrix}0&I\\-L&0\end{bmatrix}\bm{z},
  \qquad
  L=-\nabla\cdot(c^2\nabla).
  \label{eq:wave-displacement-velocity-form}
\end{equation}
This form keeps the displacement explicit and is convenient when the desired output is a displacement observable.  The second formulation uses a pressure, strain, or deformation-gradient variable together with velocity.  We write
\begin{equation}
  \bm{p}=c\nabla u,
  \qquad
  v=u_t.
  \label{eq:wave-pressure-velocity-continuous}
\end{equation}
For the homogeneous equation,
\begin{equation}
  \bm{p}_t=c\nabla v,
  \qquad
  v_t=\nabla\cdot(c\bm{p}).
  \label{eq:wave-pressure-velocity-form}
\end{equation}
This pressure--velocity form is closer to the conserved energy \eqref{eq:wave-continuous-energy}.  It is especially natural when the output is elastic or acoustic energy, strain energy, flux, or modal energy.  Throughout the chapter we use \(\bm{p}_h=G_h\bm{y}_h\) for the discrete pressure/strain block and \(\bm{v}_h=\dot{\bm{y}}_h\) for the discrete velocity block.  The main Hamiltonian formulation below is based on these variables, while Subsection~\ref{subsec:augmented-displacement-formulation} explains how to carry the displacement as an additional block when needed.

For numerical analysts, this is just the familiar passage from a second-order equation to a first-order system in energy variables.  For quantum computing, it has an additional meaning.  The pressure--velocity variables are exactly the variables whose Euclidean norm is conserved by the semidiscrete dynamics.  They are therefore the natural amplitudes of a quantum state.  If one used the raw displacement vector instead, the conserved norm would involve the stiffness matrix and mass matrix, and the corresponding generator would not be Hermitian in the standard amplitude inner product.

\subsection{Centered finite differences}
\label{subsec:wave-fd}
As examples of classical algorithms, we again start with a finite difference scheme.
On the interval \(\Omega=(0,L_{\rm box})\), with \(c\) constant and \(f=0\), consider
\begin{equation}
  u_{tt}=c^2u_{xx},
  \qquad
  u(0,t)=u(L_{\rm box},t)=0.
  \label{eq:wave-1d}
\end{equation}

For a simple finite difference approximation for this model, we use the same grid points \eqref{1dgrids} with $N_h$ interior points and spacing $h$. 
  The standard centered fully discrete scheme in both space and time is
\begin{equation}
  \frac{U_j^{n+1}-2U_j^n+U_j^{n-1}}{k^2}
  =c^2\frac{U_{j+1}^n-2U_j^n+U_{j-1}^n}{h^2},
  \qquad j=1,\ldots,N_h.
  \label{eq:leapfrog-wave}
\end{equation}
Here \(U_j^n\) approximates \(u(x_j,t_n)\), and
\begin{equation}
  \bm{u}_h^n=(U_1^n,\ldots,U_{N_h}^n)^T
\end{equation}
denotes the vector of interior nodal values.  

Because \eqref{eq:leapfrog-wave}
is second order in time, it requires two starting vectors.  The first is
obtained directly from the displacement data,
\[
  \bm u_h^0\approx (u_0(x_1),\ldots,u_0(x_{N_h}))^T .
\]
The second initial vector is chosen by a Taylor expansion using the initial velocity.  For
\(\ddot{\bm u}_h=-A_h\bm u_h+\bm f_h(t)\), a second-order initialization is
\begin{equation}
  \bm u_h^1
=
\bm u_h^0+k\bm v_h^0
-\frac{k^2}{2}A_h\bm u_h^0
+\frac{k^2}{2}\bm f_h(0),
  \label{eq:leapfrog-initial-step}
\end{equation}
with \(\bm v_h^0\) obtained from \(v_0= u_t(x,0) \).  In the homogeneous constant-speed
case, this is the familiar formula obtained by replacing \(u_{tt}(0)\) with
\(c^2u_{xx}(0)\).  A lower-order initialization would contaminate the global
second-order accuracy, which is why this first step is part of the classical
scheme rather than a harmless implementation detail.

We use the weighted discrete norm
\begin{equation}
  \|\bm w\|_{\ell_h^2}
  :=\left(h\sum_{j=1}^{N_h} |w_j|^2\right)^{1/2},
  \label{eq:wave-discrete-l2-norm}
\end{equation}
which is the one-dimensional version of the \(h^{d/2}\)-scaling between \(L^2\) functions and Euclidean amplitude vectors discussed in Chapter~\ref{chap:basic-elements}.

The error analysis of a finite difference approximation like \eqref{eq:leapfrog-wave} often begins with the local error.  A Taylor expansion of the fully discrete formula, after substitution of the exact solution, gives the local residual
\begin{equation}
\begin{aligned}
  \tau_j^n
  :={}&
  \left[
  u_{tt}(x_j,t_n)
  -\frac{u(x_j,t_{n+1})-2u(x_j,t_n)+u(x_j,t_{n-1})}{k^2}
  \right]
  \\
  &-c^2\left[
  u_{xx}(x_j,t_n)
  -\frac{u(x_{j+1},t_n)-2u(x_j,t_n)+u(x_{j-1},t_n)}{h^2}
  \right]
  =O(k^2+h^2).
\end{aligned}
  \label{eq:wave-local-error}
\end{equation}
This displayed form is longer than the original difference equation, but it makes the two sources of second-order consistency visible: the centered time difference and the centered spatial difference.

Consistency alone does not imply convergence.  For the wave equation, stability requires the Courant--Friedrichs--Lewy restriction
\begin{equation}
  \nu:=\frac{ck}{h}\leq1.
  \label{eq:wave-cfl}
\end{equation}
For a uniform stability margin \(\nu\leq\nu_0<1\), and sufficiently smooth exact solutions, a representative estimate is
\begin{equation}
  \max_{0\leq n\leq T/k}
  \|\bm{u}_h^n-\bm{u}_{\mathrm{ex}}^n\|_{\ell_h^2}
  \leq C(1+T)(h^2+k^2),
  \qquad
  \bm{u}_{\mathrm{ex}}^n=
  \bigl(u(x_1,t_n),\ldots,u(x_{N_h},t_n)\bigr)^T,
  \label{eq:wave-fd-error}
\end{equation}
where the constant depends on a finite number of derivatives of \(u\), but not on \(h\), \(k\), or \(T\) \cite{LarssonThomee2009}.  The factor \(1+T\) is important: wave equations transport consistency errors rather than smoothing them.

This should be contrasted with the heat equation in Chapter~\ref{chap:parabolic-quantum}.  Parabolic smoothing damps high-frequency local errors, whereas the wave equation carries local errors along characteristics and reflects them from boundaries.  This is why the classical CFL time step and the quantum Hamiltonian-simulation normalization both contain the scale \(h^{-1}\): the fastest numerical waves cross one grid cell in time \(O(h)\).

The centered scheme in \eqref{eq:leapfrog-wave} is useful for explaining classical accuracy and the CFL restriction.  The quantum algorithm developed below does \emph{not} have to apply this classical time step coherently.  Instead, we typically discretize only in space and retain continuous time.  Define
\begin{equation}
  \bm{u}_h(t)=(U_1(t),\ldots,U_{N_h}(t))^T.
\end{equation}
For constant \(c\), the semidiscrete equation is
\begin{equation}
  \ddot{\bm{u}}_h(t)+A_h\bm{u}_h(t)=\bm{0},
  \qquad
  A_h=\frac{c^2}{h^2}
  \begin{bmatrix}
  2&-1&&&\\
  -1&2&-1&&\\
  &\ddots&\ddots&\ddots&\\
  &&-1&2&-1\\
  &&&-1&2
  \end{bmatrix}.
  \label{eq:fd-wave-matrix}
\end{equation}
Thus \(A_h\) is \(c^2\) times the one-dimensional elliptic matrix in \eqref{eq:one-d-laplacian}.  It is symmetric positive definite, and
\begin{equation}
  \|A_h^{1/2}\|=\Theta(h^{-1}).
  \label{eq:wave-square-root-scale}
\end{equation}
This is the hyperbolic analogue of the first-order scale \(\|G_h\|=\Theta(h^{-1})\) encountered in Chapter~\ref{chap:elliptic-quantum}.

The same construction extends naturally to variable wave speed.  Toward this end, we write
\begin{equation}
  a(x)=c(x)^2,
  \qquad
  -\partial_x(a(x)\partial_x u)
\end{equation}
for the positive spatial operator. Then the natural discretization is obtained by first approximating the edge flux
\begin{equation}
  F_{j+1/2}(t)
  \approx
  a_{j+1/2}\frac{U_{j+1}(t)-U_j(t)}{h},
  \qquad
  a_{j+1/2}=a(x_{j+1/2}),
\end{equation}
and then differencing the fluxes.  This gives
\begin{equation}
  (A_h\bm{u}_h)_j
  =\frac{1}{h^2}
  \left[
    a_{j+1/2}(U_j-U_{j+1})
    +a_{j-1/2}(U_j-U_{j-1})
  \right].
  \label{eq:variable-c-wave-stencil}
\end{equation}
Thus \(A_h\) remains symmetric.  Its diagonal entries are
\((a_{j-1/2}+a_{j+1/2})/h^2\), and its off-diagonal entries are
\(-a_{j+1/2}/h^2\).  This is the wave-equation counterpart of the variable-coefficient elliptic stiffness matrix from Chapter~\ref{chap:elliptic-quantum}.  The conservative flux form is also the intuitive reason for the symmetry: the same edge coefficient \(a_{j+1/2}\) couples the two neighboring nodes \(j\) and \(j+1\).

This edge-centered viewpoint is also the most useful one for quantum access.  A sparse-matrix oracle does not need to store the full matrix; it only needs to compute the neighboring indices and the local coefficients \(a_{j\pm1/2}\).  This is analogous to the sparse access model in Section~\ref{sec:sparse-block-encoding-access}: the finite difference stencil supplies the locations, while the coefficient field supplies the values.  For smooth or piecewise smooth \(c(x)\), these values can often be computed reversibly from the grid index.

In several dimensions, we again keep the time variable continuous and write the spatially semidiscrete equation in the same form as
\begin{equation}
  \ddot{\bm{u}}_h(t)+A_h\bm{u}_h(t)=\bm{f}_h(t),
  \label{eq:wave-fd-multidim-system}
\end{equation}
where $A_h$ is the multi-dimensional centered difference matrix. 
On Cartesian grids, \(A_h\) is assembled from the same directional second differences used in the five-point elliptic operator \eqref{eq:five-point}; for variable coefficients it is assembled from directional edge fluxes.  It has \(O(d)\) nonzero entries per row and \(\|A_h^{1/2}\|=\Theta(h^{-1})\) under uniform ellipticity of \(c(\bm x)^2\).  Higher-order and extended stencils lead to the same semidiscrete form, but their factorizations can be more delicate; see Subsection~\ref{subsec:fd-factorization-wave}.

\subsection{Finite elements and the mass matrix}
\label{subsec:wave-fem}

Finite elements begin from the same bilinear form and stiffness matrix used for the elliptic problem in Section~\ref{subsec:fem}.  For homogeneous Dirichlet boundary conditions, define
\begin{equation}
  a(u,v)=\int_\Omega c(\bm{x})^2
  \nabla u(\bm{x})\cdot\nabla v(\bm{x})\,d\bm{x}.
  \label{eq:wave-bilinear-form}
\end{equation}
The variational wave equation is: find $u(\cdot, t)\in H_0^1(\Omega)$ such that
\begin{equation}
  (u_{tt}(\cdot, t),v)+a(u(\cdot, t),v)=(f(t),v),
  \qquad v\in H_0^1(\Omega).
  \label{eq:wave-weak-form}
\end{equation}
Let $V_h\subset H_0^1(\Omega)$ be a conforming finite element space with basis
$\{\phi_1,\ldots,\phi_{N_h}\}$, and write
\begin{equation}
  u_h(\bm{x},t)=\sum_{j=1}^{N_h}(\bm{q}_h(t))_j\phi_j(\bm{x}).
\end{equation}

The semidiscrete method, after enforcing the weak form, is given by, 
\begin{equation}
  M_h\ddot{\bm{q}}_h(t)+K_h\bm{q}_h(t)=\bm{b}_h(t),
  \label{eq:fem-wave-mk}
\end{equation}
where
\begin{equation}
  (M_h)_{ij}=(\phi_j,\phi_i),
  \qquad
  (K_h)_{ij}=a(\phi_j,\phi_i),
  \qquad
  (\bm{b}_h(t))_i=(f(t),\phi_i).
  \label{eq:mass-stiffness-wave}
\end{equation}
Here $(\cdot,\cdot)$ refers to the $L^2$ inner product. 

The stiffness and mass matrices are precisely the matrices introduced in
\eqref{eq:stiffness-load} and \eqref{eq:fem-mass-matrix}, now appearing in a time-dependent equation.  Under homogeneous Dirichlet conditions, $M_h$ and $K_h$ are symmetric positive definite.  For $f=0$, multiplication by $\dot{\bm{q}}_h(t)^\dag$ gives
\begin{equation}
  \frac{d}{dt}
  \left[
  \frac12\dot{\bm{q}}_h(t)^\dag M_h\dot{\bm{q}}_h(t)
  +\frac12\bm{q}_h(t)^\dag K_h\bm{q}_h(t)
  \right]=0.
  \label{eq:fem-wave-energy}
\end{equation}
Thus the finite element semidiscretization preserves the continuous energy structure automatically.  From the quantum point of view, this identity also tells us which vector should be normalized.  The Euclidean norm of the coefficient vector \(\bm{q}_h\) is not the physical \(L^2\) norm.  The physical kinetic energy uses \(M_h\), while the strain energy uses \(K_h\).  This is why the mass-normalized variable \(\bm{y}_h=M_h^{1/2}\bm{q}_h\) will appear repeatedly below.

For piecewise linear elements on a quasi-uniform mesh, compatible projections of the initial data, and a sufficiently smooth solution, representative estimates are
\begin{align}
  \|u_h(t)-u(t)\|_{L^2(\Omega)}
  &\leq C h^2
  \left(
    \|u(t)\|_{H^2(\Omega)}
    +\int_0^t\|u_{tt}(s)\|_{H^2(\Omega)}\,ds
  \right),
  \label{eq:fem-wave-l2-error}\\
  \|\nabla(u_h(t)-u(t))\|_{L^2(\Omega)}
  &\leq C h
  \left(
    \|u(t)\|_{H^2(\Omega)}
    +\int_0^t\|u_{tt}(s)\|_{H^1(\Omega)}\,ds
  \right),
  \label{eq:fem-wave-h1-error}
\end{align}
up to the usual initial projection terms.  The integrals over $[0,t]$ again express the accumulation of consistency error in a nonsmoothing evolution.  Classical optimal-order analyses include the work of Dupont, Baker, Baker and Bramble, Dougalis, and Baker--Dougalis--Serbin
\cite{Dupont1973L2GalerkinHyperbolic,Baker1976HyperbolicFEM,
BakerBramble1979Hyperbolic,Dougalis1979MultistepGalerkin,
BakerDougalisSerbin1979TwoStepHyperbolic}.

The finite element formulation is useful pedagogically because it separates three issues that are often merged in a finite difference formula.  The matrix \(K_h\) represents the elastic or acoustic restoring force, the matrix \(M_h\) represents the discrete \(L^2\) inner product, and the energy identity tells us which combination of position and velocity is conserved.  Quantum amplitude encoding uses the ordinary Euclidean norm on coefficient vectors, so the mass matrix cannot be ignored.  The mass normalization in \eqref{eq:mass-normalized-y} is precisely the finite element analogue of the \(L^2\)-to-\(\ell_2\) scaling discussed in Chapter~\ref{chap:basic-elements}.

\subsection{Mass lumping}
\label{subsec:mass-lumping-wave}

The finite element mass matrix $M_h$ is sparse, but $M_h^{-1/2}$ is generally dense.  In principle, if a normalized block encoding of $M_h$ is available, QSVT can approximate the scalar map $x\mapsto x^{-1/2}$ on the spectral interval of $M_h$ and thereby produce a block encoding of $M_h^{-1/2}$.  This is a legitimate route, especially because the condition number of a properly scaled mass matrix is mesh independent on a shape-regular quasi-uniform mesh.  It nevertheless requires an additional spectral transformation and produces a nonlocal operator at the matrix level.

For an introductory construction, mass lumping is much simpler.  Replace $M_h$ by a positive diagonal approximation
\begin{equation}
  M_{h,L}=\operatorname{diag}(m_1,\ldots,m_{N_h}),
  \qquad m_i>0,
  \label{eq:mass-lumped}
\end{equation}
obtained, for example, by nodal quadrature or row summation.  The lumped system is
\begin{equation}
  M_{h,L}\ddot{\bm{q}}_h(t)+K_h\bm{q}_h(t)=\bm{b}_h(t),
  \label{eq:lumped-wave-system}
\end{equation}
and the normalized stiffness
\begin{equation}
  A_{h,L}=M_{h,L}^{-1/2}K_hM_{h,L}^{-1/2}
  \label{eq:lumped-normalized-stiffness}
\end{equation}
retains the locality of $K_h$.  This is the same mass-lumping mechanism used for the elliptic eigenproblem in Subsection~\ref{subsec:mass-lumping-eigenvalues}.

From the quantum perspective, the main advantage is not that lumping is a better finite element approximation; it is that it keeps the transformation to Euclidean coordinates local.  The diagonal entries \(m_i^{-1/2}\) can be applied by reversible arithmetic or by a simple diagonal block encoding, while a consistent mass matrix would require an additional matrix-function primitive.  Thus mass lumping trades a small classical quadrature modification for a much simpler quantum access model.

\medskip 
    The formulas above used homogeneous boundary conditions and, for most of the discussion, $f=0$.  This is the cleanest starting point because the homogeneous wave equation becomes unitary after the energy-variable transformation.  Body forces and nonhomogeneous boundary data can be incorporated by combining the homogeneous propagator with the variation-of-constants formula.  We give the quantum formulation in Section~\ref{sec:wave-forcing-boundary-lcu}.

\section{From the wave equation to a Schr\"odinger equation}
\label{sec:wave-to-schrodinger}

This section performs the main translation step of the chapter.  The continuous PDE is first replaced by a finite-dimensional conservative mechanical system.  We then choose coordinates in which the conserved energy is the Euclidean norm.  Only after these two numerical-analysis steps does the system become a genuine Schr\"odinger equation, which is then directly suitable for Hamiltonian simulation.  Both steps are familiar operations---semidiscretization and a change to energy variables---but the quantum setting sharpens the second one: it is not enough for the conserved quantity to be \emph{equivalent} to the Euclidean norm, as in a standard stability argument; it must \emph{be} the Euclidean norm, exactly, because that is the norm a quantum state carries.

The homogeneous semidiscrete wave equation has the form
\begin{equation}
  M_h\ddot{\bm{q}}_h(t)+K_h\bm{q}_h(t)=\bm{0},
  \qquad M_h=M_h^\dag\succ0,
  \qquad K_h=K_h^\dag\succeq0.
  \label{eq:wave-mk-homogeneous}
\end{equation}
Assume first that the boundary conditions remove the nullspace, so that \(K_h\succ0\).  Introduce the mass-normalized displacement
\begin{equation}
  \bm{y}_h(t)=M_h^{1/2}\bm{q}_h(t),
  \qquad
  A_h=M_h^{-1/2}K_hM_h^{-1/2}.
  \label{eq:mass-normalized-y}
\end{equation}
Then
\begin{equation}
  \ddot{\bm{y}}_h(t)+A_h\bm{y}_h(t)=\bm{0},
  \qquad A_h=A_h^\dag\succ0.
  \label{eq:y-second-order-wave}
\end{equation}
This is the dynamical counterpart of the mass-normalized elliptic system \eqref{eq:mass-normalized-system}.

The direct displacement--velocity system is
\begin{equation}
  \frac{d}{dt}
  \begin{bmatrix}
  \bm{y}_h\\ \dot{\bm{y}}_h
  \end{bmatrix}
  =
  \begin{bmatrix}
  0&I\\-A_h&0
  \end{bmatrix}
  \begin{bmatrix}
  \bm{y}_h\\ \dot{\bm{y}}_h
  \end{bmatrix}.
  \label{eq:canonical-first-order-wave}
\end{equation}
Although this system preserves the physical energy, its coefficient matrix is not skew-Hermitian in the ordinary Euclidean inner product.  It is Hamiltonian in the classical symplectic sense, but a quantum computer simulates unitary evolution in a Hilbert-space inner product.  We therefore change variables so that the conserved energy becomes an ordinary Euclidean norm.  A direct way to do this is to use a square-root factorization
\begin{equation}
  A_h=G_h^\dag G_h.
  \label{eq:wave-factorization}
\end{equation}
For finite differences, \(G_h\) is a scaled discrete gradient.  For finite elements, it is the mass-normalized gradient factor associated with \eqref{eq:fem-gradient-factorization}.  Define the pressure/strain block and the velocity block by
\begin{equation}
  \bm{p}_h(t)=G_h\bm{y}_h(t),
  \qquad
  \bm{v}_h(t)=\dot{\bm{y}}_h(t).
  \label{eq:wave-energy-variables}
\end{equation}
Then
\begin{equation}
  \dot{\bm{p}}_h(t)=G_h\bm{v}_h(t),
  \qquad
  \dot{\bm{v}}_h(t)=-G_h^\dag\bm{p}_h(t).
  \label{eq:wave-pv-system}
\end{equation}
Equivalently,
\begin{equation}
  \frac{d}{dt}\bm{\psi}_h(t)
  =S_h\bm{\psi}_h(t),
  \qquad
  \bm{\psi}_h(t)=
  \begin{bmatrix}\bm{p}_h(t)\\\bm{v}_h(t)\end{bmatrix},
  \qquad
  S_h=
  \begin{bmatrix}0&G_h\\-G_h^\dag&0\end{bmatrix}.
  \label{eq:wave-skew-system}
\end{equation}
Since \(S_h^\dag=-S_h\), multiplication by \(i\) gives the Schr\"odinger equation
\begin{equation}
  i\frac{d}{dt}\bm{\psi}_h(t)=H_h\bm{\psi}_h(t),
  \qquad
  H_h=iS_h
  =i\begin{bmatrix}0&G_h\\-G_h^\dag&0\end{bmatrix},
  \qquad H_h=H_h^\dag.
  \label{eq:wave-schrodinger-hamiltonian}
\end{equation}
This factorized Hamiltonian is the structural basis of the wave-equation algorithm of Costa, Jordan, and Ostrander \cite{CostaJordanOstrander2019Wave}, where finite difference forms were considered.  Its norm is tied to the energy:
\begin{equation}
  \|\bm{\psi}_h(t)\|_2^2
  =\|G_h\bm{y}_h(t)\|_2^2
   +\|\dot{\bm{y}}_h(t)\|_2^2
  =\bm{y}_h(t)^\dag A_h\bm{y}_h(t)
   +\|\dot{\bm{y}}_h(t)\|_2^2.
  \label{eq:wave-energy-state-norm}
\end{equation}
Thus the quantum state naturally represents the pressure--velocity, or energy, variables rather than the displacement alone.  This is often an advantage: modal energy, local flux, kinetic energy, and strain energy are all functions of \(G_h\bm{y}_h\) and \(\dot{\bm{y}}_h\).

Notice the contrast with the elliptic and parabolic chapters.  In an elliptic problem, the main primitive is an inverse or a resolvent; in a parabolic problem, it is a contraction semigroup.  Here the homogeneous evolution is unitary after the correct change of variables.  Consequently, there is no final postselection penalty associated with the norm of the propagated state.  The measurement bottleneck instead comes from the usual need to estimate observables from repeated state preparations.

\subsection{Finite differences as a factored system}
\label{subsec:fd-factorization-wave}

For the one-dimensional Dirichlet discretization, define the first-difference matrix
\begin{equation}
  D_h=\frac1h
  \begin{bmatrix}
  1&0&\cdots&0\\
  -1&1&\ddots&\vdots\\
  &\ddots&\ddots&0\\
  0&\cdots&-1&1\\
  0&\cdots&0&-1
  \end{bmatrix}.
  \label{eq:dirichlet-gradient-matrix}
\end{equation}
It maps \(N_h\) interior nodal values to \(N_h+1\) edge differences, including the two boundary edges.  For constant \(c\),
\begin{equation}
  A_h=c^2D_h^\dag D_h,
  \qquad
  G_h=cD_h.
  \label{eq:fd-gradient-factorization-wave}
\end{equation}
For variable coefficient \(a(x)=c(x)^2\), let \(W_h\) be the positive diagonal matrix of edge coefficients.  Then
\begin{equation}
  A_h=D_h^\dag W_hD_h,
  \qquad
  G_h=W_h^{1/2}D_h,
  \label{eq:variable-coeff-fd-factorization-wave}
\end{equation}
which is the factorized form of the conservative stencil \eqref{eq:variable-c-wave-stencil}.  Notice that the stiffness matrix itself has entries of size \(O(h^{-2})\), while the factor \(G_h\) has entries of size \(O(h^{-1})\).  The wave Hamiltonian uses the first-order factor, not the stiffness matrix directly.  This is why the generic hyperbolic scale is \(T/h\), in contrast to the direct elliptic inverse scale \(h^{-2}\) discussed in Chapter~\ref{chap:elliptic-quantum}.  The corresponding Hamiltonian is sparse, its entries are \(O(h^{-1})\), and
\begin{equation}
  \|H_h\|=\|G_h\|=\Theta(h^{-1}).
  \label{eq:wave-hamiltonian-scale}
\end{equation}
In higher dimensions, \(G_h\) is obtained by stacking directional difference matrices.  In two dimensions, for example,
\begin{equation}
  G_h=c
  \begin{bmatrix}D_x\\D_y\end{bmatrix},
  \qquad
  G_h^\dag G_h=c^2(D_x^\dag D_x+D_y^\dag D_y)=A_h.
  \label{eq:two-d-gradient-factorization-wave}
\end{equation}

A simple example of a wider stencil is a one-dimensional lattice with both nearest-neighbor and next-nearest-neighbor couplings.  On a periodic grid, let
\begin{equation}
  (D_1\bm{u})_j=\frac{U_{j+1}-U_j}{h},
  \qquad
  (D_2\bm{u})_j=\frac{U_{j+2}-U_j}{h}.
  \label{eq:wide-stencil-differences}
\end{equation}
This is also the ideas in \cite{CostaJordanOstrander2019Wave}. If the two spring constants \(\eta_1,\eta_2\) are both nonnegative, then
\begin{equation}
  A_h=\eta_1D_1^\dag D_1+\eta_2D_2^\dag D_2
  =G_h^\dag G_h,
  \qquad
  G_h=
  \begin{bmatrix}
    \sqrt{\eta_1}D_1\\
    \sqrt{\eta_2}D_2
  \end{bmatrix}.
  \label{eq:extended-stencil-factorization-positive}
\end{equation}
This gives a five-point stencil whose Fourier symbol is
\begin{equation}
  \lambda(\theta)
  =\frac{4\eta_1}{h^2}\sin^2\frac{\theta}{2}
  +\frac{4\eta_2}{h^2}\sin^2\theta .
  \label{eq:extended-stencil-symbol}
\end{equation}

The more interesting lattice-dynamics point in \cite{Li2025PRL} is that the
next-nearest-neighbor coefficient need not be positive term by term.  What is
needed for a stable wave equation is nonnegativity of the total symbol.  For
example, take \(\eta_2<0\).  Since
\[
  \sin^2\theta=4\sin^2(\theta/2)\cos^2(\theta/2),
\]
the symbol in \eqref{eq:extended-stencil-symbol} is nonnegative provided
\begin{equation}
  \eta_1+4\eta_2\ge0 .
  \label{eq:wide-stencil-stability-condition}
\end{equation}
The stacked construction \eqref{eq:extended-stencil-factorization-positive} is
now unavailable, because \(\sqrt{\eta_2}\) is imaginary.  The question becomes
whether the nonnegative symbol can still be written as \(|g(\theta)|^2\) for a
\emph{local} difference operator---that is, whether the factorization can be
carried out without leaving the space of short stencils.  It can.  Readers may
recognize what follows as a small instance of the Fej\'er--Riesz factorization
of nonnegative trigonometric polynomials.  Choose real numbers
\(\alpha,\beta\) such that
\begin{equation}
  \alpha\beta=\eta_2,
  \qquad
  \alpha^2+\beta^2=\eta_1+2\eta_2 .
  \label{eq:wide-stencil-alpha-beta}
\end{equation}
These two conditions simply match the constant and the \(\cos\theta\)
coefficients of the two trigonometric polynomials, and such \(\alpha,\beta\)
exist under \eqref{eq:wide-stencil-stability-condition}.  Define
\begin{equation}
  (G_h\bm u)_j
  =\frac{1}{h}\left[
    \alpha(U_{j+1}-U_j)
    +\beta(U_{j+2}-U_{j+1})
  \right].
  \label{eq:wide-stencil-local-factor}
\end{equation}
The Fourier symbol of \(G_h\) is
\[
  g(\theta)=\frac{1}{h}(e^{i\theta}-1)(\alpha+\beta e^{i\theta}),
\]
and therefore
\begin{equation}
  |g(\theta)|^2
  =\frac{4\eta_1}{h^2}\sin^2\frac{\theta}{2}
   +\frac{4\eta_2}{h^2}\sin^2\theta.
  \label{eq:wide-stencil-factor-symbol}
\end{equation}
As a concrete stable choice with a negative next-nearest-neighbor coefficient,
take \(\eta_1=5\), \(\eta_2=-1\).  Then
\(\eta_1+4\eta_2=1>0\), and one may choose
\(\alpha=(1+\sqrt5)/2\), \(\beta=-(\sqrt5-1)/2\).  This example illustrates
why the factor \(G_h\), rather than the individual stencil coefficients, is
the quantum object of interest.  Even a wider stencil with some negative
couplings can lead to a Hermitian wave Hamiltonian once its nonnegative symbol
has been factored as \(A_h=G_h^\dag G_h\).

\subsection{Generic Hamiltonian-simulation complexity}
\label{subsec:generic-wave-hsim-complexity}

Suppose that $G_h/\alpha_G$ has an efficient block encoding with
\begin{equation}
  \alpha_G=\Theta(\|G_h\|)=\Theta(h^{-1}).
  \label{eq:G-block-encoding-scale-wave}
\end{equation}
Then $H_h/\alpha_G$ has a block encoding with comparable cost.  Indeed, the Hermitian dilation from Chapter~\ref{chap:elliptic-quantum},
\[
  \mathcal D(G_h)=\begin{bmatrix}0&G_h^\dag\\G_h&0\end{bmatrix},
\]
is converted to \(H_h=i\begin{bmatrix}0&G_h\\-G_h^\dag&0\end{bmatrix}\) by a phase on the block label, equivalently by conjugating with \(\operatorname{diag}(I,iI)\). Thus the dilation block encoding gives the wave Hamiltonian block encoding using one extra label qubit and only constant overhead. Qubitization or QSVT Hamiltonian simulation implements
$e^{-iH_hT}$ with query complexity
\begin{equation}
  \widetilde O\!\left(
    \alpha_GT+\log\frac1{\epsilon}
  \right)
  =\widetilde O\!\left(
    \frac{T}{h}+\log\frac1{\epsilon}
  \right).
  \label{eq:qsvt-wave-complexity}
\end{equation}
Thus the best generic mesh-dependent scaling exposed by the first-order formulation is linear in $T/h$, rather than $T/h^2$.  Up to logarithmic factors, this linear dependence on the scaled evolution time is optimal in the black-box setting: the no-fast-forwarding theorem rules out sublinear-time simulation for general sparse Hamiltonians \cite{BerryAhokasCleveSanders2007,LowChuang2019}; see also the discussion in Childs' lecture notes \cite{ChildsQuantumAlgorithmsNotes}.  Any improvement beyond $T/h$ must therefore exploit additional structure, such as the explicitly computable dispersion relation used in Section~\ref{sec:qft-wave-fast-forward}.

It is useful to compare this scale with a classical explicit method while
keeping the physical parameters visible.  Let \(L_{\rm box}\) be a
characteristic side length of the computational domain, let \(h\) be the mesh
size, and let \(c_{\max}\) be the maximum wave speed.  On a uniform grid,
\[
  N_h\asymp \left(\frac{L_{\rm box}}{h}\right)^d
\]
spatial degrees of freedom are used.  A stable leapfrog computation from \cref{eq:leapfrog-wave} satisfies
a CFL restriction of the form
\[
  k \le C_{\rm CFL}\frac{h}{c_{\max}},
\]
where the constant depends on the spatial dimension and the stencil but not
on \(h\), \(T\), or \(L_{\rm box}\).  Hence the number of time steps is
\[
  N_t
  =
  O\left(\frac{c_{\max}T}{h}\right).
\]
Since each sparse time step costs \(O(N_h)\) arithmetic operations, the
classical explicit work is
\[
  O(N_tN_h)
  =
  O\left(
    \frac{c_{\max}T}{h}
    \left(\frac{L_{\rm box}}{h}\right)^d
  \right).
\]

The quantum Hamiltonian-simulation cost for the energy-state formulation is
instead governed by the largest wave frequency,
\[
  \|H_h\|T
  \asymp
  \frac{c_{\max}T}{h}.
\]
Thus the same CFL scale \(c_{\max}/h\) appears, but without the extra factor
counting all grid degrees of freedom.  The domain size \(L_{\rm box}\) enters
the quantum algorithm through the number of qubits, state preparation, and
readout model, rather than through the Hamiltonian-simulation time scale
itself.

This is different from the elliptic chapter.  For elliptic equations, solving
\(Lu=f\) requires applying an inverse, and the smallest eigenvalue scales like
\(L_{\rm box}^{-2}\).  Therefore the elliptic condition number contains
\((L_{\rm box}/h)^2\), or \(L_{\rm box}/h\) after first-order factorization.
For the wave equation, forward simulation is controlled by the largest
frequency, which scales like \(c_{\max}/h\), not by the low-frequency gap.
The low-frequency scale would reappear only if one tries to invert the wave
operator, recover displacement from strain by a pseudo-inverse, or perform
high-resolution spectral estimation near zero frequency.

The quantum cost above reflects only coherent evolution of an
amplitude-encoded energy state.  It does not include input preparation or
readout of all \(N_h\) grid values.  This is why the most meaningful quantum
outputs are low-dimensional observables, overlaps, modal weights, or sampled
energy distributions rather than the full wave field.

\section{Body forces and nonhomogeneous boundary conditions}
\label{sec:wave-forcing-boundary-lcu}

A common question from numerical analysts is what happens when the wave equation has a body force or nonzero boundary data.  For the wave equation this can be formulated cleanly, because the homogeneous part is already a Hamiltonian-simulation problem.  The nonhomogeneous term enters through variation of constants.  After a quadrature approximation for the time integral, the result can be implemented by a controlled superposition of homogeneous propagators and source-state preparations, closely related to the inhomogeneous constructions used in LCU and linear-combination-of-Hamiltonian-simulations (LCHS) algorithms \cite{AnLiuLin2023}.  In the language of the conservation dial from the chapter introduction, this section takes the first step away from exact conservation: the dynamics is still built entirely from unitary propagators, but a coefficient bookkeeping---an LCU normalization---now enters the cost.

Start from
\begin{equation}
  M_h\ddot{\bm{q}}_h(t)+K_h\bm{q}_h(t)=\bm{b}_h(t).
  \label{eq:forced-wave-mk}
\end{equation}
After the same mass normalization,
\begin{equation}
  \bm{y}_h=M_h^{1/2}\bm{q}_h,
  \qquad
  A_h=M_h^{-1/2}K_hM_h^{-1/2},
  \qquad
  \bm{f}_h(t)=M_h^{-1/2}\bm{b}_h(t),
  \label{eq:forced-wave-mass-normalized}
\end{equation}
we obtain
\begin{equation}
  \ddot{\bm{y}}_h(t)+A_h\bm{y}_h(t)=\bm{f}_h(t).
  \label{eq:forced-wave-second-order-normalized}
\end{equation}
Assume \(A_h=G_h^\dag G_h\) and define
\begin{equation}
  \bm{\psi}_h(t)=
  \begin{bmatrix}\bm{p}_h(t)\\\bm{v}_h(t)\end{bmatrix},
  \qquad
  \bm{p}_h=G_h\bm{y}_h,
  \qquad
  \bm{v}_h=\dot{\bm{y}}_h.
  \label{eq:forced-wave-energy-variables}
\end{equation}
Then
\begin{equation}
  \dot{\bm{\psi}}_h(t)
  =S_h\bm{\psi}_h(t)+\bm{g}_h(t),
  \qquad
  S_h=\begin{bmatrix}0&G_h\\-G_h^\dag&0\end{bmatrix},
  \qquad
  \bm{g}_h(t)=\begin{bmatrix}\bm{0}\\\bm{f}_h(t)\end{bmatrix}.
  \label{eq:forced-wave-first-order-skew}
\end{equation}
With \(H_h=iS_h\), this is
\begin{equation}
  i\dot{\bm{\psi}}_h(t)
  =H_h\bm{\psi}_h(t)+i\bm{g}_h(t),
  \label{eq:forced-wave-schrodinger-inhomogeneous}
\end{equation}
and the homogeneous propagator is
\begin{equation}
  U_h(\tau)=e^{-iH_h\tau}=e^{S_h\tau}.
  \label{eq:forced-wave-homogeneous-propagator}
\end{equation}
Variation of constants gives
\begin{equation}
  \bm{\psi}_h(T)
  =U_h(T)\bm{\psi}_h(0)
  +\int_0^T U_h(T-s)\bm{g}_h(s)\,ds.
  \label{eq:forced-wave-duhamel}
\end{equation}
The same Hamiltonian-simulation circuit used for the homogeneous equation is therefore reused at every quadrature node.

At the PDE level, \eqref{eq:forced-wave-duhamel} says that the final wave is a superposition of waves emitted by the source at earlier times.  A numerical quadrature rule turns this continuum superposition into a finite sum.  A quantum LCU implementation mirrors this interpretation: the clock or quadrature register chooses the emission time, the source oracle prepares the emitted wave packet, and the homogeneous propagator transports it from that time to \(T\).

A quadrature rule with nodes \(s_m\) and weights \(w_m\) gives
\begin{equation}
  \bm{\psi}_{h,M}(T)
  :=U_h(T)\bm{\psi}_h(0)
  +\sum_{m=1}^{M_q}w_mU_h(T-s_m)\bm{g}_h(s_m)
  \approx\bm{\psi}_h(T).
  \label{eq:forced-wave-duhamel-quadrature}
\end{equation}
To see explicitly how this becomes an LCU, define the normalized source
states
\begin{equation}
  \ket{\widehat{\bm{g}}_m}
  =\frac{\bm{g}_h(s_m)}{\|\bm{g}_h(s_m)\|},
  \qquad
  \ket{\widehat{\bm\psi}_0}
  =\frac{\bm{\psi}_h(0)}{\|\bm{\psi}_h(0)\|}.
  \label{eq:forced-wave-normalized-source-states}
\end{equation}
The magnitudes \(\|\bm{g}_h(s_m)\|\) are not part of the normalized quantum
state; they must be placed in the LCU coefficients.  Introduce
\begin{equation}
  \Lambda
  =\|\bm{\psi}_h(0)\|
  +\sum_{m=1}^{M_q}|w_m|\,\|\bm{g}_h(s_m)\|.
  \label{eq:forced-wave-lcu-norm}
\end{equation}
This is the same normalization parameter that appears in the LCU construction
of Chapter~\ref{chap:basic-elements}: it is the sum of the absolute
coefficients of the terms being coherently added.

A convenient PREPARE oracle is
\begin{equation}
  \ket{0}
  \longmapsto
  \frac{1}{\sqrt\Lambda}
  \left(
  \sqrt{\|\bm{\psi}_h(0)\|}\ket{0}
  +\sum_{m=1}^{M_q}
    \sqrt{|w_m|\,\|\bm{g}_h(s_m)\|}\ket{m}
  \right),
  \label{eq:forced-wave-lcu-prepare}
\end{equation}
together with a state oracle
\begin{equation}
  \ket{m}\ket{0}
  \longmapsto
  \begin{cases}
  \ket{0}\ket{\widehat{\bm\psi}_0}, & m=0,\\[0.3em]
  \ket{m}\ket{\widehat{\bm{g}}_m}, & 1\le m\le M_q,
  \end{cases}
  \label{eq:source-history-loading}
\end{equation}
and a SELECT operation that applies \(U_h(T)\) for \(m=0\) and
\(U_h(T-s_m)\) for \(m\ge1\).  The phase of \(w_m\) is included in the
controlled SELECT operation.  After unpreparing the quadrature register, the
successful branch is proportional to
\begin{equation}
  \frac{1}{\Lambda}
  \left(
  U_h(T)\bm{\psi}_h(0)
  +\sum_{m=1}^{M_q}w_mU_h(T-s_m)\bm{g}_h(s_m)
  \right).
  \label{eq:forced-wave-lcu-success-branch}
\end{equation}
This spells out why \(\Lambda\) controls the success probability: the raw
success amplitude is \(\|\bm\psi_{h,M}(T)\|/\Lambda\).  Since the propagators
are unitary, the triangle inequality gives
\(\|\bm\psi_{h,M}(T)\|\le\Lambda\) always, with equality exactly when all the
terms align in phase.  The LCU therefore pays for cancellation: if the
propagated initial wave and the emitted source waves interfere destructively
at time \(T\), the success amplitude shrinks in proportion.  The source state in
\eqref{eq:source-history-loading} is normalized, while its magnitude is
carried by the coefficient register.  Without such coherent source-history
loading, \eqref{eq:forced-wave-duhamel-quadrature} is only a classical
quadrature formula, not yet a quantum LCU.  This is the same issue that
appears in LCHS and other inhomogeneous linear-ODE algorithms: the homogeneous
propagator may be efficient, but the time-dependent forcing must also be
coherently accessible.

Several common PDE sources have this structure.  If \(f(\bm{x},t)\) is given by an analytic formula, the value oracle can compute it from the space-time grid indices.  If the forcing has a separated form \(f(\bm{x},t)=\sum_r a_r(t)f_r(\bm{x})\), then the source history can be loaded from a small superposition over the separation index and the quadrature index.  Boundary sources, such as the Neumann example below, are even more localized: the spatial state may be supported only on boundary degrees of freedom, while the time dependence is carried by a scalar function.

Consequently, after amplitude amplification, a schematic coherent cost is
\begin{equation}
  \widetilde O\!\left(
  \frac{\Lambda}{\|\bm{\psi}_{h,M}(T)\|}
  \left[
    C_{\rm source}+\frac{T}{h}+\log\frac1{\epsilon}
  \right]
  \right),
  \label{eq:forced-wave-lcu-cost}
\end{equation}
plus the cost needed to make the quadrature error sufficiently small.  The formula keeps visible the two distinct issues: Hamiltonian simulation of the homogeneous wave and coherent preparation of the source superposition.

\subsection{A mixed Dirichlet--Neumann boundary example}
\label{subsec:wave-neumann-example}

Boundary conditions deserve some care.  In elliptic and parabolic problems,
a Neumann boundary condition represents a prescribed flux.  In mechanical wave
problems, the analogous condition is often a prescribed traction.  For the
one-dimensional wave equation
\begin{equation}
  u_{tt}-(a u_x)_x=f,
  \qquad
  a=c^2,
  \label{eq:wave-neumann-model}
\end{equation}
we consider the mixed boundary conditions
\begin{equation}
  u(0,t)=0,
  \qquad
  a u_x(L_{\rm box},t)=g_R(t).
  \label{eq:wave-dirichlet-neumann-bc}
\end{equation}
The case \(g_R=0\) is the homogeneous Neumann condition.  If \(u\) is a
transverse displacement of a string or a longitudinal displacement of a bar,
then \(a u_x\) is the boundary force, or traction, after nondimensionalization.

For simplicity take \(a=c^2\) constant and use a vertex-centered grid
\[
  x_j=jh,\qquad j=0,\ldots,N_{\rm cell},
  \qquad h=\frac{L_{\rm box}}{N_{\rm cell}}.
\]
The left boundary value is fixed, \(U_0(t)=0\), while the right boundary value
\(U_{N_{\rm cell}}(t)\) is retained as an unknown.  Thus in this local example
there are \(N_{\rm cell}\) unknowns,
\[
  \bm u_h(t)=
  (U_1(t),\ldots,U_{N_{\rm cell}}(t))^\top .
\]

For the interior nodes \(1\le j\le N_{\rm cell}-1\), the standard centered
second difference gives
\begin{equation}
  \ddot U_j
  =
  \frac{a}{h^2}
  \left(U_{j+1}-2U_j+U_{j-1}\right)
  +f_j(t).
  \label{eq:wave-neumann-interior-fd}
\end{equation}
At the right boundary,  a common strategy to incorporate a Neumann condition is to introduce a ghost point \(U_{N_{\rm cell}+1}\) and use
the centered approximation
\begin{equation}
  a\frac{U_{N_{\rm cell}+1}-U_{N_{\rm cell}-1}}{2h}
  =
  g_R(t).
  \label{eq:wave-neumann-ghost-bc}
\end{equation}
Hence
\[
  U_{N_{\rm cell}+1}
  =
  U_{N_{\rm cell}-1}
  +
  \frac{2h}{a}g_R(t).
\]
Substituting this into the centered second difference at
\(j=N_{\rm cell}\) gives
\begin{align}
  \ddot U_{N_{\rm cell}}
  &=
  \frac{a}{h^2}
  \left(
    U_{N_{\rm cell}+1}
    -2U_{N_{\rm cell}}
    +U_{N_{\rm cell}-1}
  \right)
  +f_{N_{\rm cell}}(t)
  \nonumber\\
  &=
  \frac{2a}{h^2}
  \left(
    U_{N_{\rm cell}-1}-U_{N_{\rm cell}}
  \right)
  +
  \frac{2}{h}g_R(t)
  +
  f_{N_{\rm cell}}(t).
  \label{eq:wave-neumann-last-row}
\end{align}
Therefore the semidiscrete equation can be written as
\begin{equation}
  \ddot{\bm u}_h(t)
  +
  \widetilde A_h\bm u_h(t)
  =
  \bm f_h(t)
  +
  \frac{2g_R(t)}{h}\bm e_{N_{\rm cell}},
  \label{eq:wave-neumann-fd-system}
\end{equation}
where the matrix from assembling the finite difference formulas is given by,
\begin{equation}
  \widetilde A_h
  =
  \frac{a}{h^2}
  \begin{bmatrix}
  2&-1&&&\\
  -1&2&-1&&\\
  &\ddots&\ddots&\ddots&\\
  &&-1&2&-1\\
  &&&-2&2
  \end{bmatrix}.
  \label{eq:wave-neumann-fd-matrix}
\end{equation}
The last row is different because the right boundary node represents a half
cell.  Consequently \(\widetilde A_h\) is not symmetric in the standard
Euclidean inner product.

This nonsymmetry is not a physical loss of self-adjointness.  It can be attributed to a
mass-weighting issue.  Let
\begin{equation}
  M_{DN}
  =
  \operatorname{diag}
  \left(1,1,\ldots,1,\frac12\right),
  \label{eq:wave-neumann-mass}
\end{equation}
where the harmless common factor \(h\) has been omitted, and define the
symmetric stiffness matrix
\begin{equation}
  K_{DN}
  =
  \frac{a}{h^2}
  \begin{bmatrix}
  2&-1&&&\\
  -1&2&-1&&\\
  &\ddots&\ddots&\ddots&\\
  &&-1&2&-1\\
  &&&-1&1
  \end{bmatrix}.
  \label{eq:wave-neumann-stiffness}
\end{equation}
Then
\begin{equation}
  \widetilde A_h=M_{DN}^{-1}K_{DN}.
  \label{eq:wave-neumann-mass-stiffness}
\end{equation}
This is exactly the mass-lumped finite element or finite-volume form of the
mixed Dirichlet--Neumann discretization.

To obtain a symmetric matrix, introduce the mass-normalized variable
\[
  \bm y_h=M_{DN}^{1/2}\bm u_h .
\]
Then \eqref{eq:wave-neumann-fd-system} becomes
\begin{equation}
  \ddot{\bm y}_h(t)
  +
  A_h^{DN}\bm y_h(t)
  =
  M_{DN}^{1/2}
  \left(
    \bm f_h(t)
    +
    \frac{2g_R(t)}{h}\bm e_{N_{\rm cell}}
  \right),
  \label{eq:wave-neumann-normalized-system}
\end{equation}
where
\begin{equation}
  A_h^{DN}
  =
  M_{DN}^{1/2}\widetilde A_h M_{DN}^{-1/2}
  =
  M_{DN}^{-1/2}K_{DN}M_{DN}^{-1/2}
  =
  (A_h^{DN})^\dag .
  \label{eq:wave-neumann-symmetric-operator}
\end{equation}
Thus the apparently nonsymmetric matrix \(\widetilde A_h\) is similar to a
symmetric positive semidefinite matrix.  For the homogeneous Neumann condition
\(g_R=0\), the boundary contributes no forcing term.  For nonzero traction,
the boundary data appears as a time-dependent forcing term supported at the
right boundary node.

This example illustrates a useful principle.  Boundary conditions should be
incorporated at the level of the semidiscrete variational, finite-volume, or
finite-difference system before choosing the quantum encoding.  A matrix that
looks non-Hermitian in unweighted coordinates may become Hermitian after the
proper mass normalization.  Genuine non-Hermitian dilation is needed only when
the discretization itself contains dissipation, absorbing layers, upwinding,
or other non-self-adjoint mechanisms.

The extension of this simple one-dimensional ghost-point calculation to
higher-dimensional domains is more delicate.  On tensor-product grids, one can
derive analogous boundary rows face by face.  On curved or unstructured
domains, Neumann and traction conditions are most naturally handled through
the weak form and the boundary load vector.  Developing efficient
block-encodings that preserve this mass-weighted self-adjoint structure for
general high-dimensional geometries remains an interesting open problem.

\section{QFT diagonalization, fast forwarding, and splitting}
\label{sec:qft-wave-fast-forward}

Constant-coefficient finite differences on uniform tensor-product grids have more structure than a generic sparse Hamiltonian: the spatial operator is diagonalized by Fourier, sine, or cosine transforms.  Recent QFT-based wave-equation circuits exploit exactly this structure \cite{WrightMcKeeverFirstJohnstonTillayChaneyRosenkranzLubasch2024Wave,LubaschKikuchiWrightMcKeever2025FourierPDE}.

The analogy with the classical FFT solver is close but not identical.  Classically, one transforms all grid values to Fourier space, evolves each mode, and transforms back.  Quantumly, the Fourier transform is applied coherently to the amplitude-encoded state.  The mode index is then available in a register, so the mode frequency can be computed by reversible arithmetic and used to control a rotation.  The circuit acts on all Fourier modes in superposition.  This is why the cost is polylogarithmic in the number of grid points in the ideal structured setting, rather than proportional to the number of modes.

\subsection{Uniform grids with periodic boundary conditions}
\label{subsec:periodic-grid-qft}

Consider the periodic one-dimensional wave equation with \(N=2^n\) grid points.  This \(N\) denotes the periodic register size, not the number of interior Dirichlet unknowns used earlier.  If a nonperiodic discretization has a non-power-of-two number of degrees of freedom, one pads the register as in Chapter~\ref{chap:elliptic-quantum}.  The discrete Fourier transform diagonalizes the positive Laplacian:
\begin{equation}
  F_NA_hF_N^\dag
  =\operatorname{diag}(\omega_k^2),
  \qquad
  \omega_k=\frac{2c}{h}
  \left|\sin\left(\frac{\pi k}{N}\right)\right|,
  \qquad k=0,\ldots,N-1.
  \label{eq:periodic-wave-dispersion}
\end{equation}
After the transformation, each Fourier mode satisfies a decoupled oscillator equation
\begin{equation}
  \ddot{\widehat u}_k(t)+\omega_k^2\widehat u_k(t)=0.
  \label{eq:mode-oscillator}
\end{equation}
For \(\omega_k>0\), define the two-component mode energy vector
\begin{equation}
  \bm{\chi}_k(t)
  =\begin{bmatrix}
     \omega_k\widehat u_k(t)\\
     \widehat v_k(t)
   \end{bmatrix},
  \qquad \widehat v_k=\dot{\widehat u}_k.
\end{equation}
It obeys a 2 by 2 Schr\"odinger equation, 
\begin{equation}
  \frac{d}{dt}\bm{\chi}_k(t)
  =\omega_k
  \begin{bmatrix}0&1\\-1&0\end{bmatrix}
  \bm{\chi}_k(t)
  =i\omega_k Y\bm{\chi}_k(t),
  \label{eq:qft-mode-rotation}
\end{equation}
where \(Y\) is the Pauli matrix from Chapter~\ref{chap:basic-elements}.  Hence
\begin{equation}
  \bm{\chi}_k(T)=e^{i\omega_k T Y}\bm{\chi}_k(0),
  \label{eq:qft-mode-propagator}
\end{equation}
or, equivalently, the two eigenstates of \(Y\) acquire phases \(e^{\pm i\omega_k T}\).

Although there are \(N\) Fourier modes, the quantum circuit does not apply \(N\) rotations sequentially.  The mode index \(k\) is stored in a binary register.  Reversible arithmetic computes \(\omega_k T\) coherently in superposition, and a single controlled rotation acts on an energy qubit with angle determined by that register.  The workflow is
\begin{enumerate}[label=(\roman*)]
  \item apply the QFT to the spatial register;
  \item reversibly compute \(\omega_k T\) modulo \(2\pi\) from the binary index \(k\);
  \item apply the controlled \(Y\)-rotation \(e^{i\omega_k T Y}\);
  \item uncompute the angle register and apply the inverse QFT.
\end{enumerate}
This is one coherent circuit acting on the Fourier register, not a loop over modes.  Classically, applying all \(N\) mode rotations would require at least \(O(N)\) arithmetic operations after an FFT.  Quantumly, the register contains a superposition of all \(k\), and the same reversible arithmetic circuit applies the rotation associated with the binary value of \(k\) to every amplitude in superposition.  The price is that the result is still a quantum state: one cannot read all rotated Fourier coefficients without \(O(N)\) measurements.

The exact \(n\)-qubit QFT costs \(O(n^2)=O(\log^2N)\) elementary gates, as reviewed in Section~\ref{sec:qft-iqft}.  The additional phase arithmetic has cost polynomial in \(n\) and in the number of precision bits.  Thus the explicit spectral circuit does not incur the generic multiplicative \(T/h\) Hamiltonian-simulation count.  The factor \(1/h\) appears in the magnitude of the phase \(\omega_k T\), and therefore in the arithmetic precision needed to compute that phase, rather than as the number of simulated time steps.

This is the wave-equation analogue of the Fourier-space reciprocal-filter construction used for Poisson-type equations in \cite{HuangAntonioliBarbaresco2026QuantumSpectralPDE}.  The elliptic filter \(\lambda_{\bm k}^{-1}\) is replaced by the unitary phase or rotation \(e^{\pm i\omega_{\bm k}T}\).  In several dimensions,
\begin{equation}
  \lambda_{\bm k}
  =\frac{4}{h^2}
  \sum_{\ell=1}^d
  \sin^2\left(\frac{\pi k_\ell}{N_\ell}\right),
  \qquad
  \omega_{\bm k}=c\sqrt{\lambda_{\bm k}},
  \label{eq:wave-qft-multidim-frequency}
\end{equation}
so the arithmetic includes the summation and square-root step.  The advantage is that the structured spectral formula can bypass generic sparse Hamiltonian simulation.  The limitation is equally important: this fast-forwarding relies on separability and an explicit diagonal symbol.  It does not extend directly to general finite element meshes or variable-coefficient wave speeds.

For Dirichlet and Neumann conditions on rectangular grids, the QFT is replaced by quantum sine and cosine transforms.  These transforms are QFT relatives obtained by even or odd extensions and have comparable polylogarithmic circuit cost; see, for example, the QFT discussion in \cite{NielsenChuang2010,Coppersmith2002}.  In \(d\) dimensions, tensor-product transforms give formulas analogous to \eqref{eq:wave-qft-multidim-frequency}, with the appropriate boundary-dependent modification.  The QFT-based circuits of Wright et al. and Lubasch et al. also exploit smooth initial data to reduce circuit demands, but state preparation and observable readout remain nontrivial parts of the end-to-end cost \cite{WrightMcKeeverFirstJohnstonTillayChaneyRosenkranzLubasch2024Wave,LubaschKikuchiWrightMcKeever2025FourierPDE}.

\subsection{Klein--Gordon type equations}
\label{subsec:klein-gordon-splitting}

A useful extension is the Klein--Gordon equation with a spatially varying mass term,
\begin{equation}
  u_{tt}-c^2 \Delta u+m(\bm{x})^2u=0.
  \label{eq:kg-equation}
\end{equation}
The term \(m(\bm{x})^2u\) is a local restoring force.  When \(m\) is constant, it creates a spectral gap and makes low-frequency waves oscillate even at zero spatial frequency.  When \(m(\bm{x})\) varies, it acts like a spatially varying local oscillator strength, producing scattering, localization, or trapping effects in much the same way that a potential modifies Schr\"odinger dynamics.

After discretization and mass normalization, write
\begin{equation}
  \ddot{\bm{y}}_h+(A_{K,h}+A_{M,h})\bm{y}_h=\bm{0},
  \label{eq:kg-discrete}
\end{equation}
where \(A_{K,h}\) is the discretized \(-\Delta\) operator and \(A_{M,h}\) is the positive diagonal matrix associated with \(m(\bm{x})^2\).  Choose
\begin{equation}
  A_{K,h}=G_{K,h}^\dag G_{K,h},
  \qquad
  A_{M,h}=G_{M,h}^\dag G_{M,h},
  \qquad
  G_{M,h}=A_{M,h}^{1/2},
  \label{eq:kg-factorizations}
\end{equation}
where \(G_{M,h}\) is diagonal in physical space.  Stack the factors,
\begin{equation}
  G_h=\begin{bmatrix}G_{K,h}\\G_{M,h}\end{bmatrix},
  \qquad
  G_h^\dag G_h=A_{K,h}+A_{M,h}.
  \label{eq:kg-stacked-factor}
\end{equation}
The stacked form is more than notation.  It separates the part of the energy stored in spatial deformation from the part stored in the local oscillator term.  When \(m(\bm x)\) is diagonal in physical space, \(G_{M,h}\) is also diagonal after a simple finite difference discretization and is therefore easy to block encode or implement by controlled rotations.  The kinetic factor \(G_{K,h}\), on the other hand, is diagonal in Fourier space for constant coefficients.  This is the reason a split-step method is natural.

The energy state has two pressure/strain blocks sharing one velocity block,
\begin{equation}
  \bm{\psi}_h=
  \begin{bmatrix}\bm{p}_{K,h}\\\bm{p}_{M,h}\\\bm{v}_h\end{bmatrix},
  \qquad
  \bm{p}_{K,h}=G_{K,h}\bm{y}_h,
  \quad
  \bm{p}_{M,h}=G_{M,h}\bm{y}_h,
  \quad
  \bm{v}_h=\dot{\bm{y}}_h,
  \label{eq:kg-energy-variables}
\end{equation}
and again $\bm \psi_h$ solves a Schr\"odinger equation. We notice that its Hamiltonian splits naturally as \(H_h=H_{K,h}+H_{M,h}\), with
\begin{equation}
  H_{K,h}=i
  \begin{bmatrix}
  0&0&G_{K,h}\\
  0&0&0\\
  -G_{K,h}^\dag&0&0
  \end{bmatrix},
  \qquad
  H_{M,h}=i
  \begin{bmatrix}
  0&0&0\\
  0&0&G_{M,h}\\
  0&-G_{M,h}^\dag&0
  \end{bmatrix}.
  \label{eq:kg-hamiltonian-split}
\end{equation}
Both terms are Hermitian.  The \(H_{M,h}\) evolution consists of independent physical-space rotations, while \(H_{K,h}\) becomes a collection of Fourier-space rotations after the QFT.

A direct approach for simulating $\bm \psi_h(t)$ is Hamiltonian simulations by qubitization of QSVT    applied to $H_h$. On the other hand, this splitting in \cref{eq:kg-hamiltonian-split} is the wave-equation analogue of split-operator methods for the Schr\"odinger equation.  The mass term is diagonal in physical space, so it is cheap when \(m(\bm{x})\) can be evaluated from the grid index.  The Laplacian is diagonal in Fourier space, so it is cheap on a uniform periodic or separable grid.  The noncommutativity of these two diagonal representations is exactly what the splitting error measures.

The most widely used splitting scheme in classical algorithms is Strang splitting. For time step of size $\Delta t$, it approximates the evolution by,
\begin{equation}
  e^{-i \Delta t H_h}
   \approx e^{-i\Delta tH_{M,h}/2}
   e^{-i\Delta tH_{K,h}}
   e^{-i\Delta tH_{M,h}/2}.
  \label{eq:strang-splitting}
\end{equation}
If the relevant nested commutators are bounded, the one-step error is
\begin{equation}
  \|e^{-i\Delta t(H_{K,h}+H_{M,h})}-S_{\Delta t}\|
  \leq C_{\rm comm}\Delta t^3,
\end{equation}
and the global error over time \(T\) is \(O(C_{\rm comm}T\Delta t^2)\).  This constant is generally mesh dependent.  If \(G_{K,h}\) has size \(O(h^{-1})\) and \(G_{M,h}\) is bounded independently of \(h\), then a rough estimate for the nested commutators gives the scaling
\begin{equation}
  C_{\rm comm}=O(h^{-2}).
  \label{eq:kg-commutator-scale}
\end{equation}
Thus a Strang calculation with error \(O(\epsilon)\) over time \(T\) may require roughly
\[
  T/\Delta t=O\!\left(T^{3/2}h^{-1}\epsilon^{-1/2}\right)
\]
steps in a worst-case norm estimate.  Compared with the generic count \(\widetilde O(T/h)\) of qubitization, the number of steps is larger by a factor \(T^{1/2}\epsilon^{-1/2}\); what splitting buys in exchange is the structure of each step.  The method can still be attractive because each step uses very structured QFT and diagonal-rotation circuits.  It should not, however, be read as a mesh-independent splitting method.  Each step follows the workflow
\begin{equation}
\begin{aligned}
  &\text{physical-space rotation}
  \longrightarrow \text{QFT}
  \longrightarrow \text{Fourier-space rotation}
  \\
  &\longrightarrow \text{QFT}^{\dagger}
  \longrightarrow \text{physical-space rotation}.
\end{aligned}
  \label{eq:qft-strang-workflow}
\end{equation}

Higher-order Trotter splitting schemes can also be applied to reduce the dependence on $T$ and $\epsilon$. The commutator error scaling has been extensively studied \cite{childs2021theory}.

\section{General meshes: access, state preparation, and readout}
\label{sec:wave-qsvt-general-mesh}

For unstructured finite element meshes, variable coefficients, complicated domains, and nonseparable boundary conditions, Fourier diagonalization is unavailable.  The robust approach is to block encode the sparse factor \(G_h\) and simulate the Hermitian Hamiltonian \eqref{eq:wave-schrodinger-hamiltonian}.  Its generic cost has already been stated in \eqref{eq:qsvt-wave-complexity}.  Figure~\ref{fig:hyperbolic-pipeline} summarizes the resulting pipeline; the rest of this section walks through its three stages---input, evolution, and readout---and records what each stage costs.

This is the point at which the access model becomes part of the numerical method.  In a finite difference code, one might think of \(G_h\) as a stencil.  In a finite element code, one may instead access it through local element matrices, quadrature rules, and incidence relations between elements and degrees of freedom.  Both descriptions are sparse, but they lead to different reversible oracles.  The book-level abstraction is a block encoding of \(G_h/\alpha_G\); the implementation-level question is how the mesh data and coefficient data are made coherent.

\begin{figure}[htbp]
\centering
\resizebox{\textwidth}{!}{%
\begin{tikzpicture}[x=1cm,y=1cm,line cap=round,line join=round,>=Latex,
    every node/.style={font=\small},
    flow/.style={gray!62,line width=0.8pt,->,shorten >=3pt,shorten <=3pt},
    box/.style={draw=gray!62,fill=white,rounded corners=4pt,
                line width=0.7pt,align=center,inner sep=7pt,text=gray!56}]
  \node[box] (prep) at (0,0)
    {prepare\\$\ket{\bm{\psi}_h(0)}\propto
    (G_h\bm{y}_{h,0},\dot{\bm{y}}_{h,0})$};
  \node[box] (enc) at (4.2,0)
    {block encode\\$G_h/\alpha_G$ and $H_h/\alpha_G$};
  \node[box] (sim) at (8.0,0)
    {simulate\\$e^{-iH_hT}$, $\widetilde O(T/h)$};
  \node[box] (meas) at (11.7,0)
    {estimate\\energy observables};
  \draw[flow] (prep)--(enc);
  \draw[flow] (enc)--(sim);
  \draw[flow] (sim)--(meas);
\end{tikzpicture}%
}
\caption{The generic wave-equation pipeline.  The homogeneous energy-variable evolution is unitary and requires no final postselection.  The Duhamel--LCU construction for sources introduces a separate source-loading and success-amplitude factor.}
\label{fig:hyperbolic-pipeline}
\end{figure}

The input state is proportional to
\begin{equation}
  \bm{\psi}_h(0)
  =\begin{bmatrix}
    G_h\bm{y}_h(0)\\
    \dot{\bm{y}}_h(0)
  \end{bmatrix}.
  \label{eq:wave-initial-energy-state}
\end{equation}
Thus state preparation requires access not only to the initial displacement and velocity, but also to the discrete gradient of the displacement.  On structured grids this can be generated by arithmetic circuits or by applying a block encoding of \(G_h\).  For mass-lumped finite elements, one first prepares the coefficient vectors, applies the diagonal mass normalization, and then applies the local gradient factor.

For smooth initial data on a tensor-product grid, Grover--Rudolph type state preparation or Fourier preparation may be appropriate.  For finite element data, a more realistic input may be a mesh-based data structure or a classical preprocessing routine that prepares nodal values.  In either case, the cost of preparing \(G_h\bm{y}_{h,0}\) should not be silently hidden: it is part of the end-to-end algorithm, just as assembling a stiffness matrix or applying a gradient operator is part of a classical workflow.

The output quantum state is the normalized pressure--velocity state
\begin{equation}
  \ket{\bm{\psi}_h(T)}
  :=
  \frac{\bm{\psi}_h(T)}{\|\bm{\psi}_h(T)\|},
  \qquad
  \bm{\psi}_h(T)
  =
  \begin{bmatrix}
    G_h\bm{y}_h(T)\\
    \dot{\bm{y}}_h(T)
  \end{bmatrix}.
  \label{eq:wave-output-energy-state}
\end{equation}
For an observable \(O=O^\dag\) acting on the energy variables, the normalized quantity of interest is
\begin{equation}
  \mu_O(T)
  =
  \bra{\bm{\psi}_h(T)}O\ket{\bm{\psi}_h(T)}
  =
  \frac{\bm{\psi}_h(T)^\dag O\bm{\psi}_h(T)}
       {\|\bm{\psi}_h(T)\|^2}.
  \label{eq:wave-energy-observable}
\end{equation}
This is the observable-estimation problem introduced in Section~\ref{sec:measurement-observables}.  After rescaling \(O\), we may assume \(\|O\|\leq1\).  One preparation of \(\ket{\bm{\psi}_h(T)}\) costs
\begin{equation}
  Q_{\rm sim}(T,\epsilon)
  =\widetilde O\left(\frac{T}{h}+\log\frac1\epsilon\right)
  \label{eq:wave-one-simulation-cost}
\end{equation}
queries to the block encoding of the wave Hamiltonian.

With direct sampling, \(O(\epsilon^{-2}\log(1/\delta))\) independent preparations and measurements estimate \(\mu_O(T)\) to additive error \(\epsilon\) with failure probability at most \(\delta\).  Hence
\begin{equation}
  Q_{\rm tot}^{\rm samp}
  =\widetilde O\left(
    \left(\frac{T}{h}+\log\frac1\epsilon\right)
    \frac{1}{\epsilon^2}
    \log\frac1\delta
  \right).
  \label{eq:wave-sampling-total-cost}
\end{equation}
With amplitude estimation, the measurement dependence improves to \(O(\epsilon^{-1})\), provided one can coherently apply the state-preparation circuit and its inverse:
\begin{equation}
  Q_{\rm tot}^{\rm AE}
  =\widetilde O\left(
    \left(\frac{T}{h}+\log\frac1\epsilon\right)
    \frac{1}{\epsilon}
    \log\frac1\delta
  \right).
  \label{eq:wave-ae-total-cost}
\end{equation}
This is the same sampling-versus-amplitude-estimation distinction discussed in Chapter~\ref{chap:basic-elements}.  The improvement in measurement complexity comes at the price of greater coherent depth.  For near-term or early fault-tolerant devices, direct sampling may be more realistic; for asymptotic resource estimates, amplitude estimation is the natural benchmark.  In either case, the measurement cost multiplies the cost of preparing the wave state, so it should not be omitted from an end-to-end PDE complexity estimate.

Finally, the normalized expectation can be converted into the corresponding unnormalized physical quadratic form:
\begin{equation}
  \bm{\psi}_h(T)^\dag O\bm{\psi}_h(T)
  =\|\bm{\psi}_h(T)\|^2\mu_O(T).
  \label{eq:wave-unnormalized-observable}
\end{equation}
For the homogeneous wave equation,
\begin{equation}
  \|\bm{\psi}_h(T)\|=\|\bm{\psi}_h(0)\|,
  \label{eq:wave-energy-norm-preserved}
\end{equation}
so this normalization factor is determined by the initial discrete energy.  This is simpler than the nonunitary linear-system and heat-equation settings discussed in Subsection~\ref{subsec:qoi-and-output-norm}, where the output norm may itself be an unknown quantity.

Recovering every displacement coefficient is not generally an efficient quantum output task.  The favorable setting is instead one in which the desired result is a small collection of energy fractions, local energies, modal weights, fluxes, or other observables.  If displacement observables are essential, one can carry the displacement as an additional dynamical block, as described next.

\subsection{An augmented displacement formulation}
\label{subsec:augmented-displacement-formulation}

So far, we have relied on the first order form, using the pressure and velocity as the primary variables, ot leverage efficient Hamiltonian simulations. To include the displacement in the simulation, we can define
\begin{equation}
  \bm{z}_h(t)
  =\begin{bmatrix}
    \bm{y}_h(t)\\
    \bm{p}_h(t)\\
    \bm{v}_h(t)
  \end{bmatrix},
  \qquad
  \bm{p}_h=G_h\bm{y}_h,
  \quad
  \bm{v}_h=\dot{\bm{y}}_h.
  \label{eq:augmented-wave-state}
\end{equation}
Then the time evolution becomes
\begin{equation}
  \frac{d}{dt}\bm{z}_h(t)
  =L_{\rm aug}\bm{z}_h(t),
  \qquad
  L_{\rm aug}
  =\begin{bmatrix}
  0&0&I\\
  0&0&G_h\\
  0&-G_h^\dag&0
  \end{bmatrix}.
  \label{eq:augmented-wave-ode}
\end{equation}
The accumulator equation \(\dot{\bm{y}}_h=\bm{v}_h\) makes \(L_{\rm aug}\) non-skew-Hermitian in the Euclidean inner product.  It therefore cannot be simulated directly by ordinary Hamiltonian simulation.  The dilation introduced in the next section gives a systematic way to map this augmented dynamics, as well as dissipative upwind dynamics, to a Schr\"odinger equation on a larger space. In particular, since the non-Hermitian part of $L_{\rm aug}$ only scales $O(1)$, the overall time complexity is still $O(T/h).$

The need for this augmented system is purely an output issue.  The homogeneous pressure--velocity dynamics already gives a unitary quantum evolution, but it does not store \(\bm{y}_h\) as a state component.  Carrying \(\bm{y}_h\) turns the evolution into an accumulator system, analogous to integrating velocity to recover displacement in a classical code.  Accumulators are useful, but they are not energy-preserving by themselves, which is why dilation rather than direct Hamiltonian simulation is needed.

\section[First-order systems and dilation]{First-order hyperbolic systems, Schr\"odingerization, and dilation}
\label{sec:first-order-hyperbolic-dilation}

Many models in transport, acoustics, elasticity, and electromagnetism are
written directly as first-order systems.  These systems are closer to the
pressure--velocity formulation than to the second-order displacement
formulation.  The main new issue is that stable finite difference methods for
first-order hyperbolic equations often create numerical dissipation.  That
dissipation is essential classically, but it destroys unitarity and therefore
forces us to use some form of embedding or dilation.  This is the far end of
the conservation dial: the departure from unitarity is no longer a bookkeeping
matter, as it was for sources, but is built into the discretization itself.
The three frameworks presented below---Schr\"odingerization, LCHS, and
moment-matching dilation---are three routes to the same destination, a unitary
evolution on a larger space whose projection reproduces the stable nonunitary
scheme.

Consider the constant-coefficient problem
\begin{equation}
  \bm{u}_t+A\bm{u}_x=\bm{0},
  \qquad
  \bm{u}(x,t)\in\mathbb R^m.
  \label{eq:first-order-system}
\end{equation}
The PDE system is called hyperbolic if $A$ is diagonalizable with real eigenvalues, 
\begin{equation}
  A=S\Lambda S^{-1},
  \qquad
  \Lambda=\operatorname{diag}(\lambda_1,\ldots,\lambda_m).
  \label{eq:A-eigendecomposition}
\end{equation}
The characteristic variables \(\bm{w}=S^{-1}\bm{u}\) satisfy decoupled advection
equations
\begin{equation}
  (w_j)_t+\lambda_j(w_j)_x=0.
  \label{eq:characteristic-advections}
\end{equation}
This is the classical reason that upwinding is built from the eigendecomposition
of the flux matrix.  Symmetrizable systems can be treated similarly after
changing the inner product.

Define
\begin{equation}
  \Lambda^+=\operatorname{diag}(\max(\lambda_j,0)),
  \qquad
  \Lambda^-=\operatorname{diag}(\min(\lambda_j,0)),
\end{equation}
and
\begin{equation}
  A^+=S\Lambda^+S^{-1},
  \qquad
  A^-=S\Lambda^-S^{-1},
  \qquad
  |A|=A^+-A^-=S|\Lambda|S^{-1}.
  \label{eq:A-flux-splitting}
\end{equation}
Let \(\bm u_j(t)\) denote the grid value at \(x_j\).  The semidiscrete upwind
method can be expressed as the matrix-vector level,
\begin{equation}
  \dot{\bm u}_j
  +A^+\frac{\bm u_j-\bm u_{j-1}}{h}
  +A^-\frac{\bm u_{j+1}-\bm u_j}{h}
  =\bm{f}_j(t),
  \label{eq:upwind-system}
\end{equation}
where \(\bm f_j(t)\) contains body forcing or boundary contributions.
Equivalently,
\begin{equation}
  \dot{\bm u}_j
  =-A\frac{\bm u_{j+1}-\bm u_{j-1}}{2h}
  +\frac{h}{2}|A|
   \frac{\bm u_{j+1}-2\bm u_j+\bm u_{j-1}}{h^2}
  +\bm f_j(t).
  \label{eq:upwind-skew-dissipative}
\end{equation}
The first term is the centered transport part.  The second term is numerical
viscosity.  This is the semidiscrete form of the familiar finite-difference
principle: centered differencing is often nondissipative, while upwind differencing
adds just enough damping to control one-sided propagation.

For the global vector
\begin{equation}
  \bm u_h(t)=(\bm u_0(t)^T,\ldots,\bm u_{N-1}(t)^T)^T,
\end{equation}
the semidiscrete equation has the form
\begin{equation}
  \dot{\bm u}_h(t)=L_h^{\rm up}\bm u_h(t)+\bm f_h(t).
  \label{eq:upwind-nonunitary-structure}
\end{equation}

This is a useful example of the tension between numerical stability and
quantum unitarity.  A centered discretization of a symmetric hyperbolic system
is naturally skew-Hermitian, but it may be oscillatory near discontinuities or
sharp gradients.  Upwinding repairs the classical stability problem by adding
dissipation.  The goal is therefore not to discard upwinding in order to keep
a unitary equation, but to embed the stable nonunitary scheme into a larger
unitary dynamics.

\subsection{Schr\"odingerization: the PDE-to-Schr\"odinger transformation}
\label{subsec:schrodingerization-first-order}

A systematic way to embed nonunitary PDE dynamics into Schr\"odinger-type
dynamics was introduced by Jin and collaborators under the name
\emph{Schr\"odingerization}
\cite{JinLiuYu2024Schrodingerization,JinLiu2025SchrodingerizationCVPerspective}.
The basic idea is to add one auxiliary continuous variable so that the
dissipative part of the PDE becomes a derivative in that new variable.  The
resulting equation is a Schr\"odinger equation on an enlarged space.

For the semidiscrete flow
\begin{equation}
  \dot{\bm x}(t)
  =
  \bigl(-iH(t)+K(t)\bigr)\bm x(t),
  \qquad
  H(t)=H(t)^\dag,
  \qquad
  K(t)=K(t)^\dag,
  \label{eq:schrodingerization-nonunitary-flow}
\end{equation}
introduce an auxiliary coordinate \(p\ge0\) and set, formally,
\begin{equation}
  \bm\Psi(t,p)=e^{-p}\bm x(t).
\end{equation}
Since \(\partial_p\bm\Psi=-\bm\Psi\), the term \(K(t)\bm x(t)\) can be
represented as \(-K(t)\partial_p\bm\Psi(t,p)\).  Thus
\eqref{eq:schrodingerization-nonunitary-flow} is lifted to
\begin{equation}
  i\partial_t\bm\Psi(t,p)
  =
  \left[
    H(t)-iK(t)\partial_p
  \right]\bm\Psi(t,p).
  \label{eq:schrodingerization-extended-wave}
\end{equation}
The operator \(-i\partial_p\) is Hermitian after choosing an appropriate
domain or Fourier representation in the auxiliary variable, and therefore the
right-hand side of \eqref{eq:schrodingerization-extended-wave} is a
Hamiltonian on the enlarged space.  Evaluation at \(p=0\) (or a different point) recovers the
original solution:
\[
  \bm x(t)=\bm\Psi(t,0).
\]

This formal calculation captures the essential mechanism.  In rigorous
algorithms one must choose an auxiliary interval or Fourier representation,
discretize the \(p\)-variable, control boundary errors, and account for the
success probability of evaluating or postselecting the auxiliary state.  These
are not minor implementation details, but the conceptual point is powerful:
dissipation in the original PDE is converted into transport or phase
evolution in an extra coordinate.  In this sense, Schr\"odingerization is one
of the first systematic PDE-to-Schr\"odinger frameworks for quantum
simulation, complementary to the finite-dimensional block-encoding approach
used throughout this book.

\subsection{LCHS: an LCU of Hamiltonian simulations}
\label{subsec:lchs-first-order}

Another way to view the same embedding problem is through the LCHS method of
An, Liu, and Lin \cite{AnLiuLin2023}, already encountered for sources in
Section~\ref{sec:wave-forcing-boundary-lcu}.  LCHS seeks a representation of a nonunitary flow as a
linear combination, or integral, of unitary Hamiltonian simulations.  This
makes the connection to the LCU primitive of Chapter~\ref{chap:basic-elements}
particularly explicit.

For an autonomous generator
\[
  L=-iH+K,
  \qquad H=H^\dag,
  \qquad K=K^\dag,
\]
one aims for an identity or approximation of the schematic form
\begin{equation}
  e^{t(-iH+K)}
  \approx
  \sum_{m=1}^{M} c_m\,
  e^{-it(H+\eta_m K)} .
  \label{eq:lchs-discrete-lcu}
\end{equation}
Each term is a Hamiltonian simulation with Hamiltonian \(H+\eta_mK\).  Once
the coefficients \(c_m\) and parameters \(\eta_m\) are chosen, the algorithm
has the standard LCU structure:
\[
  \operatorname{PREPARE}:
  \ket{0}\mapsto
  \frac{1}{\sqrt{\Lambda}}
  \sum_m \sqrt{|c_m|}\ket{m},
  \qquad
  \Lambda=\sum_m|c_m|,
\]
followed by
\[
  \operatorname{SELECT}:
  \ket{m}\ket{\psi}
  \mapsto
  \ket{m}\,e^{-it(H+\eta_mK)}\ket{\psi},
\]
and postselection on the coefficient register.  Thus the LCU normalization
\(\Lambda\), the longest simulation time, and the cost of block encoding
\(H+\eta_mK\) determine the complexity.

In continuous form, \eqref{eq:lchs-discrete-lcu} becomes
\begin{equation}
  e^{t(-iH+K)}
  \approx
  \int_{\mathbb R} w(\eta)\,
  e^{-it(H+\eta K)}\,d\eta .
  \label{eq:lchs-continuous-integral}
\end{equation}
The LCHS formula needs to assume that $K$ is negative definite, or equivalently, the left hand side induces a contraction. 
After quadrature, this is exactly a linear combination of Hamiltonian
simulations.  This representation is especially appealing in numerical PDEs:
one keeps the stable nonunitary discretization, but simulates a family of
Hermitian Hamiltonians instead of the non-Hermitian matrix directly.

The LCHS viewpoint is closely related to Schr\"odingerization.  In both cases,
a dissipative or nonunitary term is represented by a superposition of unitary
evolutions in an enlarged space.  The difference is mostly organizational:
Schr\"odingerization emphasizes the auxiliary PDE and continuous-variable
Hamiltonian, whereas LCHS emphasizes the quadrature and LCU implementation.

\subsection{Moment-matching dilation}
\label{subsec:moment-matching-dilation-wave}

The moment-matching dilation framework of \cite{Li2025MomentDilation}
abstracts both of these ideas.  Consider again
\begin{equation}
  \dot{\bm{x}}(t)=L(t)\bm{x}(t),
  \qquad
  L(t)=-iH(t)+K(t),
  \qquad
  H(t)=H(t)^\dag,
  \qquad
  K(t)=K(t)^\dag.
  \label{eq:general-nonhermitian-flow}
\end{equation}
The natural question is: what ancillary operator \(F\), right vector
\(\ket{r}\), and evaluation functional \(\bra{l}\) make a projected unitary
evolution reproduce this nonunitary flow exactly?

Let \(F^\dag=-F\) act on an ancillary space, and define the dilated Hamiltonian
\begin{equation}
  \widetilde H(t)
  =
  I_A\otimes H(t)+iF\otimes K(t).
  \label{eq:moment-dilated-hamiltonian}
\end{equation}
It is Hermitian because \(iF\) and \(K(t)\) are Hermitian.  Expanding its
propagator in a Taylor series in the autonomous case, or a Dyson series in the
time-dependent case, shows that every term containing \(q\) occurrences of
\(K\) carries the ancillary coefficient
\begin{equation}
  m_q:=\langle l|F^q|r\rangle .
\end{equation}
A detail is worth making explicit, because it is what makes the construction
work.  Within a given expansion term, the \(H\) factors and the \(K\) factors
are interleaved in some time order; but the \(H\) factors act as the identity
on the ancillary space, so the ancillary operator product is \(F^q\) for
\emph{every} interleaving.  The ancillary coefficient therefore depends only
on the count \(q\), not on the positions of the \(K\) factors, and a single
sequence of scalar moments controls the entire series.  Thus exact equality is
obtained if
\begin{equation}
  \langle l|F^q|r\rangle=1,
  \qquad q=0,1,2,\ldots .
  \label{eq:moment-identities}
\end{equation}
Under these identities,
\begin{equation}
  (\langle l|\otimes I)
  \mathcal T\exp\left[
  -i\int_0^t\widetilde H(s)\,ds
  \right]
  (|r\rangle\otimes I)
  =
  \mathcal T\exp\left[
  \int_0^tL(s)\,ds
  \right].
  \label{eq:moment-dilation-formula}
\end{equation}
Thus matching every moment gives an exact dilation, while matching only the
first several moments gives a finite-order approximation.  In exact
function-space constructions, the encoding vector or evaluation functional may
live in a larger dual space rather than being a normalizable quantum state; a
numerical algorithm approximates them by truncation and discretization.

Schr\"odingerization is recovered by taking
\begin{equation}
  F=-\partial_p,
  \qquad
  r(p)=e^{-p},
  \qquad
  \langle l|\varphi\rangle=\varphi(0).
  \label{eq:schrodingerization-moment-triple}
\end{equation}
Since \(F^qr=r\) for every \(q\ge0\), the moment identities hold; verifying
this is the one-line computation requested in the exercises.  Other
choices of the moment triple produce differential, integral, pseudodifferential,
finite-difference, and Bargmann--Fock dilations
\cite{Li2025MomentDilation}.

The analogy with reduced-order modeling is intentional.  In rational Krylov
and moment-matching methods, one chooses a small subspace so that selected
transfer-function moments are reproduced.  Here one chooses an ancillary
dynamics so that the moments of the dissipative part \(K(t)\) are reproduced
by projected unitary evolution.  The reward is that a nonunitary numerical
flow can be represented by a unitary flow on a larger space; the price is
ancillary dimension, projection success probability, and discretization of the
ancillary variable.

Finally, if \(iF\) has spectral resolution
\begin{equation}
  iF=\int_{\mathbb R}\eta\,dE_F(\eta),
\end{equation}
then each ancillary spectral sector evolves under \(H+\eta K\).  Projection
onto \(\bra l\) and \(\ket r\) gives an integral representation
\begin{equation}
  e^{t(-iH+K)}
  =
  \int_{\mathbb R}w_{l,r}(\eta)
    e^{-it(H+\eta K)}\,d\eta,
  \label{eq:dilation-lcu-integral}
\end{equation}
at least formally in the autonomous case.  After quadrature, this becomes the
LCHS/LCU form in \eqref{eq:lchs-discrete-lcu}.  Thus Schr\"odingerization,
LCHS, and moment-matching dilation are three views of the same underlying
principle: represent a stable nonunitary flow as the projection of a unitary
flow in a larger space.  The Gaussian-transmutation representation of the heat
semigroup in Chapter~\ref{chap:parabolic-quantum} is built in exactly this
spirit: a dissipative flow written as an integral of unitary evolutions, with
an explicit weight and an explicit quadrature.

\subsection{Implication for upwind hyperbolic discretizations}
\label{subsec:upwind-dilation-implication}

We now return to the upwind system
\[
  \dot{\bm u}_h=L_h^{\rm up}\bm u_h+\bm f_h(t),
  \qquad
  L_h^{\rm up}=-iH_h +K_h .
\]
When $A$ is Hermitian, the centered transport part gives \(H_h\), while the upwind
viscosity gives \(K_h \preceq0\).  The preceding discussion says that
we do not need to abandon this stable discretization in order to make a
quantum algorithm.  Instead, one can choose a dilation or LCHS representation
and simulate Hermitian Hamiltonians of the form
\[
  H_h +\eta K_h .
\]
For fixed spatial dimension and fixed system size \(m\), the operator norms
scale as
\[
  \|H_h \|+\|K_h \|
  =
  O\!\left(\frac{\max_j|\lambda_j|}{h}\right).
\]
Therefore the coherent simulation scale for the homogeneous upwind evolution
is expected to be
\begin{equation}
  \widetilde O\!\left(
    \frac{T\max_j|\lambda_j|}{h}
  \right),
  \label{eq:upwind-dilation-complexity}
\end{equation}
up to logarithmic factors, dilation-discretization costs, and postselection or
LCU normalization factors.  This is the same CFL-type scale that appears in
classical explicit hyperbolic solvers, but without the factor corresponding to
updating all grid values.  As throughout this chapter, the quantum advantage
can only be meaningful for state preparation, sampling, or low-dimensional
observables, not for printing the entire solution vector.

For nonzero forcing or boundary terms, one combines the dilation of the
homogeneous propagator with the Duhamel and LCU construction described in
Section~\ref{sec:wave-forcing-boundary-lcu}.  The source history must be loaded
coherently, and its norms enter the LCU normalization.  Thus the final
workflow is:

\begin{enumerate}
  \item construct a stable upwind or flux-splitting semidiscretization;
  \item write the global ODE as
        \(\dot{\bm u}_h=L_h^{\rm up}\bm u_h+\bm f_h(t)\);
  \item separate
        \(L_h^{\rm up}=-iH_h +K_h \);
  \item block encode \(H_h \) and \(K_h \);
  \item choose a Schr\"odingerization, LCHS, or moment-matching dilation;
  \item simulate the enlarged Hermitian dynamics and estimate the desired
        observable.
\end{enumerate}

The same dilation principle also applies to the augmented displacement
generator in \eqref{eq:augmented-wave-ode}.  In that case the nonunitarity is
introduced artificially to carry an additional displacement block; in the
upwind case, it is introduced by the stable numerical discretization itself.

\section{Summary and outlook}
\label{sec:hyperbolic-summary-outlook}

The main lesson of this chapter is that hyperbolic PDEs enter quantum algorithms through their energy structure.  Finite differences and finite elements give
\begin{equation}
  M_h\ddot{\bm{q}}_h+K_h\bm{q}_h=\bm{b}_h(t).
\end{equation}
After mass normalization and factorization
$A_h=M_h^{-1/2}K_hM_h^{-1/2}=G_h^\dag G_h$, the homogeneous equation becomes
\begin{equation}
  i\dot{\bm{\psi}}_h=H_h\bm{\psi}_h,
  \qquad
  H_h=i\begin{bmatrix}0&G_h\\-G_h^\dag&0\end{bmatrix}.
\end{equation}
For general meshes, optimal Hamiltonian simulation has the leading scale
$\widetilde O(T/h)$.  Explicit Fourier diagonalization can do better for special constant-coefficient tensor-product problems.  Upwind discretizations introduce controlled nonunitarity and motivate moment-matching dilation.

From a numerical-analysis perspective, the message is that the choice of variables matters as much as the choice of discretization.  The same second-order wave equation can lead to a displacement--velocity system, a pressure--velocity Hamiltonian, an augmented accumulator system, or a dissipative upwind system.  These formulations are classically equivalent or closely related, but they lead to different quantum primitives: Hamiltonian simulation, QFT phase arithmetic, LCU for sources, or dilation for nonunitary dynamics.

\paragraph{Boundary conditions.}
Nonhomogeneous Dirichlet, Neumann, and Robin data can be converted into source terms and treated by the Duhamel--LCU construction.  Absorbing layers, impedance conditions, and perfectly matched layers generally introduce dissipation and are more naturally handled by dilation.  From a numerical viewpoint, this is the familiar distinction between conservative boundary treatments and energy-absorbing ones.  From a quantum viewpoint, it is the distinction between unitary Hamiltonian simulation and nonunitary evolution with a success probability or dilation overhead.  Efficient preparation of liftings and boundary-load states remains a problem-specific issue.

\paragraph{Fast-forwardable wave dynamics.}
The QFT example shows that long-time wave propagation can be fast-forwarded when both the eigenbasis and dispersion relation are efficiently computable.  This is a very special situation.  The circuit computes the phase \(\omega_{\bm k}T\) directly instead of simulating the Hamiltonian for time \(T\).  Characterizing other PDE discretizations with this structure, while respecting the generic no-fast-forwarding obstruction, is an important direction.  Possible examples include separable media, periodic lattice models, and certain exactly diagonalizable discretizations.

\paragraph{Multiscale wave dynamics.}
High-contrast media, small geometric features, and locally refined meshes introduce several spatial scales.  The finest scale still controls $\|G_h\|$ and hence the generic Hamiltonian-simulation normalization.  This mirrors classical CFL restrictions, where the smallest cells and largest wave speeds dictate the time step.  Multilevel,  homogenization, or reduced-basis ideas may be needed to prevent the smallest mesh size from determining the full quantum cost.

\paragraph{Maxwell equations.}
Maxwell's equations are already first order and possess a natural energy inner product.  Compatible curl discretizations,  and edge elements can lead directly to Hermitian or skew-Hermitian block systems, while conductivity, absorbing boundaries, and material dispersion introduce nonunitary terms.  They are therefore a natural next application of both Hamiltonian simulation and dilation.  A careful treatment would also need to address gauge constraints, divergence constraints, and the preparation and measurement of physically meaningful field observables.

\paragraph{More general spatial discretizations.}
Discontinuous Galerkin, spectral-element, mixed, mimetic, and high-order finite difference methods may offer better accuracy per degree of freedom than the lowest-order schemes used here.  Their quantum suitability depends not only on sparsity, but also on mass-matrix structure, factorization, block-encoding normalization, state preparation, and the observable to be measured.  The most promising discretization for a quantum algorithm may therefore not be the one with the fewest classical floating-point operations.  It may instead be the one whose algebraic structure leads to the cleanest block encoding and the most useful quantum output state.

\section*{What to remember}
\addcontentsline{toc}{section}{What to remember}

\begin{itemize}
  \item The homogeneous wave equation becomes a Schr\"odinger equation after
  spatial discretization, mass normalization, and passage to pressure/strain
  and velocity variables.
  \item The generic Hamiltonian-simulation normalization is
  \(\|G_h\|=\Theta(h^{-1})\), so the black-box wave cost scales like
  \(\widetilde O(T/h)\), not \(T/h^2\).
  \item QFT methods can fast-forward special constant-coefficient tensor-grid
  problems because both the eigenbasis and the dispersion relation are known
  explicitly.
  \item Forcing and nonhomogeneous boundary conditions enter through
  Duhamel's formula.  The propagator may be efficient, but coherent loading of
  the time-dependent source is an additional input assumption.
  \item Upwind schemes are stable because they add numerical dissipation.  That
  same dissipative term makes the semidiscrete dynamics nonunitary and leads
  naturally to dilation methods.
\end{itemize}

\section*{Exercises}
\addcontentsline{toc}{section}{Exercises}

\begin{exercise}[Continuous energy conservation]
For the homogeneous wave equation with homogeneous Dirichlet boundary
conditions, differentiate the energy in \eqref{eq:wave-continuous-energy}
and use integration by parts to prove
\[
  \frac{d}{dt}E(t)=0.
\]
\end{exercise}

\begin{exercise}[Finite-difference wave frequencies]
For the one-dimensional matrix \eqref{eq:fd-wave-matrix}, compute the
eigenvalues and verify that \(\|A_h^{1/2}\|=\Theta(h^{-1})\).
\end{exercise}

\begin{exercise}[Gradient factorization]
Show that the first-difference matrix \eqref{eq:dirichlet-gradient-matrix}
satisfies
\[
  D_h^\dag D_h=h^{-2}\operatorname{tridiag}(-1,2,-1).
\]
\end{exercise}

\begin{exercise}[Finite-element energy]
Starting from
\(M_h\ddot{\bm q}_h+K_h\bm q_h=\bm 0\), prove the finite element energy
identity \eqref{eq:fem-wave-energy}.
\end{exercise}

\begin{exercise}[Fourier-mode rotations]
Derive the periodic dispersion relation \eqref{eq:periodic-wave-dispersion}
and the mode rotation \eqref{eq:qft-mode-rotation}.
\end{exercise}

\begin{exercise}[Hermiticity of the wave Hamiltonian]
Verify that \(H_h\) in \eqref{eq:wave-schrodinger-hamiltonian} is Hermitian
and that its Schr\"odinger equation is equivalent to
\eqref{eq:wave-pv-system}.
\end{exercise}

\begin{exercise}[Upwind dissipation]
For scalar advection with positive velocity, show that upwinding equals a
centered difference plus numerical viscosity.
\end{exercise}

\begin{exercise}[Flux splitting]
For symmetric \(A=S\Lambda S^{-1}\), derive the flux splitting
\eqref{eq:A-flux-splitting} and the upwind method \eqref{eq:upwind-system}.
\end{exercise}

\begin{exercise}[Moment identities]
Verify the moment identities for the Schr\"odingerization triple in
\eqref{eq:schrodingerization-moment-triple}.
\end{exercise}

\begin{exercise}[Duhamel--LCU]
Use Duhamel's formula to write a quadrature-based LCU approximation for a
forced wave equation.  Identify which part of the construction is the
homogeneous propagator and which part is the source-state preparation.
\end{exercise}
% Body-only LaTeX file.  It is intended to be incorporated by \include{...}
% into a larger book manuscript.  No documentclass, packages, theorem
% declarations, or macro preamble are included here.

\chapter{Quantum Algorithms for Parabolic PDEs}
\label{chap:parabolic-quantum}

\section[Classical heat discretizations]{Classical heat discretizations}
\label{sec:parabolic-classical}

Parabolic equations are the dissipative counterpart of the elliptic and
hyperbolic problems considered in Chapters~\ref{chap:elliptic-quantum} and
\ref{chap:hyperbolic-quantum}.  We use the heat equation as the model
problem:
\begin{equation}
  u_t(\bm{x},t)-\nabla\cdot\bigl(a(\bm{x})\nabla u(\bm{x},t)\bigr)
  =f(\bm{x},t),
  \qquad \bm{x}\in\Omega,
  \label{eq:heat-model-variable}
\end{equation}
with initial data
\begin{equation}
  u(\bm{x},0)=u_0(\bm{x})
  \label{eq:heat-initial}
\end{equation}
 and, unless stated otherwise, homogeneous Dirichlet boundary conditions at the boundary $\partial \Omega$.  The
coefficient $a(\bm{x})$ is the heat conductivity, or diffusion coefficient.
In this chapter we take it to be scalar and uniformly positive,
\begin{equation}
  0<a_{\min}\le a(\bm{x})\le a_{\max}<\infty,
  \label{eq:heat-uniform-ellipticity}
\end{equation}
although many of the ideas extend to symmetric positive-definite matrix-valued
coefficients.  We write the continuous spatial operator as
\begin{equation}
  L=-\nabla\cdot(a\nabla),
  \label{eq:heat-continuous-operator}
\end{equation}
using the same notation as in Chapter~\ref{chap:elliptic-quantum}.  When
$a(\bm{x})=c(\bm{x})^2$, this is the same self-adjoint positive operator that
appears in the second-order wave equation
\begin{equation}
  u_{tt}+Lu=0 .
\end{equation}
The difference between the three problem classes is which matrix function of
$L$ is applied.  Elliptic problems use
$L^{-1}$, wave problems use oscillatory functions of $L^{1/2}$, and heat
problems use the contraction semigroup
\begin{equation}
  u(t)=e^{-tL}u_0
  \label{eq:heat-semigroup-continuous}
\end{equation}
when $f=0$.

For $f=0$, multiplying \eqref{eq:heat-model-variable} by $u$ and integrating by parts
gives the energy identity
\begin{equation}
  \frac12\frac{d}{dt}\|u(t)\|_{L^2(\Omega)}^2
  +\int_\Omega \nabla u(\bm{x},t)^T
  a(\bm{x})\nabla u(\bm{x},t)\,d\bm{x}=0.
  \label{eq:heat-energy}
\end{equation}
Thus the $L^2$ norm is nonincreasing.  Moreover, since the dissipation term
weights gradients, high-frequency components are
damped more rapidly than low-frequency components.  This smoothing is the
main numerical feature of the heat equation, and it will play both sides of
the ledger in this chapter.  It is helpful for approximation
and spectral filtering: a decaying, nonoscillatory filter admits short
polynomial and Gaussian representations, which is ultimately why the coherent
costs below scale like $\sqrt{T}/h$ rather than $T/h^2$.  But it is also the
source of a quantum output issue:
preparing the normalized state proportional to $e^{-tL}u_0$ can be expensive
if the norm of the physical solution has decayed significantly.  This
tension---the same decay that makes the \emph{operator} cheap makes the
normalized \emph{state} expensive---organizes the whole chapter, and it is
resolved only at the level of the output model, by asking for observables
rather than states.

The organization is as follows.  This section reviews the classical discretizations,
following the finite-difference and
finite-element analysis of parabolic problems in standard texts such as
\cite{LarssonThomee2009,TveitoWinther2005,Thomee2006ParabolicFEM}.
We use Crank--Nicolson as a fully discrete benchmark, and we also set up the
semidiscrete systems, because these are the objects most naturally
fed to QSVT, Gaussian dilation, and Hamiltonian-simulation based primitives.
The remaining sections then examine four quantum routes to the same
semigroup: a history-state QLSA built on Crank--Nicolson
(Section~\ref{sec:heat-cn-history-qlsa}), a direct QSVT implementation of the
exponential filter (Section~\ref{sec:heat-qsvt}), Gaussian dilation through
the first-order factor together with observable-driven output
(Section~\ref{subsec:heat-first-order-dilation}), and QFT-based spectral
implementations with positive-time smoothing
(Section~\ref{sec:heat-qft-spectral}).

\subsection{Finite-difference discretization}
\label{subsec:heat-fd-semidiscrete}

On $\Omega=(0,L_{\rm box})$, we again consider the grid points defined in \eqref{1dgrids}. 
For the constant-coefficient equation $u_t=u_{xx}+f$, the standard
second-order centered finite-difference method in space gives
\begin{equation}
  \dot{\bm{u}}_h(t)=-A_h\bm{u}_h(t)+\bm{f}_h(t),
  \qquad
  A_h=-\Delta_h.
  \label{eq:heat-fd-semidiscrete}
\end{equation}
Here $A_h$ is the same positive tridiagonal matrix introduced in
\eqref{eq:one-d-laplacian}; in several dimensions it is the Kronecker-sum
matrix in \eqref{eq:d-dimensional-kronecker-sum}.  Hence
\begin{equation}
  A_h=A_h^\dag\succeq0,
  \qquad
  \|A_h\|=\Theta(h^{-2}).
  \label{eq:heat-fd-operator-scale}
\end{equation}
The semidiscrete solution is simply expressed as, 
\begin{equation}
  \bm{u}_h(t)=e^{-tA_h}\bm{u}_h(0).
  \label{eq:heat-fd-semigroup}
\end{equation}

For variable conductivity, the conservative finite-difference discretization is
obtained by approximating the flux $a u_x$ at cell edges.  Let
\begin{equation}
  a_{j+1/2}\approx a(x_{j+1/2}).
\end{equation}
Then
\begin{equation}
  (A_h\bm{u})_j
  =\frac{1}{h^2}
  \left[
    a_{j+1/2}(u_j-u_{j+1})
    +a_{j-1/2}(u_j-u_{j-1})
  \right],
  \qquad j=1,\ldots,N_h,
  \label{eq:heat-variable-fd-operator}
\end{equation}
with boundary values inserted according to the boundary condition.  This is
exactly the finite-difference analogue of the weak form: first compute a
finite-difference gradient, multiply by the edge conductivity, and then take a
negative discrete divergence.  The resulting matrix is symmetric positive
semidefinite for nonnegative $a_{j+1/2}$, and positive definite under
homogeneous Dirichlet boundary conditions.  In this notation,
\begin{equation}
  A_h=G_h^\dag W_hG_h,
  \label{eq:heat-variable-fd-factorization}
\end{equation}
where $G_h$ is the first-difference matrix and $W_h$ is the positive diagonal
matrix of edge conductivities, matching the notation used for wave equations.
Equivalently, after replacing $G_h$ by $W_h^{1/2}G_h$, this has the simple
form $A_h=G_h^\dag G_h$.  This is the same divergence-gradient structure used
for elliptic problems in Chapter~\ref{chap:elliptic-quantum}, and it is worth
noting now: the factorization will be leveraged in
Section~\ref{subsec:heat-first-order-dilation}.

Let $\bm{I}_h u(x, t)$ denote the vector of nodal samples of the exact solution,
and use the weighted discrete norm $\|\cdot\|_{\ell_h^2}$ introduced in
\eqref{eq:wave-discrete-l2-norm}.  For sufficiently smooth solutions, stability
of $e^{-tA_h}$ and the $O(h^2)$ consistency of the centered Laplacian yield the representative
semidiscrete estimate
\begin{equation}
  \max_{0\le t\le T}
  \|\bm{u}_h(t)-\bm{I}_h u(\cdot ,t)\|_{\ell^2}
  \le
  C h^2
  \left(
    \|u_0\|_{H^4(\Omega)}
    +\int_0^T\|u_t(x,s)\|_{H^4(\Omega)}\,ds
  \right).
  \label{eq:heat-fd-semidiscrete-error}
\end{equation}
The exact regularity norm is not important here; the important point is that
spatial discretization produces a second-order approximation to the heat
semigroup.

To integrate \eqref{eq:heat-fd-semidiscrete} in time, the operator scale
\eqref{eq:heat-fd-operator-scale} matters: the semidiscrete system can be stiff,
and an explicit method such as forward Euler is stable only under the
restrictive step condition $k=O(h^2)$.  The standard remedy is an implicit,
unconditionally stable method, and Crank--Nicolson is the second-order
representative.  Now let $t_n=nk$ with stepsize $k$ and denote the fully discrete
Crank--Nicolson iterates by
$\bm{u}_h^n$.  Crank--Nicolson applied to
\eqref{eq:heat-fd-semidiscrete} is
\begin{equation}
  \left(I+\frac{k}{2}A_h\right)\bm{u}_h^{n+1}
  =
  \left(I-\frac{k}{2}A_h\right)\bm{u}_h^n
  +k\bm{f}_h^{n+1/2}.
  \label{eq:heat-cn-fd-vector}
\end{equation}
The local truncation error of Crank--Nicolson for a smooth solution of the
semidiscrete ODE is $O(k^3)$ per step, and the centered spatial discretization
has local error $O(h^2)$. Each step of Crank--Nicolson involves a rational function of $A_h$.  The eigenvalues, which determine the homogeneous amplification factor and given by
\begin{equation}
  r(\lambda)=\frac{1-k\lambda/2}{1+k\lambda/2},
  \qquad |r(\lambda)|\le1,
  \qquad \mathrm{since}\quad  \lambda\ge0,
  \label{eq:cn-amplification}
\end{equation}
is bounded by one, the method is unconditionally stable in the discrete
$L^2$ norm.  This also means that the discrete approximation inherits the contraction in time.

Stability converts the accumulated local truncation errors into
the global second-order estimate
\begin{equation}
  \|\bm{u}_h^n-\bm{I}_h u(t_n)\|_{\ell_h^2}
  \le C_T(h^2+k^2),
  \qquad 0\le t_n\le T.
  \label{eq:heat-cn-fd-error}
\end{equation}
Thus one may choose $k=O(h)$ to balance the spatial and temporal errors,
rather than the $k=O(h^2)$ restriction required by an explicit scheme.
The price of this freedom is that each step of
\eqref{eq:heat-cn-fd-vector} requires solving a linear system with the matrix
$I+\tfrac{k}{2}A_h$.  Classically, this is the familiar trade of stiff time
integration.  It is also worth remembering for what follows: when
Section~\ref{sec:heat-cn-history-qlsa} revisits Crank--Nicolson as one global
linear system, these per-step solves are exactly what gets stacked into the
quantum linear-system formulation.

\subsection{Finite-element semidiscretization}
\label{subsec:heat-fem-semidiscrete}

Let $V_h\subset H_0^1(\Omega)$ be the same conforming finite-element space used
in Section~\ref{subsec:fem}.  The weak heat equation is
\begin{equation}
  (u_t(\cdot,t),v)+a(u(\cdot, t),v)=(f(t),v),
  \qquad \forall v\in H_0^1(\Omega),
  \label{eq:heat-weak-form}
\end{equation}
where the bilinear form is the elliptic form from
\eqref{eq:elliptic-bilinear-form}.

Writing
\begin{equation}
  u_h(\bm{x},t)=\sum_{j=1}^{N_h}(\bm{q}_h(t))_j\phi_j(\bm{x}),
  \label{eq:heat-fem-expansion}
\end{equation}
and enforcing the weak form against each basis function $v=\phi_i$ yields the
semidiscrete finite-element equations
\begin{equation}
  M_h\dot{\bm{q}}_h(t)+K_h\bm{q}_h(t)=\bm{b}_h(t).
  \label{eq:heat-fem-semidiscrete}
\end{equation}
The mass matrix $M_h$, stiffness matrix $K_h$, and load vector $\bm{b}_h(t)$
are the same finite-element objects introduced in
\eqref{eq:stiffness-load} and \eqref{eq:fem-mass-matrix}.  For the homogeneous
problem,
\begin{equation}
  \frac12\frac{d}{dt}
  \bigl(\bm{q}_h(t)^\dag M_h\bm{q}_h(t)\bigr)
  +\bm{q}_h(t)^\dag K_h\bm{q}_h(t)=0.
  \label{eq:heat-fem-energy}
\end{equation}
This is the discrete counterpart of \eqref{eq:heat-energy}.

As in the elliptic and wave chapters, we can also introduce mass-normalized
coordinates, so that the generator becomes a single Hermitian matrix acting on
amplitudes:
\begin{equation}
  \bm{y}_h(t)=M_h^{1/2}\bm{q}_h(t),
  \qquad
  A_h=M_h^{-1/2}K_hM_h^{-1/2},
  \qquad
  \bm{f}_h(t)=M_h^{-1/2}\bm{b}_h(t).
  \label{eq:heat-mass-normalization}
\end{equation}
Then
\begin{equation}
  \dot{\bm{y}}_h(t)=-A_h\bm{y}_h(t)+\bm{f}_h(t),
  \qquad A_h=A_h^\dag\succeq0.
  \label{eq:heat-mass-normalized-ode}
\end{equation}
This is the parabolic analogue of the normalized elliptic system
\eqref{eq:mass-normalized-system} and the normalized wave system in
Chapter~\ref{chap:hyperbolic-quantum}.  For quasi-uniform meshes and fixed
polynomial degree, we still have
\begin{equation}
  \|A_h\|=\Theta(h^{-2}).
  \label{eq:heat-fem-normalized-scale}
\end{equation}
Thus finite differences and finite elements arrive at the same abstract
object: a semidiscrete contraction generated by a Hermitian positive matrix of
norm $\Theta(h^{-2})$.  Everything quantum in this chapter starts from this
object.

For continuous piecewise-linear elements and sufficiently smooth solutions,
standard Galerkin theory for parabolic equations gives, for compatible
initial data,
\begin{equation}
  \|u_h(\cdot,t)-u(\cdot,t)\|_{L^2(\Omega)}
  \le
  C h^2
  \left(
    \|u_0\|_{H^2(\Omega)}
    +\int_0^t\|u_t(\cdot,s)\|_{H^2(\Omega)}\,ds
  \right).
  \label{eq:heat-fem-semidiscrete-error}
\end{equation}
More refined estimates use the analytic smoothing of the parabolic semigroup;
we return to this point in Subsection~\ref{subsec:heat-positive-time-smoothing}.

\begin{remark}[Mass normalization and mass lumping]
The consistent mass matrix does not prevent a quantum treatment: a block
encoding of $M_h$ can be combined with QSVT approximations of $M_h^{-1/2}$ to
build a block encoding of $A_h=M_h^{-1/2}K_hM_h^{-1/2}$.  This is the most
literal version of the finite-element discretization.  For an introductory
presentation, however, mass lumping is often simpler.  Replacing $M_h$ by a
positive diagonal matrix $M_{L,h}$ gives
\begin{equation}
  A_{L,h}=M_{L,h}^{-1/2}K_hM_{L,h}^{-1/2},
  \label{eq:heat-lumped-normalized}
\end{equation}
which preserves the local sparsity pattern of $K_h$ and avoids an additional
matrix-function step.  This is the same simplification used for elliptic
eigenproblems and for wave equations in earlier chapters.
\end{remark}

\subsection{Forcing and boundary data through Duhamel's principle}
\label{subsec:heat-duhamel}

Both finite differences and finite elements reduce the inhomogeneous problem
to
\begin{equation}
  \dot{\bm{y}}_h(t)=-A_h\bm{y}_h(t)+\bm{f}_h(t).
  \label{eq:heat-inhomogeneous-ode}
\end{equation}
Variation of constants gives
\begin{equation}
  \bm{y}_h(T)
  =e^{-TA_h}\bm{y}_h(0)
  +\int_0^T e^{-(T-s)A_h}\bm{f}_h(s)\,ds.
  \label{eq:heat-duhamel}
\end{equation}
After quadrature, the source contribution becomes
\begin{equation}
  \int_0^T e^{-(T-s)A_h}\bm{f}_h(s)\,ds
  \approx
  \sum_{m=1}^{N_q}w_m e^{-(T-s_m)A_h}\bm{f}_h(s_m).
  \label{eq:heat-duhamel-quadrature}
\end{equation}
This expression has the form of an LCU of homogeneous heat propagators
applied to time-dependent source vectors.  The word ``LCU'' hides an access
assumption: one must coherently prepare the quadrature index, the weights, and
normalized source states $\ket{\bm{f}_h(s_m)}=\bm{f}_h(s_m)/\|\bm{f}_h(s_m)\|$,
with the norms loaded into the LCU amplitudes, or otherwise embed the source
into an autonomous enlarged system.  The same issue appeared for body forces in the
wave chapter.  Because the source term enters every route in this uniform way,
the rest of the chapter concentrates on the homogeneous propagator
$e^{-TA_h}$; the reader should keep in mind that each construction below can
be wrapped in a Duhamel LCU of exactly this form.

\section{Crank--Nicolson as a quantum linear system}
\label{sec:heat-cn-history-qlsa}

The previous section treated Crank--Nicolson as a classical time integrator.
There is also a quantum linear-system viewpoint: stack all time levels into a
single vector and solve one large sparse linear system.  The appeal is
conceptual economy---time evolution becomes a single linear-algebra object,
and the QLSA machinery of Chapter~\ref{chap:elliptic-quantum} applies
verbatim.  This history-state or
clock-state idea goes back to quantum algorithms for linear ODEs by Berry and
collaborators and appears in refined form in later work, including Krovi's
ODE algorithm \cite{Berry2014HighOrderODE,BerryChildsOstranderWang2017,Krovi2023ImprovedDE}.
It is also the most general route in this chapter: nothing in it uses
self-adjointness or positivity, which is why it remains available for the
nonnormal problems mentioned in the outlook.  The question we pursue here is
what this generality costs for the heat equation.

For the homogeneous recurrence, define
\begin{equation}
  D_+=I+\frac{k}{2}A_h,
  \qquad
  D_-=I-\frac{k}{2}A_h.
  \label{eq:heat-cn-Dpm}
\end{equation}
Let $\bm y_h^n$ denote the mass-normalized Crank--Nicolson iterate.  Stacking
\begin{equation}
  \bm{Y}
  =
  \begin{bmatrix}
    \bm{y}_h^0\\
    \bm{y}_h^1\\
    \vdots\\
    \bm{y}_h^{N_t}
  \end{bmatrix},
  \qquad N_tk=T,
  \label{eq:heat-cn-history-vector}
\end{equation}
gives the block lower-bidiagonal system
\begin{equation}
  \mathcal L_{\rm CN}\bm{Y}
  =
  \begin{bmatrix}
    \bm{y}_h^0\\0\\\vdots\\0
  \end{bmatrix},
  \qquad
  \mathcal L_{\rm CN}
  =
  \begin{bmatrix}
    I&&&\\
    -D_-&D_+&&\\
    &-D_-&D_+&\\
    &&\ddots&\ddots
  \end{bmatrix}.
  \label{eq:heat-cn-history-system}
\end{equation}
If $A_h$ is sparse, then $\mathcal L_{\rm CN}$ is sparse on the product of the
clock register and the spatial register.  A QLSA can be implemented by
block-encoding this matrix and applying the inverse filter described in
\eqref{eq:qsvt-inverse-approximation} of Chapter~\ref{chap:basic-elements}.

The conditioning of \eqref{eq:heat-cn-history-system} reflects both time and
space.  A quick way to see the two contributions is to look at the block
forward substitution.  The matrix itself has block norm
\[
  \|\mathcal L_{\rm CN}\|
  \le
  O\bigl(\|D_+\|+\|D_-\|+1\bigr)
  =
  O(1+k\|A_h\|).
\]
On the other hand, solving the homogeneous recurrence gives
\[
  \bm y_h^n=R_k^n\bm y_h^0,
  \qquad
  R_k:=D_+^{-1}D_- .
\]
Since \(A_h\succeq0\), the eigenvalues of \(R_k\) are
\[
  r_k(\lambda)=\frac{1-k\lambda/2}{1+k\lambda/2},
  \qquad |r_k(\lambda)|\le1,
\]
so the propagator part is contractive.  The inverse of the block
lower-triangular clock matrix nevertheless contains a causal sum over
\(N_t+1\) time levels.  This gives the simple estimate
\begin{equation}
  \kappa(\mathcal L_{\rm CN})
  =
  O\bigl((N_t+1)(1+k\|A_h\|)\bigr).
  \label{eq:heat-cn-history-condition}
\end{equation}
This is the same type of evolution-norm bookkeeping that appears in quantum
algorithms for linear ODEs, including the analyses of Berry et al. and Krovi:
the clock system sees both the accumulated time interval and the scale of the
linear generator.

One may formally left-precondition each time-step equation by \(D_+^{-1}\).
Then the recurrence becomes
\[
  \bm y_h^{n+1}=R_k\bm y_h^n,
  \qquad R_k=D_+^{-1}D_-,
\]
and the clock matrix has \(O(N_t)\) conditioning because \(R_k\) is a
contraction.  But this statement has moved the spatial difficulty into the
preconditioner: implementing \(D_+^{-1}\) is itself a linear-system problem
with
\[
  \kappa(D_+)
  =
  \frac{1+k\lambda_{\max}(A_h)/2}
       {1+k\lambda_{\min}(A_h)/2}
  =
  O(1+k\|A_h\|).
\]
Thus a preconditioned clock description is analytically cleaner, but the
parabolic scale \(\|A_h\|=\Theta(h^{-2})\) has not disappeared.  The lesson is
a kind of conservation of difficulty: the clock formulation lets one shift the
parabolic stiffness between the clock matrix and its preconditioner, but no
purely algebraic rearrangement removes it.

The QLSA output is a normalized history state proportional to
\begin{equation}
  \sum_{n=0}^{N_t}\ket{n}_{\rm clock}\ket{\bm{y}_h^n}.
  \label{eq:heat-cn-history-state}
\end{equation}
For the homogeneous heat equation, Crank--Nicolson is unconditionally stable
and the amplification factor satisfies \(|r_k(\lambda)|\le1\).  Therefore
\[
  \|\bm y_h^n\|_2\le \|\bm y_h^0\|_2,
  \qquad n=0,\ldots,N_t.
\]
The largest time slice is the initial one, not an unknown intermediate time.
Define the discrete output ratio
\begin{equation}
  g_T^{\rm CN}
  :=
  \frac{\|\bm{y}_h^0\|_2}
       {\|\bm{y}_h^{N_t}\|_2}.
  \label{eq:heat-cn-output-ratio}
\end{equation}
Then the probability of measuring the final clock value satisfies
\begin{equation}
  p_{N_t}
  =
  \frac{\|\bm{y}_h^{N_t}\|_2^2}
       {\sum_{n=0}^{N_t}\|\bm{y}_h^n\|_2^2}
  \ge
  \frac{1}{(N_t+1)(g_T^{\rm CN})^2}.
  \label{eq:heat-cn-final-clock-prob}
\end{equation}
The bound exposes two obstructions, and they are of different kinds.  The
factor \(N_t+1\) is clock dilution: the answer is spread uniformly over the
time register.  The factor \((g_T^{\rm CN})^2\) is physical decay: the
final-time slice is genuinely small.
Padding the clock with additional copies of the final state can make the
clock probability constant \cite{Krovi2023ImprovedDE}, but the amount of padding grows with the same
norm ratio; padding cures the dilution, not the decay.  In the limit
\(k\to0\), \(g_T^{\rm CN}\) is the discrete analogue
of the heat output ratio \(g_T\) in \eqref{eq:heat-output-norm-ratio}.  Even
though the heat semigroup is stable,
\begin{equation}
  \sup_{0\le t\le T}\|e^{-tA_h}\|\le1,
  \qquad \|A_h\|=\Theta(h^{-2}),
  \label{eq:heat-ode-stability-parameters}
\end{equation}
the final norm may be much smaller than the initial norm.

Combining the clock conditioning, spatial scale, and output ratio gives the
schematic history-state cost
\begin{equation}
  \widetilde O\left(
    g_T^{\rm CN}\,T\|A_h\|\,\polylog(1/\epsilon)
  \right)
  =
  \widetilde O\left(
    g_T^{\rm CN}\,\frac{T}{h^2}\,\polylog(1/\epsilon)
  \right),
  \label{eq:heat-krovi-schematic-cost}
\end{equation}
up to sparsity, state-preparation, and clock-padding factors.  This approach
is conceptually important and applies to nonnormal or inhomogeneous ODEs.  For
a self-adjoint heat semigroup, however, QSVT and Gaussian dilation exploit
positivity and smoothing more directly, and the next two sections show that
they replace the factor \(T/h^2\) by its square root.

\section{QSVT implementation of the heat semigroup}
\label{sec:heat-qsvt}

The homogeneous semidiscrete heat problem asks us to apply
\begin{equation}
  E_T:=e^{-TA_h},
  \qquad A_h=A_h^\dag\succeq0.
  \label{eq:heat-semigroup-discrete}
\end{equation}
Suppose that \(A_h/\alpha_A\) has a normalized Hermitian block encoding, with
\(\alpha_A\ge \|A_h\|\).  Set
\begin{equation}
  X_h:=A_h/\alpha_A,
  \qquad
  \tau:=T\alpha_A .
  \label{eq:heat-qsvt-normalized-generator}
\end{equation}
Then \(\operatorname{spec}(X_h)\subset[0,1]\), and
\[
  e^{-TA_h}=e^{-T X_h}.
\]

At first sight this looks like the scalar function
\[
  x\mapsto e^{-\tau x},\qquad x\in[0,1].
\]
There is a subtle but important QSVT point here.  The physical spectrum of
\(X_h\) lies in \([0,1]\), but an admissible QSVT polynomial must be bounded
on the full signal interval \([-1,1]\).  The function \(e^{-T x}\) is
bounded by one on \([0,1]\), but it grows to \(e^\tau\) on the negative half
of the signal interval.  Thus one cannot use an arbitrary Chebyshev
approximation on \([0,1]\) without controlling its extension to \([-1,1]\).

The standard remedy \cite{GilyenSuLowWiebe2019} is to reverse the normalized spectral variable.  Define
\begin{equation}
  B_h:=I-X_h=I-\frac{A_h}{\alpha_A}.
  \label{eq:shifted-B}
\end{equation}
No independent oracle for \(B_h\) is assumed: starting from the Hermitian
block encoding of \(X_h\), the affine map \(x\mapsto1-x\) is incorporated by
the standard affine interval transformation for block encodings, with no
change in the asymptotic query complexity.  Now we have 
\(\operatorname{spec}(B_h)\subset[0,1]\),
\begin{equation}
  e^{-TA_h} =e^{-T (I-B_h)}.
  \label{eq:qsvt-shifted-exp}
\end{equation}
The relevant scalar function is now
\begin{equation}
 f(x)= e^{-T(1-x)},
  \qquad x\in[-1,1].
  \label{eq:qsvt-shifted-scalar-function}
\end{equation}
This shifted exponential satisfies
\(
  e^{-2T}\le f(x)\le1
  \quad\text{for all }x\in[-1,1].
\)
It is therefore compatible with the global boundedness requirement of QSVT.

 The standard QSVT machinery \cite[Corollary~64]{GilyenSuLowWiebe2019} produces a real polynomial
\(p_{\tau,\epsilon}\) satisfying
\begin{equation}
  |p_{T,\epsilon}(x)|\le1\quad (x\in[-1,1]),
  \qquad
  |p_{T,\epsilon}(x)-e^{-T(1-x)} |\le\epsilon
  \quad (x\in[0,1]).
  \label{eq:heat-qsvt-bounded-subinterval-poly}
\end{equation}
This is the same square-root exponential approximation used in the QSVT
framework for imaginary-time evolution and Gibbs filtering
\cite{GilyenSuLowWiebe2019}.  Approximation-theoretic refinements give
closely related optimal degree estimates
\cite{AggarwalAlman2022Exponentials}.  The degree obeys
\begin{equation}
  m
  =
  O\left(
    \sqrt{\max\{T,\log(1/\epsilon)\}\log(1/\epsilon)}
  \right),
  \label{eq:heat-exp-degree}
\end{equation}
or, in the diffusion-dominated regime \(\tau\gtrsim\log(1/\epsilon)\),
\begin{equation}
  m
  =
  \widetilde O\!\left(\sqrt{T\log(1/\epsilon)}\right).
  \label{eq:heat-exp-degree-diffusive}
\end{equation}

Where does the square root come from?  A useful heuristic is the endpoint
clustering of Chebyshev oscillations: a polynomial of degree \(m\), bounded by
one on \([-1,1]\), can resolve features of width \(\Theta(1/m^2)\) near the
ends of the interval, because the extrema of \(T_m\) cluster quadratically
there.  The heat filter \(e^{-\tau x}\) is a boundary layer of width
\(\Theta(1/\tau)\) at the endpoint \(x=0\): it drops from \(1\) to \(\epsilon\)
over a spectral window of length \(\log(1/\epsilon)/\tau\) and is essentially
flat afterwards.  Equating the two scales, \(1/m^2\sim1/\tau\), gives
\(m\sim\sqrt{\tau}\).  This heuristic also explains why no such
saving occurs for the oscillatory filter \(e^{-itx}\) of Hamiltonian
simulation: that function has features of size \(1/t\) \emph{everywhere} in
the interval, not only in an endpoint layer, and its polynomial degree is
therefore proportional to \(t\).

Therefore QSVT produces a block encoding of \(E_T=e^{-TA_h}\) using
\begin{equation}
  Q_{\rm QSVT}
  =
  \widetilde O\left(
    \sqrt{T\alpha_A\log(1/\epsilon)}
    +\log(1/\epsilon)
  \right)
  \label{eq:qsvt-heat-cost}
\end{equation}
queries to the block encoding of \(A_h/\alpha_A\).  Since
\(\alpha_A=\Theta(h^{-2})\) for second-order finite differences and
mass-normalized finite elements,
\begin{equation}
  Q_{\rm QSVT}
  =
  \widetilde O\left(
    \frac{\sqrt T}{h}\sqrt{\log(1/\epsilon)}
    +\log(1/\epsilon)
  \right).
  \label{eq:qsvt-heat-cost-h}
\end{equation}
This square-root scale  comes from the fact that the heat semigroup
is a decaying, nonoscillatory filter.  This also explains why the parabolic
problem behaves differently from generic Hamiltonian simulation, where the
oscillatory function \(e^{-itx}\) normally has degree proportional to \(t\).

The estimate above is a coherent operator cost.  It is not automatically the
cost of preparing the normalized final heat state.  If
\begin{equation}
  \ket{\bm{y}_0}=\frac{\bm{y}_h(0)}{\|\bm{y}_h(0)\|_2},
  \qquad
  p_T=\|E_T\ket{\bm{y}_0}\|_2^2,
  \label{eq:heat-state-prep-success}
\end{equation}
then postselecting the block-encoding ancilla succeeds with probability
\(p_T\).  Preparing the normalized state
\(E_T\ket{\bm{y}_0}/\|E_T\ket{\bm{y}_0}\|\) therefore requires an additional
factor
\begin{equation}
  g_T:=p_T^{-1/2}
  =\frac{\|\bm{y}_h(0)\|_2}{\|\bm{y}_h(T)\|_2}.
  \label{eq:heat-output-norm-ratio}
\end{equation}
Under homogeneous Dirichlet boundary conditions, the smallest eigenvalue is
bounded below, \(\lambda_{1,h}=\Theta(1)\), so long-time heat flow can make
\(g_T\) exponentially large in \(T\).  This is the central distinction between
implementing the heat operator and producing a normalized heat-solution state.
Section~\ref{subsec:heat-direct-overlap} shows how to sidestep the factor
\(g_T\) entirely when the desired output is a scalar observable rather than
the state itself.

\section{First-order Hermitian dilation and Gaussian transmutation}
\label{subsec:heat-first-order-dilation}

Recall that our discretization in space has the factorization property
\begin{equation}
  A_h=G_h^\dag G_h,
  \label{eq:heat-factorization}
\end{equation}
as in the elliptic factorization \eqref{eq:elliptic-general-factorization},
and $\|G_h\|=\Theta(h^{-1})$ for standard second-order discretizations.

As in the elliptic and wave chapters, the Hermitian dilation is the right
device:
\begin{equation}
  \mathcal D(G_h)
  =
  \begin{bmatrix}
    0&G_h^\dag\\
    G_h&0
  \end{bmatrix}.
  \label{eq:heat-G-dilation}
\end{equation}
Then
\begin{equation}
  \mathcal D(G_h)^2
  =
  \begin{bmatrix}
    A_h&0\\
    0&G_hG_h^\dag
  \end{bmatrix},
  \label{eq:heat-G-square}
\end{equation}
so $A_h$ appears as the upper-left block: squaring the first-order dilation
reproduces the second-order operator, just as squaring the wave Hamiltonian
recovered the stiffness matrix in Chapter~\ref{chap:hyperbolic-quantum}.
The dilation is Hermitian, so $e^{-is\mathcal D(G_h)}$ is a unitary that
Hamiltonian simulation can implement.  What is still missing is a way to
assemble the \emph{decaying} function $e^{-T\mathcal D(G_h)^2}$ out of these
oscillatory unitaries.  Classical Fourier analysis gives the simple recipe: the Gaussian
is, up to scaling, its own Fourier transform.  The key observation is the
Gaussian Fourier identity
\begin{equation}
  e^{-Tx^2}
  =\frac{1}{2\sqrt{\pi T}}
  \int_{-\infty}^{\infty}
  e^{-s^2/(4T)}e^{-isx}\,ds
  \label{eq:gaussian-fourier-identity}
\end{equation}
therefore gives, by the spectral theorem,
\begin{equation}
  e^{-T\mathcal D(G_h)^2}
  =\frac{1}{2\sqrt{\pi T}}
  \int_{-\infty}^{\infty}
  e^{-s^2/(4T)}e^{-is\mathcal D(G_h)}\,ds.
  \label{eq:gaussian-matrix-identity}
\end{equation}
The upper-left block of the left-hand side is $e^{-TA_h}$.  Thus heat flow is
represented as a Gaussian average of unitary wave evolutions generated by the
first-order operator $\mathcal D(G_h)$.  Formulas of this type are classical
in PDE theory, where they express the heat kernel as a Gaussian average of
wave kernels; the quantum literature has rediscovered and operationalized
them.  This construction is closely related to the
Gaussian-LCHS construction of Kharazi et al. and to the transmutation approach
of Jin, Ma, and Zuazua
\cite{KharaziAlkadriMandadapuWhaley2026ReactionRates,JinMaZuazua2026Transmutation}.
It is also an instance of the broader dilation viewpoint introduced in
Chapter~\ref{chap:hyperbolic-quantum}: after diagonalizing or quadraturing the
ancillary Gaussian variable, the heat semigroup is represented as an LCU of
unitary evolutions, much like \eqref{eq:dilation-lcu-integral}.

Truncating and discretizing the integral gives an LCU
\begin{equation}
  e^{-T\mathcal D(G_h)^2}
  \approx
  \sum_{m=1}^{N_q}c_m(T)e^{-is_m\mathcal D(G_h)},
  \label{eq:gaussian-lcu}
\end{equation}
where
\begin{equation}
  \sum_m |c_m(T)|=O(1),
  \qquad
  \max_m |s_m|=O\left(\sqrt{T\log(1/\epsilon)}\right).
  \label{eq:gaussian-quadrature-scales}
\end{equation}
These two scales tell the whole story.  The weights have $O(1)$ total mass, so
the LCU incurs no normalization penalty, and the only remaining postselection loss is the physical decay of the heat solution. The longest Hamiltonian evolution in the integral runs
only to time $O(\sqrt{T\log(1/\epsilon)})$, because that is where the Gaussian
window has decayed below the tolerance.  The diffusive square root, which
appeared in Section~\ref{sec:heat-qsvt} as a polynomial degree, reappears here
as the width of a Gaussian.
If $\mathcal D(G_h)/\alpha_G$ has a block encoding with
$\alpha_G=\Theta(h^{-1})$, Hamiltonian simulation and LCU yield
\begin{equation}
  Q_{\rm Gauss}
  =
  \widetilde O\left(
    \alpha_G\sqrt{T\log(1/\epsilon)}
    +\log(1/\epsilon)
  \right)
  =
  \widetilde O\left(
    \frac{\sqrt T}{h}\sqrt{\log(1/\epsilon)}
  \right).
  \label{eq:gaussian-heat-cost}
\end{equation}
Thus direct QSVT and Gaussian dilation have the same leading coherent
scaling.  The advantage of Gaussian dilation is not an asymptotic improvement
under identical access assumptions; it is an alternative realization that may
be natural when the first-order operator $G_h$ or its dilation is easier to
simulate than the second-order operator $A_h$ is to block encode.

\paragraph{A heuristic comparison with classical Crank--Nicolson.}
It is useful to compare \eqref{eq:gaussian-heat-cost} with a classical
Crank--Nicolson computation while keeping the physical parameters visible.
On a \(d\)-dimensional domain of characteristic side length \(L_{\rm box}\),
a mesh with spacing \(h\) has
\[
  \left(\frac{L_{\rm box}}{h}\right)^d
\]
spatial degrees of freedom, up to constants depending on the element type,
boundary treatment, and geometry.

A Crank--Nicolson step for the mass-normalized heat equation has the form
\[
  \left(I+\frac{k}{2}A_h\right)\bm y_h^{n+1}
  =
  \left(I-\frac{k}{2}A_h\right)\bm y_h^n .
\]
The method is unconditionally stable, but accuracy still constrains the time
step.  In nondimensional variables, the standard second-order estimate has
the form
\[
  O(h^2+k^2).
\]
Thus a balanced choice is \(k=\Theta(h)\), giving \(O(T/h)\) time steps.  If
one keeps physical units, the same statement should be read after
nondimensionalizing time by the diffusive scale \(L_{\rm box}^2/a_{\rm ref}\).

The cost of each Crank--Nicolson step is the cost of solving
\[
  \left(I+\frac{k}{2}A_h\right)\bm z=\bm r .
\]
Since
\[
  \|A_h\|=\Theta(a_{\max}h^{-2}),
\]
the step matrix has condition number bounded schematically by
\[
  \kappa\!\left(I+\frac{k}{2}A_h\right)
  \lesssim
  1+k\|A_h\|
  =
  O\!\left(1+\frac{k\,a_{\max}}{h^2}\right).
\]
Thus a linear solver still feels the parabolic spatial scale.
With multigrid, BPX, or another near-optimal elliptic preconditioner, each
implicit solve can be performed in essentially
\[
  O\!\left(\left(\frac{L_{\rm box}}{h}\right)^d\right)
\]
work, up to logarithmic factors and coefficient-dependent constants.  With
\(k=\Theta(h)\) in nondimensional units, the full classical
Crank--Nicolson computation therefore costs
\[
  \widetilde O\!\left(
  \frac{T}{h}
  \left(\frac{L_{\rm box}}{h}\right)^d
  \right)
  =
  \widetilde O\!\left(
  T\,L_{\rm box}^d\,h^{-d-1}
  \right)
  \]
in nondimensional variables.

The quantum cost in \eqref{eq:gaussian-heat-cost} is of a different type. It
is the coherent cost of applying the heat semigroup to an amplitude-encoded
state:
\[
  Q_{\rm heat}
  =
  \widetilde O\!\left(
  \sqrt{T\|A_h\|}
  \right)
  =
  \widetilde O\!\left(
  \frac{\sqrt{T}}{h}
  \right).
\]

\section{Direct estimation of a PDE quantity of interest}
\label{subsec:heat-direct-overlap}

As alluded to in previous sections, for dissipative dynamics, producing the normalized final state is often not
the actual computational goal.  A PDE calculation usually asks for a scalar output:
a heat content, a regional average, a reaction rate, or a weighted observable.
Let
\begin{equation}
  \mathcal Q(T)
  =\bm{\ell}_h^\dag\bm{y}_h(T)
  =\bm{\ell}_h^\dag e^{-TA_h}\bm{y}_h(0),
  \label{eq:heat-qoi}
\end{equation}
be the linear quantity expressed at the discrete level.

Introduce normalized states
\begin{equation}
  \ket{\bm{\psi}}
  =\frac{\bm{y}_h(0)}{\|\bm{y}_h(0)\|_2},
  \qquad
  \ket{\bm{\phi}}
  =\frac{\bm{\ell}_h}{\|\bm{\ell}_h\|_2}.
  \label{eq:heat-input-observable-states}
\end{equation}
Then
\begin{equation}
  \mathcal Q(T)
  =\|\bm{\ell}_h\|_2\,\|\bm{y}_h(0)\|_2
  \bra{\bm{\phi}}e^{-TA_h}\ket{\bm{\psi}}.
  \label{eq:heat-qoi-overlap}
\end{equation}
The factorization performs a clean division of labor: every mesh- and
problem-dependent magnitude is an explicit classical prefactor, known in
advance, while the quantum computer is asked only for a dimensionless matrix
element of modulus at most one.  This is the
same lesson emphasized by direct-overlap and observable-driven heat algorithms,
and by the end-to-end analysis of Linden, Montanaro, and Shao
\cite{LiHeatGaussianDilation,KharaziAlkadriMandadapuWhaley2026ReactionRates,LindenMontanaroShao2022Heat}.

Once either QSVT or Gaussian dilation has produced a block encoding of
$e^{-TA_h}$, the matrix element in \eqref{eq:heat-qoi-overlap} can be
estimated by the Hadamard test or by amplitude estimation, as reviewed in
Chapter~\ref{chap:basic-elements}.  If the target physical additive error is
$\epsilon$, then the normalized matrix element $\bra{\bm{\phi}}e^{-TA_h}\ket{\bm{\psi}}$ must be estimated to accuracy
\begin{equation}
  \frac{\epsilon}{\|\bm{\ell}_h\|_2\,\|\bm{y}_h(0)\|_2}.
  \label{eq:heat-qoi-normalized-error}
\end{equation}
Plain sampling squares this inverse accuracy, while amplitude estimation uses
it linearly.  If $C_E$ denotes the coherent cost of one block encoding of the
heat semigroup, the amplitude-estimation cost is schematically
\begin{equation}
  \widetilde O\left(
    C_E\,
    \frac{\|\bm{\ell}_h\|_2\,\|\bm{y}_h(0)\|_2}{\epsilon}
  \right),
  \label{eq:heat-direct-overlap-complexity}
\end{equation}
plus state-preparation costs for $\ket{\bm{\psi}}$ and $\ket{\bm{\phi}}$.
Direct sampling replaces the final factor by its square.  Crucially, neither
approach needs to prepare
\begin{equation}
  \frac{e^{-TA_h}\ket{\bm{\psi}}}{\|e^{-TA_h}\ket{\bm{\psi}}\|}.
\end{equation}
Therefore direct estimation can avoid the exponentially large $g_T$ factor in
\eqref{eq:heat-output-norm-ratio} when the desired output is an additive
physical observable.  This resolves, at the level of the output model, the
tension announced at the start of the chapter: the decay of the solution is
absorbed into a small matrix element to be estimated, rather than into a small
postselection probability to be amplified.

The work of Linden, Montanaro, and Shao illustrates why this distinction is
not cosmetic.  For their heat-content benchmark, a QLSA route based on a
history-state formulation is not asymptotically faster than the best classical
methods, while the strongest quantum improvement comes from applying amplitude
estimation to a carefully chosen stochastic representation
\cite{LindenMontanaroShao2022Heat}.  In other words, end-to-end heat-equation
complexity is controlled not only by the coherent propagation primitive, but
also by discretization, normalization, and measurement.

\section{QFT methods and positive-time spectral smoothing}
\label{sec:heat-qft-spectral}

For constant coefficients on periodic tensor-product grids, the Fourier basis
provides an explicit diagonalization.  Chapter~\ref{chap:elliptic-quantum}
already introduced this structure for Poisson-type problems.  With
\(\mathcal F_h=F_{N_{\rm cell}}^{\otimes d}\),
\begin{equation}
  A_h=\mathcal F_h^\dag\Lambda_h\mathcal F_h,
  \qquad
  \Lambda_h=\operatorname{diag}(\lambda_{\bm{k}}),
  \label{eq:heat-qft-diagonalization}
\end{equation}
and, by \eqref{eq:periodic-discrete-symbol},
\begin{equation}
  \lambda_{\bm{k}}
  =
  \frac{4}{h^2}
  \sum_{r=1}^d
  \sin^2\left(\frac{\pi k_r}{N_{\rm cell}}\right),
  \qquad
  \bm k=(k_1,\ldots,k_d).
  \label{eq:heat-qft-symbol}
\end{equation}

Here $N_{\rm cell}$ denotes the number of periodic grid points per coordinate direction. 

Thus each Fourier coefficient decays independently:
\begin{equation}
  \widehat y_{\bm k}(T)
  =
  e^{-T\lambda_{\bm k}}\widehat y_{\bm k}(0).
  \label{eq:heat-qft-mode-decay}
\end{equation}
Equivalently,
\begin{equation}
  e^{-TA_h}
  =
  \mathcal F_h^\dag
  \operatorname{diag}\left(e^{-T\lambda_{\bm{k}}}\right)
  \mathcal F_h.
  \label{eq:heat-qft-semigroup}
\end{equation}
A QFT implementation applies \(\mathcal F_h\), computes
\(\lambda_{\bm{k}}\) coherently by reversible arithmetic, performs a controlled
rotation with amplitude \(e^{-T\lambda_{\bm{k}}}\), uncomputes the arithmetic,
and applies the inverse QFT.  Explicit Fourier-space circuits for heat,
advection, wave, and Poisson equations are developed in
\cite{LubaschKikuchiWrightMcKeever2025FourierPDE}.

The formula \eqref{eq:heat-qft-mode-decay} makes the parabolic smoothing
visible.  Low frequencies, for which \(\lambda_{\bm k}T\ll1\), are transmitted
almost unchanged.  High frequencies, for which \(\lambda_{\bm k}T\gg1\), are
suppressed.  This is the same physical mechanism behind the square-root
complexity in QSVT and Gaussian dilation.  The largest discrete frequency is
\(\lambda_{\max}(A_h)=\Theta(h^{-2})\), so the dimensionless diffusion time is
\(\tau=T\|A_h\|=\Theta(T/h^2)\).  Bounded polynomial or Gaussian
representations of the heat filter have degree or propagation time
\[
  \widetilde O(\sqrt{\tau})
  =
  \widetilde O(\sqrt T/h),
\]
rather than \(\widetilde O(\tau)=\widetilde O(T/h^2)\).  In this sense, heat
smoothing gives a quadratic improvement over a naive time-marching or
Taylor-series scale.

It is useful to separate two favorable effects.  First, the QFT diagonalizes
the operator, so the matrix action becomes scalar arithmetic on the mode
register.  If the sine and exponential in \eqref{eq:heat-qft-symbol} are
computed explicitly and reversibly, the coherent gate count for the diagonal
filter can be polylogarithmic in the number of grid points, apart from the
arithmetic precision and postselection.  Second, even when the diagonal
filter is implemented through a generic polynomial, Chebyshev, or Gaussian
construction, the smoothing of the heat semigroup gives the
\(\widetilde O(\sqrt T/h)\) scale.  The first effect uses separability and
explicit spectral formulas; the second effect is a general property of the
decaying exponential.

The price of normalized-state preparation remains.  If
\begin{equation}
  \ket{\bm{y}_0}=\sum_{\bm{k}}\widehat y_{\bm{k}}\ket{\bm{k}}
\end{equation}
in Fourier space, then the postselection probability of the diagonal heat
filter is
\begin{equation}
  p_T^{\rm QFT}
  =
  \sum_{\bm{k}}|\widehat y_{\bm{k}}|^2 e^{-2T\lambda_{\bm{k}}}
  =
  \|e^{-TA_h}\ket{\bm{y}_0}\|_2^2.
  \label{eq:heat-qft-success}
\end{equation}
Thus QFT diagonalization simplifies the operator implementation, but it does
not remove the physical decay of the solution.

\subsection{Positive-time smoothing and spectral truncation}
\label{subsec:heat-positive-time-smoothing}

So far, parabolic smoothing has mostly appeared on the cost side of the
ledger, through the decayed output norm.  Here it becomes an asset: smoothing
reduces the spatial resolution that the problem actually requires, an effect
that is visible to PDE analysis but invisible to a purely algebraic
condition-number discussion.  At positive
time, high modes have already been damped.  From
\eqref{eq:heat-qft-mode-decay}, modes satisfying
\begin{equation}
  \lambda_{\bm{k}}
  \ge
  \lambda_{\rm cut}
  :=
  \frac{1}{T}\log\frac{1}{\epsilon}
  \label{eq:heat-qft-lambda-cut}
\end{equation}
contribute at most \(\epsilon\) in operator norm.  On a periodic grid, the
small-frequency expansion of \eqref{eq:heat-qft-symbol} gives
\[
  \lambda_{\bm k}\approx 4\pi^2|\bm k|^2/L_{\rm box}^2 .
\]
Thus the relevant Fourier window at time \(T\) is approximately
\begin{equation}
  |\bm k|
  \lesssim
  \frac{L_{\rm box}}{2\pi}
  \sqrt{\frac{1}{T}\log\frac{1}{\epsilon}}.
  \label{eq:heat-effective-fourier-window}
\end{equation}
This is the spectral form of parabolic smoothing: the longer the heat equation
runs, the fewer high-frequency modes survive above a fixed tolerance.

The same phenomenon appears in finite-element error estimates.  In addition
to uniform-in-time \(O(h^2)\) estimates for smooth solutions,
analytic-semigroup smoothing gives a positive-time operator estimate of the
form
\begin{equation}
  \|E(T)-E_h(T)P_h\|_{L^2\to L^2}
  \le
  C\min\left\{1,\frac{h^2}{T}\right\},
  \qquad T>0,
  \label{eq:heat-positive-time-error}
\end{equation}
where \(E(T)=e^{-TL}\), \(E_h(T)\) is the semidiscrete heat semigroup, and
\(P_h\) is typically the \(L^2\) projection onto the finite-element space.
Such positive-time estimates are standard in parabolic finite-element theory
\cite[Ch.~3]{Thomee2006ParabolicFEM}. This positive-time estimate is useful when \(T\) is fixed away from zero; it
does not justify using a mesh coarser than the physical feature scale or the
domain geometry permits.

If the physical output is measured at a fixed
positive time, \eqref{eq:heat-positive-time-error} may allow a coarser mesh
than a uniform-in-time worst-case estimate would suggest.  Balancing
\(h^2/T\approx\epsilon\) gives
\begin{equation}
  h\approx \sqrt{T\epsilon}.
  \label{eq:heat-positive-time-h-choice}
\end{equation}
Substituting this into the coherent scale \(\sqrt T/h\) gives an
\(O(\epsilon^{-1/2})\) dependence for the semigroup block-encoding part,
whereas choosing \(h=O(\sqrt\epsilon)\) from a time-uniform \(O(h^2)\) estimate
would give the usual \(O(\sqrt T/\sqrt\epsilon)\) scaling.  Note that the
\(T\)-dependence has cancelled entirely in the first case: running the
equation longer permits a coarser mesh at exactly the rate that offsets the
longer diffusion time.  The exact
conclusion for an end-to-end algorithm still depends on the observable norm,
state preparation, and measurement, but the positive-time smoothing estimate
is a useful PDE-level improvement that is invisible in a purely algebraic
condition-number discussion.

For rectangular domains with separable boundary conditions, the same
truncation idea can be implemented with sine or cosine transforms rather than
the periodic QFT.  This is the parabolic analogue of classical FFT, DST, and
DCT solvers for heat equations on boxes.

\section{Summary and outlook}
\label{sec:heat-summary-outlook}

The heat equation reuses the elliptic spatial operators of
Chapter~\ref{chap:elliptic-quantum}, but changes the matrix function from an
inverse to a decaying exponential.  The main conclusions are:

\begin{enumerate}
\item Finite differences and finite elements lead to semidiscrete systems
\begin{equation}
  \dot{\bm{y}}_h=-A_h\bm{y}_h+\bm{f}_h(t),
  \qquad A_h\succeq0,
  \qquad \|A_h\|=\Theta(h^{-2}).
\end{equation}
Crank--Nicolson is the standard second-order fully discrete method; its
unconditional stability permits $k=O(h)$ when the spatial error is $O(h^2)$.

\item A Crank--Nicolson history-state QLSA is possible, but its cost retains
both the clock length and the parabolic spatial scale.  In the self-adjoint
heat setting it is generally less favorable than QSVT or Gaussian dilation,
which exploit the bounded exponential filter more directly.  An open question
is whether sharper history-state analyses or revised preconditioned clock
systems can reduce this gap for parabolic PDEs with sources and boundary data.

\item Direct QSVT implements the heat semigroup with coherent query complexity
\begin{equation}
  \widetilde O(\sqrt{T\|A_h\|})=
  \widetilde O(\sqrt T/h),
\end{equation}
up to logarithmic precision factors.  The square-root dependence follows from
polynomial approximation of a decaying exponential.

\item Gaussian dilation achieves the same coherent scale by representing heat
flow as a Gaussian average of unitary wave evolutions generated by the
first-order dilation $\mathcal D(G_h)$.  It is most attractive when the
first-order factor is the natural access model.

\item QFT methods give an especially transparent spectral implementation for
constant-coefficient tensor-product problems.  They can exploit explicit
symbols, sine/cosine transforms, and positive-time spectral truncation, but
normalized-state preparation still pays for heat decay.

\item Direct estimation of physical quantities of interest can avoid producing
the decayed normalized heat state.  This is often the right end-to-end goal
for parabolic PDEs.
\end{enumerate}

Several issues remain open.

\paragraph{Source terms and boundary conditions.}
Source terms and nonhomogeneous boundary data require coherent Duhamel
constructions, source-history loading, or enlarged autonomous systems.  The
main difficulty is not the variation-of-constants formula itself, but the
quantum access model for the time-dependent forcing states and their
normalization factors.

\paragraph{Nonnormal parabolic problems.}
Convection--diffusion equations, absorbing boundary conditions, and stabilized
discretizations can lead to nonnormal generators.  These problems require
tools beyond self-adjoint spectral filters, such as dilations, singular-value
transformations, pseudospectral estimates, or problem-specific
preconditioning.

\paragraph{History-state QLSA versus semigroup filters.}
The QLSA/history-state approach is broadly applicable because it starts from a
standard time discretization such as Crank--Nicolson.  However, in its basic
form it does not match the coherent complexity of direct QSVT or Gaussian
dilation for heat semigroups.  It remains open whether sharper conditioning
estimates, better preconditioning, or different clock formulations can close
this gap.

\paragraph{Positive-time smoothing and observables.}
The positive-time smoothing estimates used above suggest that, for many
quantities of interest, the mesh need not resolve modes that have already
been damped below the target tolerance.  Extending this observable-driven
viewpoint to general finite elements, systems of parabolic equations, and
stochastic parabolic models is an important direction for future work.

\section{Exercises}

\begin{exercise}[Semidiscrete heat operator]
Starting from the matrix $A_h$ in \eqref{eq:one-d-laplacian}, derive the
semidiscrete heat equation \eqref{eq:heat-fd-semidiscrete} and verify
\(\|A_h\|=\Theta(h^{-2})\).
\end{exercise}

\begin{exercise}[Crank--Nicolson stability]
Derive the Crank--Nicolson amplification factor \eqref{eq:cn-amplification}
and show that its modulus is at most one.  Explain why this unconditional
stability leads to a global $O(k^2)$ temporal error for smooth solutions.
\end{exercise}

\begin{exercise}[Finite-element energy]
Starting from $M_h\dot{\bm q}_h+K_h\bm q_h=0$, derive
\eqref{eq:heat-mass-normalized-ode} and the energy identity
\eqref{eq:heat-fem-energy}.
\end{exercise}

\begin{exercise}[Bounded polynomial for the heat semigroup]
Let $A_h\succeq0$ and suppose that $A_h/\alpha_A$ is block encoded, with
spectrum in $[0,1]$.  Explain why QSVT requires a polynomial that is bounded
on all of $[-1,1]$, not merely on $[0,1]$.  Using
\eqref{eq:heat-qsvt-bounded-subinterval-poly}, explain why a bounded
approximation to $e^{-T A_h}$ has degree
$\widetilde O(\sqrt{T\alpha_A\log(1/\epsilon)})$.
\end{exercise}

\begin{exercise}[Gaussian identity]
Prove the Gaussian identity \eqref{eq:gaussian-fourier-identity} and use the
spectral theorem to derive \eqref{eq:gaussian-matrix-identity}.
\end{exercise}

\begin{exercise}[Clock probability]
For the Crank--Nicolson history state \eqref{eq:heat-cn-history-state}, prove
\eqref{eq:heat-cn-final-clock-prob}.  How much padding is needed to make the
probability of observing the final-time block constant?
\end{exercise}

\begin{exercise}[Positive-time mesh selection]
Use the smoothing estimate \eqref{eq:heat-positive-time-error} to choose $h$
so that the spatial error at a fixed time $T>0$ is at most $\epsilon$.  Compare
this choice with the time-uniform estimate $O(h^2)$.
\end{exercise}

\begin{exercise}[Direct quantity of interest]
Compare the cost of preparing the normalized final heat state with the cost of
directly estimating the quantity of interest in \eqref{eq:heat-qoi-overlap}.
\end{exercise}

% Body-only LaTeX file.  It is intended to be incorporated by \include{...}
% into a larger book manuscript.  No documentclass, packages, theorem
% declarations, or macro preamble are included here.

\chapter{A First Look at Nonlinear Problems}
\label{chap:nonlinear-quantum}

\section{Why nonlinear PDEs require a different viewpoint}
\label{sec:nonlinear-warning}

The algorithms in previous chapters relied heavily on the linear structure of the PDEs:  elliptic equations led to
linear systems, hyperbolic equations led to Hamiltonian dynamics, and
parabolic equations led to contraction semigroups.  Nonlinear partial
differential equations do not fit so directly into this picture.  The
superposition principle fails, stability may depend on the solution itself,
shocks or caustics may form, and the appropriate notion of solution may be
weak, entropy, viscosity, measure-valued, or probabilistic.  Thus one should
not expect a single quantum primitive to handle nonlinear PDEs in the same way
that Hamiltonian simulation handles a finite-dimensional Schr\"odinger equation.

There are special exceptions.  If a nonlinear evolution can be mapped by an
explicit transform to a stable linear problem, then the linear quantum methods
of the previous chapters may become directly relevant.  The classical
Cole--Hopf transformation for viscous Burgers and Hamilton--Jacobi equations is
the standard example: after an exponential change of variables, a nonlinear
viscous equation becomes a heat equation.  More recent work of Jin and Liu uses
entropy penalization to generalize this idea to viscosity solutions of convex
Hamilton--Jacobi equations, reducing the extraction of point values, gradients,
and minima to heat-like linear dynamics \cite{JinLiu2026HamiltonJacobi}.  These
examples are important because they preserve the physically relevant nonlinear
solution concept while giving a linear computational representation.

Such transformations, however, are problem dependent.  For a general nonlinear
PDE, after appropriate spatial discretization one obtains a nonlinear ODE
\begin{equation}
  \frac{d}{dt} {\bm u}(t)=\Phi(\bm u(t),t),
  \qquad \bm u(t)\in\mathbb C^N,
  \label{eq:nonlinear-method-of-lines-general}
\end{equation}
and there is no universal linear quantum evolution whose amplitudes equal
\(\bm u(t)\) for all possible nonlinear vector fields \(\Phi\).  This chapter
therefore presents two elementary linearization viewpoints.  They are included
mainly as templates, not as general-purpose algorithms.

The first is \emph{Carleman linearization}.  It maps polynomial nonlinear
dynamics to an infinite-dimensional linear flow on tensor powers, or monomials, of the
state.  After truncation, the problem becomes a large linear ODE and can be
treated by QLSA, quantum linear-ODE solvers, or dilation methods presented in previous chapters.  This
viewpoint is best interpreted as a route to \emph{trajectory simulation}: given
one initial condition, prepare or estimate the solution at a later time.

The second viewpoint is the \emph{Liouville}, \emph{Koopman}, or
\emph{Koopman--von Neumann} formulation.  Instead of following one trajectory,
one follows a density or an observable on phase space.  The nonlinear dynamics
becomes a linear, but also infinite dimensional, transport equation for the probability density, or equivalently
a Schr\"odinger-type equation for a wavefunction whose squared modulus is the
density.  This viewpoint is best interpreted as a route to \emph{probability
and ensemble simulation}: compute expectations, correlations, or
uncertainty-quantification observables.

Figure~\ref{fig:nonlinear-two-lifts} previews the two routes.  They start from
different objects---a single trajectory versus an ensemble---and they arrive at
different linear problems.  Much of this chapter consists of making the middle
arrows precise, and of asking, in the spirit of the earlier chapters, what each
route costs end to end: state preparation, linear evolution, and the extraction
of a physical answer.

For classical nonlinear hyperbolic problems, finite-volume methods, Riemann
solvers, limiters, and entropy conditions are central; a standard reference is
LeVeque's book \cite{LeVeque2002FVM}.  On the quantum side, Carleman
linearization has been analyzed for dissipative nonlinear differential equations
\cite{LiuKoldenKroviLoureiroTrivisaChilds2021}, including reaction--diffusion
models \cite{LiuAnFangWangLowJordan2023ReactionDiffusion}.  It has been
improved for more general linearized dynamics using evolution-norm bounds
\cite{Krovi2023ImprovedDE}, extended to regimes beyond the traditional
dissipative condition \cite{WuWangLi2025NonlinearDynamics}, and developed
further in recent embedding and nonlinear-solver frameworks
\cite{JenningsKorzekwaLostaglioSornborgerSubasiWang2025Carleman,CostaSchleichMoralesBerry2025Nonlinear}.
The Koopman--von Neumann formulation for quantum simulation of nonlinear
classical dynamics was developed in \cite{Joseph2020KvN}; related Liouville,
level-set, and observable-computation methods for nonlinear PDEs are discussed
in \cite{JinLiuYu2023LinearRepresentations,JinLiu2024NonlinearPDEObservables}.

The purpose here is therefore not to claim a broad quantum advantage for
nonlinear PDEs.  Rather, it is to show the reader two linearization mechanisms
and to clarify what each mechanism computes.

\begin{figure}[t]
\centering
\begin{tikzpicture}[x=1cm,y=1cm,>=stealth, every node/.style={font=\small}]
  \node[draw,rounded corners,align=center,minimum width=3.0cm,minimum height=0.8cm] (traj) at (0,1.2)
    {trajectory\\$\bm u(t)$};
  \node[draw,rounded corners,align=center,minimum width=3.6cm,minimum height=0.8cm] (car) at (4.5,1.2)
    {tensor lift\\$1,\bm u,\bm u^{\otimes2},\ldots$};
  \node[draw,rounded corners,align=center,minimum width=3.2cm,minimum height=0.8cm] (lin1) at (9.2,1.2)
    {linear ODE\\Carleman};
  \draw[->] (traj) -- (car);
  \draw[->] (car) -- (lin1);

  \node[draw,rounded corners,align=center,minimum width=3.0cm,minimum height=0.8cm] (ens) at (0,-0.5)
    {ensemble\\$\rho(t,\bm x)$};
  \node[draw,rounded corners,align=center,minimum width=3.6cm,minimum height=0.8cm] (sqrt) at (4.5,-0.5)
    {square-root state\\$\psi=\sqrt{\rho}$};
  \node[draw,rounded corners,align=center,minimum width=3.2cm,minimum height=0.8cm] (lin2) at (9.2,-0.5)
    {linear PDE\\KvN/Liouville};
  \draw[->] (ens) -- (sqrt);
  \draw[->] (sqrt) -- (lin2);
\end{tikzpicture}
\caption{Two linear ways to look at nonlinear dynamics.  Carleman lifting keeps a single trajectory but enlarges the state by tensor powers.  Liouville/KvN lifting moves to probability or observable evolution on phase space.}
\label{fig:nonlinear-two-lifts}
\end{figure}

\section{From nonlinear PDEs to nonlinear ODEs}
\label{sec:nonlinear-method-of-lines}

A reasonable starting point is semidiscrete approximation, where the continuous PDE is first discretized in space, producing a
finite-dimensional nonlinear ODE of the form already displayed in
\eqref{eq:nonlinear-method-of-lines-general}.
For example, a scalar conservation law
\begin{equation}
  u_t+f(u)_x=0
  \label{eq:scalar-conservation-law}
\end{equation}
may be approximated by a finite-volume method
\begin{equation}
  \frac{d}{dt}u_j(t)
  =-\frac{1}{h}\left(
  \widehat f_{j+1/2}(\bm u(t))-\widehat f_{j-1/2}(\bm u(t))
  \right),
  \label{eq:finite-volume-nonlinear-ode}
\end{equation}
where \(\widehat f\) is a numerical flux.  For smooth solutions and simple
fluxes, this is a finite-dimensional nonlinear ODE.  For discontinuous
solutions, however, the numerical flux is part of the mathematical problem: it
selects the entropy solution and controls spurious oscillations \cite{LeVeque2002FVM}.

As a second, closely related example, the viscous Burgers equation
\begin{equation}
  u_t+u u_x=\nu u_{xx}
  \label{eq:burgers-pde}
\end{equation}
may lead to a semidiscrete system of the form
\begin{equation}
  \dot{\bm u}=-\nu A_h\bm u-\mathsf N_h(\bm u,\bm u),
  \label{eq:burgers-semidiscrete-quadratic}
\end{equation}
where \(A_h\) is the positive matrix discretizing \(-\partial_{xx}\), as in
Chapter~\ref{chap:elliptic-quantum}, and \(\mathsf N_h\) is a bilinear
convection operator.

The right-hand side of \eqref{eq:burgers-semidiscrete-quadratic} is the
structure worth remembering from this section: a linear part that is
dissipative and a quadratic part of moderate size.  The next section takes
exactly this form as its starting point, and the competition between the two
parts---linear decay against quadratic growth---will resurface as the small
parameter that governs the convergence of Carleman truncation.

In this chapter we assume that the spatial discretization has already been
chosen.  The additional classical questions---shock capturing, invariant
preservation, adaptivity, stiffness, and long-time stability---are essential in
real computations but beyond this introductory treatment.

\section{Carleman linearization for polynomial nonlinearities}
\label{sec:carleman-nonlinear}

Consider a nonautonomous forced quadratic system
\begin{equation}
  \dot{\bm u}
  =\bm f_0(t)+F_1(t)\bm u+F_2(t)(\bm u\otimes \bm u),
  \qquad \bm u(t)\in\mathbb C^N .
  \label{eq:quadratic-ode}
\end{equation}
Here \(\bm f_0(t)\in\mathbb C^N\),
\(F_1(t)\in\mathbb C^{N\times N}\), and
\(F_2(t)\in\mathbb C^{N\times N^2}\).  The semidiscrete Burgers equation
\eqref{eq:burgers-semidiscrete-quadratic} is a typical example after spatial
approximation: the linear part is dissipative and the convection term is
quadratic.  Higher-degree polynomial nonlinearities can be treated by adding
more off-diagonal blocks, but the quadratic case already contains the main
idea.

Why should tensor powers appear at all?  The basic difficulty with a quadratic
vector field is that monomials in the state do not close under differentiation:
the derivative of \(\bm u\) involves \(\bm u\otimes\bm u\), the derivative of
\(\bm u\otimes\bm u\) involves \(\bm u\otimes\bm u\otimes\bm u\), and so on.
Numerical analysts will recognize the pattern from moment hierarchies in
kinetic theory, where the evolution of each moment involves the next one.
Carleman's idea is not to fight the hierarchy but to embrace it: track every
tensor power simultaneously, and the dynamics becomes exactly linear, at the
price of infinitely many variables.  Truncation then plays the role of a
closure, and the analysis must quantify what the closure discards.

The Carleman construction thus replaces nonlinear functions of \(\bm u\) by new
linear coordinates.  Define
\begin{equation}
  \bm z_k(t)=\bm u(t)^{\otimes k},
  \qquad k=1,2,3,\ldots,
  \label{eq:carleman-tensor-powers}
\end{equation}
and set \(\bm z_0(t)=1\).  Differentiating a tensor power gives
\begin{equation}
  \frac{d}{dt}\bm z_k(t)
  =\sum_{j=1}^k
  \bm u^{\otimes(j-1)}\otimes \dot{\bm u}\otimes
  \bm u^{\otimes(k-j)} .
  \label{eq:tensor-power-derivative}
\end{equation}
Substituting \eqref{eq:quadratic-ode} yields, for \(k\ge1\),
\begin{equation}
  \dot{\bm z}_k
  =A_{k,k-1}(t)\bm z_{k-1}
  +A_{k,k}(t)\bm z_k
  +A_{k,k+1}(t)\bm z_{k+1}.
  \label{eq:carleman-block-row}
\end{equation}
The forcing term lowers tensor degree, the linear term preserves tensor degree,
and the quadratic term raises tensor degree by one.  More explicitly,
\begin{align}
  A_{k,k}(t)
  &=\sum_{j=1}^k
  I^{\otimes(j-1)}\otimes F_1(t)\otimes I^{\otimes(k-j)},
  \label{eq:carleman-Akk}\\
  A_{k,k+1}(t)
  &=\sum_{j=1}^k
  I^{\otimes(j-1)}\otimes F_2(t)\otimes I^{\otimes(k-j)}.
  \label{eq:carleman-Ak-kplus1}
\end{align}
The block \(A_{k,k-1}(t)\) is the analogous sum obtained by inserting
\(\bm f_0(t)\) into one tensor slot.  For example,
\(A_{1,0}\bm z_0=\bm f_0\).

Thus the nonlinear ODE has been embedded exactly into the infinite linear ODE
\begin{equation}
  \frac{d}{dt}
  \begin{bmatrix}
  \bm z_0\\ \bm z_1\\ \bm z_2\\ \vdots\\ \bm z_K\\ \bm z_{K+1}\\ \vdots
  \end{bmatrix}
  =
  \begin{bmatrix}
  0&&&&&&\\
  A_{1,0}&A_{1,1}&A_{1,2}&&&&\\
  &A_{2,1}&A_{2,2}&A_{2,3}&&&\\
  &&A_{3,2}&A_{3,3}&A_{3,4}&&\\
  &&&\ddots&\ddots&\ddots&\\
  &&&&A_{K,K-1}&A_{K,K}&A_{K,K+1}\\
  &&&&&A_{K+1,K}&A_{K+1,K+1}&\ddots\\
  &&&&&&\ddots&\ddots
  \end{bmatrix}
  \begin{bmatrix}
  \bm z_0\\ \bm z_1\\ \bm z_2\\ \vdots\\ \bm z_K\\ \bm z_{K+1}\\ \vdots
  \end{bmatrix} .
  \label{eq:infinite-carleman-system}
\end{equation}
The first block \(\bm z_0\) is fixed and equal to one.  The physical solution is
\(\bm z_1(t)=\bm u(t)\).  The price of exact linearization is that it lives in
an infinite tensor-power space.

It is worth pausing over the direction of the couplings in
\eqref{eq:infinite-carleman-system}.  The matrix is block tridiagonal, and the
two off-diagonals play different roles: the forcing blocks \(A_{k,k-1}\) push
information \emph{up} the ladder from lower levels, while the quadratic blocks
\(A_{k,k+1}\) pull information \emph{down} from higher levels.  In particular,
the physical level \(\bm z_1\) feels the infinitely many levels above it only
through the superdiagonal blocks generated by \(F_2\), and in the homogeneous
case \(\bm f_0=0\) the flow of information is strictly one-directional, from
high levels to low.  This directional structure is what makes the truncation
analysis tractable: an error committed at the top of the retained ladder must
climb down one rung at a time before it can contaminate \(\bm z_1\).

A level-\(K\) Carleman truncation keeps \(\bm z_0,\ldots,\bm z_K\) and drops the
first omitted coupling
\begin{equation}
  A_{K,K+1}(t)\bm z_{K+1}(t)
  \label{eq:carleman-dropped-tail}
\end{equation}
from the \(K\)th retained equation.  Let
\begin{equation}
  \bm y_K^{\rm ex}(t)
  =
  \begin{bmatrix}
    \bm z_0(t)\\ \bm z_1(t)\\ \vdots\\ \bm z_K(t)
  \end{bmatrix},
  \qquad
  J_K\bm v=
  \begin{bmatrix}
    \bm 0\\ \vdots\\ \bm 0\\ \bm v
  \end{bmatrix},
\end{equation}
where \(J_K\bm v\) places \(\bm v\) in the last retained level.  The exact
retained blocks satisfy
\begin{equation}
  \frac{d}{dt}\bm y_K^{\rm ex}(t)
  =C_K(t)\bm y_K^{\rm ex}(t)
  +J_K A_{K,K+1}(t)\bm z_{K+1}(t),
  \label{eq:projected-exact-carleman-system}
\end{equation}
where \(C_K(t)\) is the finite block matrix formed from the retained levels.
The computed truncated Carleman system sets the tail forcing to zero:
\begin{equation}
  \frac{d}{dt}\widehat{\bm y}_K(t)=C_K(t)\widehat{\bm y}_K(t),
  \qquad
  \widehat{\bm y}_K(0)=\bm y_K^{\rm ex}(0).
  \label{eq:truncated-carleman-system}
\end{equation}
The first physical component of \(\widehat{\bm y}_K(t)\), denoted
\(\widehat{\bm z}_{1,K}(t)\), is the Carleman approximation to \(\bm u(t)\).

\subsection{Truncation and convergence rate for dissipative systems}
\label{subsec:carleman-dissipative-convergence}

We now give a short convergence argument in the homogeneous quadratic case
\(\bm f_0=0\).  This is the cleanest setting and follows the proof strategy of
Theorem~2.2 in \cite{WuWangLi2025NonlinearDynamics}.  The case
\(\bm f_0\ne0\) is discussed afterward.

The argument has three steps.  First, we show that the truncation error obeys
the same one-directional ladder as the solution itself, with a source only at
the top retained level.  Second, we unroll the ladder with Duhamel's formula,
which yields a nested integral representation of the physical error.  Third, we
estimate the nested integral, once under a plain finite-time hypothesis and
once under dissipation; comparing the two resulting bounds is instructive.

When \(\bm f_0=0\), the level \(0\) block decouples and, for
\(j\ge1\), the Carleman chain has only diagonal and upper-neighbor couplings:
\begin{equation}
  \dot{\bm z}_j=A_{j,j}\bm z_j+A_{j,j+1}\bm z_{j+1}.
  \label{eq:homogeneous-carleman-row}
\end{equation}
Let
\begin{equation}
  \bm\eta_j(t)=\bm z_j(t)-\widehat{\bm z}_{j,K}(t),
  \qquad j=1,\ldots,K,
\end{equation}
where \(\widehat{\bm z}_{j,K}\) is the \(j\)th block of the level-\(K\)
truncation.  Then \(\bm\eta_j(0)=0\), and
\begin{align}
  \dot{\bm\eta}_j
  &=A_{j,j}\bm\eta_j+A_{j,j+1}\bm\eta_{j+1},
  \qquad j=1,\ldots,K-1,
  \label{eq:carleman-error-chain-inner}\\
  \dot{\bm\eta}_K
  &=A_{K,K}\bm\eta_K+A_{K,K+1}\bm z_{K+1}.
  \label{eq:carleman-error-chain-last}
\end{align}
Thus the physical error \(\bm\eta_1\) is driven only by the first omitted tensor
level, but this forcing must propagate down through the whole Carleman chain.

Let
\begin{equation}
  E_j(t,s)=\exp\bigl((t-s)A_{j,j}\bigr),
  \qquad t\ge s .
\end{equation}
For notational simplicity, we first present the estimate in the autonomous
case. In the time-dependent case, \(E_j(t,s)\) should be replaced by the
evolution operator generated by \(A_{j,j}(t)\), and \(\|F_2\|\) should be read
as \(\sup_{0\le t\le T}\|F_2(t)\|\).

Repeated use of Duhamel's formula gives the nested representation
\begin{align}
  \bm\eta_1(T)
  &=
  \int_{0\le t_K\le\cdots\le t_1\le T}
  E_1(T,t_1)A_{1,2}E_2(t_1,t_2)A_{2,3}\cdots
  \nonumber\\
  &\hspace{4em}
  \cdots E_K(t_{K-1},t_K)A_{K,K+1}\bm z_{K+1}(t_K)
  \,dt_K\cdots dt_1 .
  \label{eq:carleman-nested-duhamel}
\end{align}
This formula is often the most useful way to understand Carleman convergence:
the truncation error is a \(K\)-fold Duhamel term.  Read from right to left, it
tells a simple story.  The omitted tensor \(\bm z_{K+1}\) is injected at time
\(t_K\) through the block \(A_{K,K+1}\); each propagator \(E_j\) then transports
the error within level \(j\), and each block \(A_{j,j+1}\) drops it down one
rung; the ordering constraint \(t_K\le\cdots\le t_1\) records the fact that
these events must happen in sequence before the error reaches the physical
level at time \(T\).

To convert this representation into a bound, exactly two inputs are needed: a
bound on the trajectory, which controls the injected tail, and a bound on the
diagonal propagators, which controls the transport.  Assume first that the
trajectory is uniformly bounded on \([0,T]\),
\begin{equation}
  \|\bm u(t)\|\le M_u,
  \qquad 0\le t\le T,
  \label{eq:carleman-solution-bound-Mu}
\end{equation}
and that the diagonal propagators satisfy
\begin{equation}
  \|E_j(t,s)\|\le M_{\mathrm{lin}},
  \qquad 1\le j\le K,\quad 0\le s\le t\le T.
  \label{eq:carleman-diagonal-propagator-bound}
\end{equation}
For normal dissipative \(F_1\), one may take \(M_{\mathrm{lin}}\le1\); for
nonnormal but diagonalizable \(F_1\), this factor includes the conditioning of
the diagonalizing basis.  The block formula \eqref{eq:carleman-Ak-kplus1}
shows, by the triangle inequality, that
\begin{equation}
  \|A_{j,j+1}\|\le j\|F_2\|.
  \label{eq:carleman-superdiagonal-bound}
\end{equation}
Since \(\|\bm z_{K+1}(t)\|=\|\bm u(t)\|^{K+1}\le M_u^{K+1}\),
\eqref{eq:carleman-nested-duhamel} yields
\begin{align}
  \|\bm\eta_1(T)\|
  &\le
  M_{\mathrm{lin}}^K
  \left(\prod_{j=1}^{K}j\|F_2\|\right)
  M_u^{K+1}
  \int_{0\le t_K\le\cdots\le t_1\le T}dt_K\cdots dt_1
  \nonumber\\
  &=
  M_{\mathrm{lin}}^K K!\|F_2\|^K M_u^{K+1}\frac{T^K}{K!}
  \nonumber\\
  &=
  M_u\left(M_{\mathrm{lin}}M_uT\|F_2\|\right)^K .
  \label{eq:carleman-theorem22-style-bound}
\end{align}
The cancellation of the two factors \(K!\) is the key elementary point: the
simplex volume \(T^K/K!\) cancels the product
\(1\cdot2\cdots K\) coming from the \(K\) off-diagonal Carleman blocks.  This
finite-time bound is useful when \(T\) is fixed or when the nonlinear term is
sufficiently weak.  Indeed, \eqref{eq:carleman-theorem22-style-bound} is
geometric in \(K\) with ratio \(M_{\mathrm{lin}}M_uT\|F_2\|\), so it certifies
convergence precisely when the nonlinear interaction, accumulated over the
whole time window, is small; for a fixed nonlinearity and growing \(T\) it
eventually says nothing.  Dissipation repairs this defect, essentially by
replacing the length \(T\) of the window with the time scale \(1/\mu\) over
which the linear part forgets.

For long times, dissipation therefore gives a sharper and more informative
estimate.  Assume
\begin{equation}
  \frac{F_1+F_1^\dag}{2}\preceq -\mu I,
  \qquad \mu>0.
  \label{eq:carleman-linear-dissipation}
\end{equation}
The diagonal block \(A_{j,j}\) in \eqref{eq:carleman-Akk} is a Kronecker sum of
\(j\) copies of \(F_1\), so each of the \(j\) tensor factors decays at rate
\(\mu\) and the decay rates add across factors.  The \(j\)th diagonal Carleman
block thus dissipates \(j\) tensor factors, and one expects the level-dependent
bound
\begin{equation}
  \|E_j(t,s)\|\le M_{\mathrm{dis}}e^{-j\mu(t-s)}.
  \label{eq:carleman-level-dissipation}
\end{equation}
Substituting \eqref{eq:carleman-level-dissipation} into the nested formula and
extending the time integrals to \([0,\infty)\) gives
\begin{align}
  \|\bm\eta_1(T)\|
  &\le
  M_{\mathrm{dis}}^{K}
  \left(\prod_{j=1}^{K}j\|F_2\|\right)
  M_u^{K+1}
  \prod_{j=1}^{K}\frac{1}{j\mu}
  \nonumber\\
  &=
  M_u\left(M_{\mathrm{dis}}\frac{M_u\|F_2\|}{\mu}\right)^K .
  \label{eq:carleman-dissipative-truncation-rate}
\end{align}
This is the intuitive dissipative Carleman estimate.  Comparing with
\eqref{eq:carleman-theorem22-style-bound}, the role of the simplex volume
\(T^K/K!\) is now played by the product \(\prod_{j=1}^K 1/(j\mu)\):
dissipation substitutes for the shortness of the time window, and the resulting
bound is uniform in \(T\).  The natural small
parameter is
\begin{equation}
  R:=M_{\mathrm{dis}}\frac{M_u\|F_2\|}{\mu}.
  \label{eq:carleman-dissipative-ratio}
\end{equation}
The ratio \(R\) should be read as a nondimensional group, in the spirit of a
Reynolds or Damk\"ohler number: the numerator \(M_u\|F_2\|\) is the rate at
which the quadratic term produces amplitude at the scale of the solution, and
\(\mu\) is the rate at which the linear part removes it.  Weak nonlinearity in
the Carleman sense means that removal wins.  When \(R<1\), the error in the
physical level decays geometrically:
\begin{equation}
  \|\bm u(T)-\widehat{\bm z}_{1,K}(T)\|
  =\|\bm\eta_1(T)\|
  \lesssim M_u R^K .
  \label{eq:carleman-dissipative-physical-error}
\end{equation}
Thus a truncation error at most \(\epsilon\) is achieved with
\begin{equation}
  K=
  O\!\left(
  \frac{\log(M_u/\epsilon)}{\log(1/R)}
  \right).
  \label{eq:carleman-K-log-epsilon}
\end{equation}
This is the basic reason Carleman linearization can have logarithmic
truncation depth in the accuracy for weakly nonlinear dissipative systems.

It is sometimes useful to express the same idea directly at the retained-system
level, because this is the form the quantum solver actually sees: the QLSA and
dilation routes of Section~\ref{subsec:quantum-carleman-access} act on the
retained matrix as a whole, and their costs are stated in terms of its
propagator rather than the individual diagonal blocks.  Let
\begin{equation}
  \bm e_K(t)=\bm y_K^{\rm ex}(t)-\widehat{\bm y}_K(t)
\end{equation}
and let \(U_K(t,s)\) be the propagator generated by the retained matrix
\(C_K(t)\).  From \eqref{eq:projected-exact-carleman-system} and
\eqref{eq:truncated-carleman-system},
\begin{equation}
  \bm e_K(t)=
  \int_0^t
  U_K(t,s)J_KA_{K,K+1}(s)\bm z_{K+1}(s)\,ds .
  \label{eq:carleman-tail-duhamel}
\end{equation}
If the solution has been rescaled so that
\begin{equation}
  \|\bm u(t)\|\le r<1,
  \qquad 0\le t\le T,
  \label{eq:carleman-rescaled-small-solution}
\end{equation}
and if the retained propagator is stable,
\begin{equation}
  \|U_K(t,s)\|\le M_K e^{-\omega_K(t-s)},
  \qquad 0\le s\le t\le T,
  \label{eq:carleman-retained-stability}
\end{equation}
then \(\|A_{K,K+1}\|\le K\|F_2\|\) and
\(\|\bm z_{K+1}\|\le r^{K+1}\) give the tail estimate
\begin{equation}
  \|\bm e_K(t)\|
  \le
  \frac{M_K}{\omega_K}
  K\|F_2\|r^{K+1},
  \qquad 0\le t\le T.
  \label{eq:carleman-geometric-tail-bound}
\end{equation}
This bound is less sharp than the nested level-by-level estimate, but it is a
convenient way to remember the mechanism: the first omitted tensor block is
small when the solution remains in a small ball, and stability of the retained
linear dynamics prevents that small tail from being amplified.

The inhomogeneous case \(\bm f_0\ne 0\) is conceptually similar but less clean.
The forcing introduces lower-degree couplings through the level \(0\) block and
moves the invariant ball away from the origin.  A scalar comparison argument
has the form
\begin{equation}
  \frac{d}{dt}\|\bm u(t)\|
  \le
  -\mu\|\bm u(t)\|
  +\|F_2\|\,\|\bm u(t)\|^2
  +\|\bm f_0\|_\infty .
\end{equation}
This leads to a modified smallness condition of the form
\begin{equation}
  R
  \sim
  \frac{1}{\mu}
  \left(
    M_u\|F_2\|
    +\frac{\|\bm f_0\|_\infty}{M_u}
  \right),
  \label{eq:carleman-forced-dissipative-ratio}
\end{equation}
with constants depending on the precise invariant-region estimate.  Rigorous
finite-section bounds with computable constants are developed in
\cite{ForetsPouly2018Carleman,AminiZhengSunMotee2025Carleman}.  The quantum
Carleman analysis of Liu et al. uses the same dissipative intuition: linear
decay controls the tensor tail, and logarithmic truncation depth is what makes
the lifted linear-system approach plausible for weakly nonlinear dissipative
systems \cite{LiuKoldenKroviLoureiroTrivisaChilds2021}.

The logarithmic scaling of \(K\) should not be confused with a small lifted
matrix.  If the original semidiscrete nonlinear ODE has dimension \(N\), then
the retained level-\(K\) Carleman vector has dimension
\begin{equation}
  N_K:=1+N+N^2+\cdots+N^K
  =\frac{N^{K+1}-1}{N-1}.
  \label{eq:carleman-lifted-dimension-from-K}
\end{equation}
Combining this dimension count with \eqref{eq:carleman-K-log-epsilon} gives
\begin{equation}
  N_K
  =
  \exp\left(
    O\left(
      \frac{\log N\,\log(1/\epsilon)}{\log(1/R)}
    \right)
  \right)
  =
  (1/\epsilon)^{O(\log N/\log(1/R))} .
  \label{eq:carleman-dimension-epsilon-scaling}
\end{equation}
The two readings of \eqref{eq:carleman-dimension-epsilon-scaling} could hardly
be more different.  Classically, \(N_K\) is the size of a vector that must be
stored and evolved: quasi-polynomial in \(1/\epsilon\), with an exponent that
grows with \(\log N\).  On a quantum computer, \(N_K\) is only a Hilbert-space
dimension: a register of \(O(K\log N)\) qubits represents it.  This gap is the
formal source of the quantum interest in Carleman lifting---but the block
encoding, conditioning, output weight, and measurement cost still determine how
much of the gap survives in the end-to-end complexity.

One final caution is specific to quantum state preparation.  Dissipation helps
the tensor tail, but it may also make the final physical state small.  If the
algorithm prepares a normalized state proportional to \(\bm u(T)\), then the
output ratio, analogous to \eqref{eq:heat-output-norm-ratio},
\begin{equation}
  g_T:=\frac{\|\bm u_0\|}{\|\bm u(T)\|}
  \label{eq:carleman-output-ratio}
\end{equation}
can enter the success probability.  There is also a level-selection effect in
the lifted state.  If the scaled solution satisfies \(\|\bm u(T)\|\le r<1\), then
the probability of observing the physical level \(k=1\) in the ideal lifted
state obeys the rough bound
\begin{equation}
  p_1(T)
  =
  \frac{\|\bm u(T)\|^2}{\sum_{j=0}^{K}\|\bm u(T)\|^{2j}}
  \ge
  (1-r^2)\|\bm u(T)\|^2 .
  \label{eq:carleman-level-one-prob}
\end{equation}
Thus the constant block \(\bm z_0=1\) can dominate the norm of the lifted
state.  The same damping that improves Carleman convergence can make normalized
trajectory output more expensive.  This is another reason to keep
observable-based outputs in view.

\subsection{How the quantum algorithm sees the lifted system}
\label{subsec:quantum-carleman-access}

The dimension of the lifted space was given in
\eqref{eq:carleman-lifted-dimension-from-K}.  A quantum register can represent
the \(k\)th tensor-power space using \(k\log_2 N\) qubits.  This is why Carleman
lifting is attractive in the quantum setting: tensor powers map naturally to
multiple registers.  The register count is only logarithmic in the classical
lifted dimension, although the block-encoding and conditioning costs still have
to be analyzed for the specific problem.

At the initial time, the exact and truncated lifted vectors agree.  We write
\begin{equation}
  \bm y_K(0)=
  \begin{bmatrix}
  1\\
  \bm u_0\\
  \bm u_0^{\otimes2}\\
  \vdots\\
  \bm u_0^{\otimes K}
  \end{bmatrix}.
  \label{eq:carleman-initial-lift}
\end{equation}
This is not usually the dominant obstruction.  If a state proportional to
\(\bm u_0\) can be prepared, then tensor powers are obtained by preparing
multiple copies, and the level weights can be loaded into an additional level
register.  This preparation step is treated explicitly in the dissipative
Carleman algorithm of \cite{LiuKoldenKroviLoureiroTrivisaChilds2021}.  The more
subtle issues are the size and conditioning of the lifted linear system, the
stability of \(C_K\), and the probability of extracting the physical first
level.

There are several ways to use the linear machinery developed earlier in the
book.  All of them treat \eqref{eq:truncated-carleman-system} as just another
linear ODE; the nonlinearity survives only in the structure of \(C_K\) and in
the interpretation of the levels.

\paragraph{History-state QLSA.}
A time discretization of \eqref{eq:truncated-carleman-system} can be stacked
into a sparse clock-state linear system and solved by a QLSA, as in the linear
ODE algorithms of Berry et al. and Krovi
\cite{BerryChildsOstranderWang2017,Krovi2023ImprovedDE}.  This was the route
used in the first rigorous quantum Carleman algorithms.  The condition number
of the clock system reflects the stability of the retained propagator, while
padding can increase the probability of observing the final-time block, just as
in the heat equation discussion in Chapter~\ref{chap:parabolic-quantum}.

\paragraph{Dilation and Schr\"odingerization.}
The truncated Carleman equation is a linear, generally nonunitary ODE.  It can
therefore be treated by the dilation methods of Chapter~\ref{chap:hyperbolic-quantum}.
For example, in the autonomous case write
\begin{equation}
  C_K=-iH_K+D_K,
  \qquad
  H_K=\frac{C_K^\dag-C_K}{2i},
  \qquad
  D_K=\frac{C_K+C_K^\dag}{2}.
  \label{eq:carleman-HK-KK-decomposition}
\end{equation}
Moment-matching dilation embeds \(e^{tC_K}\) into a unitary evolution on an
enlarged Hilbert space, followed by ancilla projection.  Schr\"odingerization
provides another mapping of general linear nonunitary dynamics into a
Schr\"odinger-type system in one higher dimension
\cite{JinLiuYu2024Schrodingerization,JinLiuMa2025Schrodingerization}.  Since
Carleman has already converted the nonlinear ODE into a linear ODE, these
linear nonunitary simulation methods can be applied after the Carleman
truncation.  This combined viewpoint, often described as
Carleman-linearization plus Schr\"odingerization, has also been proposed
explicitly for nonlinear PDEs \cite{SasakiEndoMuramatsu2025CLS}.

\paragraph{Observable output.}
In any of these routes, the algorithm produces a state proportional to the
lifted vector \(\bm y_K(t)\).  To recover the physical trajectory one must
project onto the first tensor level.  To estimate a scalar quantity such as
\(\bm\ell^\dag\bm u(t)\), it may be better to estimate the corresponding matrix
element of the lifted linear evolution directly, rather than preparing the
normalized full lifted state and then postselecting.  This observable-driven
perspective will reappear in the outlook.

\section{Kolmogorov, Liouville, and Koopman--von Neumann formulations}
\label{sec:kvn-liouville-nonlinear}

Carleman linearization aims at one trajectory.  The Koopman--von Neumann
viewpoint instead describes probability or observable evolution on phase space.
The change of viewpoint is best understood as a change of question.  Carleman
keeps asking where the trajectory is, and pays for the nonlinearity by
enlarging the state.  The Liouville viewpoint asks instead where the
probability mass is---and this question has a linear answer, because
probability mass carried by different members of an ensemble evolves
independently and therefore superposes.  The nonlinearity has not disappeared;
as we will see, it has moved into the characteristics of the resulting
transport equation.

Consider an ODE on a state space \(X\subset\mathbb R^m\),
\begin{equation}
  \dot{\bm x}=\bm b(\bm x).
  \label{eq:nonlinear-flow-x}
\end{equation}
If the initial condition is random with density \(\rho_0(\bm x)\), then the
density at time \(t\) satisfies
\begin{equation}
  \partial_t\rho(t,\bm x)
  +\nabla_{\bm x}\cdot\big(\bm b(\bm x)\rho(t,\bm x)\big)=0.
  \label{eq:liouville-density}
\end{equation}
This is the Liouville equation, or equivalently the deterministic forward
Kolmogorov equation.  It is the zero-diffusion member of the Fokker--Planck
family.  Like the transport systems in Chapter~\ref{chap:hyperbolic-quantum},
it is a first-order hyperbolic equation, now posed on phase space rather than
physical space.  It is linear in \(\rho\), even though its characteristics
satisfy the nonlinear equation \eqref{eq:nonlinear-flow-x}.

A useful square-root representation---the reason for the square root is
explained in Section~\ref{subsec:kvn-square-root-encoding}---is obtained by
writing
\begin{equation}
  \rho(t,\bm x)=|\psi(t,\bm x)|^2.
  \label{eq:rho-psi-square}
\end{equation}
If \(\psi\) solves
\begin{equation}
  \partial_t\psi
  =-\bm b\cdot\nabla\psi
  -\frac12(\nabla\cdot\bm b)\psi,
  \label{eq:kvn-real-form}
\end{equation}
then \(|\psi|^2\) solves \eqref{eq:liouville-density}.  The two terms in
\eqref{eq:kvn-real-form} have transparent roles: the first transports \(\psi\)
along the characteristics of the flow, and the second assigns to \(\psi\) half
of the volume-compression factor \(\nabla\cdot\bm b\), so that the density
\(|\psi|^2\) acquires the whole of it.  This even bookkeeping between \(\psi\)
and its conjugate is precisely what symmetrizes the generator.  Equivalently,
\begin{equation}
  i\partial_t\psi=H_{\rm KvN}\psi,
  \qquad
  H_{\rm KvN}
  =-i\left(
    \bm b(\bm x)\cdot\nabla_{\bm x}
    +\frac12\nabla_{\bm x}\cdot\bm b(\bm x)
  \right).
  \label{eq:kvn-hamiltonian}
\end{equation}
With appropriate boundary conditions---no probability flux through the
boundary, periodicity, or sufficient decay at infinity---\(H_{\rm KvN}\) is
Hermitian on \(L^2(X)\).
Thus the forward density evolution can be represented by a unitary
Schr\"odinger-type equation \cite{Joseph2020KvN}. Although \cref{eq:kvn-hamiltonian} is a first-order hyperbolic PDE similar to that in Chapter~\ref{chap:hyperbolic-quantum},   a structure-preserving scheme is needed to maintain the Hermitian property on the right-hand side, in order for Hamiltonian simulation algorithms to be applicable right away.

At the discretization level this statement should be read with care.  A
centered, skew-adjoint discretization of the symmetrized operator in
\eqref{eq:kvn-hamiltonian} preserves Hermiticity and can be simulated by the
Hamiltonian methods of Chapter~\ref{chap:hyperbolic-quantum}.  If one instead
uses upwinding or artificial viscosity to stabilize the first-order transport
equation, the discrete operator becomes non-Hermitian, exactly as in the upwind
hyperbolic systems of Section~\ref{sec:first-order-hyperbolic-dilation}.  Then
one must again use dilation, Schr\"odingerization, or another nonunitary-flow
simulation method.  Thus the formal unitary KvN representation does not remove
the numerical choice between centered conservative discretization and dissipative
stabilization.

\subsection{Why encode the square root rather than the density?}
\label{subsec:kvn-square-root-encoding}

A probability density is normalized in \(L^1\):
\begin{equation}
  \int_X\rho_0(\bm x)\,d\bm x=1.
  \label{eq:density-L1-normalization}
\end{equation}
Quantum amplitudes are normalized in \(L^2\).  If one directly amplitude-encoded
sampled density values, the probability of observing grid point \(j\) would be
proportional to \(\rho_j^2\), not to \(\rho_j\).  The natural discrete KvN state
instead uses
\begin{equation}
  \psi_j(0)\approx\sqrt{\rho_0(\bm x_j)\,\Delta V_j},
  \qquad
  \sum_j|\psi_j(0)|^2\approx1,
  \label{eq:kvn-discrete-square-root-state}
\end{equation}
where \(\Delta V_j\) is the phase-space cell volume.  Measurement in the
computational basis then samples the desired probability masses.  In other
words, the square root aligns the Born rule with the classical statistics: the
quantum measurement of \(\ket{\psi}\) \emph{is} a draw from the (discretized)
distribution \(\rho\), with no postprocessing.

This square-root encoding has an initialization subtlety.  Preparing
\(\ket{\psi_0}\) may be efficient for product densities or for smooth
distributions whose hierarchical cell masses can be computed, as in the
state-preparation methods of Chapter~\ref{chap:basic-elements}.  It need not
be efficient for a general correlated high-dimensional density.  A deterministic
initial condition corresponds to a Dirac delta; on a fixed grid this is a basis
state, but in the continuum it is singular and the required grid resolution
grows as the distribution is localized more sharply.  One must also choose an
initial phase for \(\psi\).  The real nonnegative choice
\(\psi_0=\sqrt{\rho_0}\) is simplest, but zeros or nonsmooth densities can reduce
the regularity of the square root.

On the other hand, the observable output is natural.  If \(a(\bm x)\) is a classical observable,
then
\begin{equation}
  \mathbb E[a(\bm X_t)]
  =\int_X a(\bm x)\rho(t,\bm x)\,d\bm x
  =\langle\psi(t),M_a\psi(t)\rangle,
  \label{eq:kvn-observable}
\end{equation}
where \(M_a\) denotes multiplication by \(a\).  A quantum algorithm can
therefore simulate \eqref{eq:kvn-hamiltonian} and estimate
\eqref{eq:kvn-observable} using the measurement and amplitude-estimation tools
of Chapter~\ref{chap:basic-elements}.

\subsection{The backward Kolmogorov or Koopman equation}
\label{subsec:koopman-backward-observables}

There is a dual linear equation for observables.  Let \(\Phi_t(\bm x)\) denote
the flow generated by \eqref{eq:nonlinear-flow-x}, and define the Koopman
evolution
\begin{equation}
  (U_ta)(\bm x):=a(\Phi_t(\bm x)).
  \label{eq:koopman-semigroup}
\end{equation}
Then \(a_t=U_ta\) satisfies the backward transport equation
\begin{equation}
  \partial_t a_t(\bm x)
  =\bm b(\bm x)\cdot\nabla_{\bm x}a_t(\bm x),
  \qquad a_0=a.
  \label{eq:koopman-backward-equation}
\end{equation}
For a stochastic differential equation, this becomes the backward Kolmogorov
equation and includes a second-order diffusion term.  The forward and backward
descriptions are dual:
\begin{equation}
  \int_X a(\bm x)\rho(t,\bm x)\,d\bm x
  =\int_X (U_ta)(\bm x)\rho_0(\bm x)\,d\bm x.
  \label{eq:koopman-liouville-duality}
\end{equation}
Thus one may evolve the probability density forward or evolve the observable
backward, depending on which representation and initial state are easier to
access.

\begin{remark}[Schr\"odinger and Heisenberg pictures]
Readers coming from quantum computing will recognize the structure of
\eqref{eq:koopman-liouville-duality}.  Evolving the density forward while
keeping the observable fixed is the classical analogue of the Schr\"odinger
picture; evolving the observable backward while keeping the initial
distribution fixed is the analogue of the Heisenberg picture; and
\eqref{eq:koopman-liouville-duality} states that both pictures compute the same
expectation, just as
\(\langle\psi(t),A\,\psi(t)\rangle=\langle\psi_0,U_t^\dag A\,U_t\,\psi_0\rangle\)
in quantum mechanics.  The choice between them is a modeling decision: the
forward route asks for an efficiently preparable initial density, while the
backward route asks for an efficiently representable observable.
\end{remark}

Koopman spectral discretizations provide one classical way to approximate the
observable equation.  Shi and Yang compare such a spectral lifting with
Carleman linearization for nonlinear autonomous systems
\cite{ShiYang2024KoopmanSpectral}.  Recent quantum Koopman algorithms also
emphasize approximately closed sets of observables and spectral information
\cite{JenningsKorzekwaLostaglioWang2026QuantumKoopman}.  Quantum proposals
based on Koopman, KvN, Liouville, and related linear representations include
\cite{Joseph2020KvN,JinLiuYu2023LinearRepresentations,JinLiu2024NonlinearPDEObservables}.
As always, the infinite-dimensional linear identity is only the starting point;
efficient finite-dimensional approximations, and the question of quantum
advantage, remain largely open
\cite{LinLowrieAslangilSubasiSornborger2024Challenges}.

\subsection{When can the KvN viewpoint help?}
\label{subsec:kvn-advantage-perspective}

The KvN formulation is not a magic way to solve one nonlinear trajectory.  If a
nonlinear PDE discretization has \(N\) spatial degrees of freedom, then the KvN
equation lives on the \(N\)-dimensional phase space of possible vectors \(\bm u\).
A tensor-product grid in this phase space is enormous.  A quantum computer can
store amplitudes on such a grid using logarithmically many qubits in the number
of phase-space grid points, but it still needs efficient Hamiltonian access,
state preparation, and measurement.  For discretized KvN operators, the access
model is closely tied to the first-order hyperbolic discretization choices
discussed in Chapter~\ref{chap:hyperbolic-quantum}.

The potential advantage is most plausible when the task is already
statistical.  Examples include uncertainty propagation for a distribution of
initial conditions, low-dimensional ensemble observables, or probability of a
rare event defined by a simple indicator.  In such settings the output is not a
full trajectory but an expectation of the form \eqref{eq:kvn-observable}.  The
KvN representation is then aligned with the output model of quantum computing:
prepare a state, evolve it unitarily, and estimate an observable.  By contrast,
if the goal is to reconstruct one high-resolution deterministic solution field,
the phase-space lift is usually the wrong representation.

\section{Two linearizations, two output models}
\label{sec:nonlinear-two-output-models}

The two approaches in this chapter solve different problems.  It is useful to
keep the contrast explicit.

\begin{center}
\begin{tabular}{lll}
\toprule
 & Carleman lifting & Liouville/KvN lifting \\
\midrule
Original object & one trajectory \(\bm u(t)\) & density \(\rho(t,\bm x)\) \\
Linear space & tensor powers \(\bm u^{\otimes k}\) & phase-space wavefunction \(\psi\) \\
Best suited for & polynomial vector fields & ensemble/probability observables \\
Quantum primitive & QLSA / linear ODE / dilation & Hamiltonian simulation \\
Main output & \(\ket{\bm u(T)}\) or \(\bm\ell^\dag\bm u(T)\) & \(\mathbb E[a(\bm X_T)]\) \\
Main caveat & truncation and stability & phase-space dimension \\
\bottomrule
\end{tabular}
\end{center}

Carleman lifting is close to the method-of-lines viewpoint familiar from
numerical PDEs.  One discretizes the PDE, obtains a nonlinear ODE, lifts it to
a larger linear ODE, and then applies linear quantum-algorithm machinery.  The
price is truncation and the growth of tensor-power spaces.

Liouville and Koopman--von Neumann lifting take a statistical viewpoint.  The
nonlinear flow moves probability mass through phase space; the density evolves
linearly.  The price is that the independent variable is now the state vector
itself.  This is powerful for ensemble observables but may be inappropriate if
the goal is a single deterministic solution field.

Neither column of the table dominates the other; they answer different
questions.  In the vocabulary of the earlier chapters, choosing a linearization
is choosing an output model, and the end-to-end accounting that ran through the
linear chapters---preparation, evolution, extraction---must be carried out
separately for each column.

\section{Outlook: nonlinear algorithms are problem dependent}
\label{sec:nonlinear-outlook}

Just as in classical numerical analysis, useful quantum algorithms for
nonlinear problems are likely to be problem dependent.  The correct
representation depends on the solution properties, the stability mechanism, the
desired output, and the structure of the nonlinearity.  A lifting that is
effective for a dissipative reaction--diffusion equation may be inappropriate
for a conservation law with shocks or for a Hamiltonian system with long-time
phase information.

\paragraph{Chaotic dynamics and Lyapunov exponents.}
Chaotic dynamics provide a particularly sharp warning.  Lewis et al. proved that, in natural coordinates, a positive
Lyapunov exponent together with subexponential growth of the solution norm
forces any algorithm that outputs the normalized solution state to have
complexity exponential in the integration time
\cite{LewisEidenbenzNadigaSubasi2024Limitations}.  This does not rule out every
quantum task associated with chaos, but it shows that normalized trajectory
preparation cannot generically evade exponential sensitivity to initial data.
A useful open problem is to understand how Lyapunov exponents enter the
complexity of more refined quantum tasks, such as estimating invariant-measure
averages, correlation functions, or coarse observables, where full trajectory
preparation may not be necessary.

\paragraph{Can Carleman be turned into a direct dilation?}
Carleman linearization is itself a dilation: it embeds the nonlinear trajectory
into the first block of a linear infinite-dimensional system.  The quantum
algorithms above first truncate this dilation and then apply a linear solver,
Schr\"odingerization, or a nonunitary-flow dilation.  A natural question is
whether one can design a more direct dilation from the nonlinear problem to a
linear Schr\"odinger equation, avoiding an excessively large tensor tower or
improving the success probability.  Existing Carleman plus Schr\"odingerization
schemes are first steps in this direction
\cite{SasakiEndoMuramatsu2025CLS,JinLiuYu2024Schrodingerization}, but a general
structure-preserving theory remains open.

\paragraph{Observable-driven nonlinear algorithms.}
The output model may be as important as the linearization.  Preparing
\(\ket{\bm u(T)}\) is often too ambitious, especially for dissipative or chaotic
systems.  Many scientific questions ask instead for a scalar observable:
energy, flux, drag, reaction rate, probability of a transition, or expectation
under an uncertain initial condition.  Re-evaluating Carleman, Koopman, and
Liouville algorithms from an observable-driven perspective may lead to better
end-to-end complexity estimates.  This mirrors the lesson from the linear PDE
chapters: block-encoding a propagator is only one step; the decisive quantity
is the cost of extracting the desired physical answer.

\paragraph{Nonlinear Schr\"odinger-type equations.}
Nonlinear Schr\"odinger and Gross--Pitaevskii equations form another important
class, but they require care.  Their evolution is nonlinear in the wavefunction
and therefore cannot be represented by one fixed unitary acting on a single copy
of an arbitrary input state.  Lloyd and Braunstein's foundational
continuous-variable work showed that nonlinear operations are essential for
universal continuous-variable quantum computation \cite{LloydBraunstein1999ContinuousVariables};
it was not, by itself, an algorithm for the nonlinear Schr\"odinger equation.
Later many-copy and mean-field constructions proposed quantum algorithms for
classes of nonlinear differential equations, including nonlinear
Schr\"odinger-type models \cite{LloydDePalmaEtAl2020Nonlinear}.  Understanding
the preparation cost, copy complexity, approximation regime, and physically
meaningful observables remains essential.

For this introductory book, Carleman and Koopman--von Neumann methods are
enough to indicate the landscape.  Other directions include level-set and
Young-measure representations, stochastic dynamics, nonlinear Hamiltonian
systems, structure-preserving hybrid algorithms, and problem-specific analytic
transforms.  The essential lesson is that a nonlinear quantum algorithm begins
not with a circuit, but with a precise classical statement of the solution
concept and the quantity to be computed.

\section{Exercises}

\begin{exercise}[First Carleman levels]
For the quadratic ODE \eqref{eq:quadratic-ode}, derive
\eqref{eq:carleman-block-row} explicitly for $k=1$ and $k=2$, including the
terms generated by $\bm f_0$.
\end{exercise}

\begin{exercise}[The discarded tensor tail]
Show that the exact first $K$ tensor levels satisfy
\eqref{eq:projected-exact-carleman-system}, and identify the term discarded in
\eqref{eq:truncated-carleman-system}.
\end{exercise}

\begin{exercise}[Scalar Riccati example]
Let $\dot u=f_0+au+bu^2$ be a scalar Riccati equation.  Write the first four
equations of its Carleman system for $z_k=u^k$.
\end{exercise}

\begin{exercise}[Dissipative Burgers]
For the semidiscrete Burgers system \eqref{eq:burgers-semidiscrete-quadratic},
identify $F_1$, $F_2$, and the dissipation rate $\mu$ in
\eqref{eq:carleman-linear-dissipation}.  Under what size condition on the
initial data and convection term does the ratio $R$ in
\eqref{eq:carleman-dissipative-ratio} satisfy $R<1$?
\end{exercise}

\begin{exercise}[Carleman tail bound]
Starting from \eqref{eq:carleman-tail-duhamel}, derive
\eqref{eq:carleman-geometric-tail-bound} under the assumptions
\eqref{eq:carleman-rescaled-small-solution} and
\eqref{eq:carleman-retained-stability}.
\end{exercise}

\begin{exercise}[KvN square root]
Show that if $\psi$ satisfies \eqref{eq:kvn-real-form}, then
$\rho=|\psi|^2$ satisfies the Liouville equation
\eqref{eq:liouville-density}.
\end{exercise}

\begin{exercise}[Hermiticity of the KvN Hamiltonian]
For periodic boundary conditions, verify by integration by parts that the
symmetrized operator in \eqref{eq:kvn-hamiltonian} is Hermitian on $L^2$.
What term would be missed by the nonsymmetrized operator
$-i\bm b\cdot\nabla$?
\end{exercise}

\begin{exercise}[Density versus square-root encoding]
On a uniform phase-space grid, explain why amplitude-encoding $\rho_j$ samples
a distribution proportional to $\rho_j^2$, whereas amplitude-encoding
$\sqrt{\rho_j\Delta V}$ samples the probability masses $\rho_j\Delta V$.
\end{exercise}

\begin{exercise}[Forward--backward duality]
Derive the duality identity \eqref{eq:koopman-liouville-duality} from the flow
map $\Phi_t$.
\end{exercise}
\backmatter
\printbibliography

\end{document}